\newcommand{\Span}{\mathrm{span}}
\renewcommand{\epsilon}{\varepsilon}
\newcommand{\fid}{\mathrm{F}}
\newcommand{\dtr}{\mathrm{D}_\mathrm{tr}}
\newcommand{\dbur}{\mathrm{D}_\mathrm{B}}
\newcommand{\aff}{\mathrm{A}}
\newcommand{\dU}{\mathrm{d}U}
\newcommand{\alg}{\mathcal{A}}
\renewcommand{\epsilon}{\varepsilon}
\renewcommand{\epsilon}{\varepsilon}
\title{Optimal lower bounds for quantum state tomography}
\author{Thilo Scharnhorst\thanks{UC Berkeley. \texttt{\{thilo,jspilecki,jswright\}@berkeley.edu}}  \and Jack Spilecki\footnotemark[1] \and John Wright\footnotemark[1]}
\date{}
\begin{document}

\maketitle

\begin{abstract}
    We show that $n = \Omega(rd/\epsilon^2)$ copies are necessary to learn a rank $r$ mixed state $\rho \in \C^{d \times d}$ up to error~$\epsilon$ in trace distance.
    This matches the upper bound of $n = O(rd/\epsilon^2)$ from~\cite{OW16} and therefore settles the sample complexity of mixed state tomography.
    We prove this lower bound by studying a special case of full state tomography that we refer to as \emph{projector tomography},
    in which $\rho$ is promised to be of the form $\rho = P/r$, where $P \in \C^{d \times d}$ is a rank $r$ projector.
    A key technical ingredient in our proof, which may be of independent interest, is a reduction which converts any algorithm for projector tomography which learns to error $\epsilon$ in trace distance to an algorithm which learns to error $O(\epsilon)$ in the more stringent Bures distance.
\end{abstract}

\newpage
\hypersetup{linktocpage}
\tableofcontents
\thispagestyle{empty}

\newpage

\section{Introduction}
\renewcommand{\D}{\mathrm{D}}

In this work, we study the fundamental learning theoretic problem of \emph{quantum tomography}.
Given $n$ copies of a mixed state $\rho \in \C^{d \times d}$,
the goal is to output an estimate $\widehat{\rho}$ such that $\mathrm{D}(\rho, \widehat{\rho}) \leq \epsilon$, where $\mathrm{D}(\cdot, \cdot)$ is a given distance metric.
The two most well-studied cases are when $\mathrm{D} = \mathrm{D}_{\mathrm{tr}}$ is the trace distance and when $\mathrm{D} = \mathrm{D}_{\mathrm{B}}$ is the more challenging Bures distance.
The Bures distance is defined as $\mathrm{D}_{\mathrm{B}}(\rho, \widehat{\rho}) = \sqrt{2 (1 - \mathrm{F}(\rho, \widehat{\rho}))}$, where $\mathrm{F}(\rho, \widehat{\rho})$ is the fidelity of $\rho$ and $\widehat{\rho}$, and is related to the trace distance via the inequalities
\begin{equation}\label{eq:intro-ineqs}
\frac{1}{2} \mathrm{D}_{\mathrm{B}}(\rho, \widehat{\rho})^2 \leq \mathrm{D}_{\mathrm{tr}}(\rho, \widehat{\rho}) \leq \mathrm{D}_{\mathrm{B}}(\rho, \widehat{\rho}).
\end{equation}
Perhaps the most commonly studied special case is when $\rho$ is promised to be rank $r$, for some integer $1 \leq r \leq d$.
Many applications in both theory and practice involve states which are either low rank or approximately low rank,
with the $r = 1$ pure state case being especially important.

Surprisingly, in spite of the large amount of energy  devoted to studying quantum tomography, the optimal sample complexity of this task still remains unknown.
For trace distance,
the best known upper bound is due to O'Donnell and Wright~\cite{OW16}, who showed that $n = O(dr/\epsilon^2)$ copies are sufficient to learn $\rho$ to accuracy $\epsilon$ with high probability (by which we mean a large constant probability, say 0.99);
this was improved by Pelecanos, Spilecki, and Wright in~\cite{PSW25}, who showed that the same number of copies suffice to learn $\rho$ to Bures distance error $\epsilon$.
As for lower bounds, there is a patchwork of results which cover various parameter regimes;
of these, let us describe the lower bounds which apply to trace distance learning first.
When $r = d$, Haah et al.~\cite{HHJ+16} gave a tight lower bound, showing that $n = \Omega(d^2/\epsilon^2)$ copies are necessary.
In addition, when $\epsilon$ is constant, Wright~\cite[Section 5.5]{Wri16} gave a tight lower bound, showing that $n = \Omega(rd)$ copies are necessary.
For more general values of $r$ and $\epsilon$, Haah et al.~\cite{HHJ+16} also showed a lower bound of
\begin{equation*}
    n = \Omega\Big(\frac{rd}{\epsilon^2} \cdot \frac{1}{\log(d/(r\epsilon))}\Big).
\end{equation*}
This is off from the known upper bounds by the logarithmic factor of $\log(d/(r\epsilon))$ which gets larger as the rank $r$ decreases.
Finally, a lower bound of $n = \Omega(r/\epsilon^2)$ copies follows from the classical special case when $\rho$ is promised to be diagonal in the standard basis~\cite{Can20}.

Since (trace distance) $\leq $ (Bures distance), all of these lower bounds also hold for the harder problem of learning $\rho$ to Bures distance $\epsilon$.
Recently, Yuen~\cite{Yue23} gave an even stronger Bures distance lower bound, showing that $n = \Omega(rd/\epsilon^2)$ copies are necessary to learn $\rho$ to Bures distance error $\epsilon$.
Combined with the $n = O(rd/\epsilon^2)$ copy upper bound of Pelecanos, Spilecki, and Wright from above, this settles the copy complexity of Bures distance tomography.
Yuen proves his lower bound by a clever reduction from the rank $r$ case to the rank $r = 1$ pure state case and relies on a prior result that shows that $n = \Omega(d/\epsilon^2)$ copies are required to learn to Bures distance error $\epsilon$ in this case.

However, there is a slightly subtlety with this approach,
in that it actually shows that $n = \Omega(rd/\epsilon^2)$ copies are necessary to learn $\rho$ to Bures distance error $\epsilon$ in \emph{expectation},
rather than with high probability.
The reason is that the lower bound cited for pure state tomography, that of Bruss and Macchiavello~\cite{BM98},
applies to learning in expectation rather than with constant probability.
In particular, they show the following: if $n$ copies of a Haar random pure state $\ket{\bu} \in \C^d$ are given to a tomography algorithm $\calA$, and $\calA$ produces the estimator $\ket*{\widehat{\bu}}$, then the average fidelity of the output is upper-bounded by
\begin{equation}\label{eq:expected-fidelity}
\E_{\ket{\bu}, \ket{\widehat{\bu}}} |\braket*{\bu}{\widehat{\bu}}|^2 \leq \frac{n+1}{n+d}.
\end{equation}
When $n = o(d/\epsilon^2)$, this implies that the average fidelity is at most $1 - \omega(\epsilon^2)$.
If, for example, we knew that the fidelity was also $1 - \omega(\epsilon^2)$ not just on average but with high probability, then the Bures distance between $\ketbra{\bu}$ and $\ketbra*{\widehat{\bu}}$ would be $\omega(\epsilon)$ with high probability, ruling out Bures distance learning with $o(d/\epsilon^2)$ copies.
However, it is also consistent with this bound that the fidelity could be 0 with probability .0001 and~$1$ with probability .9999 (producing an average fidelity of $.9999 \ll 1 - \omega(\epsilon^2)$), yielding an algorithm which learns to Bures distance error 0 with probability .9999.
Hence, this bound on the expected fidelity is not sufficient to show that $n = \Omega(d/\epsilon^2)$ copies are necessary for pure state tomography with high probability

If we want to use the reduction of Yuen to produce a Bures distance lower bound of $n = \Omega(rd/\epsilon^2)$ copies with high probability,
we therefore need lower bounds on pure state learning with high probability.
Surprisingly, as far as we can tell, no lower bound of the form $n = \Omega(d/\epsilon^2)$ for pure state tomography is stated or proved anywhere in the literature.
Existing lower bounds seem to either have suboptimal sample complexity,
such as the $n = \Omega(d/(\epsilon^2 \cdot \log(1/\epsilon)))$ bound of~\cite{HHJ+16},
or they seem to only apply in the asymptotic regime of large $n$~\cite{GM00}.

As one result in this paper, we are able to show that $n = \Omega(d/\epsilon^2)$ copies are necessary for pure state tomography;
combined with Yuen's reduction, this shows an $n = \Omega(rd/\epsilon^2)$ copy fidelity tomography lower bound.
Our main result, however, strengthens this Bures distance lower bound and shows that it holds for trace distance tomography as well.

\begin{theorem}[Optimal tomography lower bound]\label{thm:main}
    Given a rank-$r$ mixed state $\rho \in \C^{d \times d}$, $n = \Omega(rd/\epsilon^2)$ copies are required to estimate it to trace distance error $\epsilon$.
\end{theorem}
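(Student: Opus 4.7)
The plan is to reduce the theorem to the special case of \emph{projector tomography}, in which $\rho$ is promised to be of the form $P/r$ for a rank-$r$ projector $P$: since projector states are a subclass of rank-$r$ mixed states, a trace distance lower bound of $\Omega(rd/\epsilon^2)$ for projector tomography immediately implies Theorem~\ref{thm:main}. The projector-tomography lower bound itself I would establish in two steps: (i) a $\dbur$-distance lower bound of $\Omega(rd/\epsilon^2)$ for projector tomography, and (ii) a reduction that converts any projector-tomography algorithm achieving trace distance error $\epsilon$ into one achieving Bures distance error $O(\epsilon)$ using essentially the same number of copies. Composing (i) and (ii) yields the desired trace distance lower bound.

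For (i), I would apply Yuen's reduction~\cite{Yue23}, which lifts a pure-state ($r=1$) Bures distance lower bound of $\Omega(d/\epsilon^2)$ to rank-$r$ projector tomography by tensoring in a maximally mixed factor on $\C^r$; only the pure-state case requires new work. The starting point is the expected-fidelity bound~\eqref{eq:expected-fidelity}, which as noted in the introduction is consistent with bimodal output distributions (perfect reconstruction most of the time, catastrophic failure otherwise) and so does not on its own preclude a high-probability algorithm with $o(d/\epsilon^2)$ copies. To upgrade it to a high-probability statement, I would try to control higher moments $\E |\langle\bu|\widehat\bu\rangle|^{2k}$ of the output fidelity---using Weingarten / Schur--Weyl identities to evaluate the Haar average against the algorithm's $n$-copy measurement---and then apply a Paley--Zygmund-type anti-concentration argument to rule out the bimodal distributions that are consistent with~\eqref{eq:expected-fidelity}.

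The main obstacle is (ii). The naive post-processing outputs $\widehat P/r$, where $\widehat P$ projects onto the top-$r$ eigenspace of $\hat\rho$; a Weyl-type inequality then gives $\dtr(\widehat P/r, P/r) = O(\epsilon)$. But for two rank-$r$ projectors with a single nontrivial principal angle $\theta$ one computes $\dtr(\widehat P/r, P/r) \approx \theta/r$ while $\dbur(\widehat P/r, P/r) \approx \theta/\sqrt{r}$, so this rounding can inflate the Bures error by a factor of $\sqrt{r}$ and is insufficient. The reduction therefore has to use the full spectral structure of $\hat\rho$, not just its top-$r$ subspace. A promising approach is to interpolate---via a convex combination, matrix geometric mean, or a more bespoke spectral modification---between $\hat\rho$ itself (which by Fuchs--van de Graaf already satisfies $\dbur(\hat\rho, P/r) = O(\sqrt{\epsilon})$) and the projector estimate $\widehat P/r$, with the hope that the ``concentrated principal angle'' regime in which $\widehat P/r$ is a bad Bures estimate is exactly the regime in which $\hat\rho$ retains enough eigenvalue slack to cover the gap. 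Making this trade-off quantitative and uniform in $\hat\rho$ is the technical heart of the argument; combined with (i), it proves Theorem~\ref{thm:main}.
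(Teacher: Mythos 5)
Your high-level plan matches the paper's: reduce to rank-$r$ projector tomography, prove a Bures-distance lower bound, and bootstrap a trace-distance algorithm to a Bures-distance one. But there is a real gap in your step~(ii), which is where the paper's main new idea lives.

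\paragraph{Step (i).}
Your route via Yuen's reduction plus a high-probability pure-state bound is reasonable in spirit, and the moment-bound-plus-anticoncentration idea is essentially what the paper does for $r=1$ (\Cref{pure_states_upper_bound_kth_moment}, then comparing the $k$-th moment upper bound against a $0.99\cdot(1-2\epsilon^2)^{2k}$ lower bound). The paper does not, however, lift to general $r$ via Yuen's reduction; it reproves the Bures lower bound for all $r$ directly by bounding the $k$-th moment of the \emph{affinity} $r\tr(\brho\widehat{\brho})$ using Schur--Weyl, the Littlewood--Richardson rule, and the hook-content formula. The stated reason for doing so is that the hard instances arising in Yuen's reduction are not projector states, whereas the bootstrapping step (below) only applies to projector tomography algorithms. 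If you do go through Yuen's reduction you would need to verify that the resulting hard family is (or can be made) a family of projector states; otherwise you are proving a Bures lower bound for a different promise class than your step~(ii) will require.

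\paragraph{Step (ii): the gap.}
You frame the obstacle as a \emph{rounding} problem---the top-$r$ projector of $\hat\rho$ inflates Bures error---and propose to fix it by a one-shot spectral interpolation between $\hat\rho$ and $\widehat P/r$. This cannot work, and for a reason that has nothing to do with rounding. By \Cref{WLOG_projector_output} one may assume WLOG that the algorithm $\calA$ already outputs projector states, so there is no eigenvalue slack to exploit. Now consider an adversarial $\calA$ that, on input $(P/r)^{\otimes n}$, always outputs $\bQ/r$ obtained by deleting a \emph{fixed} $\epsilon r$-dimensional subspace of $P$ and replacing it by a fixed orthogonal subspace. Then $\dtr(P/r,\bQ/r)=\epsilon$ every time (so $\calA$ is a legitimate trace-distance learner), but $\dbur(P/r,\bQ/r)\approx\sqrt{2\epsilon}$ every time, and $\bQ/r$ is already rank-$r$, so no spectral post-processing of a \emph{single} output can ever recover the discarded $\epsilon r$-dimensional piece of $P$. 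The Fuchs--van~de~Graaf gap here is information-theoretic given one run of $\calA$, not a consequence of a lossy projection step.

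The paper's fix is to \emph{randomize the input and call $\calA$ twice}: draw a Haar unitary $\bU_i$, feed $\bU_i\rho\bU_i^\dagger$ to $\calA$, and undo the rotation to get $\widehat{\bP}_i = \bU_i^\dagger\bQ_i\bU_i$. Conditioned on $P$, the distribution of $\widehat{\bP}_i$ is invariant under conjugation by block unitaries $U_P\oplus U_{\overline P}$, so the $\epsilon r$-dimensional ``missed'' subspace of $P$ inside $\widehat{\bP}_i$ is Haar random in $\supp(P)$, not adversarial. Two independent draws $\widehat{\bP}_1,\widehat{\bP}_2$ therefore miss essentially disjoint parts of $P$ (made precise via Jordan's lemma plus L\'evy's lemma: the well-aligned subprojectors $\bB_1,\bB_2$ of $P$ robustly cover it). One then projects the remaining copies onto $\bR=\Projector(\Span\{\widehat{\bP}_1,\widehat{\bP}_2\})$, a $\le 2r$-dimensional subspace satisfying $\tr(\bR\rho)\geq 1-O(\epsilon^2)$, and runs a fresh $O(r^2/\epsilon^2)$-copy Bures tomography~\cite{PSW25} inside $\bR$. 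Your proposal is missing precisely this randomize-then-rerun-then-refine mechanism, and without it the reduction fails on the adversarial example above.
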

\noindent
Paired with the upper bound of O'Donnell and Wright~\cite{OW16}, this resolves the sample complexity for trace distance tomography.

\subsection{Projector tomography} \label{sec:intro_projector_tomography}
To prove \Cref{thm:main}, we focus on a special case of the rank-$r$ tomography problem that we refer to as \emph{projector tomography}.
This is the case in which the unknown rank-$r$ state $\rho$ is promised to be of the form $\rho = P/r$, where $P$ is the projector onto some unknown rank-$r$ subspace;
we refer to states of this form as \emph{rank-$r$ projector states}.
When $r = 1$, the set of rank-$r$ projector states coincides with the set of all pure states.
A recurring theme throughout quantum information is that pure states have a special structure that makes them especially convenient to analyze in many different settings.
For example, the optimal tomography algorithm for pure states due to Hayashi was discovered in~\cite{Hay98}, long before the optimal tomography bounds for general mixed states were shown by O'Donnell and Wright in~\cite{OW16}.
Similarly, we will give an especially simple proof of our optimal tomography lower bound in the pure state case. 
For larger values of $r$, however, rank-$r$ projector states form a highly structured subset of the set of all rank-$r$ mixed states.
As we will argue below, we believe that
projector states possess at least some of the same features that make pure states so convenient to prove upper and lower bounds for, and we will use these features to prove the following theorem.

\begin{theorem}[Tight upper and lower bounds for projector tomography]\label{thm:projector}
    There is a tomography algorithm which learns an unknown rank-$r$ projector state to Bures distance error $\epsilon$ using $n = O(rd/\epsilon^2)$ copies.
    In addition, $n = \Omega(rd/\epsilon^2)$ copies are required to learn an unknown rank-$r$ projector state to trace distance error~$\epsilon$.
\end{theorem}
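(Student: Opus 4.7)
My plan is to prove the two parts of \Cref{thm:projector} separately, with both relying critically on the trace-to-Bures reduction promised by the abstract: namely, any algorithm that learns a rank-$r$ projector state to trace distance $\epsilon$ with $n$ copies can be converted to one that learns it to Bures distance $O(\epsilon)$ with the same $n$ copies. Granting this reduction, the two metrics become equivalent for projector tomography up to a constant factor in $\epsilon$, so I can target whichever metric is more convenient for each direction.

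For the upper bound, I would simply invoke the O'Donnell--Wright algorithm from~\cite{OW16}, which returns a trace-distance-$\epsilon$ estimator of any rank-$r$ mixed state using $n = O(rd/\epsilon^2)$ copies; this applies in particular to rank-$r$ projector states. Feeding this trace-distance algorithm into the reduction yields a Bures-distance-$O(\epsilon)$ algorithm with the same sample complexity, and rescaling $\epsilon$ absorbs the constant to give the claimed Bures upper bound.

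For the lower bound, the plan mirrors the path sketched in the introduction. First, I would prove the pure-state base case: $n = \Omega(d/\epsilon^2)$ copies of a Haar-random pure state are required to output a Bures-distance-$\epsilon$ estimate with constant probability. The natural route is an information-theoretic packing argument on complex projective space, coupled with a careful bound on the mutual information between the Haar-random state and the measurement record; the subtle point to handle is the high-probability (rather than merely in-expectation) conclusion, since the Bruss--Macchiavello bound~\eqref{eq:expected-fidelity} does not directly preclude algorithms that succeed perfectly on all but a small fraction of inputs. Next, Yuen's reduction~\cite{Yue23} lifts the pure-state bound into an $\Omega(rd/\epsilon^2)$ Bures-distance lower bound against rank-$r$ projector tomography. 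Finally, the contrapositive of the trace-to-Bures reduction converts this Bures-distance lower bound into the trace-distance lower bound asserted in the theorem.

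The step I expect to be hardest is the trace-to-Bures reduction itself. The generic inequality $\dbur \leq \sqrt{2\,\dtr}$ is tight precisely on pairs of rank-$r$ projectors whose misalignment concentrates on a single principal angle of $\pi/2$, which give $\dtr = 1/r$ but $\dbur = \sqrt{2/r}$, so any reduction must exploit the projector structure to rule out such unbalanced error profiles in the algorithm's output. A plausible approach is to post-process the trace-distance estimator into a rank-$r$ projector whose principal-angle error against the true subspace is more uniform, or to spend a constant fraction of the copies on a verification / SWAP-test style refinement stage that detects and corrects such pathological outputs. A secondary obstacle is obtaining the pure-state lower bound with a high-probability guarantee rather than merely in expectation, which is a step notably absent from the prior literature.
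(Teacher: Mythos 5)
There is a genuine gap in the lower-bound chain. You propose to first prove the pure-state base case, then apply Yuen's reduction~\cite{Yue23} to obtain an $\Omega(rd/\epsilon^2)$ Bures-distance lower bound for rank-$r$ projector tomography, and finally run that through the contrapositive of the trace-to-Bures reduction. The problem is that Yuen's reduction produces a Bures-distance lower bound against \emph{general} rank-$r$ tomography, and the hard instances it constructs are not rank-$r$ projector states. The trace-to-Bures bootstrapping, however, is valid \emph{only} for projector tomography algorithms (it relies essentially on the rigid Jordan-block structure of a pair of equal-rank projectors). So you cannot chain: a lower bound proved against non-projector instances says nothing about an algorithm $\calA'$ that is promised to receive projector-state inputs. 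The paper flags exactly this mismatch and, as a consequence, develops a self-contained Bures-distance lower bound \emph{for projector tomography} (Proposition~\ref{projector_lower_bound}) via a $k$-th moment bound on the affinity $\aff(\brho,\widehat{\brho})^k = r^k\tr(\bP^{\otimes k}\widehat{\bP}^{\otimes k})$, analyzed through Schur--Weyl duality, the Littlewood--Richardson rule, and the hook-content formula. This replaces both your packing/mutual-information base case and the Yuen lift; it is the moment-method generalization of the $r=1$ argument in Section~\ref{sec:pure_states}, and it is the step your outline is missing.

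Two smaller points. For the upper bound, you invoke~\cite{OW16} and the reduction; but the bootstrapping $\calA'$ in Proposition~\ref{prop_bootstrapping} itself calls the Bures-distance tomography algorithm of~\cite{PSW25} as a subroutine on the $\leq 2r$-dimensional subspace, so your route is effectively just citing~\cite{PSW25}. The paper instead gives an independent proof via a pretty good measurement tailored to the Haar measure on rank-$r$ projectors, computing $\E[\aff]$ using Pieri's rule and off-the-shelf Schur-polynomial estimates; this is what the theorem statement is advertising as new. Also, the reduction does not use ``the same $n$ copies'' --- it uses $2n + O(r^2/\epsilon^2)$ --- which matters precisely because the $O(r^2/\epsilon^2)$ additive term must be dominated by $O(rd/\epsilon^2)$; this is why the paper's Theorem~\ref{thm:lower_bound_td_projectors} restricts to $r \leq c_1 d$ and handles the regimes $r < r_0$ and $r > c_1 d$ separately. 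Finally, your intuition about the obstruction (unbalanced principal-angle error profiles, corrected by a verification stage) is on the right track, but the actual fix is to randomize the input by a Haar unitary so that the ``missing'' $\epsilon r$-dimensional subspace of $P$ is itself Haar-random inside $\supp(P)$, run $\calA$ twice, take $\bR = \Spann\{\widehat{\bP}_1,\widehat{\bP}_2\}$, and then perform Bures-distance tomography inside $\bR$; showing $\tr(\bR\rho) \geq 1 - O(\epsilon^2)$ requires Jordan's lemma plus a L\'{e}vy-concentration argument, not a SWAP test.
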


As rank-$r$ projector states are a subset of all rank-$r$ mixed states, \Cref{thm:main} follows from \Cref{thm:projector} as a corollary.
Our upper bound already follows from the more general rank-$r$ fidelity tomography result of Pelecanos, Spilecki, and Wright~\cite{PSW25}.
However, in our case, we are able to give a simplified algorithm and analysis that we think will be of independent interest.
We note that rank-$r$ projector states have appeared in several prior works in the learning theory literature.
Indeed, the lower bound proofs of Haah et al.~\cite{HHJ+16} and Wright~\cite[Section 5.5]{Wri16} all use rank-$r$ projector states, although the techniques they use to analyze them are very different from the techniques we use.
More recently, the work of Pelecanos, Tan, Tang, and Wright~\cite{PTTW25} studied the problem of estimating the spectrum of an unknown mixed state $\rho \in\C^{d \times d}$.
A key step in their algorithm is known as ``bucketing'',
and in their Section 9.2 they identify rank-$r$ projector state tomography as a bottleneck for improving the sample complexity of their bucketing step.
In particular, they conjecture that $n = \Omega(rd/\epsilon^2)$ copies are required to perform tomography of rank-$r$ projector states in Bures distance,
and they show that if this were true, then bucketing up to a ``threshold value'' of $0 \leq B \leq 1$ would require $\Omega(d B^{-1}/\epsilon)$ copies.
Our \Cref{thm:projector} proves their conjectured lower bound and improves it to hold for trace distance as well.

Below we describe how we prove the lower and upper bounds in \Cref{thm:projector}.

\subsubsection{Proof outline: the lower bound}

Our proof of the lower bound is in two steps.
\begin{enumerate}
    \item Prove that $n = \Omega(rd/\epsilon^2)$ copies are necessary for rank-$r$ projector tomography in Bures distance.
    \item ``Bootstrap'' this result to hold for trace distance as well.
\end{enumerate}
\noindent
A special case of our first step is that $n = \Omega(d/\epsilon^2)$ copies are necessary for learning pure states, which, as mentioned before, patches the hole in Yuen's $n = \Omega(rd/\epsilon^2)$ lower bound for Bures distance tomography~\cite{Yue23}.
However, our first step also gives a proof of this Bures distance lower bound for general $r$ which is different than Yuen's proof.
The reason we need to reprove this lower bound is that our bootstrapping step only works for projector tomography, and the class of states produced by Yuen's lower bound are not projector states.
We now describe these steps in more detail.

\paragraph{Step 1: a fidelity lower bound.}
Let us begin by describing the first step in the $r = 1$ pure state case.
Actually, it is not especially hard to combine existing results in the literature to prove an $n = \Omega(d/\epsilon^2)$ lower bound in this case.
The simplest proof we found uses the fact that the optimal pure state tomography algorithm has already been identified in the literature~\cite{Hay98}.
One can then show that this particular algorithm's fidelity $|\braket{\bu}{\widehat{\bu}}|^2$ concentrates well about its mean by computing its variance;
this allows one to convert the bound on its expected fidelity from \Cref{eq:expected-fidelity} into a bound on its fidelity with high probability.
However, we were unable to generalize this argument to the case of rank-$r$ projector tomography for $r \geq 2$,
as although we have an algorithm that we believe is optimal for this case, we were unable to prove that it is indeed optimal.
Instead, we will describe an alternative proof which we were able to generalize to the rank-$r$ case.

As we saw in \Cref{eq:expected-fidelity}, it is possible to bound the expected fidelity of any pure state tomography algorithm.
In fact, it is well-known that one can derive tight bounds on the $k$-th moment of the fidelity, for any integer $k \geq 1$.
A standard reference for these bounds is~\cite[Section 2.1]{Har13}, which shows that
\begin{equation*} 
    \E_{\ket{\bu}, \ket{\widehat{\bu}}} |\braket*{\bu}{\widehat{\bu}}|^{2k} \leq \frac{\binom{d+n-1}{n}}{\binom{d+n+k-1}{n+k}}.
\end{equation*}
Expanding out the binomial, we have that
\begin{equation}\label{eq:moment-bound}
\E_{\ket{\bu}, \ket{\widehat{\bu}}} |\braket*{\bu}{\widehat{\bu}}|^{2k}
\leq
    \frac{\binom{d+n-1}{n}}{\binom{d+n+k-1}{n+k}}
    =\frac{(n+1)\dots(n+k)}{(n+d)\dots(n+d+k-1)} \leq \Big(\frac{n+k}{d+n+k-1}\Big)^{k}.
\end{equation}
Roughly, this states that the $k$-th moment of the fidelity decays exponentially as $k$ grows larger (so long as $k$ does not grow \emph{too} large), and the rate of decay is greater when the number of samples $n$ is small.
This places a bound on the performance of any algorithm which uses a small number of samples, and we show that this moment bound implies that any algorithm for learning in Bures distance requires $n = \Omega(d/\epsilon^2)$ copies.
As an example of why higher moments might be useful to do this, recall our counterexample from earlier of an algorithm $\calA$ which uses $n = o(d/\epsilon^2)$ copies and outputs an estimate whose fidelity is 0 with probability .0001 and 1 with probability .9999. Then the $k$-th moment of its fidelity is equal to .9999, for any value of $k$. This may not contradict the bound on the expected fidelity (the $k = 1$ case of \Cref{eq:moment-bound}), but for larger $k$ the right-hand side will eventually decay to a number smaller than .9999, which is a contradiction. 

Now let us try to extend this argument to higher ranks $r$.
Let $\calA$ be a rank-$r$ projector tomography algorithm.
We will consider the following experiment: sample a Haar random rank-$r$ subspace $\bP$ in $\C^d$, and provide $\calA$ with $n$ copies of the projector state $\brho = \bP/r$.
Let $\widehat{\brho} = \bQ/r$ be its output.
Our $r = 1$ proof suggests that we would like to understand the expected fidelity $\fid(\brho, \widehat{\brho})$ and its moments.
However, the fidelity is not a particularly ``nice'' function of its inputs, and it is not clear at all how we would compute the expectation of the fidelity, much less its higher moments.
Instead, we will use the \emph{affinity} $\aff(\brho, \widehat{\brho}) = \mathrm{tr}(\sqrt{\brho} \sqrt{\widehat{\brho}})$, which satisfies $\aff(\brho, \widehat{\brho}) \leq \fid(\brho, \widehat{\brho}) \leq \sqrt{\aff(\brho, \widehat{\brho})}$ and is therefore closely related to the fidelity in the regime where $\brho$ and $\widehat{\brho}$ are similar to each other.
For general states $\brho$ and $\widehat{\brho}$, the affinity can still be difficult to work with due to the square roots, but when $\brho$ and $\widehat{\brho}$ are rank-$r$ projector states, we have $\sqrt{\brho} = \sqrt{r} \cdot \brho$ and $\sqrt{\widehat{\brho}} = \sqrt{r} \cdot \widehat{\brho}$, in which case the affinity simplifies nicely to $\aff(\brho, \widehat{\brho}) = r\cdot \tr(\brho \cdot \widehat{\brho})$.
Higher moments of the affinity behave nicely too: in particular, $\aff(\brho, \widehat{\brho})^k = r^k \cdot \mathrm{\tr}(\brho^{\otimes k} \cdot \widehat{\brho}^{\otimes k})$.
Since the affinity and its higher powers are simple expressions involving tensor powers of $\brho$ and $\widehat{\brho}$, we can compute and bound the moments using tools from representation theory such as Schur-Weyl duality.
Eventually, we are able to derive a bound on the $k$-th moment that resembles \Cref{eq:moment-bound},
and from there we can derive our $n = \Omega(rd/\epsilon^2)$ lower bound.

\paragraph{Step 2: bootstrapping.}
For our second step, let $\calA$ be an algorithm which solves rank-$r$ projector state tomography with trace distance error $\epsilon$ using $n$ copies. 
In other words, given a rank-$r$ projector state $\rho = P/r$, $\calA$ outputs a random estimator $\widehat{\brho} = \bQ/r$ such that $\mathrm{D}_{\mathrm{tr}}(\rho, \widehat{\brho}) \leq \epsilon$.
In our bootstrapping step, we would like to show that $\calA$ can be converted into an algorithm $\calA'$ which performs tomography with Bures distance error $O(\epsilon)$ using roughly the same number of copies $O(n)$.
If we could do this, then because we proved that Bures distance tomography to error $O(\epsilon)$ requires $\Omega(rd/\epsilon^2)$ copies in our first step, this would show that $n = \Omega(rd/\epsilon^2)$ copies are also needed for trace distance tomography to error $O(\epsilon)$, which would complete the proof.
Now, from \Cref{eq:intro-ineqs}, we know that  $\mathrm{D}_{\mathrm{tr}}(\rho, \widehat{\brho}) \leq \mathrm{D}_{\mathrm{B}}(\rho, \widehat{\brho}) \leq \sqrt{2\mathrm{D}_{\mathrm{tr}}(\rho, \widehat{\brho})}$, and so if we imagine that $\mathrm{D}_{\mathrm{tr}}(\rho, \widehat{\brho}) \approx \epsilon$, then we have, roughly, $\epsilon \leq \mathrm{D}_{\mathrm{B}}(\rho, \widehat{\brho}) \leq \sqrt{2\epsilon}$.
If $\mathrm{D}_{\mathrm{B}}(\rho, \widehat{\brho})$ is closer to the lower bound, then we are happy and $\widehat{\brho}$ itself is already a good Bures distance estimator for $\rho$.
But $\mathrm{D}_{\mathrm{B}}(\rho, \widehat{\brho})$ could very well be closer to the upper bound, in which case it is off from our desired trace distance error $\epsilon$ by a square root factor.
Thus, simply running $\calA$ once and directly returning its output is not good enough to bootstrap it into a trace distance learning algorithm.
However, we show that to construct the bootstrapped Bures distance tomography algorithm $\calA'$, it actually suffices to call $\calA$ as a subroutine \emph{twice}, as well as use $O(r^2/\epsilon^2)$ additional copies of $\rho$.
This gives a Bures distance tomography algorithm with $2n + O(r^2/\epsilon^2)$ copies in total, and as we know it must use $\Omega(rd/\epsilon^2)$ copies for this task, this proves the bound $n = \Omega(rd/\epsilon^2)$.

To motivate our bootstrapping algorithm, let us try to understand the following question: if $\mathrm{D}_{\mathrm{tr}}(\rho, \widehat{\brho}) = \epsilon$,
when is $\mathrm{D}_{\mathrm{B}}(\rho, \widehat{\brho})$ closer to $\epsilon$, and when is it closer to $\sqrt{2\epsilon}$?
This entails understanding the relationship between the true projector $P$ and the estimated projector $\bQ$,
and there is a well-known technique for understanding the relationship between two projectors known as \emph{Jordan's lemma}.
In our setting, Jordan's lemma states, roughly, that $P$ and $\bQ$ can be simultaneously block diagonalized into $2 \times 2$ blocks known as \emph{Jordan blocks}, and within each Jordan block $P$ and $\bQ$ both act as rank-1 projectors.
This means that we can diagonalize $P$ and $\bQ$ according to these blocks as
\begin{equation*}
    P = \sum_{i=1}^r \ketbra{\bu_i}
    \quad\text{and}\quad
    \bQ = \sum_{i=1}^r \ketbra{\bv_i},
\end{equation*}
where $\ketbra{\bu_i}$ and $\ketbra{\bv_i}$ are the restrictions of $P$ and $\bQ$ to the $i$-th subspace, respectively.
Across different blocks, $\ket{\bu_i}$ and $\ket{\bv_j}$ are orthogonal, and within the $i$-th block, let us write $\bomega_i = |\braket{\bu_i}{\bv_i}|$ for their overlap.
As it turns out, there are simple formulas for the trace and Bures distance in terms of these overlaps.
For example,
\begin{align}
    \dtr(\rho, \widehat{\brho})
    = \frac{1}{2} \norm{ \rho - \widehat{\brho}}_1
    &= \frac{1}{2r} \norm{P - \bQ}_1\nonumber\\
    &= \frac{1}{2r} \sum_{i=1}^r \norm{ \ketbra{\bu_i} - \ketbra{\bv_i}}_1 = \frac{1}{r} \sum_{i=1}^r \sqrt{1 - \left|\braket{\bu_i}{\bv_i} \right|^2}
    = \frac{1}{r} \sum_{i=1}^r \sqrt{1 - \bomega_i^2}.\label{eq:exact-td-formula}
\end{align}
Similarly, we can write the fidelity as
\begin{equation*}
    \fid(\rho, \widehat{\brho}) = \frac{1}{r} \sum_{i=1}^r \bomega_i,
\end{equation*}
which gives a formula for the Bures distance via $\dbur = \sqrt{2(1 - \fid)}$.

Recalling that we are assuming $\dtr(\rho, \widehat{\brho}) = \epsilon$, let us consider two different extreme cases for how the overlaps $\bomega_1, \ldots, \bomega_r$ might behave.
\begin{itemize}
\item
In one extreme, let us suppose that $\bomega_1 = \cdots = \bomega_r$. Then from \Cref{eq:exact-td-formula}, we must have $\bomega_i^2 = 1 - \epsilon^2$ for all $i$, so that $\bomega_i \approx 1- \frac{1}{2}\epsilon^2$ for all $i$.
In this case, we have $\fid(\rho, \widehat{\brho}) \approx 1 - \frac{1}{2} \epsilon^2$, which implies that $\dbur(\rho, \widehat{\brho}) \approx \epsilon$.
Intuitively, this is the case in which $\bQ$ is basically equal to $P$, except with a slight, uniform error across all of the Jordan blocks. And in this case, we have seen that $\widehat{\brho}$ is itself a good Bures distance estimate for $\rho$.
\item
In the other extreme, let us suppose that $P$ is exactly equal to $Q$ on the first $r^* < r$ Jordan blocks and orthogonal to $Q$ on the remaining Jordan blocks.
In other words, $\bomega_1 = \cdots = \bomega_{r^*} = 1$ and $\bomega_{r^*+1} = \cdots = \bomega_r = 0$.
If $\dtr(\rho, \widehat{\brho}) = \epsilon$, then \Cref{eq:exact-td-formula} implies that $r^* = (1 - \epsilon) r$.
In this case, we have $\fid(\rho, \widehat{\brho}) = \epsilon$, and so $\dbur(\rho, \widehat{\brho}) = \sqrt{2 \epsilon}$, which is off from our desired bound by a square root factor.
Intuitively, this extreme is the case where the estimator nails a large part of $P$ but completely misses the rest,
and this is the problematic case for getting good Bures distance estimates.
\end{itemize}
Since in the first case $\widehat{\brho}$ is already a good Bures distance estimate for $\rho$,
let us imagine that we are in the second case.
In this case, for the sake of intuition we can imagine that the algorithm $\calA$ begins with the true projector $P$ and forms $\bQ$ by adversarially choosing a subspace $S \subseteq P$ of size $\epsilon r$, discarding it from $P$, and substituting it with another adversarially-chosen subspace $S' \subseteq \overline{P}$ of the same size.
To improve the estimate $\bQ$ of $P$ so that it has Bures distance error $\epsilon$, the bootstrapped algorithm $\calA'$ must somehow ``rediscover'' the discarded subspace $S$ and add it back to $\bQ$, and it is allowed to perform multiple executions of $\calA$ to aid it in its rediscovery.
However, if the algorithm $\calA$ does indeed act adversarially, it may decide to simply discard the same subspace $S$ every time, meaning that $\calA'$ will never be able to rediscover it.

To fix this issue, suppose we sample a Haar random unitary $\bU$ and provide $\calA$ with $n$ copies of $\bU \rho \bU^{\dagger} = (\bU P  \bU^{\dagger})/r$, rather than just giving it $n$ copies of $\rho$. 
If $\bQ/r$ is its output, then $\widehat{\bP}/r = \bU^{\dagger} \bQ \bU$ should be a good estimate for $P$.
But why go through the trouble of rotation $\rho$ prior to giving it to $\calA$?
The answer is that since $\calA$ does not know the original projector $P$ nor the unitary $\bU$ which was applied to rotate it, $\calA$'s ability to adversarially pick the subspace $S$ to discard is hampered.
In particular, it can be shown that the output $\widehat{\bP}$ of this process has the same distribution as
\begin{equation*}
    (\bU_{P}^\dagger \oplus \bU_{\overline{P}}^\dagger) \cdot \widehat{\bP} \cdot (\bU_{P} \oplus \bU_{\overline{P}}),
\end{equation*}
where $\bU_P$ is a Haar random unitary within the $P$ subspace and $\bU_{\overline{P}}$ is an independent Haar random within the $\overline{P}$ subspace.
This means that even if $\widehat{\bP}$ is ``missing'' a subspace $\bS$ from $P$ of dimension $\epsilon r$, then this subspace is not chosen adversarially but instead uniformly at random from all subspaces of $P$ of this dimension.
In particular, if we run this process twice to generate two projectors $\widehat{\bP}_1$ and $\widehat{\bP}_2$, then the subspace of $P$ missing in $\widehat{\bP}_1$ will largely be present in $\widehat{\bP}_2$, and similarly the subspace missing in $\widehat{\bP}_2$ will largely be present in $\widehat{\bP}_1$.
Hence, the span of these two subspaces will likely contain all of $P$, meaning that $\calA'$ can indeed ``rediscover'' the discarded subspaces.

This intuition was for solving our second extreme case above.
Our ultimate bootstrapping algorithm must also work for the first extreme case above, as well as the more general case, which might fall along the spectrum between these two extremes.
Our final bootstrapping algorithm $\calA'$, which achieves this, looks as follows.
\begin{enumerate}
    \item Pick a random unitary $\bU$. Give $n$ copies of $\bU \rho \bU^{\dagger}$ to $\calA$ and let $\bQ/r$ be its output. Write $\widehat{\bP}_1 = \bU^{\dagger} (\bQ/r) \bU$.
    \item Repeat this process a second time to construct $\widehat{\bP}_2$.
    \item Let $\bR$ be the projector onto $\Span\{\widehat{\bP}_1, \widehat{\bP}_2\}$.
    \item Take $O(r^2/\epsilon^2)$ copies of $\rho$ and measure each of them with $\{\bR, \overline{\bR}\}$.
    Discard the post-measurement states corresponding to the outcome $\overline{\bR}$.
    \item The remaining post-measurement states $\rho|_{\bR}$ live inside $\bR$, which is a subspace of dimension at most $2r$.
    Using the Bures distance tomography algorithm of Pelecanos, Spilecki, and Wright \cite{PSW25}, we can compute an estimate $\widehat{\brho}$ of $\rho|_{\bR}$ with Bures distance error $\epsilon$ using only $O(r^2/\epsilon^2)$ copies of $\rho|_{\bR}$.
    \item Output $\widehat{\brho}$ as the estimate for $\rho$.
\end{enumerate}
The key difficulty is showing that $\bR$ does indeed contain almost all of $\rho$.
Technically, our goal is to prove that $\tr(\bR \cdot \rho) \geq 1 - O(\epsilon^2)$.
This will imply two things: first, that measuring our $O(r^2/\epsilon^2)$ copies of $\rho$ will with high probability leave us with $O(r^2/\epsilon^2)$ copies of $\rho|_{\bR}$.
Second, it implies that $\rho|_{\bR}$ is $\epsilon$-close to $\rho$ in Bures distance.
With these two facts established, the correctness of the algorithm follows immediately.

\subsubsection{Proof outline: the upper bound}

Historically, designing and analyzing optimal algorithms for full state tomography has proved to be quite challenging.
Part of the reason for this is that it is not even clear what exactly the right full state tomography algorithm to use is:
between Keyl's algorithm~\cite{Key06} and the two algorithms proposed by Haah et al.~\cite{HHJ+16},
we know of three different tomography algorithms which achieve optimal or near-optimal sample complexities,
and none of these seems to have a strong claim to be \emph{the} canonical full state tomography algorithm.
(Though perhaps the new debiased Keyl's algorithm of~\cite{PSW25} might finally lay claim to that title.)
Beyond that, actually analyzing these algorithms is also difficult, as it tends to involve somewhat complicated representation theory.

One notable exception to this is the case of pure state tomography. In this case, there is a well-known ``canonical'' algorithm due to Hayashi~\cite{Hay98} which simply performs the POVM
\begin{equation*}
    \Big\{\binom{d+n-1}{n} \cdot \ketbra{v}^{\otimes n} \cdot \mathrm{d}v\Big\},
\end{equation*}
and outputs the measurement outcome $\ket{v}$ as its estimator (here, $\mathrm{d}v$ is the Haar measure on pure states). This algorithm and its analysis are so clean that they can be taught in both undergraduate and graduate classes on quantum computing~\cite{Wri15,Wal17,Wri24b}, and they give a good introduction to the power of representation theory in designing quantum algorithms.
One way of viewing this algorithm is as an instantiation of the \emph{Pretty Good Measurement (PGM)}~\cite{Bel75,Hol79,HW94} from the field of quantum hypothesis testing.
In quantum hypothesis testing,
there is a probability distribution $\alpha = (\alpha_1, \ldots, \alpha_m)$ over $m$ mixed quantum states $\rho_1, \ldots, \rho_m$.
One is given the state $\rho_{\bi}$, where $\bi$ is sampled according to $\alpha$, and the goal is to correctly identify the state $\rho_{\bi}$ with as high a probability as possible.
The PGM is the measurement $M = \{M_1, \ldots, M_m\}$ defined by $M_i = S^{-1/2} \cdot \alpha_i \rho_i \cdot \cdot S^{-1/2}$,
where $S = \alpha_1 \rho_1 + \cdots + \alpha_m \rho_m$.
The PGM is in general not the optimal strategy for quantum hypothesis testing, but it is known that its success probability is always at least $P_\mathrm{OPT}^2$, where $P_\mathrm{OPT}$ is the best possible success probability~\cite{BK02}.
If we view pure state tomography as a sort of hypothesis testing problem in which each state $\ketbra{v}^{\otimes n}$ occurs with measure $\mathrm{d}v$, then carrying out the PGM construction gives us
\begin{equation*}
    S = \E_{\ket{\bv} \sim \mathrm{Haar}} \ketbra{\bv}^{\otimes n} = \frac{1}{\binom{d+n-1}{n}} \cdot \Pi_{\mathrm{Sym}},
\end{equation*}
where $\Pi_{\mathrm{Sym}}$ is the projector onto the symmetric subspace (see \cite[Proposition 6]{Har13} for a proof of this fact).
And then the measurement outcome corresponding to to the state $\ketbra{v}^{\otimes n}$ is
\begin{equation*}
    S^{-1/2} \cdot \ketbra{v}^{\otimes n} \cdot \mathrm{d}v \cdot S^{-1/2} = \binom{d+n-1}{n} \cdot \ketbra{v}^{\otimes n} \cdot \mathrm{d}v,
\end{equation*}
exactly as in Hayashi's measurement.

Generalizing this construction to mixed state tomography is difficult, partially because there is no obvious canonical measure on mixed states analogous to the Haar measure on pure states.
(Though one of the two tomography algorithms in Haah et al.\ \emph{is} derived from the PGM using some distributions on mixed states~\cite[Section 5]{HHJ+16}.)
However, for the special case of projector tomography, there is a natural measure we can use, which is the Haar measure on rank-$r$ projectors.
Using this, we can carry out the PGM construction, and we wind up with a measurement which is a natural generalization of Hayashi's measurement.

\section{Preliminaries}
\newcommand{\content}{\mathrm{cont}}
\newcommand{\hook}{\mathrm{hook}}
\newcommand{\specht}{\mathrm{Sp}}
\newcommand{\sign}{\mathrm{sign}}
\renewcommand{\P}{\mathcal{P}}
\renewcommand{\Q}{\mathcal{Q}}
\newcommand{\Sym}{\mathrm{Sym}^n(\C^d)}
\newcommand{\projSym}[1]{\Pi^{(#1)}_{\mathrm{Sym}}}

\newcommand{\SYT}{\mathrm{SYT}}
\newcommand{\SSYT}{\mathrm{SSYT}}
\newcommand{\USW}{\calU_{\mathrm{SW}}}
\newcommand{\USWdagger}{\calU_{\mathrm{SW}}^\dagger}

Throughout this paper, we will use the following conventions:
\begin{itemize}
    \item Random variables will be written in \textbf{boldface}. We use $\bx \sim \calD$ to denote that $\bx$ is drawn from the distribution $\calD$. 
    \item If $n$ is a positive integer, $[n]$ denotes the set $\{1, 2, \dots, n\}$. 
    \item We write $S_n$ for the symmetric group on $[n]$, and $U(d)$ for the group of $d \times d$ unitary matrices. 
    \item We will always take \emph{projectors} to mean \emph{orthogonal projectors}, i.e.\ projectors $\Pi$ which satisfy $\Pi = \Pi^\dagger$ and $\Pi^2 = \Pi$. Moreover, we write $\overline{\Pi} \coloneq I - \Pi$ for the projector onto the orthogonal complement of $\Pi$.
    \item If $\ket{\psi}$ is a pure state, we may also write $\psi$ for the corresponding mixed state $\ketbra{\psi}$.
\end{itemize}

\subsection{Quantum distance measures}

\begin{definition}[Schatten $p$-norm]
    Let $M \in \C^{d \times d}$ be an operator with singular values $\lambda_1, \dots, \lambda_d$. For $p \geq 1$, the \emph{Schatten $p$-norm} of $M$ is
    \begin{equation*}
        \norm{M}_p = \bigg( \sum_{i=1}^d |\lambda_i|^p \bigg)^{1/p}. 
    \end{equation*}
\end{definition}
Let $\rho, \sigma \in \C^{d \times d}$ be quantum states. Our main results concern the sample complexity of learning quantum states in the two most common distance measures, \emph{trace distance} and \emph{fidelity}, which we define next. 

\begin{definition}[Trace distance] \label{def:td}
    The \emph{trace distance} between $\rho$ and $\sigma$ is
    \begin{equation*}
        \Dtr(\rho, \sigma) = \frac{1}{2} \norm{ \rho - \sigma}_1.
    \end{equation*}
    When $\rho = \ketbra{u}$ and $\sigma = \ketbra{v}$ are pure states, we have 
    \begin{equation*}
        \Dtr(\rho, \sigma) = \sqrt{ 1 - \left|\braket{u}{v}\right|^2}.
    \end{equation*}
\end{definition}

\begin{definition}[Fidelity] \label{def:fid}
    The \emph{fidelity} of $\rho$ and $\sigma$ is
    \begin{equation*}
        \Fid(\rho, \sigma) = \norm{ \sqrt{\rho} \sqrt{\sigma} }_1 = \tr \sqrt{ \sqrt{\rho} \sigma \sqrt{\rho} }.
    \end{equation*}
     When $\rho = \ketbra{u}$ and $\sigma = \ketbra{v}$ are pure states, we have $\Fid(\rho, \sigma) = \left| \braket{u}{v} \right|$. The \emph{infidelity} of $\rho$ and $\sigma$ is the quantity $1 - \Fid(\rho, \sigma)$.
\end{definition}

Note that we are using the ``square root'' convention for fidelity. Fidelity and trace distance are related by the Fuchs-van de Graaf inequalities. 
\begin{lemma}[Fuchs-van de Graaf inequalities, \cite{NC10}]  \label{fuchs}
    We have the following pair of inequalities:
    \begin{equation*}
        1 - \Fid(\rho, \sigma) \leq \Dtr(\rho, \sigma) \leq \sqrt{ 1 - \Fid(\rho, \sigma)^2}.
    \end{equation*}
\end{lemma}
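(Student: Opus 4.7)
The plan is to prove the two inequalities separately, using the standard variational characterizations of trace distance and fidelity. Throughout, the main tools will be (i) Helstrom's characterization $\Dtr(\rho,\sigma) = \max_{M} \tfrac{1}{2}\sum_i |\tr(M_i \rho) - \tr(M_i \sigma)|$ over POVMs $M$, (ii) Fuchs's measurement characterization $\Fid(\rho,\sigma) = \min_{M} \sum_i \sqrt{\tr(M_i \rho)\tr(M_i \sigma)}$, and (iii) Uhlmann's theorem, which says $\Fid(\rho,\sigma) = \max |\langle\psi|\phi\rangle|$ over all purifications $\ket{\psi}$ of $\rho$ and $\ket{\phi}$ of $\sigma$. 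I would either take these as named classical results or give a one-line proof for each where appropriate.

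For the lower bound $1 - \Fid(\rho,\sigma) \leq \Dtr(\rho,\sigma)$, first I would reduce to a purely classical statement via Fuchs's and Helstrom's characterizations. Let $M$ be any POVM and write $p_i = \tr(M_i \rho)$, $q_i = \tr(M_i \sigma)$. The elementary identity $(\sqrt{p_i}-\sqrt{q_i})^2 = p_i + q_i - 2\sqrt{p_i q_i}$ combined with $|p_i - q_i| = |\sqrt{p_i}-\sqrt{q_i}|(\sqrt{p_i}+\sqrt{q_i}) \geq (\sqrt{p_i}-\sqrt{q_i})^2$ yields
\begin{equation*}
    \tfrac{1}{2}\sum_i |p_i - q_i| \;\geq\; \tfrac{1}{2}\sum_i (\sqrt{p_i}-\sqrt{q_i})^2 \;=\; 1 - \sum_i \sqrt{p_i q_i}.
\end{equation*}
Choosing $M$ to be the Fuchs-optimal POVM for the right-hand side gives $\tfrac{1}{2}\sum_i |p_i - q_i| \geq 1 - \Fid(\rho,\sigma)$, and taking the Helstrom-maximum over $M$ on the left yields $\Dtr(\rho,\sigma) \geq 1 - \Fid(\rho,\sigma)$.

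For the upper bound $\Dtr(\rho,\sigma) \leq \sqrt{1 - \Fid(\rho,\sigma)^2}$, the plan is to reduce to the pure-state case via Uhlmann. Let $\ket{\psi}, \ket{\phi}$ be purifications in an extended Hilbert space achieving $|\langle\psi|\phi\rangle| = \Fid(\rho,\sigma)$. For pure states the bound is in fact an equality: expanding $\ket{\phi}$ in a basis beginning with $\ket{\psi}$ and computing directly (or from the definition in the pure-state line of Definition~\ref{def:td}) gives $\Dtr(\ketbra{\psi}, \ketbra{\phi}) = \sqrt{1 - |\langle\psi|\phi\rangle|^2}$. Then monotonicity of trace distance under the partial trace (a CPTP map) gives
\begin{equation*}
    \Dtr(\rho,\sigma) \;\leq\; \Dtr(\ketbra{\psi}, \ketbra{\phi}) \;=\; \sqrt{1 - |\langle\psi|\phi\rangle|^2} \;=\; \sqrt{1 - \Fid(\rho,\sigma)^2}.
\end{equation*}

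The main obstacle, if anything, is that this proof relies on three nontrivial background facts (Helstrom's theorem, Fuchs's variational formula, and Uhlmann's theorem). Since the lemma is cited from \cite{NC10} as a textbook result, the cleanest presentation is to state these as known and keep the actual derivation to the short classical inequality and the one-line monotonicity argument above; otherwise one would need to prove Uhlmann's theorem (via the polar decomposition of a Schmidt-style ansatz) and Fuchs's characterization (via a convex duality argument on the square-root), each of which is standard but would lengthen the exposition considerably.
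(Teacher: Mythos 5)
The paper states \Cref{fuchs} as a known result cited from \cite{NC10} and gives no proof at all, so there is no ``paper's approach'' to compare against. Your sketch is the standard textbook proof (essentially the one in \cite{NC10} itself): the lower bound reduces to the classical Hellinger-vs.-total-variation inequality via Helstrom's and Fuchs's measurement characterizations, and the upper bound reduces to the pure-state identity via Uhlmann's theorem and monotonicity of trace distance under partial trace. The two small steps you do spell out are correct: the chain $|p_i - q_i| = |\sqrt{p_i}-\sqrt{q_i}|(\sqrt{p_i}+\sqrt{q_i}) \geq (\sqrt{p_i}-\sqrt{q_i})^2$ is fine since square roots are nonnegative, and summing $(\sqrt{p_i}-\sqrt{q_i})^2$ indeed telescopes to $2 - 2\sum_i\sqrt{p_iq_i}$ because both $p$ and $q$ are probability vectors. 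One thing worth being explicit about in a write-up: Fuchs's characterization is a minimum, so the Fuchs-optimal POVM you plug in is a single legal choice of $M$ in Helstrom's maximum, which is exactly why the inequality chains in the right direction; you have this right but it deserves a sentence. As you note, the cleanest presentation given the paper's citation is to leave the lemma unproved, but if a proof were wanted your outline is the right one.
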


Fidelity is not a metric on quantum states, but it is closely related to Bures distance, which is a metric. 

\begin{definition}[Bures distance] \label{def:Bur}
    The \emph{Bures distance} between $\rho$ and $\sigma$ is defined by
    \begin{equation*}
    \DBur(\rho, \sigma) = \sqrt{2 ( 1 - \Fid(\rho, \sigma))}.
    \end{equation*}
\end{definition}

Trace distance and Bures distance are also related as in the next lemma, which can be proven straightforwardly using the Fuchs-van de Graaf inequalities. 

\begin{lemma} We have the following pair of inequalities:
    \begin{equation*}
        \frac{1}{2} \DBur(\rho, \sigma)^2 \leq \Dtr(\rho, \sigma) \leq \DBur(\rho, \sigma).
    \end{equation*}
\end{lemma}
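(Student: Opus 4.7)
The plan is to derive both inequalities by a short, direct manipulation of the Fuchs--van de Graaf inequalities (\Cref{fuchs}), using only the identity $\DBur(\rho,\sigma)^2 = 2(1 - \Fid(\rho,\sigma))$ and the fact that $0 \leq \Fid(\rho,\sigma) \leq 1$ for any pair of states.

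For the lower bound, I would start from the left half of Fuchs--van de Graaf, namely $1 - \Fid(\rho,\sigma) \leq \Dtr(\rho,\sigma)$. Substituting the definition of Bures distance, the left-hand side is exactly $\tfrac{1}{2}\DBur(\rho,\sigma)^2$, which immediately yields $\tfrac{1}{2}\DBur(\rho,\sigma)^2 \leq \Dtr(\rho,\sigma)$.

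For the upper bound, I would start from the right half of Fuchs--van de Graaf, $\Dtr(\rho,\sigma) \leq \sqrt{1 - \Fid(\rho,\sigma)^2}$. Factoring $1 - \Fid^2 = (1-\Fid)(1+\Fid)$ and using $1 + \Fid(\rho,\sigma) \leq 2$, I get
\begin{equation*}
\Dtr(\rho,\sigma) \leq \sqrt{(1-\Fid(\rho,\sigma))(1+\Fid(\rho,\sigma))} \leq \sqrt{2(1 - \Fid(\rho,\sigma))} = \DBur(\rho,\sigma),
\end{equation*}
as desired.

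There is no real obstacle here: the only ingredient beyond \Cref{fuchs} is the trivial bound $\Fid \leq 1$, so the proof is essentially a one-line rearrangement in each direction. If anything, the only thing to double-check is that the ``square root convention'' for fidelity used in \Cref{def:fid} is consistent with the form of \Cref{fuchs} cited from \cite{NC10}, which it is.
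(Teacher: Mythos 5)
Your proof is correct and uses exactly the route the paper alludes to: the paper states only that the lemma "can be proven straightforwardly using the Fuchs--van de Graaf inequalities" without giving details, and your two one-line derivations (substituting $1 - \Fid = \tfrac12 \DBur^2$ for the lower bound, and factoring $1 - \Fid^2 = (1-\Fid)(1+\Fid) \leq 2(1-\Fid)$ for the upper bound) are precisely that straightforward argument.
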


In particular, the Bures distance between the two states is always at least as large as the trace distance, making Bures distance a generally more challenging metric to learn in. We will also find it useful to work with \emph{affinity}. 

\begin{definition}[Affinity]
The \emph{affinity} between $\rho$ and $\sigma$ is given by
\begin{equation*}
\Aff(\rho, \sigma) = \tr( \sqrt{\rho} \sqrt{\sigma} ).
\end{equation*}
\end{definition}

Affinity is also not a metric. In both our upper and lower bounds on projector tomography, we will consider the affinity between two rank-$r$ projector states. If $\rho = P/r$ and $\sigma = Q/r$, the affinity is:
\begin{equation}
    \Aff(\rho, \sigma) = \frac{1}{r} \tr(\sqrt{P} \sqrt{Q} ) = \frac{1}{r} \tr( P Q) = r \tr( \rho \sigma). \label{eq:aff_projectors}
\end{equation}

The following lemma, which relates affinity and fidelity, allows us to easily convert bounds on affinity to bounds on fidelity, and vice versa. 

\begin{lemma}[\cite{ANSV08}] \label{general_fidelity_affinity_inequalities}
    We have the following pair of inequalities:
    \begin{equation*}
        \Fid(\rho, \sigma)^2 \leq \Aff(\rho, \sigma) \leq \Fid(\rho, \sigma).
    \end{equation*}
\end{lemma}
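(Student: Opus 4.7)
The plan is to prove the two inequalities separately. A preliminary observation used throughout is that $\tr(\sqrt{\rho}\sqrt{\sigma})$ is real and non-negative: when $\rho$ is invertible, $\sqrt{\rho}\sqrt{\sigma} = \rho^{1/4}\cdot(\rho^{1/4}\sigma^{1/2}\rho^{1/4})\cdot \rho^{-1/4}$ is similar to the positive semidefinite matrix $\rho^{1/4}\sigma^{1/2}\rho^{1/4}$, and hence has the same non-negative real eigenvalues $\tau_1, \ldots, \tau_d$; the general case follows by continuity. In particular, $\Aff(\rho,\sigma) = \sum_i \tau_i \geq 0$.

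For the right inequality $\Aff(\rho,\sigma) \leq \Fid(\rho,\sigma)$, I would write $M = \sqrt{\rho}\sqrt{\sigma}$ in its singular value decomposition as $M = \sum_i s_i \ket{u_i}\bra{v_i}$, so that $\tr M = \sum_i s_i \braket{v_i}{u_i}$. Since $\tr M$ is already non-negative real by the preliminary observation, this yields
\[
\Aff(\rho,\sigma) = \tr M \leq \sum_i s_i\,|\braket{v_i}{u_i}| \leq \sum_i s_i = \|M\|_1 = \Fid(\rho,\sigma),
\]
which is essentially Weyl's weak majorization between the eigenvalues $\tau_i \geq 0$ and singular values $s_i$ of $M$.

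The left inequality $\Fid(\rho,\sigma)^2 \leq \Aff(\rho,\sigma)$ is the more delicate direction, and it is where I expect the main obstacle to lie. The issue is that the preceding argument only produces a \emph{linear} comparison between singular values and eigenvalues of $M$, whereas we now need the \emph{quadratic} strengthening $(\sum_i s_i)^2 \leq \sum_i \tau_i$; for this, the product structure $M = \sqrt{\rho}\sqrt{\sigma}$ with both factors positive semidefinite must be used in an essential way. A natural tool is the Araki-Lieb-Thirring inequality, which gives bounds such as $\sum_i \tau_i^2 = \tr((\sqrt{\rho}\sqrt{\sigma})^2) \leq \tr(\rho\sigma)$; my plan would be to combine this with an Uhlmann-style variational characterization of $\Fid$, or else to invoke the Audenaert-Nussbaum-Szkoła-Verstraete result as a black box. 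As a consistency check, specializing to rank-$r$ projector states $\rho = P/r$, $\sigma = Q/r$, Jordan's lemma gives $\Fid(\rho,\sigma) = \frac{1}{r}\sum_i \omega_i$ and $\Aff(\rho,\sigma) = \frac{1}{r}\sum_i \omega_i^2$ (using $\sqrt{P/r} = P/\sqrt{r}$), and the target inequality collapses to the Cauchy-Schwarz bound $(\sum_i \omega_i)^2 \leq r \sum_i \omega_i^2$, matching the formulas recalled in \Cref{sec:intro_projector_tomography}.
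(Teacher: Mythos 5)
The paper does not prove this lemma itself; it cites it from~\cite{ANSV08} and only proves the special case of projector states (\Cref{cor:aff-fid}) via Jordan's lemma, which is exactly the consistency check at the end of your proposal. So your target is more ambitious than what the paper attempts.

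Your proof of $\Aff(\rho,\sigma) \leq \Fid(\rho,\sigma)$ is correct and clean: once one knows $\Aff(\rho,\sigma) = \tr(\sqrt{\rho}\sqrt{\sigma}) \geq 0$ (your similarity argument is fine), the bound is just $|\tr M| \leq \norm{M}_1$ applied to $M = \sqrt{\rho}\sqrt{\sigma}$. However, your sketch for the harder direction $\Fid(\rho,\sigma)^2 \leq \Aff(\rho,\sigma)$ does not close. The Araki--Lieb--Thirring bound you quote, $\tr\big((\sqrt{\rho}\sqrt{\sigma})^2\big) \leq \tr(\rho\sigma)$, is an \emph{upper} bound on $\sum_i \tau_i^2$, whereas you need to bound $\sum_i \tau_i$ from \emph{below} by $\big(\sum_i s_i\big)^2$; in fact ALT at exponent $1/2$ reproduces $\Aff \leq \Fid$, the direction you already have, so ALT alone cannot give what is needed. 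A working elementary route: use polar decomposition to write $\Fid(\rho,\sigma) = |\tr(V\sqrt{\rho}\sqrt{\sigma})|$ for a unitary $V$, rewrite by cyclicity as $\tr\big((\sigma^{1/4}V\rho^{1/4})(\rho^{1/4}\sigma^{1/4})\big)$, apply the Hilbert--Schmidt Cauchy--Schwarz inequality $|\tr(A^\dagger B)|^2 \leq \tr(A^\dagger A)\,\tr(B^\dagger B)$ with $A^\dagger = \sigma^{1/4}V\rho^{1/4}$ and $B = \rho^{1/4}\sigma^{1/4}$ to obtain $\Fid(\rho,\sigma)^2 \leq \Aff(\rho, V^\dagger\sigma V)\cdot\Aff(\rho,\sigma)$, and finally bound the first factor by $1$ using the already-proved inequality $\Aff \leq \Fid \leq 1$. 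If you do not want to carry this out, defaulting to~\cite{ANSV08} as a black box, as you suggest as an alternative, is exactly what the paper does.
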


\noindent
In a subsequent section, we will give a simple proof of these inequalities for the special case of two rank-$r$ projector states (see \Cref{Jordans_lemma_td_fid_aff}).

We have defined trace distance, fidelity, Bures distance and affinity when $\rho$ and $\sigma$ are mixed states. However, the definitions can be extended, via the same formulas, to more general classes of matrices: trace distance is defined for all matrices; fidelity and affinity for all pairs of PSD matrices; Bures distance for all PSD matrices with fidelity at most one.

\subsection{The Haar measure}

\begin{definition}[Haar measure]
The \emph{Haar measure} on $U(d)$ is the unique distribution with the following property: if $\bU$ is distributed according to the Haar measure, then for any fixed unitary $V \in U(d)$, both $V \bU$ and $\bU V$ are distributed according to the Haar measure as well. We say $\bU$ is \emph{Haar random} and write $\bU \sim \mu_H$. We refer to the property that $V \bU$ (resp.\ $\bU V$) is Haar random as \emph{left-invariance} (resp.\ \emph{right-invariance}).
\end{definition}

\begin{notation}
    When integrating with respect to the Haar measure, we will write $\dU$ for the integration~measure.
\end{notation}

The Haar measure induces the following distributions on vectors and projectors. 

\begin{definition}[Haar random pure states, Haar random projectors]
    A \emph{Haar random pure state} in $\C^d$ is a random pure state $\ket{\bu}$ distributed as $\bU \ket{v}$, where $\bU \sim \mu_H$ and $\ket{v}$ is any fixed pure state. A \emph{Haar random rank-$r$ projector} is a random rank-$r$ projector $\bP$ distributed as $\bU Q \bU^\dagger$, for any fixed rank-$r$ projector $Q$. We will abuse notation and write $\ket{\bu} \sim \mu_H$ and $\bP \sim \mu_H$ when the meaning is clear from context.
\end{definition}


\subsection{Projector tomography algorithms}

\emph{Tomography} is the problem of producing an estimator $\widehat{\brho}$ of a quantum state $\rho \in \C^d$, given access to some number $n$ of copies of this state. We require that for any input $\rho$, $\D(\rho, \widehat{\brho}) \leq \epsilon$, for some pre-specified distance $\D$ and allowed error $\epsilon$, with \emph{high probability}. By ``high probability'', we mean a large constant probability of success, which we will take to be $99\%$. This probability threshold is somewhat arbitrary (see \Cref{lem:boosting_success_probability}). We will say that a tomography algorithm \emph{learns} a quantum state in distance measure $\D$, given $n$ samples. We write $\widehat{\brho} \sim \calA(\rho)$ to denote the output of a tomography algorithm $\calA$ on input $\rho^{\otimes n}$.

In this paper, we focus primarily on a special case of tomography called \emph{rank-$r$ projector tomography}. In this special case, the input state is necessarily a \emph{rank-$r$ projector state}. A rank-$r$ projector state is any state of the form $P/r$, where $P$ is a rank-$r$ projector. The $r=1$ case is \emph{pure state tomography}. Moreover, for us $\D$ will always be either trace or Bures distance.

Before moving on, we note that the only assumption about the output $\widehat{\brho}$ that we will make is that $\D(\rho, \widehat{\brho})$ is always defined. For example, the output of a pure state tomography algorithm might be mixed, or not even a quantum state at all. However, in the next subsection, we describe various properties we can bestow on a generic algorithm, without much cost.

\subsubsection{Upgrading projector tomography algorithms}

In this section, we describe a couple useful \emph{upgrades}\footnote{We borrow this terminology from \cite{FO24}.} we can give to tomography algorithms. These upgrades endow algorithms with additional properties that make our analysis simpler, at either no cost, or a constant-factor loss in accuracy. 

Firstly, we observe that any algorithm that does not quite meet our threshold for high probability may be upgraded to one that does. This lemma is particularly convenient for situations in which we union bound over multiple steps which themselves only succeed with high probability. The following is an immediate corollary of \cite[Proposition 2.4]{HKOT23}. 

\begin{lemma} \label{lem:boosting_success_probability}
    Suppose $\calA$ is a tomography algorithm using $n$ copies to output an estimate which is $\epsilon$-close in a metric $\D$ with probability at least $51\%$. Then there exists an algorithm $\calA'$ using $O(n)$ copies that outputs an estimate which is $3\epsilon$-close in metric $\D$ with probability at least $99\%$. 
\end{lemma}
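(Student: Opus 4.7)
The plan is to apply the standard majority-vote boosting argument adapted to metric-valued estimators: run $\calA$ independently several times, then aggregate the resulting candidates using a tournament rule based on pairwise distances in $\D$. This is a classical argument, and the fact that the lemma is stated for a metric $\D$ (which must satisfy the triangle inequality) is what makes it go through.

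First, I would run $\calA$ independently $k$ times on fresh batches of $n$ copies each, producing candidates $\widehat{\brho}_1, \ldots, \widehat{\brho}_k$, at a total cost of $kn = O(n)$ copies. Since each candidate is $\epsilon$-close to $\rho$ in $\D$ with probability at least $51\%$ independently, a Chernoff bound shows that by choosing $k$ to be a sufficiently large absolute constant, with probability at least $99\%$ strictly more than half of the $\widehat{\brho}_i$ are $\epsilon$-close to $\rho$; call these candidates \emph{good}. Specifically, the expected number of good candidates is at least $0.51 k$, and standard tail bounds (e.g.\ Hoeffding) make the probability of fewer than $k/2$ good candidates decay exponentially in $k$.

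Next, I would aggregate by outputting any $\widehat{\brho}_j$ such that at least $k/2$ of the other candidates satisfy $\D(\widehat{\brho}_j, \widehat{\brho}_i) \leq 2\epsilon$. On the majority-good event, such a $\widehat{\brho}_j$ exists: pick any good $\widehat{\brho}_j$ and observe that for every other good $\widehat{\brho}_i$, the triangle inequality gives $\D(\widehat{\brho}_j, \widehat{\brho}_i) \leq \D(\widehat{\brho}_j, \rho) + \D(\rho, \widehat{\brho}_i) \leq 2\epsilon$, so each good candidate is vouched for by the rest. To check correctness of any selected $\widehat{\brho}_j$, note that by pigeonhole at least one of the $\geq k/2$ candidates within $2\epsilon$ of $\widehat{\brho}_j$ must itself be good, call it $\widehat{\brho}_i$; then $\D(\rho, \widehat{\brho}_j) \leq \D(\rho, \widehat{\brho}_i) + \D(\widehat{\brho}_i, \widehat{\brho}_j) \leq \epsilon + 2\epsilon = 3\epsilon$, as required.

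There is no real obstacle here — the only thing being used about $\D$ is the triangle inequality, which is guaranteed since $\D$ is either trace or Bures distance and hence a metric. The $3\epsilon$ blow-up and the requirement that the base success probability exceed $1/2$ are both intrinsic to this style of argument: one factor of $\epsilon$ comes from the good candidate matched to $\widehat{\brho}_j$, and two factors come from the pairwise comparison threshold needed so that good candidates can all certify each other. The one step worth being slightly careful about is the Chernoff calculation, since the margin between the base success rate $51\%$ and the required majority $50\%$ is only a constant, but this only affects the size of the constant $k$, not the $O(n)$ sample bound.
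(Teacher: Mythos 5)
Your proof is correct. The paper itself does not spell out this argument but instead cites it as ``an immediate corollary of [Proposition 2.4]{HKOT23}''; that proposition is exactly the tournament-style boosting you have reproduced: run the base algorithm $k=O(1)$ times, use a Chernoff bound to guarantee a strict majority of $\epsilon$-close candidates, and select a candidate that is $2\epsilon$-close to at least half the others, concluding by triangle inequality that it is $3\epsilon$-close to $\rho$. Your pigeonhole step and the parametrization ($51\%$ base success rate, $3\epsilon$ final error) match the cited result, so this is essentially the same proof, just written out directly rather than invoked by reference.
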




Next, recall that, per our definition of projector tomography, there is no guarantee that a rank-$r$ projector tomography algorithm only outputs rank-$r$ projector states. However, any algorithm can be converted to one that \emph{does} only output rank-$r$ projector states,
with only a small loss in accuracy.


\begin{lemma} \label{WLOG_projector_output}
    Suppose $\mathcal{A}$ is an algorithm for rank-$r$ projector tomography which uses $n$ copies to output an estimate which, with high probability, is $\epsilon$-close in metric $\D$. Then there exists an algorithm for rank-$r$ projector tomography $\mathcal{A}'$ which uses $n$ copies to output a rank-$r$ projector state which, with high probability, is $2\epsilon$-close in metric $\D$. 
\end{lemma}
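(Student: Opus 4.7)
The plan is to build $\calA'$ by a simple post-processing step: run $\calA$ on $\rho^{\otimes n}$ to obtain $\widehat{\brho}$, then output the rank-$r$ projector state $\widetilde{\brho}$ that is closest to $\widehat{\brho}$ in the metric $\D$. More precisely, letting $\P_r \coloneq \{P/r : P \text{ is a rank-}r \text{ projector in } \C^{d \times d}\}$, define
\begin{equation*}
    \widetilde{\brho} \in \arg\min_{\sigma \in \P_r} \D(\widehat{\brho}, \sigma).
\end{equation*}
A minimizer exists because $\P_r$ is compact and $\sigma \mapsto \D(\widehat{\brho}, \sigma)$ is continuous on $\P_r$ (the same assumption that $\D(\rho, \widehat{\brho})$ is always defined, applied uniformly over the rank-$r$ projector states, guarantees this). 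Thus $\calA'$ uses the same $n$ copies that $\calA$ uses and outputs an element of $\P_r$ as required.

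For correctness, suppose the success event holds, so that $\D(\rho, \widehat{\brho}) \leq \epsilon$. Since $\rho \in \P_r$ is itself a candidate rank-$r$ projector state at distance at most $\epsilon$ from $\widehat{\brho}$, the minimizer $\widetilde{\brho}$ satisfies
\begin{equation*}
    \D(\widehat{\brho}, \widetilde{\brho}) \leq \D(\widehat{\brho}, \rho) \leq \epsilon.
\end{equation*}
Applying the triangle inequality (which holds for both trace distance and Bures distance, as noted in the preliminaries),
\begin{equation*}
    \D(\rho, \widetilde{\brho}) \leq \D(\rho, \widehat{\brho}) + \D(\widehat{\brho}, \widetilde{\brho}) \leq 2\epsilon.
\end{equation*}
This happens with high probability, since it is conditioned only on the success event of $\calA$.

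There is no real obstacle here; the only mild caveats are (i) ensuring that the argmin over $\P_r$ is well-defined, which is handled by compactness of $\P_r$ and continuity of the distance, and (ii) observing that we use the fact that $\D$ is a genuine metric (needed to invoke the triangle inequality), which rules out substituting in, say, fidelity or affinity directly but is fine for the two metrics $\dtr$ and $\dbur$ that are of interest in this paper. Note that we have not had to assume anything about the structure of the output of $\calA$, so the upgrade is essentially free.
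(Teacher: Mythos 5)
Your proof is correct and follows essentially the same approach as the paper: round $\widehat{\brho}$ to a nearby rank-$r$ projector state, note that $\rho$ itself is a candidate so such a state within distance $\epsilon$ exists on the success event, and apply the triangle inequality. The only cosmetic difference is that you take the exact argmin (justified via compactness and continuity), whereas the paper outputs any projector state within distance $\epsilon$ if one exists; both are non-constructive and yield the same $2\epsilon$ guarantee.
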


\begin{proof} 
    Have $\mathcal{A}'$ run $\mathcal{A}$ on $\rho^{\otimes n}$ to generate an output $\widehat{\brho}$ (which is not necessarily a projector state). If there exists a rank-$r$ projector state $\widehat{\brho}'$ such that $\D(\widehat{\brho}, \widehat{\brho}') \leq \epsilon$, then output such a $\widehat{\brho}'$. Otherwise, output an arbitrary fixed rank-$r$ projector state. With high probability, $\mathcal{A}$ succeeds in generating an estimate $\widehat{\brho}$ with $\D(\rho, \widehat{\brho}) \leq \epsilon$, and in this case, $\mathcal{A}'$ necessarily succeeds in finding a nearby projector state (since $\rho$ itself is a candidate), and outputs a $\widehat{\brho}'$ such that 
    \begin{equation*}
        \D(\rho, \widehat{\brho}') \leq \D(\rho, \widehat{\brho}) + \D(\widehat{\brho}, \widehat{\brho}') \leq 2\epsilon. \qedhere
    \end{equation*}
\end{proof}


\begin{remark} 
    The proof of \Cref{WLOG_projector_output} is non-constructive. One concrete method to ``round''  $\widehat{\brho}$ into a rank-$r$ projector state is via the following construction: let $\widehat{\bP}$ be the projector onto the eigenvectors corresponding to the $r$ largest eigenvalues of $\widehat{\brho}$, with ties settled arbitrarily, and output $\widehat{\brho}' = \widehat{\bP}/r$. It can be shown that this rounding method attains the same guarantee as in \Cref{WLOG_projector_output} for learning in either trace or Bures distance. Since we will not need any concrete rounding, we omit the proof. 
\end{remark}

\subsection{Jordan's lemma}
Jordan's lemma is a standard tool in quantum information theory for understanding the relationship between a pair of projectors. The lemma and its proof are well known; we include a proof for completeness, based on \cite{Reg06}. Jordan's lemma provides us with formulas for distance measures between projector states that will be important for our bootstrapping argument. 



\begin{lemma}[Jordan's lemma]\label{Jordans_lemma}
    Let $P$ and $Q$ be projectors on a finite-dimensional Hilbert space $\mathcal{H}$. There exists an orthogonal decomposition of $\mathcal{H}$ into one- and two-dimensional subspaces which are invariant under $P$ and $Q$. Moreover, inside each two-dimensional subspace, $P$ and $Q$ each act as a projector onto a one-dimensional subspace.
\end{lemma}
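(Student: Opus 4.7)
My plan is to obtain the decomposition from the spectral theorem applied to a Hermitian operator that simultaneously encodes both projectors. The operator I would use is $M = PQP + \overline{P}Q\overline{P}$, which is manifestly self-adjoint and satisfies $PM = PQP = MP$ (and likewise with $\overline{P}$). Consequently $M$ preserves each of $\mathrm{Im}(P)$ and $\mathrm{Im}(\overline{P})$, so I can diagonalize $M$ separately on these two subspaces. On $\mathrm{Im}(P)$, $M$ restricts to $PQP$, and on $\mathrm{Im}(\overline{P})$ to $\overline{P}Q\overline{P}$; both are positive semidefinite of operator norm at most $1$, so all eigenvalues lie in $[0,1]$.

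The heart of the argument is to pair up eigenvectors of $M|_{\mathrm{Im}(P)}$ with eigenvectors of $M|_{\mathrm{Im}(\overline{P})}$. Take an orthonormal eigenbasis $\{\ket{u_i}\}$ of $M|_{\mathrm{Im}(P)}$ with eigenvalues $\lambda_i \in [0,1]$; since $P\ket{u_i} = \ket{u_i}$, the eigenvalue equation rewrites as $PQ\ket{u_i} = \lambda_i \ket{u_i}$. For each index with $\lambda_i \in (0,1)$, set $\ket{w_i} = \overline{P}Q\ket{u_i} = Q\ket{u_i} - \lambda_i \ket{u_i}$. A short calculation using $Q^2 = Q$ and $\langle u_i | Q | u_i\rangle = \langle u_i | PQP | u_i\rangle = \lambda_i$ gives $\|w_i\|^2 = \lambda_i(1-\lambda_i) > 0$, and another short calculation shows $\ket{w_i}$ is an eigenvector of $\overline{P}Q\overline{P}$ with eigenvalue $1-\lambda_i$. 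The subspace $V_i = \Span\{\ket{u_i}, \ket{w_i}\}$ is then two-dimensional and invariant under both $P$ and $Q$; on $V_i$, $P$ is the rank-one projector onto $\ket{u_i}$, and a $2 \times 2$ trace/determinant computation shows $Q|_{V_i}$ has eigenvalues $\{0,1\}$, so it too is a rank-one projector.

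For each $\ket{u_i}$ with $\lambda_i \in \{0,1\}$, the single vector $\ket{u_i}$ already spans a one-dimensional invariant subspace, with $P$ acting as the identity and $Q$ acting as the scalar $\lambda_i$. The analogous diagonalization of $M|_{\mathrm{Im}(\overline{P})}$ recovers the normalized vectors $\ket{w_i}/\|w_i\|$ as its eigenvectors of eigenvalue $1-\lambda_i \in (0,1)$; any remaining eigenvectors must have eigenvalue $0$ or $1$, and an argument symmetric to the one above shows each spans its own one-dimensional invariant subspace (now with $P$ acting as $0$ and $Q$ as the corresponding scalar). Listing all of the two-dimensional blocks $V_i$ together with all of the one-dimensional blocks exhausts $\mathcal{H}$ by a dimension count.

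The main obstacle I anticipate is verifying orthogonality between distinct blocks, since the spectral theorem on its own only guarantees orthogonality within a single fixed eigenspace of $M$. The $\ket{u_i}$ are orthonormal by construction, but the $\ket{w_i}$ from different indices need a separate check: expanding $\langle w_i | w_j\rangle = \langle u_i | Q\overline{P}Q | u_j\rangle$ and using $\langle u_i | Q | u_j\rangle = \langle u_i | PQP | u_j\rangle = \lambda_j \delta_{ij}$ shows it equals $\lambda_j(1-\lambda_j)\delta_{ij}$, so they are orthogonal even when $\lambda_i = \lambda_j$. Orthogonality between two- and one-dimensional blocks, and between the $\mathrm{Im}(P)$-type and $\mathrm{Im}(\overline{P})$-type one-dimensional blocks, then follows from $\mathrm{Im}(P) \perp \mathrm{Im}(\overline{P})$ together with the usual orthogonality of eigenvectors of $M$ at distinct eigenvalues.
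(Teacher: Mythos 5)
Your proof is correct and takes a genuinely different route from the paper's. The paper works with $R = P + Q$, pulls out one eigenvector of $R$ at a time, shows it sits inside a one- or two-dimensional invariant block, and then recurses on the orthogonal complement --- a structure that automatically guarantees orthogonality between the blocks and never needs to track the full spectrum. You instead build the operator $M = PQP + \overline{P}Q\overline{P}$, which commutes with $P$, diagonalize it simultaneously on $\mathrm{Im}(P)$ and $\mathrm{Im}(\overline{P})$, and construct all blocks in one shot by pairing each eigenvector $\ket{u_i}$ of $PQP$ with eigenvalue $\lambda_i \in (0,1)$ with the vector $\ket{w_i} = \overline{P}Q\ket{u_i}$, which you verify is an eigenvector of $\overline{P}Q\overline{P}$ at eigenvalue $1-\lambda_i$. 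This is essentially the CS-decomposition / principal-angles picture of Jordan's lemma: it is less economical than the paper's recursion in that you must explicitly check pairwise orthogonality of the $\ket{w_i}$ (which you do), but in exchange it exposes the paired $\lambda \leftrightarrow 1-\lambda$ spectral structure directly, which the recursive proof never makes visible.

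The one place you move a little fast is the sentence ``any remaining eigenvectors must have eigenvalue $0$ or $1$.'' As stated this is an assertion, not a deduction. It is true, but it needs a sentence: if $\ket{w} \in \mathrm{Im}(\overline{P})$ were an eigenvector of $\overline{P}Q\overline{P}$ at some $\mu \in (0,1)$ orthogonal to all the $\ket{w_i}$, then $\|PQ\ket{w}\|^2 = \mu(1-\mu) \neq 0$ and the reverse map sends $\ket{w}$ to a nonzero $(1-\mu)$-eigenvector of $PQP$ on $\mathrm{Im}(P)$, hence into $\Span\{\ket{u_i} : \lambda_i = 1-\mu\}$; applying $\overline{P}Q$ then returns $\mu(1-\mu)\ket{w}$, which would lie in $\Span\{\ket{w_i}\}$, a contradiction. (Alternatively, and perhaps most cheaply: the span of all the blocks you have already constructed is a $P$- and $Q$-invariant subspace containing $\mathrm{Im}(P)$, so its orthogonal complement lies inside $\mathrm{Im}(\overline{P})$, where $P$ vanishes and $Q$ restricts to a projector that can simply be diagonalized.) With that filled in, the dimension count at the end goes through and the proof is complete.
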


\begin{proof}
    Consider the operator $R \coloneq P + Q$. Since $R$ is Hermitian, it has an orthonormal eigenbasis $\{ \ket{u_i} \}$, with corresponding eigenvalues $\{\lambda_i\}$.
    
    If $P \ket{u_1} = \mu_1 \ket{u_1}$, then we have
    \begin{equation*}Q \ket{u_1} = (R - P) \ket{u_1} = (\lambda_1 - \mu_1) \ket{u_1},\end{equation*}
    so that $\ket{u_1}$ is an eigenvector of both $P$ and $Q$. We set $B \coloneq \mathrm{span}(\ket{u_1})$, and note that $B$ is a one-dimensional subspace invariant under $P$ and $Q$. 
    
    Otherwise, consider the two-dimensional subspace $B \coloneqq \mathrm{span}( \ket{u_1}, P \ket{u_1} )$. Then $B$ is invariant under $P$, since $P^2 = P$. Moreover, $P|_B$ is rank-1, since it maps any element of $B$ into $\mathrm{span}( P\ket{u_1} )$. So, $P|_B$ is projection onto this one-dimensional subspace of $B$. However, $B$ is also invariant under $Q$, since first
    \begin{equation*}Q \ket{u_1} = (R - P) \ket{u_1} = \lambda_1 \ket{u_1} - P \ket{u_1} \in B,
    \end{equation*}
    and second
    \begin{equation*}
        QP \ket{u_1} = Q(R-Q) \ket{u_1}  = Q ( \lambda_1 - Q) \ket{u_1} = (\lambda_1 - 1) Q \ket{u_1} \in B.
    \end{equation*} 
    Note that $Q|_B$ is rank-1 as well, since it maps $B$ onto $\mathrm{span}(Q \ket{u_1})$, and is therefore a projector onto this one-dimensional subspace of $B$. 

    In either case $B$ is a one- or two-dimensional subspace invariant under $P$ and $Q$. Moreover, $B$ is also invariant under $R$. Since $R$ is Hermitian, $B^\perp$ is also invariant under $R$. We may then recurse on $B^\perp$ to obtain the desired decomposition of $\mathcal{H}$. 
\end{proof}

\begin{notation}[Jordan block decomposition]
    Let $P$ and $Q$ be projectors on a finite-dimensional Hilbert space $\mathcal{H}$, and let $\mathcal{H} = \bigoplus_{i} B_i$ be a decomposition into one- and two-dimensional subspaces, each invariant under $P$ and $Q$, as in \Cref{Jordans_lemma}. We call such a decomposition a \emph{Jordan block decomposition}, and refer to each $B_i$ as a \emph{Jordan block}.
\end{notation}

\begin{remark}\label{Jordans_lemma_rank_r_remark}
    We saw in \Cref{Jordans_lemma} that, if $B_i$ is a $2 \times 2$ block, $P|_{B_i}$ and $Q|_{B_i}$ are rank-1 projectors. If $B_i$ is a $1 \times 1$ block, $P|_{B_i}$ and $Q|_{B_i}$ are each individually either the identity or zero on that block. In the special case where $P$ and $Q$ are both rank-$r$, there are an equal number of $1 \times 1$ blocks $B_i$ with $P|_{B_i} = 1$ and $Q|_{B_i} = 0$ as there are blocks $B_j$ with $Q|_{B_j} = 1$ and $P|_{B_j} = 0$. In this case, by merging pairs of $1 \times 1$ blocks (one block of each kind) into a single $2 \times 2$ block, we can assume every block $B_i$ in which $R|_{B_i} \neq 0$ contains two states $\ket{u_i}$ and $\ket{v_i}$, such that $P|_{B_i} = \ketbra{u_i}$ and $Q|_{B_i} = \ketbra{v_i}$. That is, we can assume that there are exactly $r$ blocks in which $P$ and $Q$ both act nontrivially as projectors onto one-dimensional subspaces, and $P$ and $Q$ are zero outside of these $r$ blocks. 
\end{remark}

\begin{definition}[Jordan vectors]
    Let $P$ and $Q$ be rank-$r$ projectors on a finite-dimensional Hilbert space $\mathcal{H}$, and take a Jordan block decomposition as in \Cref{Jordans_lemma_rank_r_remark}, so that $P = \sum_{i=1}^r \ketbra{u_i}$ and $Q = \sum_{i=1}^r \ketbra{v_i}$, with $\ket{u_i}$ and $\ket{v_i}$ in the $i$-th block $B_i$. We say $\{ \ket{u_i} \}$ and $\{\ket{v_i}\}$ are \emph{Jordan vectors} of $P$ and $Q$, respectively. 
\end{definition} 


\begin{figure}[h]
\centering
\begin{adjustbox}{max width=0.95\textwidth}
\begin{tikzpicture}[every node/.style={anchor=center}, 
  jblock/.style={draw=gray, thick, rounded corners}]

\matrix (m1) [matrix of nodes,
              nodes in empty cells,
              nodes={minimum width=3em, minimum height=3em, anchor=center},
              left delimiter={[},
              right delimiter={]}] at (0,0)
{
  $\ketbra{u_1}{u_1}$ & 0 & 0 & 0 \\
  0 & $\ketbra{u_2}{u_2}$ & 0 & 0 \\
  0 & 0 & & \\
  0 & 0 & & \\
};

\matrix (m2) [matrix of nodes,
              nodes in empty cells,
              nodes={minimum width=3em, minimum height=3em, anchor=center},
              left delimiter={[},
              right delimiter={]},
              right=4cm of m1] 
{
  $\ketbra{v_1}{v_1}$ & 0 & 0 & 0 \\
  0 & $\ketbra{v_2}{v_2}$ & 0 & 0 \\
  0 & 0 & & \\
  0 & 0 & & \\
};

\draw[jblock] (m1-1-1.north west) rectangle (m1-1-1.south east);
\draw[jblock] (m1-2-2.north west) rectangle (m1-2-2.south east);
\draw[jblock] (m1-3-3.north west) rectangle (m1-4-4.south east);
\node at ($(m1-3-3)!0.5!(m1-4-4)$) {$\ketbra{u_{3}}{u_{3}}$};

\draw[jblock] (m2-1-1.north west) rectangle (m2-1-1.south east);
\draw[jblock] (m2-2-2.north west) rectangle (m2-2-2.south east);
\draw[jblock] (m2-3-3.north west) rectangle (m2-4-4.south east);
\node at ($(m2-3-3)!0.5!(m2-4-4)$) {$\ketbra{v_{3}}{v_{3}}$};

\node[below=0.5cm of m1] {\( P = \sum_i \ketbra{u_i}{u_i} \)};
\node[below=0.5cm of m2] {\( Q = \sum_i \ketbra{v_i}{v_i} \)};
\end{tikzpicture}
\end{adjustbox}
\caption{An illustration of an example Jordan block decomposition. The two projectors $P$ and $Q$ can be simultaneously block-diagonalized, and each Jordan block is either $1 \times 1$ or $2 \times 2$. Within a given block, if $P$ (resp.\ $Q$) is nontrivial, then $P$ (resp.\ $Q$) acts as a projection onto a one-dimensional subspace}
\end{figure}

Given two rank-$r$ projector states, we can use Jordan's lemma to evaluate block-by-block quantities like trace distance, fidelity, and affinity.

\begin{lemma}\label{Jordans_lemma_td_fid_aff}
    Let $\rho = P/r$ and $\sigma = Q/r$ be rank-$r$ projector states, and let $\{ \ket{u_i}\}$ and $\{\ket{v_i}\}$ be Jordan vectors of $P$ and $Q$ respectively. Write $\omega_i = |\braket{u_i}{v_i}|$. Then we have:
    \begin{itemize}
         \item $\Dtr(\rho, \sigma) = \big(\sum_{i=1}^r \sqrt{1 - \omega_i^2}\big)/r$,
         \item $\Fid(\rho, \sigma) = \left(\sum_{i=1}^r \omega_i\right)/r$,
         \item $\Aff(\rho, \sigma) = \left(\sum_{i=1}^r \omega_i^2\right)/r$.
    \end{itemize}
\end{lemma}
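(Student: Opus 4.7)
The plan is to use Jordan's lemma to reduce all three computations to block-by-block calculations in the Jordan block decomposition. By \Cref{Jordans_lemma_rank_r_remark}, we can take a decomposition $\mathcal{H} = \bigoplus_i B_i$ such that on exactly $r$ of the blocks, $P$ and $Q$ act as the rank-$1$ projectors $\ketbra{u_i}$ and $\ketbra{v_i}$, and they vanish on all other blocks. Since $P$ and $Q$ are simultaneously block-diagonal, so are $P - Q$, $PQ$, $\sqrt{\rho}$, and $\sqrt{\sigma}$, and hence the Schatten $1$-norms and traces we care about split as sums over the blocks. All three identities follow from this observation together with the well-known pure-state formulas.

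For the trace distance, I would write $\Dtr(\rho, \sigma) = \tfrac{1}{2r}\|P - Q\|_1$ and use block-diagonality to obtain
\begin{equation*}
\|P - Q\|_1 = \sum_{i=1}^r \|\ketbra{u_i} - \ketbra{v_i}\|_1 = 2\sum_{i=1}^r \sqrt{1 - \omega_i^2},
\end{equation*}
where the last equality is the pure-state trace distance formula in \Cref{def:td}. Dividing by $2r$ gives the claimed formula. The degenerate case in which a Jordan block is $1$-dimensional (so $\omega_i = 1$) contributes $0$ to both sides, so it is consistent with this expression.

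For the affinity and fidelity, the key observation is that $\sqrt{\rho} = P/\sqrt{r}$ and $\sqrt{\sigma} = Q/\sqrt{r}$, since $P$ and $Q$ are projectors. This immediately gives $\Aff(\rho, \sigma) = \tfrac{1}{r}\tr(PQ)$ and $\Fid(\rho, \sigma) = \tfrac{1}{r}\|PQ\|_1$. On each nontrivial block $B_i$, we compute $PQ|_{B_i} = \ketbra{u_i}\cdot\ketbra{v_i} = \braket{u_i}{v_i}\ket{u_i}\!\bra{v_i}$, which is a rank-$1$ operator with singular value $|\braket{u_i}{v_i}| = \omega_i$ and trace $|\braket{u_i}{v_i}|^2 = \omega_i^2$. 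Summing over the $r$ nontrivial blocks then yields $\tr(PQ) = \sum_i \omega_i^2$ and $\|PQ\|_1 = \sum_i \omega_i$, which give the claimed formulas after dividing by $r$.

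None of the steps is really an obstacle: the only mild subtlety is bookkeeping the Jordan decomposition correctly so that the sum really ranges over exactly $r$ indices, which is already handled by \Cref{Jordans_lemma_rank_r_remark}. As a side remark, one can note that these formulas immediately verify \Cref{general_fidelity_affinity_inequalities} in the projector case, since $\omega_i^2 \le \omega_i \le \sqrt{\omega_i}$ for $\omega_i \in [0,1]$, giving $\Fid(\rho,\sigma)^2 \le \Aff(\rho,\sigma) \le \Fid(\rho,\sigma)$ by Cauchy--Schwarz applied to the first inequality and term-by-term comparison for the second.
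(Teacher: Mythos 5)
Your proof is correct and takes essentially the same approach as the paper: reduce to the Jordan block decomposition and compute each quantity block by block. The only cosmetic difference is that you make the reduction to $PQ$ fully explicit (via $\sqrt{\rho}=P/\sqrt{r}$, computing $\|PQ\|_1$ and $\tr(PQ)$ block-wise), whereas the paper compresses this into the one-line observation that each of $\Dtr$, $\Fid$, $\Aff$ is additive over the blocks.
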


\begin{proof}
    We evaluate each quantity block-by-block using the Jordan block decomposition. Note that each quantity we want to calculate is additive in the blocks of the Jordan decomposition. Thus, 
    \begin{equation*}
        \Dtr(\rho, \sigma) = \frac{1}{r} \cdot \Dtr(P, Q) = \frac{1}{r} \sum_{i=1}^{r} \Dtr( \ketbra{u_i}, \ketbra{v_i} ) = \frac{1}{r} \sum_{i=1}^{r} \sqrt{ 1 - \omega_i^2}.
    \end{equation*}
    Similarly, 
    \begin{equation*}
        \Fid(\rho, \sigma) = \frac{1}{r} \cdot \Fid(P, Q) = \frac{1}{r} \sum_{i=1}^{r} \Fid( \ketbra{u_i}, \ketbra{v_i} ) = \frac{1}{r} \sum_{i=1}^{r} \omega_i.
    \end{equation*}
    Finally, 
    \begin{equation*}
        \Aff(\rho, \sigma) = \frac{1}{r} \cdot \Aff(P, Q) = \frac{1}{r} \sum_{i=1}^{r} \Aff( \ketbra{u_i}, \ketbra{v_i} ) = \frac{1}{r} \sum_{i=1}^r \tr\big( \ketbra{u_i} \cdot \ketbra{v_i} \big) = \frac{1}{r} \sum_{i=1}^{r} \omega_i^2. \qedhere
    \end{equation*}
\end{proof}

Affinity is sometimes easier to analyze than fidelity. The following corollary, a special case of \Cref{general_fidelity_affinity_inequalities}, allows us to convert from bounds on one to bounds on the other.

\begin{corollary} \label{cor:aff-fid}
    Let $\rho = P/r$ and $\sigma = Q/r$ be rank-$r$ projector states. Then 
    \begin{equation*}\Aff(\rho, \sigma) \leq \Fid(\rho, \sigma) \leq \sqrt{\Aff(\rho, \sigma)}.\end{equation*}
    As a result, if $0 \leq \epsilon \leq 1$, then $\Aff(\rho, \sigma) \geq 1 - \epsilon$ implies $\Fid(\rho, \sigma) \geq 1 - \epsilon$, and $\Fid(\rho, \sigma) \geq 1 - \epsilon$ implies $\Aff(\rho, \sigma) \geq 1 - 2\epsilon$. 
\end{corollary}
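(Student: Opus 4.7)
The plan is to deduce this corollary directly from the block-by-block formulas provided by \Cref{Jordans_lemma_td_fid_aff}. Fix a Jordan block decomposition for $P$ and $Q$, and let $\omega_1, \ldots, \omega_r \in [0, 1]$ be the overlaps $\omega_i = |\braket{u_i}{v_i}|$. Then the lemma gives us
\begin{equation*}
    \Aff(\rho, \sigma) = \frac{1}{r}\sum_{i=1}^r \omega_i^2 \quad \text{and} \quad \Fid(\rho, \sigma) = \frac{1}{r}\sum_{i=1}^r \omega_i,
\end{equation*}
so both sides of the desired inequalities become elementary expressions in the $\omega_i$.

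For the left inequality $\Aff(\rho, \sigma) \leq \Fid(\rho, \sigma)$, I would simply note that $\omega_i \in [0, 1]$ implies $\omega_i^2 \leq \omega_i$ termwise, and then sum over $i$ and divide by $r$. For the right inequality $\Fid(\rho, \sigma) \leq \sqrt{\Aff(\rho, \sigma)}$, it suffices after squaring to show $\big(\sum_i \omega_i\big)^2 \leq r \sum_i \omega_i^2$, which is just the Cauchy--Schwarz inequality applied to the vectors $(\omega_1, \ldots, \omega_r)$ and $(1, \ldots, 1)$ (equivalently, the power-mean inequality). Dividing by $r^2$ yields $\Fid(\rho, \sigma)^2 \leq \Aff(\rho, \sigma)$, as desired.

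The two asymptotic implications are then immediate algebraic consequences. If $\Aff(\rho, \sigma) \geq 1 - \epsilon$, then by the left inequality $\Fid(\rho, \sigma) \geq \Aff(\rho, \sigma) \geq 1 - \epsilon$. Conversely, if $\Fid(\rho, \sigma) \geq 1 - \epsilon$, then by the right inequality $\Aff(\rho, \sigma) \geq \Fid(\rho, \sigma)^2 \geq (1 - \epsilon)^2 = 1 - 2\epsilon + \epsilon^2 \geq 1 - 2\epsilon$.

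There is no real obstacle here: once one has the Jordan-decomposition formulas in hand, the corollary is a one-line application of $\omega_i^2 \leq \omega_i$ and Cauchy--Schwarz. The only care needed is in verifying that $\omega_i \in [0,1]$, which holds because the $\omega_i$ are absolute values of inner products between unit vectors.
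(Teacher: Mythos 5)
Your proof is correct and follows essentially the same route as the paper: both invoke \Cref{Jordans_lemma_td_fid_aff} to reduce to the overlaps $\omega_i$, then apply $\omega_i^2 \leq \omega_i$ termwise for the left inequality and Cauchy--Schwarz for the right. Your explicit verification of the two implications at the end is a small addition the paper leaves implicit, but otherwise the arguments coincide.
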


\begin{proof}
    Let $\{ \ket{u_i}\}$ and $\{\ket{v_i}\}$ be Jordan vectors of $P$ and $Q$ respectively, and write $\omega_i = \left|\braket{u_i}{v_i}\right|$. By \Cref{Jordans_lemma_td_fid_aff}, 
    \begin{equation}\label{fid_aff_eq1}
        \Aff(\rho, \sigma) = \frac{1}{r} \sum_{i=1}^r \omega_i^2 \leq \frac{1}{r} \sum_{i=1}^r \omega_i = \Fid(\rho, \sigma),
    \end{equation}
    using $0 \leq \omega_i \leq 1$. Moreover,
    \begin{equation}\label{fid_aff_eq2}
        \Fid(\rho, \sigma)^2 = \frac{1}{r^2} \Big(\sum_{i=1}^r \omega_i\Big)^2 \leq \frac{1}{r^2} \Big(r \cdot \sum_{i=1}^r \omega_i^2 \Big) =  \frac{1}{r} \sum_{i=1}^r \omega_i^2 =  \Aff(\rho, \sigma),
    \end{equation}
    where the inequality is Cauchy-Schwarz.
\end{proof}

\subsection{L\'{e}vy's lemma}

L\'{e}vy's lemma is another classic tool in quantum information theory. Loosely speaking, it tells us that nice functions on high-dimensional spheres concentrate exponentially about their means. We will use it in our reduction from trace distance projector tomography to Bures distance projector tomography. 

\begin{lemma}[L\'{e}vy's lemma, \protect{\cite[Theorem 7.37]{Wat18}}] \label{levys}
    Let $f$ be a function from pure states in $\C^d$ to $\R$, and let $f$ be $L$-Lipschitz, meaning that $\big| f(\ket{u}) - f(\ket{v}) \big| \leq L \cdot \norm{ \, \ket{u} - \ket{v} \,}_2$ (where $\norm{\cdot}_2$ is the $\ell_2$-norm). Write $f_{\mathrm{avg}} \coloneq \E_{\ket{\bu} \sim \mu_H}\big[f(\ket{\bu})\big]$. Then, for some universal constant $C > 0$, and for any $\epsilon > 0$, 
    \begin{equation*}
        \Pr_{\ket{\bu} \sim \mu_H} \Big[ \big| f(\ket{\bu}) - f_{\mathrm{avg}} \big| > \epsilon \Big] \leq 3 \exp \Big( - \frac{C \epsilon^2 d}{L^2}\Big).
    \end{equation*}
\end{lemma}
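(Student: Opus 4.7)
The plan is to deduce L\'{e}vy's lemma from the standard Gaussian concentration inequality for Lipschitz functions, via a coupling between Haar measure on the unit sphere and a Gaussian in $\R^{2d}$ (identifying $\C^d \cong \R^{2d}$). The starting point is the well-known fact that if $\mathbf{g} \sim \mathcal{N}(0, I_{2d})$, then $\mathbf{g}/\|\mathbf{g}\|_2$ is Haar random on the unit sphere, so the random pure state $\ket{\bu}$ in the lemma has the same distribution as $\mathbf{g}/\|\mathbf{g}\|_2$.

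First, I would lift $f$ to $\R^{2d} \setminus \{0\}$ by $F(\mathbf{x}) \coloneq f(\mathbf{x}/\|\mathbf{x}\|_2)$. A direct calculation shows that on the shell $E \coloneq \{\mathbf{x} : \|\mathbf{x}\|_2 \geq \sqrt{d}\}$, the normalization map $\mathbf{x} \mapsto \mathbf{x}/\|\mathbf{x}\|_2$ is $O(1/\sqrt{d})$-Lipschitz, so $F|_E$ is $O(L/\sqrt{d})$-Lipschitz. Using the McShane--Whitney extension theorem, I would extend $F|_E$ to a globally $O(L/\sqrt{d})$-Lipschitz function $\widetilde{F}$ on $\R^{2d}$. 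Applying the standard Gaussian concentration inequality to $\widetilde{F}$ then yields
\begin{equation*}
\Pr\big[|\widetilde{F}(\mathbf{g}) - \E \widetilde{F}(\mathbf{g})| > \epsilon/2\big] \leq 2 \exp\big(-\Omega(\epsilon^2 d / L^2)\big).
\end{equation*}

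To transfer this to a statement about $f(\ket{\bu})$ versus $f_\mathrm{avg}$, I would invoke Gaussian norm concentration, $\Pr[\mathbf{g} \notin E] \leq e^{-\Omega(d)}$, together with the facts that $\widetilde{F}(\mathbf{g}) = f(\mathbf{g}/\|\mathbf{g}\|_2)$ on $E$ and $\mathbf{g}/\|\mathbf{g}\|_2 \sim \mu_H$. These identities let me equate $\widetilde{F}(\mathbf{g})$ with $f(\ket{\bu})$ and $\E \widetilde{F}(\mathbf{g})$ with $f_\mathrm{avg}$ up to errors that are exponentially small in $d$ and hence absorbable into the final bound. The prefactor $3$ in the stated inequality accommodates these exponentially small corrections together with the factor of $2$ coming from Gaussian concentration.

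The main obstacle is that $F$ itself is not globally Lipschitz near the origin, so Gaussian concentration cannot be applied to it directly. Handling this via the Lipschitz extension on the high-probability shell $E$ is the technical heart of the argument; without it, the naive approach fails. An alternative more classical route would be to apply the spherical isoperimetric inequality directly to the sublevel sets of $f$ at its median and then convert the median-based tail bound into a mean-based one, but I find the Gaussian coupling more modular and easier to make precise.
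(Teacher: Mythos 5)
The paper does not prove \Cref{levys}; it cites it directly as \cite[Theorem 7.37]{Wat18}, so there is no in-paper proof to compare against. Your blind proof via Gaussian coupling is a correct and standard alternative to the spherical-isoperimetry argument (which is closer to what textbook treatments typically use). The ingredients all fit together: the normalization map $\mathbf{x} \mapsto \mathbf{x}/\|\mathbf{x}\|_2$ is indeed $O(1/\sqrt{d})$-Lipschitz on $\{\|\mathbf{x}\|_2 \geq \sqrt{d}\}$ (one can check $\|\,\mathbf{x}/\|\mathbf{x}\| - \mathbf{y}/\|\mathbf{y}\|\,\| \leq 2\|\mathbf{x}-\mathbf{y}\|/\min(\|\mathbf{x}\|,\|\mathbf{y}\|)$), McShane--Whitney gives the Lipschitz extension with the same constant, Gaussian concentration applies to $\widetilde F$, and $\Pr[\|\mathbf{g}\|_2 < \sqrt{d}]$ is exponentially small in $d$ by $\chi^2$ concentration.

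One clean-up that is worth making explicit in the transfer step: the error $|\E\widetilde F(\mathbf{g}) - f_{\mathrm{avg}}|$ is controlled because $\widetilde F$ differs from $F$ only off the shell and remains $O(L)$-bounded there (the shell has radius $\sqrt{d}$ and $\widetilde F$ is $O(L/\sqrt d)$-Lipschitz, so its excursion is $O(L)$); thus the error is $O(L)e^{-\Omega(d)}$. To absorb the resulting $e^{-\Omega(d)}$ terms into $3\exp(-C\epsilon^2 d/L^2)$ you should note two things: (i) since $f$ is $L$-Lipschitz on a diameter-$2$ sphere, $|f(\ket{\bu}) - f_{\mathrm{avg}}| \leq 2L$ always, so you may assume $\epsilon \leq 2L$, which makes $e^{-\Omega(d)} \leq e^{-C\epsilon^2 d/L^2}$ for $C$ small enough; and (ii) in the regime where $\epsilon$ is so tiny that the exponentially small mean-shift matters, $C\epsilon^2 d/L^2$ is itself $O(1)$ small (for $C$ small), so the target bound exceeds $1$ and holds trivially. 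With those two observations spelled out, the prefactor $3$ is achieved exactly as you describe, and the proof is complete.

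Compared to the reference proof via spherical isoperimetry (relate the deviation from the \emph{median} to the measure of an $\epsilon$-enlargement of a half-space, then convert median to mean), your Gaussian-coupling route is more modular and avoids sharp isoperimetric facts about spheres, at the cost of a little bookkeeping near the origin. Both yield the same statement with possibly different universal constants, which is immaterial here.
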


In our proof, we will end up applying L\'{e}vy's lemma in the case where $f$ is  the expectation value of a projector, i.e.\ $f( \ket{u} ) = \matrixel{u}{P}{u}$, for some projector $P$. We will therefore need the following result.

    \begin{lemma} Let $P$ be a projector, and let $f$ be the function on pure states in $\C^d$ defined $f( \ket{u} ) \coloneq \matrixel{u}{P}{u}$. Then $f$ is $1$-Lipschitz. \label{measurementslipschitz}
    \end{lemma}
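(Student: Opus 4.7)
The plan is to deduce the Lipschitz bound from the standard quantum-information inequality relating trace distance to differences of measurement outcome probabilities. First, observe that $f(\ket{u}) = \matrixel{u}{P}{u} = \tr(P \cdot \ketbra{u})$ is precisely the probability of observing the ``accept'' outcome when measuring $\ketbra{u}$ with the two-outcome projective measurement $\{P, \overline{P}\}$. Hence $|f(\ket{u}) - f(\ket{v})|$ is the total variation distance between the two induced outcome distributions, and the standard Holevo--Helstrom-style inequality $|\tr(M (\rho - \sigma))| \leq \Dtr(\rho, \sigma)$, valid for any operator $M$ with $0 \leq M \leq I$, gives
\begin{equation*}
|f(\ket{u}) - f(\ket{v})| \leq \Dtr(\ketbra{u}, \ketbra{v}).
\end{equation*}

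By \Cref{def:td}, the right-hand side equals $\sqrt{1 - |\braket{u}{v}|^2}$ since $\ketbra{u}$ and $\ketbra{v}$ are pure. So the proof reduces to verifying the elementary inequality $\sqrt{1 - |\braket{u}{v}|^2} \leq \norm{\ket{u} - \ket{v}}_2$. Writing $z = \braket{u}{v}$ and squaring, this is equivalent to $1 - |z|^2 \leq 2 - 2\Re(z) = \norm{\ket{u} - \ket{v}}_2^2$, i.e., $(1 - \Re z)^2 + (\Im z)^2 \geq 0$, which is trivially true. Chaining the two inequalities gives the desired 1-Lipschitz bound $|f(\ket{u}) - f(\ket{v})| \leq \norm{\ket{u} - \ket{v}}_2$.

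There is essentially no technical obstacle here; the argument is a short chain of standard facts, and even the Holevo--Helstrom-style inequality has a two-line proof via the Jordan decomposition of $\rho - \sigma$ into positive and negative parts. If one prefers a fully self-contained proof that avoids invoking any outside inequality, an alternative is to set $a = \norm{P\ket{u}}_2$ and $b = \norm{P\ket{v}}_2$, use the orthogonal decomposition of $\ket{u}$ and $\ket{v}$ into components in $\mathrm{range}(P)$ and $\mathrm{range}(\overline{P})$ together with the reverse triangle inequality to obtain $\norm{\ket{u} - \ket{v}}_2^2 \geq (a-b)^2 + \bigl(\sqrt{1-a^2} - \sqrt{1-b^2}\bigr)^2$, and then verify the purely algebraic inequality $(a^2 - b^2)^2 \leq (a-b)^2 + \bigl(\sqrt{1-a^2} - \sqrt{1-b^2}\bigr)^2$ by a short calculation (after the trigonometric substitution $a = \cos\alpha$, $b = \cos\beta$ this reduces to the trivial bound $\sin^2(\alpha+\beta) \cdot \cos^2((\alpha-\beta)/2) \leq 1$). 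The first route is slicker and fits naturally with the paper's emphasis on quantum distance measures, so that is the one I would present.
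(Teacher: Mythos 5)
Your proposal is correct and follows essentially the same route as the paper's proof: bound $|f(\ket{u})-f(\ket{v})|$ by $\Dtr(\ketbra{u},\ketbra{v})$ via the Holevo--Helstrom characterization of trace distance, then compare $\sqrt{1-|\braket{u}{v}|^2}$ to $\norm{\ket{u}-\ket{v}}_2$. The only cosmetic difference is the last step: the paper factors $\sqrt{1-\omega^2}=\sqrt{1+\omega}\sqrt{1-\omega}\le\sqrt{2}\sqrt{1-\omega}$ and then uses $1-|z|\le 1-\Re z$, whereas you square both sides and reduce to $(\Re z - 1)^2 + (\Im z)^2 \ge 0$ directly, which is marginally slicker but uses the same underlying inequality $|z|\ge\Re z$.
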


    \begin{proof} Trace distance has the alternate characterization:
    \begin{equation*}
        \Dtr(\rho, \sigma) = \max_Q \Big[\tr \big( Q (\rho - \sigma) \big)\Big],
    \end{equation*}
    where the maximization is over all projectors $Q$ (see, for example, \cite[Equation 9.62]{NC10}). Therefore,
    \begin{equation*}
        \big| f(\ket{u}) - f( \ket{v} ) \big| = \left| \matrixel{u}{P}{u} -  \matrixel{v}{P}{v}\right|  = \left|\tr \big( P \left( \ketbra{u} - \ketbra{v} \right) \big)\right| \leq \dtr\left( \ketbra{u}, \ketbra{v} \right).
    \end{equation*}
    The trace distance of two pure states is 
    \begin{equation*}
        \dtr\left(\ketbra{u}, \ketbra{v} \right) = \sqrt{ 1 - \left|\braket{u}{v}\right|^2} = \sqrt{  1 + \left|\braket{u}{v}\right|}\cdot  \sqrt{1 - \left|\braket{u}{v}\right|} \leq \sqrt{2} \cdot \sqrt{ 1 - \left|\braket{u}{v}\right|}. 
    \end{equation*}
    However, note that
    \begin{equation*}
        1 - \left| \braket{u}{v} \right|  \leq  1 - \mathrm{Re}( \braket{u}{v} ) = \frac{1}{2} \norm{ \,\ket{u} - \ket{v}\, }_2^2.
    \end{equation*}
    Combining everything, we conclude that $\left| f(\ket{u}) - f( \ket{v} ) \right| \leq \norm{ \, \ket{u} - \ket{v}\,}_2$. 
    \end{proof}

\subsection{The symmetric subspace}

In this section, we collect a few relevant and standard facts about the symmetric subspace, which we define for completeness. For much more on the symmetric subspace, including proofs, see \cite{Har13} or \cite{Mel24}. 

\begin{definition}[Symmetric subspace]
    The \emph{symmetric subspace}, denoted $\Sym$, is the subspace of $(\C^d)^{\otimes n}$ given by 
    \begin{equation*}
        \Sym = \mathrm{span}\big\{ \ket{u}^{\otimes n} \, : \, \ket{u} \in \C^d \big\}.
    \end{equation*}
    We denote the projector onto the symmetric subspace as $\projSym{n,d}$, though we will sometimes drop $n$ or $d$ when clear from context.
\end{definition}

\begin{lemma} \label{sym_subspace_permutation_invariant} The symmetric subspace is equal to the span of all permutation-invariant vectors in $(\C^d)^{\otimes n}$, i.e.\
\begin{equation*}
    \mathrm{Sym}^n(\C^d) = \mathrm{span}\big\{ \ket{\psi} \in (\C^d)^{\otimes n} \, : \, \P(\pi)\ket{\psi} = \ket{\psi} \text{ for all } \pi \in S_n \big\},
\end{equation*}
where $\mathcal{P}(\pi)$ is the representation of $\pi$ that permutes the $n$ tensor factors according to $\pi$, formally defined below in \Cref{prelim_P_and_Q}. 
\end{lemma}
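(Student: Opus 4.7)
The plan is to prove the two inclusions separately, using that the set $V_{\mathrm{inv}} \coloneq \{\ket{\psi} \in (\C^d)^{\otimes n} : \P(\pi)\ket{\psi} = \ket{\psi} \text{ for all } \pi \in S_n\}$ is already a subspace (so its span equals itself). The inclusion $\Sym \subseteq V_{\mathrm{inv}}$ is immediate: each $\ket{u}^{\otimes n}$ satisfies $\P(\pi)\ket{u}^{\otimes n} = \ket{u}^{\otimes n}$ for every $\pi$, and $V_{\mathrm{inv}}$ is closed under linear combinations, so it contains the span of the tensor powers.

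For the reverse inclusion, I would introduce the symmetrizer
\[
\Pi = \frac{1}{n!} \sum_{\pi \in S_n} \P(\pi),
\]
which is the orthogonal projector onto $V_{\mathrm{inv}}$. Thus $V_{\mathrm{inv}}$ is spanned by the symmetrized computational basis vectors $\Pi(\ket{i_1} \otimes \cdots \otimes \ket{i_n})$ for $(i_1, \ldots, i_n) \in [d]^n$, and it suffices to exhibit each of these as an element of $\Sym$. I would do this via the polarization identity
\[
\Pi(\ket{v_1} \otimes \cdots \otimes \ket{v_n}) = \frac{1}{n!} \sum_{\emptyset \neq S \subseteq [n]} (-1)^{n - |S|} \bigg(\sum_{i \in S} \ket{v_i}\bigg)^{\! \otimes n},
\]
which writes any symmetrized tensor product as a $\C$-linear combination of pure tensor powers. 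Substituting $\ket{v_j} = \ket{i_j}$ then exhibits $\Pi(\ket{i_1} \otimes \cdots \otimes \ket{i_n})$ as an element of $\Sym$, completing the proof.

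The only nontrivial step is verifying the polarization identity itself. Expanding $(\sum_{i \in S} \ket{v_i})^{\otimes n}$ by multilinearity and collecting contributions to a fixed pure tensor $\ket{v_{i_1}} \otimes \cdots \otimes \ket{v_{i_n}}$, the coefficient becomes an inclusion-exclusion sum over supersets $S \supseteq \{i_1, \ldots, i_n\}$, which evaluates to $(1-1)^{n-k}$ where $k$ is the number of distinct values among $i_1, \ldots, i_n$. This vanishes unless $k = n$, i.e., unless the tuple is a permutation of $[n]$, in which case the surviving terms reproduce exactly $n! \cdot \Pi(\ket{v_1} \otimes \cdots \otimes \ket{v_n})$. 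I expect the only pitfall to be a sign or normalization slip in this combinatorial step; beyond that, the argument is formal. (As a sanity check, a dimension count gives $\dim \Sym = \dim V_{\mathrm{inv}} = \binom{n+d-1}{n}$, which together with the easy inclusion is an alternative route.)
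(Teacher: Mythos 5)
The paper states \Cref{sym_subspace_permutation_invariant} without proof, deferring to \cite{Har13} and \cite{Mel24}, so there is no in-paper argument to compare against; the task is to verify your argument on its own merits.

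Your proof is correct and follows the standard route. The easy inclusion is right: $V_{\mathrm{inv}}$ is a subspace (an intersection of kernels of $\P(\pi)-I$), and each $\ket{u}^{\otimes n}$ lies in it. For the reverse, the symmetrizer $\Pi = \frac{1}{n!}\sum_{\pi \in S_n}\P(\pi)$ is indeed the orthogonal projector onto $V_{\mathrm{inv}}$ (it is idempotent, self-adjoint since $\P(\pi)^\dagger = \P(\pi^{-1})$, and its image is exactly the invariant vectors), so $V_{\mathrm{inv}}$ is spanned by the symmetrized computational basis vectors. Your polarization identity is correct: expanding $(\sum_{i\in S}\ket{v_i})^{\otimes n}$ by multilinearity, the coefficient of a fixed pure tensor with value set $T$ of size $k$ is $\sum_{T\subseteq S\subseteq[n]}(-1)^{n-|S|}=\sum_{j=0}^{n-k}\binom{n-k}{j}(-1)^{n-k-j}=0^{n-k}$, which vanishes unless $k=n$, in which case the surviving tuples are exactly the permutations of $[n]$ and one recovers $n!\cdot\Pi(\ket{v_1}\otimes\cdots\otimes\ket{v_n})$. (Including or omitting $S=\emptyset$ makes no difference, since that term is the zero vector.) The dimension-count remark is a reasonable sanity check, though as a standalone proof it would be somewhat circular since $\dim V_{\mathrm{inv}}=\binom{n+d-1}{n}$ is itself usually established by exhibiting the symmetrized basis; you correctly flag it as an aside rather than relying on it.
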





\begin{lemma} \label{dimension_sym_subspace}
The dimension of the symmetric subspace is
\begin{equation*}
    \dim( \Sym ) = \tr ( \projSym{n,d} ) = \binom{n + d -1}{n}.
\end{equation*}
\end{lemma}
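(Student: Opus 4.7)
The plan is to construct an explicit orthogonal basis of $\Sym$ indexed by multisets of size $n$ with entries in $[d]$, and then count those multisets using a stars-and-bars argument to obtain $\binom{n+d-1}{n}$.

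First I would use \Cref{sym_subspace_permutation_invariant}, which characterizes $\Sym$ as the span of permutation-invariant vectors in $(\C^d)^{\otimes n}$. For each multiset $T = \{t_1 \leq t_2 \leq \cdots \leq t_n\}$ with $t_j \in [d]$, I would define the (unnormalized) symmetrized basis vector
\begin{equation*}
    \ket{\psi_T} \;\coloneq\; \sum_{\pi \in S_n} \P(\pi)\big(\ket{t_1} \otimes \ket{t_2} \otimes \cdots \otimes \ket{t_n}\big),
\end{equation*}
which is manifestly invariant under every $\P(\pi)$ and hence lies in $\Sym$. Distinct multisets $T \neq T'$ yield vectors $\ket{\psi_T}, \ket{\psi_{T'}}$ supported on disjoint sets of standard basis vectors of $(\C^d)^{\otimes n}$, so they are orthogonal and in particular linearly independent.

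Next I would verify that the $\ket{\psi_T}$ span $\Sym$. The standard basis $\{\ket{i_1} \otimes \cdots \otimes \ket{i_n}\}_{(i_1,\ldots,i_n) \in [d]^n}$ of $(\C^d)^{\otimes n}$ decomposes into $S_n$-orbits, where the orbit of $(i_1,\ldots,i_n)$ depends only on the multiset $T = \{i_1, \ldots, i_n\}$. Any permutation-invariant vector must have constant coefficients on each orbit, so it is a linear combination of the $\ket{\psi_T}$; by \Cref{sym_subspace_permutation_invariant}, this proves spanning. Hence $\{\ket{\psi_T}\}_{T}$ is an (orthogonal) basis of $\Sym$, and $\dim(\Sym)$ equals the number of multisets of size $n$ with entries in $[d]$.

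Finally, the count of such multisets is the classical stars-and-bars number $\binom{n+d-1}{n}$: each multiset corresponds bijectively to a weakly increasing sequence $t_1 \leq \cdots \leq t_n$ in $[d]$, or equivalently to a tuple of nonnegative multiplicities $(m_1, \ldots, m_d)$ with $m_1 + \cdots + m_d = n$. The equality $\dim(\Sym) = \tr(\projSym{n,d})$ is immediate since $\projSym{n,d}$ is an orthogonal projector. I do not anticipate any real obstacle here; the only point requiring a modicum of care is the orbit decomposition used to prove spanning, which is conceptually routine.
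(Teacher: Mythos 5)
The paper states this lemma without proof, treating it as a standard fact about the symmetric subspace and deferring to \cite{Har13} and \cite{Mel24} for details. Your argument — building an orthogonal basis of $\Sym$ indexed by multisets of size $n$ from $[d]$ via symmetrization, establishing linear independence by disjoint support, spanning by the $S_n$-orbit decomposition together with \Cref{sym_subspace_permutation_invariant}, and then counting multisets by stars-and-bars — is correct and is the standard textbook proof, so it fills in exactly what the paper outsources.
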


\begin{lemma}\label{proj_sym_subspace}
The average over $n$-fold products of Haar random states is proportional to the projector onto the symmetric subspace:
\begin{equation*}
    \E_{\ket{\bu} \sim \mu_H} \big[\ketbra{\bu}^{\otimes n}\big] = \frac{1}{\binom{d+n-1}{n}} \cdot \projSym{n,d}.
\end{equation*}
\end{lemma}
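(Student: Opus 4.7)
The plan is to exploit the unitary invariance of the Haar measure together with Schur's lemma applied to the symmetric subspace. Let $M \coloneq \E_{\ket{\bu}\sim \mu_H}[\ketbra{\bu}^{\otimes n}]$ denote the operator on the left-hand side. I want to show $M$ is proportional to $\projSym{n,d}$, and then pin down the constant by taking traces.

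First I would verify that $M$ is supported on $\Sym^n(\C^d)$. This is immediate since every $\ket{\bu}^{\otimes n}$ is a permutation-invariant vector, so $\ketbra{\bu}^{\otimes n}$ is zero on $\Sym^n(\C^d)^\perp$ from both sides, and hence so is the expectation $M$. By \Cref{sym_subspace_permutation_invariant}, $\mathrm{Sym}^n(\C^d)$ is exactly the span of permutation-invariant vectors.

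Second I would show that for every unitary $V \in U(d)$, $V^{\otimes n} M (V^\dagger)^{\otimes n} = M$. This follows from left-invariance of the Haar measure: writing $\ket{\bu} = \bU \ket{v}$ for a fixed $\ket{v}$ and $\bU \sim \mu_H$, we have $V\ket{\bu} = (V\bU)\ket{v}$, and $V \bU$ has the same distribution as $\bU$. Hence $V^{\otimes n} \ketbra{\bu}^{\otimes n} (V^\dagger)^{\otimes n}$ is equal in distribution to $\ketbra{\bu}^{\otimes n}$, so their expectations coincide. Thus $M$ commutes with $V^{\otimes n}$ for every $V$.

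Third, I would invoke the fact that $\Sym^n(\C^d)$ is an irreducible representation of $U(d)$ under the action $V \mapsto V^{\otimes n}$. Restricting the identity $V^{\otimes n} M = M V^{\otimes n}$ to the symmetric subspace and applying Schur's lemma gives that $M\big|_{\Sym^n(\C^d)} = c \cdot I_{\Sym^n(\C^d)}$ for some scalar $c$. Combined with the first step, this yields $M = c \cdot \projSym{n,d}$. Finally, taking the trace of both sides and using $\tr(\ketbra{\bu}^{\otimes n}) = 1$ together with \Cref{dimension_sym_subspace} to get $\tr(\projSym{n,d}) = \binom{n+d-1}{n}$ forces $c = 1/\binom{n+d-1}{n}$, which is exactly the claim.

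The main obstacle is the appeal to irreducibility of $\Sym^n(\C^d)$ as a $U(d)$-representation, which is a nontrivial representation-theoretic input (a consequence of Schur–Weyl duality, or provable directly by noting that the only $U(d)$-equivariant operators on $(\C^d)^{\otimes n}$ are linear combinations of permutation operators, and combining with symmetry). Since the paper already makes heavy use of representation theory, I would simply cite this fact (e.g., from \cite{Har13}) rather than reprove it.
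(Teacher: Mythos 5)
Your proof is correct and is the standard argument (support on the symmetric subspace, invariance from the Haar measure, Schur's lemma on the irreducible $U(d)$-action, normalization by trace); the paper itself states this lemma without proof, citing \cite{Har13}, which uses exactly this approach. Note also that the paper proves the closely analogous \Cref{lem:integral_over_unitary_register} by the same left-invariance-plus-Schur's-lemma-plus-trace technique, so your argument fits naturally with the paper's toolkit.
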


\subsection{Representation theory}

Algorithms for tomography often use representation theory to utilize the symmetry of the input state $\rho^{\otimes n}$. In this section, we review the representation theory necessary for our results. Our coverage is based primarily on \cite[Chapter 2]{Wri16}. Other sources we draw on include \cite{Sag01,GW09,Ful97,Har05}. 

\subsubsection{Basics}

This section collects general definitions and results we will need. Our goal here is mainly to establish notation; detailed exposition, including proofs, can be found e.g.\ in \cite[Chapter 1]{Sag01}.

Let $U(V)$ denote the group of all unitary operators on a complex vector space $V$. 

\begin{definition}[Representations]
    Let $G$ be a group. A \emph{complex, unitary, finite-dimensional representation}, of $G$ is a pair $(\mu,V)$, where $V$ is a finite-dimensional complex vector space, and $\mu: G \to U(V)$ is a group homomorphism. The \emph{dimension} of the representation, written $\dim(\mu)$, is the dimension of $V$.
\end{definition}

Since we will not consider more general representations, we will refer to complex, unitary, finite-dimensional representations simply as \emph{representations}. We will also abbreviate a representation $(\mu,V)$ either as $\mu$ or $V$, when the meaning is clear from context.

\begin{definition}[Characters]
    Let $\mu$ be a representation of a group $G$. The \emph{character} of $\mu$ is the map $\chi_\mu: G \to \C$ given by $\chi_\mu(g) \coloneq \tr(\mu(g))$. 
\end{definition}

\begin{definition}[Intertwining operators]
    Let $(\mu_1,V_1)$ and $(\mu_2,V_2)$ be representations of a group $G$. An \emph{intertwining map}, or \emph{intertwiner}, between $\mu_1$ and $\mu_2$ is a map $T: V_1 \to V_2$ such that $T \cdot \mu_1(g) = \mu_2(g) \cdot T$, for all $g \in G$. 
\end{definition}

\begin{definition}[Isomorphic representations]
    Let $\mu_1$ and $\mu_2$ be representations of a group $G$. Then $\mu_1$ and $\mu_2$ are \emph{isomorphic representations}, or \emph{equivalent}, if there exists an invertible intertwining operator between $\mu_1$ and $\mu_2$. We write $\mu_1 \cong \mu_2$. 
\end{definition}

\begin{definition}[Irreducible representations]
    Let $(\mu, V)$ be a representation of a group $G$. A subspace $W \subseteq V$ is an \emph{invariant subspace} of $V$ if $\mu(g) \cdot W \subseteq W$ for all $g \in G$. An invariant subspace is \emph{trivial} if $W = \{0\}$ or $W = V$. If $V$ has a nontrivial invariant subspace, it is called \emph{reducible}, and otherwise is called \emph{irreducible}. An irreducible representation is also called an \emph{irrep}. The set of equivalence classes of irreps of $G$ will be written $\widehat{G}$. 
\end{definition}

We can fix a representative $\widehat{\mu}_i$ from each class of irreps, and identify $\widehat{G}$ with $\{ \widehat{\mu}_i\}$. 

\begin{lemma}[Schur's lemma] \label{lem:Schur's_lemma}
    Let $(\mu_1,V_1)$ and $(\mu_2,V_2)$ be irreducible representations of a group $G$, with an intertwining operator $T: V_1 \to V_2$. 
    \begin{itemize}
        \item If $\mu_1$ and $\mu_2$ are non-isomorphic, then $T = 0$.
        \item If $\mu_1 = \mu_2$, then $T = c \cdot I_{V_1}$, for some constant $c \in \C$. 
    \end{itemize} 
\end{lemma}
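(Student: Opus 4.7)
The plan is the standard two-part argument that exploits the fact that kernels and images of intertwiners are always invariant subspaces, so irreducibility forces them to be trivial.

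For the first part (non-isomorphic case), I would first verify that $\ker(T) \subseteq V_1$ is invariant under $\mu_1$: if $v \in \ker(T)$ and $g \in G$, then $T(\mu_1(g) v) = \mu_2(g) T(v) = 0$, so $\mu_1(g) v \in \ker(T)$. Symmetrically, $\mathrm{im}(T) \subseteq V_2$ is invariant under $\mu_2$, using the intertwining property in the other direction: $\mu_2(g)(T v) = T(\mu_1(g) v) \in \mathrm{im}(T)$. Since $V_1$ and $V_2$ are irreducible, each of these must be trivial. If $T \neq 0$, then $\ker(T) \neq V_1$ forces $\ker(T) = \{0\}$, and $\mathrm{im}(T) \neq \{0\}$ forces $\mathrm{im}(T) = V_2$. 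Hence $T$ is an invertible intertwiner, making $\mu_1 \cong \mu_2$, contradicting our hypothesis. Therefore $T = 0$.

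For the second part (equal case $\mu_1 = \mu_2 =: \mu$ acting on $V := V_1 = V_2$), I would leverage that we are working over $\C$, which is algebraically closed, so the operator $T: V \to V$ has at least one eigenvalue $c \in \C$. The operator $T - c \cdot I_V$ is then itself an intertwiner from $\mu$ to $\mu$, because the identity commutes with every $\mu(g)$ and intertwiners form a linear subspace. Since $c$ is an eigenvalue, $\ker(T - c \cdot I_V)$ is a nonzero invariant subspace of $V$ (by the same kernel-invariance argument as in part one). By irreducibility of $\mu$, this kernel must be all of $V$, which gives $T - c \cdot I_V = 0$, i.e.\ $T = c \cdot I_V$.

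There is not really a hard step here; everything follows from two observations---invariance of kernel and image, and the existence of an eigenvalue over $\C$. The only place one needs to be careful is to note explicitly that the ``equal'' case requires $\C$ (or any algebraically closed field) so that an eigenvalue is guaranteed; the result would fail over $\R$. I would write the proof in a few lines, keeping the two bullet points parallel in structure, and cite \Cref{lem:Schur's_lemma} only internally since this is the lemma being proved.
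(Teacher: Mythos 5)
Your proof is correct and is the standard textbook argument (invariance of kernel and image, plus existence of an eigenvalue over $\C$). The paper itself does not prove Schur's lemma---it is stated as a background fact with the reader pointed to standard references such as \cite{Sag01}---so there is no proof in the paper to compare against, but your argument matches what any of those references would give, including the explicit caveat that algebraic closedness of the field is what makes the second bullet work.
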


The following result can be proven by Schur's lemma. 

\begin{corollary} \label{cor:unitary_isomorphism}
    Let $(\mu_1, V_1)$ and $(\mu_2,V_2)$ be isomorphic representations of a group $G$. Then there exists a \emph{unitary} $U: V_1 \to V_2$ which intertwines $\mu_1$ and $\mu_2$. That is, for all $g \in G$, 
    \begin{equation*}
        U \cdot \mu_1(g) \cdot U^\dagger = \mu_2(g).
    \end{equation*}
\end{corollary}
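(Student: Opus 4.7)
The plan is to start from an arbitrary invertible intertwiner $T: V_1 \to V_2$, which exists by hypothesis, and to ``polish'' it into a unitary intertwiner using Schur's lemma. The key observation is that while $T$ itself need not be unitary, the positive operator $T^\dagger T$ commutes with the action of $G$, and Schur's lemma will force it to be a (positive) multiple of the identity. Rescaling $T$ by the appropriate scalar then yields the desired unitary.

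More concretely, I would first verify that the adjoint $T^\dagger : V_2 \to V_1$ intertwines $\mu_2$ with $\mu_1$ (in the reverse direction). Starting from $T \mu_1(g) = \mu_2(g) T$ and taking adjoints, we get $\mu_1(g)^\dagger T^\dagger = T^\dagger \mu_2(g)^\dagger$. Because the representations are unitary, $\mu_i(g)^\dagger = \mu_i(g^{-1})$, so relabeling $g^{-1} \mapsto g$ gives $T^\dagger \mu_2(g) = \mu_1(g) T^\dagger$ for all $g \in G$. Composing, we find that $T^\dagger T: V_1 \to V_1$ is a self-intertwiner of $\mu_1$, i.e., $T^\dagger T \cdot \mu_1(g) = \mu_1(g) \cdot T^\dagger T$ for every $g \in G$.

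Now I would invoke Schur's lemma (\Cref{lem:Schur's_lemma}) in its ``same irrep'' form: since $\mu_1$ is irreducible and $T^\dagger T$ intertwines $\mu_1$ with itself, there must be some scalar $c \in \C$ such that $T^\dagger T = c \cdot I_{V_1}$. The scalar $c$ is in fact a positive real number: $T^\dagger T$ is positive semidefinite, and because $T$ is invertible, it is strictly positive definite, so $c > 0$. Setting $U \coloneq T/\sqrt{c}$ then gives $U^\dagger U = (T^\dagger T)/c = I_{V_1}$, so $U$ is an isometry from $V_1$ to $V_2$; because $\dim V_1 = \dim V_2$ (the two representations being isomorphic), $U$ is in fact unitary. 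Finally, since rescaling a scalar multiple preserves the intertwining relation, $U \cdot \mu_1(g) = \mu_2(g) \cdot U$ for all $g \in G$, which is equivalent to the claimed identity $U \mu_1(g) U^\dagger = \mu_2(g)$.

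There is no real obstacle here; the only thing to be a bit careful about is invoking the correct version of Schur's lemma. One might be tempted to apply it to $T: V_1 \to V_2$ directly, but since $V_1$ and $V_2$ are distinct (though isomorphic) spaces carrying \emph{a priori} distinct representations, it is cleanest to push everything to $V_1$ via $T^\dagger T$, where the ``same representation'' branch of Schur's lemma applies unambiguously and yields a scalar.
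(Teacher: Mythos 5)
Your argument is clean and internally correct up to the point where you invoke Schur's lemma, but there is a mismatch with the hypothesis: the corollary is stated for \emph{arbitrary} isomorphic representations, not irreducible ones, and it is used in the paper precisely in that generality (e.g.\ to conclude the existence of the Schur--Weyl unitary $\calU_{\mathrm{SW}}$ between $(\C^d)^{\otimes n}$ and $\bigoplus_\lambda \specht_\lambda \otimes V^d_\lambda$, neither of which is irreducible). Your step from ``$T^\dagger T$ is a self-intertwiner of $\mu_1$'' to ``$T^\dagger T = c\, I_{V_1}$'' silently assumes $\mu_1$ is irreducible; when $\mu_1$ is reducible, $T^\dagger T$ is generally not a scalar, and the argument stalls.

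The good news is that everything up to the Schur's-lemma step is valid in the general setting — $T^\dagger$ does intertwine $\mu_2$ with $\mu_1$, and $T^\dagger T$ does commute with every $\mu_1(g)$ — and the repair is small: replace Schur's lemma by polar decomposition. Write $T = U\,|T|$ with $|T| = (T^\dagger T)^{1/2}$; since $T^\dagger T$ commutes with each $\mu_1(g)$, so does its square root $|T|$ (continuous functional calculus / power-series approximation), and $|T|$ is invertible because $T$ is. Then from
\begin{equation*}
U\,|T|\,\mu_1(g) \;=\; T\,\mu_1(g) \;=\; \mu_2(g)\,T \;=\; \mu_2(g)\,U\,|T|
\end{equation*}
and $|T|\mu_1(g) = \mu_1(g)|T|$, cancel the invertible $|T|$ on the right to get $U\mu_1(g) = \mu_2(g)U$. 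Since $T$ is invertible and $\dim V_1 = \dim V_2$, the partial isometry $U$ from the polar decomposition is in fact unitary. This recovers the corollary at the stated level of generality, and specializes to your scalar argument when $\mu_1$ happens to be irreducible (there $|T| = \sqrt{c}\,I$).
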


\begin{definition}[Direct sum of representations]
    Let $(\mu_1, V_1)$, $\dots$, $(\mu_k, V_k)$ be representations of a group $G$. The \emph{direct sum} of $\mu_1$, $\dots$, $\mu_k$ is the representation $(\mu, V)$, where $V \coloneq \bigoplus_{i=1}^{k} V_i$, and $\mu(g) \coloneq \bigoplus_{i=1}^{k} \mu_i(g) = \sum_{i=1}^k \ketbra{i} \otimes \mu_i(g)$. Representations may occur more than once, and sometimes it will be convenient to allow them to occur zero times; in this case, we may also write $\mu(g) = \bigoplus_{i=1}^k m_i \cdot \mu_i(g)$, where the $\{m_i\}$ are nonnegative integers, and $m_i$ is the \emph{multiplicity} of $\mu_i$. 
\end{definition}

\begin{definition}[Complete reducibility]
    Let $\mu$ be a representation of $G$. Then $\mu$ is \emph{completely reducible} if 
    \begin{equation*}
        \mu \cong \bigoplus_{\widehat{\mu}_i \in \widehat{G}} m_i \cdot \widehat{\mu}_i,
    \end{equation*}
    for some nonnegative integers $\{m_i\}$.
\end{definition}

Every finite-dimensional, unitary representation is completely reducible.

\subsubsection{Partitions and Young diagrams}

The representation theory of the symmetric and unitary groups turns out to be connected to partitions. We now give a brief overview of relevant partition-related concepts. 

\begin{definition}[Partitions]
Let $n$ be a positive integer. A \emph{partition} of $n$ is a finite list $\lambda = (\lambda_1, \dots, \lambda_m)$ of nonnegative integers such that $\lambda_1 \geq \dots \geq \lambda_m \geq 0$ and $\lambda_1 + \dots + \lambda_m = n$. We write $\lambda \vdash n$ to denote that $\lambda$ is a partition of $n$. The \emph{length} of $\lambda$, written $\ell(\lambda)$, is the largest index $i$ such that $\lambda_i > 0$. The \emph{size} of $\lambda$ is $|\lambda| = n$. 
\end{definition}

Partitions can be represented pictorially with Young diagrams.

\begin{definition}[Young diagrams] 
Let $\lambda \vdash n$. A \emph{Young diagram} of shape $\lambda$ is a diagram consisting of $n$ boxes, arranged into $\ell(\lambda)$ left-justified rows, such that the $i$-th row contains $\lambda_i$ boxes. We will also refer to boxes interchangeably as \emph{cells}.
\end{definition}

We identify partitions with their Young diagrams, and use the two interchangeably. 


\begin{figure}[h]
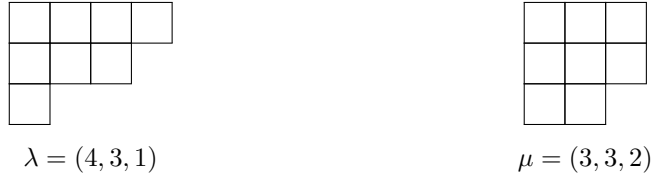

    \centering
    \begin{minipage}{0.35\textwidth}
        \centering
        \begin{ytableau}
              ~ & ~ & ~ & ~ \\
              ~ & ~ & ~ \\
              ~ 
        \end{ytableau}

        \vspace{0.7em} 
        $\lambda = (4,3,1)$
    \end{minipage}
    \hspace{1.6em}
    \begin{minipage}{0.35\textwidth}
        \centering
        \begin{ytableau}
            ~ & ~ & ~ \\ 
            ~ & ~ & ~ \\
            ~ & ~
        \end{ytableau}

        \vspace{0.7em}
        $\mu = (3,3,2)$
    \end{minipage}

    \caption{Two partitions of $n = 8$, and their Young diagrams. Left: $\lambda = (4,3,1)$. Right: $\mu = (3,3,2)$.}
    \label{fig:example_Young_diagrams}
\end{figure}





\begin{notation}[Additional notation] \label{not:additional}
We define the following notation:
\begin{itemize}
     \item The box in the $i$-th row and $j$-th column is denoted by $(i,j)$. 
     \item The \emph{content} of $(i,j)$ is $\content(i,j) \coloneqq j - i$.
     \item The \emph{hook length} of $(i,j)$ in $\lambda$, $\hook_\lambda(i,j)$, is the number of boxes $(k,\ell)$ in $\lambda$ such that $k = i$ and $\ell \geq j$ or $\ell = j$ and $k \geq i$. Informally, the hook length is the number of boxes either directly to the right of $(i,j)$, or directly below, including $(i,j)$ itself. 
     \item Let $\mu$ be a partition. Then $\mu$ \emph{contains} $\lambda$ if $\ell(\mu) \geq \ell(\lambda)$ and for each $i \in [\ell(\lambda)]$, we have $\mu_i \geq \lambda_i$. When $\mu$ contains $\lambda$, we write $\lambda \subseteq \mu$, and we write $\mu \setminus \lambda$ for the set of boxes in $\mu$ but not in $\lambda$, where we identify boxes in $\lambda$ and $\mu$ with the same label $(i,j)$. Alternatively, if we view partitions as subsets of $\N \times \N$, containment of partitions is just set containment.
     \item If $\mu$ can be obtained from adding a single box in row $i$ to $\lambda$, then we write $\mu = \lambda+e_i$. We also write $\mu = \lambda + k \cdot e_i$ if $\mu$ can be obtained by adding $k$ boxes to the $i$-th row.
\end{itemize}
\end{notation}

\begin{figure}[h!]
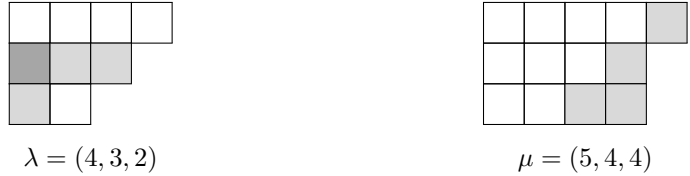

    \centering
    \begin{minipage}{0.35\textwidth}
        \centering
        \begin{ytableau}
              ~ & ~ & ~ & ~ \\
              *(gray!70)~ & *(gray!30)~ & *(gray!30)~ \\
              *(gray!30)~ & ~
        \end{ytableau}

        \vspace{0.7em} 
        $\lambda = (4,3,2)$
    \end{minipage}
    \hspace{1.6em}
    \begin{minipage}{0.35\textwidth}
        \centering
        \begin{ytableau}
            ~ & ~ & ~ & ~ & *(gray!30)~\\ 
            ~ & ~ & ~ & *(gray!30)~\\
            ~ & ~ & *(gray!30)~ & *(gray!30)~
        \end{ytableau}

        \vspace{0.7em}
        $\mu = (5,4,4)$
    \end{minipage}

    \caption{Illustration of \Cref{not:additional}. Left: a partition $\lambda = (4,3,1)$. The box $(2,1)$ is shaded in dark gray, and the remaining boxes of $\hook_\lambda(2,1)$ are shaded light gray. The content of $(2,1)$ is $1-2 = -1$, and its hook length is $4$. Right: a partition $\mu = (5,4,4)$. We have $\lambda \subseteq \mu$, and the boxes of $\mu \setminus \lambda$ are shaded.}
\end{figure}

\begin{definition}[Standard Young tableaux]
    A \emph{standard Young tableau} (SYT) of shape $\lambda \vdash n$ is a bijective labeling  of the boxes of $\lambda$ with the numbers in $[n]$, such that the labels are strictly increasing rightwards along rows, and downwards along columns. The set of all SYTs of shape $\lambda$ is denoted $\SYT(\lambda)$. 
\end{definition}

\begin{definition}[Semistandard Young tableaux]
Fix a positive integer $d$. A \emph{semistandard Young tableau} (SSYT) of shape $\lambda$ and alphabet $[d]$ is a labelling of the boxes of $\lambda$, such that the labels are \emph{weakly} increasing rightwards along rows, and \emph{strictly} increasing downwards along columns. The set of all SSYTs of shape $\lambda$ and alphabet $d$ is denoted $\SSYT(\lambda,d)$. 
\end{definition}

\begin{figure}[h!]
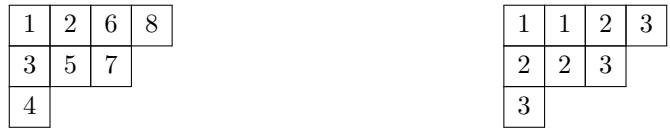

    \centering
    \begin{minipage}{0.35\textwidth}
        \centering
        \begin{ytableau}
              1 & 2 & 6 & 8\\
              3 & 5 & 7 \\
              4 
        \end{ytableau}

        \vspace{0.7em} 
    \end{minipage}
    \hspace{1.6em}
    \begin{minipage}{0.35\textwidth}
        \centering
        \begin{ytableau}
            1 & 1 & 2 & 3 \\ 
            2 & 2 & 3 \\
            3 
        \end{ytableau}

        \vspace{0.7em}
    \end{minipage}

    \caption{Examples of tableaux of shape $\lambda = (4,3,1)$. Left: an SYT. Right: an SSYT for $d \geq 3$.}
    \label{fig:example_Young_tableaux}
\end{figure}







\subsubsection{Irreducible representations of $S_n$ and $U(d)$}

We now turn to the representation theory of the symmetric and unitary groups. We begin with descriptions of the two groups' irreps, first considering $S_n$. 

\begin{theorem} The irreducible representations of $S_n$ are in bijection with partitions $\lambda$ such that $\lambda \vdash n$.  
\end{theorem}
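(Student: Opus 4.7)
The plan is to establish the bijection by (i) counting the irreducibles abstractly via conjugacy classes, and (ii) exhibiting an explicit family of pairwise non-isomorphic irreducible representations of $S_n$ indexed by partitions of $n$; matching cardinalities then yields the bijection. The approach follows the classical Specht module construction, as in \cite{Sag01}.

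For the counting step, I would invoke the standard fact from character theory that the number of equivalence classes of irreducible complex representations of a finite group $G$ equals the number of conjugacy classes of $G$. In $S_n$, two permutations are conjugate if and only if they share the same cycle type, and the cycle type of any element is by definition a partition of $n$; conversely, every partition of $n$ arises as the cycle type of some permutation. Hence $|\widehat{S_n}|$ is exactly the number of partitions $\lambda \vdash n$.

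For each $\lambda \vdash n$, I would construct the \emph{Specht module} $\specht^\lambda$ as follows. A $\lambda$-tabloid is an equivalence class of bijective fillings of the boxes of $\lambda$ by $[n]$, where two fillings are equivalent if they agree row-wise as sets. Let $M^\lambda$ be the permutation representation of $S_n$ on the free vector space on $\lambda$-tabloids, with $S_n$ acting by permuting labels. For each $t \in \SYT(\lambda)$, define the polytabloid $e_t = \sum_{\sigma \in C_t} \sign(\sigma)\cdot \sigma\{t\}$, where $C_t$ is the column stabilizer of $t$, and set $\specht^\lambda = \Span\{e_t : t \in \SYT(\lambda)\}$, which is automatically an $S_n$-subrepresentation of $M^\lambda$.

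The technical heart of the proof — and the main obstacle — is to show that each $\specht^\lambda$ is irreducible and that $\specht^\lambda \not\cong \specht^\mu$ for $\lambda \neq \mu$. The standard argument uses the dominance order on partitions together with a combinatorial ``sign lemma'' about polytabloids and Young symmetrizers: one shows that any nonzero $S_n$-intertwiner from $\specht^\mu$ into $M^\lambda$ forces $\mu$ to dominate $\lambda$. Combined with Schur's lemma (\Cref{lem:Schur's_lemma}) applied to $\mathrm{End}_{S_n}(\specht^\lambda)$, this asymmetric constraint pins down the endomorphism algebra to one dimension and simultaneously separates the Specht modules across distinct partitions. Once we have $|\{\lambda \vdash n\}|$ pairwise non-isomorphic irreducibles in hand, the counting step forces them to exhaust $\widehat{S_n}$, and so $\lambda \mapsto \specht^\lambda$ is the desired bijection.
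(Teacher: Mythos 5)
The paper states this theorem without proof, treating it as standard background for the Schur--Weyl machinery (cited to \cite{Sag01} among other references), so there is no in-paper argument to compare against; your sketch is the classical Specht-module proof from exactly that source and is essentially correct. One technical point you should tighten: you define $\specht^\lambda = \Span\{e_t : t \in \SYT(\lambda)\}$ and assert that this is ``automatically'' an $S_n$-subrepresentation of $M^\lambda$. That is not immediate from the definition: the action satisfies $\pi \cdot e_t = e_{\pi t}$, and if $t$ is standard then $\pi t$ generally is not, so a priori the span of the standard polytabloids is not visibly closed under the action. The standard fix is to define $\specht^\lambda$ as the span of $e_t$ over \emph{all} bijective fillings $t$ of shape $\lambda$ --- which is then manifestly $S_n$-invariant by the identity $\pi e_t = e_{\pi t}$ --- and only afterward invoke the Standard Basis Theorem (a straightening/Garnir-relations argument) to conclude that the standard polytabloids form a basis. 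A related minor looseness: Schur's lemma does not by itself ``pin down'' the endomorphism algebra; rather, the dominance-order Hom-space lemma gives $\dim\mathrm{Hom}_{S_n}(\specht^\lambda, M^\lambda) \leq 1$, whence $\dim\mathrm{End}_{S_n}(\specht^\lambda) \leq 1$ via the inclusion $\specht^\lambda \hookrightarrow M^\lambda$, and then complete reducibility over $\C$ forces $\specht^\lambda$ to be irreducible. With those two adjustments, the outline --- counting conjugacy classes by cycle type, constructing the Specht modules, establishing irreducibility and pairwise non-isomorphism via dominance, and matching cardinalities --- is the standard and correct route.
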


\begin{definition}
    The irrep of $S_n$ corresponding to $\lambda \vdash n$ is denoted $(\kappa_\lambda, \specht_\lambda)$, where $\specht_\lambda$ is called the \emph{Specht module}. We abbreviate $\dim(\specht_\lambda)$ as $\dim(\lambda)$. 
\end{definition}

\begin{theorem}
    There is a basis of $\specht_\lambda$ for which each basis element is bijectively associated with an SYT of shape $\lambda$, so that $\dim(\lambda) = | \SYT(\lambda)|$.
\end{theorem}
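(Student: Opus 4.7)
The plan is to realize $\specht_\lambda$ explicitly via Young's polytabloid construction and then exhibit an SYT-indexed basis. First, I would introduce the auxiliary permutation module $M^\lambda$: a \emph{$\lambda$-tabloid} $\{T\}$ is an equivalence class of bijective fillings of $\lambda$ by $[n]$ under the row-stabilizer, and $M^\lambda$ is the complex vector space with basis the set of $\lambda$-tabloids, with $S_n$ acting by relabeling entries. For any Young tableau $T$ of shape $\lambda$, let $C(T) \subseteq S_n$ denote the column-stabilizer, and define the \emph{polytabloid}
\[
    e_T \;=\; \sum_{\tau \in C(T)} \sign(\tau)\, \{\tau T\} \;\in\; M^\lambda.
\]
A direct check gives $\sigma \cdot e_T = e_{\sigma T}$ for $\sigma \in S_n$, so $\specht_\lambda \coloneq \Span\{e_T : T \text{ is a tableau of shape } \lambda\}$ is an $S_n$-submodule of $M^\lambda$. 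Irreducibility of this module follows from James's submodule theorem using the bilinear form on $M^\lambda$ under which distinct tabloids are orthonormal; since the paper only needs the dimension count, I would cite this and focus on exhibiting the basis.

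It remains to show that $\{e_T : T \in \SYT(\lambda)\}$ is a basis of $\specht_\lambda$. For spanning, I would invoke the \emph{Garnir relations}: if $T$ is a tableau failing the SYT condition at some pair of adjacent columns, there is a canonical identity expressing $e_T$ as a signed sum of polytabloids $e_{T'}$, each obtained by swapping an appropriate set of entries across the two columns, such that each $T'$ is strictly smaller than $T$ in a well-chosen termination-friendly order on tableaux (for example, reverse lexicographic order on the column-reading word). Iterating the Garnir relations, every $e_T$ is rewritten as an integer combination of $e_{T'}$ with $T' \in \SYT(\lambda)$, proving spanning.

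For linear independence, I would equip tabloids with a partial order (e.g., column-dominance), and verify the leading-term property: for a standard tableau $T$, the tabloid $\{T\}$ appears in $e_T$ with coefficient $+1$ and is strictly larger than every other tabloid $\{\tau T\}$ for $\tau \in C(T) \setminus \{\mathrm{id}\}$, because columns of $T$ are already strictly increasing. Moreover, distinct standard tableaux $T_1 \neq T_2$ yield distinct leading tabloids $\{T_1\} \neq \{T_2\}$, since an SYT is uniquely determined by its tabloid (the strict column-increasing condition forces the filling within each row). Hence the leading tabloids of $\{e_T\}_{T \in \SYT(\lambda)}$ are linearly independent in $M^\lambda$, which forces linear independence of the polytabloids themselves. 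Combining the two steps yields $\dim(\lambda) = \dim(\specht_\lambda) = |\SYT(\lambda)|$. The main obstacle is setting up the Garnir relations and the ordering carefully enough that the leading-term argument goes through cleanly; once the combinatorial bookkeeping is in place, everything else is routine.
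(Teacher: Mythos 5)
The paper does not prove this theorem; it is stated without proof in the preliminaries, with a pointer to \cite[Chapter 1]{Sag01} as the source for ``detailed exposition, including proofs.'' Your proposal is essentially the proof found in that reference: Young's polytabloid realization of $\specht_\lambda$ inside the permutation module $M^\lambda$, Garnir relations to show the standard polytabloids span, and a dominance-order leading-term argument for linear independence. This is the canonical approach and is correct in substance. One small imprecision: in your independence step you write that the strict \emph{column}-increasing condition forces the filling within each row; what actually recovers an SYT from its tabloid is the \emph{row}-increasing condition (given the row sets, sorting each row increasingly reproduces the tableau). The conclusion --- distinct standard tableaux have distinct tabloids --- is of course still correct. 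You may also want to be slightly more careful that dominance is only a partial order, so the linear-independence step should pick a $T_0$ with $c_{T_0} \neq 0$ that is \emph{maximal} (not maximum) among the nonzero terms and note that any $T$ whose $e_T$ contributes to $\{T_0\}$ must satisfy $\{T_0\} \trianglelefteq \{T\}$, forcing $T = T_0$ by maximality; your sketch glosses over this but it is routine.
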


For $U(d)$ we focus only on \emph{polynomial} irreps.

\begin{definition}[Polynomial representations]
    Let $(\mu,V)$ be a representation of a matrix group $G$. Then $\mu$ is a \emph{polynomial representation} if we can pick a basis for $V$ such that the entries of the matrix $\mu(g)$ are polynomials in the entries of $g \in G$.\footnote{Note that if the entries are polynomials in some basis, then they are polynomials in any basis. That is, a representation being polynomial is a basis-independent notion.} 
\end{definition}

\begin{theorem}
The polynomial irreps of $U(d)$ are in bijection with partitions $\lambda$ such that $\ell(\lambda) \leq d$.  
\end{theorem}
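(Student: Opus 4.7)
The plan is to reduce the classification to Schur-Weyl duality on tensor powers of the defining representation. First I would argue that every polynomial irrep of $U(d)$ is \emph{homogeneous}: the central element $e^{i\theta} I_d$ acts on any irrep $(\mu, V)$ as a scalar (by Schur's lemma) which is a polynomial in $e^{i\theta}$, forcing that scalar to equal $e^{in\theta}$ for some integer $n \geq 0$. For a homogeneous irrep of degree $n$, the matrix coefficients of $\mu$ are homogeneous polynomials of degree $n$ in the entries of $U$, and each such polynomial appears as a coefficient of $U^{\otimes n}$. Hence every polynomial irrep embeds into $(\C^d)^{\otimes n}$ for some $n$.

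The classification of polynomial irreps of $U(d)$ thus reduces to finding the irreducible $U(d)$-subrepresentations of $(\C^d)^{\otimes n}$. For this I would invoke Schur-Weyl duality to establish the decomposition
\begin{equation*}
(\C^d)^{\otimes n} \;\cong\; \bigoplus_{\lambda \vdash n} \specht_\lambda \otimes V^d_\lambda
\end{equation*}
as a representation of $S_n \times U(d)$, where each $V^d_\lambda$ is either an irreducible $U(d)$-representation or zero. The key input is a double-commutant argument: the algebra spanned by $\{U^{\otimes n} : U \in U(d)\}$ is exactly the commutant of the $S_n$-action on $(\C^d)^{\otimes n}$. This follows by polarizing the map $v \mapsto v^{\otimes n}$ and using that the symmetric tensors (which contain every pure power) span the symmetric subspace, together with the fact that a symmetric operator commutes with every permutation. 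Decomposing into $S_n$-isotypic components then identifies the multiplicity spaces as the Specht modules $\specht_\lambda$ from earlier in the paper.

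The most delicate step, and the main obstacle, is determining precisely which $\lambda$ yield nonzero $V^d_\lambda$. I would realize $V^d_\lambda$ concretely as the image of a Young symmetrizer $c_\lambda \in \C[S_n]$ acting on $(\C^d)^{\otimes n}$. Fix an SYT $T$ of shape $\lambda$, and consider the pure tensor obtained by placing $e_i$ in the $k$-th factor whenever $k$ appears in row $i$ of $T$. The antisymmetrization over columns in $c_\lambda$ produces, for each column of length $c$, a factor proportional to $e_1 \wedge e_2 \wedge \cdots \wedge e_c$. When $\ell(\lambda) \leq d$ all these wedges are nonzero; the resulting vector has weight $\lambda$ under the diagonal torus of $U(d)$ and one checks it is a highest-weight vector, so $V^d_\lambda \neq 0$. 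When $\ell(\lambda) > d$, some column of $\lambda$ has length exceeding $d$, and the corresponding wedge vanishes, forcing $V^d_\lambda = 0$. Finally, distinct partitions yield non-isomorphic irreps since the $V^d_\lambda$ constructed above have distinct highest weights, completing the bijection between polynomial irreps and partitions with $\ell(\lambda) \leq d$.
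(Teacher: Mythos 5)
The paper states this theorem without proof, treating it as a standard fact from the references it cites (\cite{Sag01,GW09,Ful97,Har05}); there is no in-paper argument to compare against. Your sketch is the classical proof from those sources: reduce to homogeneous irreps via the action of the center, embed a degree-$n$ homogeneous polynomial irrep into $(\C^d)^{\otimes n}$, decompose that space by Schur--Weyl duality (double-commutant theorem), and decide which $V^d_\lambda$ are nonzero via Young symmetrizers. The approach is correct and complete in outline.

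Two small points deserve slightly more care. First, the step from ``the matrix coefficients of $\mu$ are degree-$n$ homogeneous polynomials in the entries of $U$'' to ``$\mu$ is a subrepresentation of $(\C^d)^{\otimes n}$'' quietly uses the standard fact that an irreducible representation whose matrix coefficients lie in the coefficient space of $\rho$ must occur inside $\rho$; that is true and routine, but it is genuine content of the argument and should be cited or proved. Second, in the final step you show the column antisymmetrizer annihilates one particular pure tensor (the one built from the chosen SYT) when $\ell(\lambda) > d$; to conclude $V^d_\lambda = 0$ you need that the Young symmetrizer $c_\lambda$ kills \emph{every} vector. The fix is standard: the column antisymmetrizer applied to any pure tensor factors through $\bigwedge^{c_1}\C^d \otimes \bigwedge^{c_2}\C^d \otimes \cdots$, where $c_1 \geq c_2 \geq \cdots$ are the column lengths of $\lambda$, and $\bigwedge^{c}\C^d = 0$ whenever $c > d$, so if $\ell(\lambda) = c_1 > d$ the entire image vanishes. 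With those two points made explicit, the proposal is sound.
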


\begin{definition}
    The irrep corresponding to $\lambda$, a partition with $\ell(\lambda) \leq d$, is denoted $(\nu_\lambda, V^d_\lambda)$, where $V^d_\lambda$ is called the \emph{Schur module}.
\end{definition} 

\begin{theorem}
There is a basis of $V^d_\lambda$ for which each basis element is bijectively associated with an SSYT of shape $\lambda$ and alphabet $[d]$, so that $\dim(V^d_\lambda) = | \SSYT(\lambda,d)|$. 
\end{theorem}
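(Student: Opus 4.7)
The plan is to build a basis for $V^d_\lambda$ inductively via the branching rule for $U(d) \supset U(d-1)$, iterated down the chain $U(d) \supset U(d-1) \supset \cdots \supset U(1)$; this produces what is classically called the \emph{Gelfand-Tsetlin basis}, and the bijection with $\SSYT(\lambda, d)$ falls out of the combinatorics of the iteration.

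The central ingredient is the branching rule: for any $\lambda$ with $\ell(\lambda) \leq d$,
$$V^d_\lambda \big|_{U(d-1)} \,\cong\, \bigoplus_{\mu} V^{d-1}_\mu,$$
where $\mu$ ranges over partitions \emph{interlacing} $\lambda$, i.e., satisfying $\lambda_i \geq \mu_i \geq \lambda_{i+1}$ for all $i$, and each such $\mu$ appears with multiplicity one. I would prove this via characters. The character of $\nu_\lambda$ evaluated on a diagonal matrix is the Schur polynomial $s_\lambda(x_1, \ldots, x_d)$, and specializing $x_d = 1$ one establishes the identity
$$s_\lambda(x_1, \ldots, x_{d-1}, 1) \,=\, \sum_\mu s_\mu(x_1, \ldots, x_{d-1}),$$
summed over interlacing $\mu$, using the Weyl bialternant formula for Schur polynomials together with determinantal manipulations. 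Since characters determine representations up to isomorphism (and by \Cref{cor:unitary_isomorphism} we may realize the isomorphism unitarily), the branching decomposition follows.

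Iterating the branching rule through the whole chain, each basis vector of $V^d_\lambda$ corresponds to a chain of partitions $\lambda^{(1)} \subseteq \lambda^{(2)} \subseteq \cdots \subseteq \lambda^{(d)} = \lambda$ with $\ell(\lambda^{(k)}) \leq k$ and $\lambda^{(k-1)}$ interlacing $\lambda^{(k)}$; at the bottom of the chain the irreps of $U(1)$ are all one-dimensional, so no further branching is needed. To finish, I would exhibit a bijection from such chains to $\SSYT(\lambda, d)$ by labelling each cell of $\lambda^{(k)} \setminus \lambda^{(k-1)}$ with the integer $k$. The interlacing condition $\lambda^{(k)}_i \geq \lambda^{(k-1)}_i \geq \lambda^{(k)}_{i+1}$ translates precisely into the semistandard condition (labels weakly increase rightwards along rows, strictly increase downwards along columns), yielding the desired bijection.

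I expect the main obstacle to be the character identity underlying the branching rule, which requires some careful determinantal manipulation. If one only wants the dimension equality and not an explicit basis indexing, a much faster route is available: Schur-Weyl duality gives $\dim(V^d_\lambda) = s_\lambda(1, \ldots, 1)$, and the combinatorial definition $s_\lambda(x_1,\ldots,x_d) = \sum_{T \in \SSYT(\lambda,d)} x^T$ immediately yields $\dim(V^d_\lambda) = |\SSYT(\lambda, d)|$, though this shortcut does not produce a canonical basis indexed by SSYTs in the explicit sense demanded by the theorem statement.
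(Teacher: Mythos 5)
The paper does not give a proof of this fact; it is recorded as a standard preliminary, cited from the surrounding representation-theory references, so there is no in-paper argument to compare against. Your Gelfand--Tsetlin construction is a correct and entirely standard proof. The key points you invoke are right: the branching $U(d)\supset U(d-1)$ is multiplicity-free, governed by interlacing partitions; multiplicity-freeness makes the isotypic decomposition canonical at each level (Schur's lemma fixes each summand up to scalar), so iterating down to $U(1)$, whose irreps are one-dimensional, produces a basis canonical up to phases; and the chains $\lambda^{(1)}\subseteq\cdots\subseteq\lambda^{(d)}=\lambda$ of interlacing partitions biject with $\SSYT(\lambda,d)$ by recording at which step each cell appears, with the interlacing inequalities $\lambda^{(k)}_i\geq\lambda^{(k-1)}_i\geq\lambda^{(k)}_{i+1}$ translating exactly to the horizontal-strip and hence semistandard conditions. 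That translation is worth spelling out in a full write-up (strict column increase follows from $\lambda^{(k-1)}_i \geq \lambda^{(k)}_{i+1}$, weak row increase from nesting), but the sketch is sound.

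Two small notes. First, the dimension shortcut you mention --- $\dim(V^d_\lambda)=s_\lambda(1^d)=|\SSYT(\lambda,d)|$ --- is essentially the route the paper itself takes implicitly in \Cref{rem:s_lambda_to_SSYTs}, but it is not really cheaper: it leans on the identity $\chi_{\nu_\lambda}=s_\lambda$ with $s_\lambda$ defined combinatorially, and proving \emph{that} identity (Weyl bialternant equals the SSYT generating function) is morally equivalent to proving the branching rule, so the work is relocated rather than avoided. You correctly observe it also yields no basis. Second, this dimension equality follows from the character theory of $U(d)$ alone, not from Schur--Weyl duality per se; the invocation of Schur--Weyl there is a slight mislabeling, though harmless.
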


\begin{theorem}[Hook-content formula, \protect{\cite[Theorem 7.21.2]{Sta99}}]\label{thm:hook_content_formula}
We have
\begin{equation*}
        |\SSYT(\lambda,d)| =  \prod_{(i,j) \in \lambda} \frac{d + \content(i,j)}{\hook_\lambda(i,j)}.
\end{equation*}
\end{theorem}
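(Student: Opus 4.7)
The plan is to recognize $|\SSYT(\lambda,d)|$ as the evaluation of a Schur polynomial, then compute that evaluation by specializing its bialternant expression and taking a limit. From the combinatorial definition of Schur polynomials,
$$s_\lambda(x_1,\dots,x_d) \;=\; \sum_{T \in \SSYT(\lambda,d)} \prod_{i=1}^d x_i^{m_i(T)},$$
where $m_i(T)$ is the number of boxes of $T$ carrying the label $i$. Setting every $x_i = 1$ gives $s_\lambda(1,\dots,1) = |\SSYT(\lambda,d)|$, so it suffices to evaluate $s_\lambda$ at the all-ones point. A connection to $U(d)$ representation theory is also available: the theorem preceding the statement identifies $|\SSYT(\lambda,d)|$ with $\dim V^d_\lambda$, and $\dim V^d_\lambda = \chi_{V^d_\lambda}(I) = s_\lambda(1,\dots,1)$ via the Weyl character formula. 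Either perspective reduces the problem to computing $s_\lambda(1,\dots,1)$.

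First I would compute the richer principal specialization $s_\lambda(1,q,q^2,\dots,q^{d-1})$ using the bialternant (Weyl character) formula
$$s_\lambda(x_1,\dots,x_d) \;=\; \frac{\det\bigl(x_i^{\lambda_j + d - j}\bigr)_{1 \leq i,j \leq d}}{\det\bigl(x_i^{d-j}\bigr)_{1 \leq i,j \leq d}}.$$
Substituting $x_i = q^{i-1}$ turns both determinants into Vandermonde determinants in the variables $q^{a_i}$ for appropriate exponents $a_i$, which factor as products $\prod_{i<j}(q^{a_j} - q^{a_i})$. After pulling out powers of $q$ and cancelling common factors between the numerator and denominator, a combinatorial bookkeeping argument recasts the remaining product over pairs of rows as a product over the cells of $\lambda$, yielding Stanley's identity
$$s_\lambda(1,q,\dots,q^{d-1}) \;=\; q^{N} \prod_{(i,j) \in \lambda} \frac{1 - q^{d + \content(i,j)}}{1 - q^{\hook_\lambda(i,j)}},$$
for a specific exponent $N = \sum_i (i-1)\lambda_i$ that one tracks along the way but which does not enter the final answer.

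Finally I would let $q \to 1$. Each factor $(1-q^k)/(1-q^\ell)$ tends to $k/\ell$ (by L'H\^opital, or by cancelling the common factor $1 - q$ in both numerator and denominator of the ratio), and the prefactor $q^{N}$ tends to $1$. This immediately gives
$$|\SSYT(\lambda,d)| \;=\; s_\lambda(1,\dots,1) \;=\; \prod_{(i,j) \in \lambda} \frac{d + \content(i,j)}{\hook_\lambda(i,j)},$$
as desired. The main obstacle is the middle step: converting the ratio of specialized Vandermonde-like determinants (which is naturally indexed by pairs of rows $i<j$) into a product indexed by the cells of $\lambda$ with contents and hook lengths. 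This reindexing is the combinatorial heart of the argument and is where the hook lengths enter the picture; the surrounding Schur-polynomial identification and $q \to 1$ limit are then essentially formal.
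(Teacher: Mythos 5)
The paper does not prove this theorem; it is cited directly from Stanley \cite{Sta99} and used as a black box, so there is no internal proof to compare against. Your proposal outlines the standard textbook proof: identify $|\SSYT(\lambda,d)|$ with $s_\lambda(1^d)$, compute the principal specialization $s_\lambda(1,q,\dots,q^{d-1})$ from the bialternant formula, and pass to the limit $q\to 1$. The outline is correct, and you are right that the identification $|\SSYT(\lambda,d)| = \dim V^d_\lambda = s_\lambda(1^d)$ is available both from the combinatorial definition of $s_\lambda$ and from the surrounding representation-theoretic setup.

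The one place where the proposal is a sketch rather than a proof is exactly the step you flag: after substituting $x_i = q^{i-1}$ into the bialternant, both determinants become Vandermonde in $q^{\mu_j}$ (with $\mu_j = \lambda_j + d - j$ for the numerator and $\mu_j = d-j$ for the denominator), so the ratio collapses to
\begin{equation*}
q^{N}\,\frac{\prod_{j<k}\bigl(1 - q^{\lambda_j - \lambda_k + k - j}\bigr)}{\prod_{j<k}\bigl(1 - q^{k-j}\bigr)},
\end{equation*}
a product indexed by \emph{pairs of rows}, not by cells. Converting this to $\prod_{(i,j)\in\lambda}\bigl(1-q^{d+\content(i,j)}\bigr)\big/\bigl(1-q^{\hook_\lambda(i,j)}\bigr)$ requires the classical observation that for each fixed row $i$, the multiset of hook lengths in that row is the complement, inside $\{1,\dots,\lambda_i + d - i\}$, of the multiset $\{\lambda_i - \lambda_k + k - i : k > i\}$; a dual cancellation handles the contents. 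This is genuinely the core of the argument and is not supplied, only named. Since the paper itself defers entirely to Stanley, leaving this step as a pointer is reasonable, but a self-contained proof would need to spell out that multiset identity before the $q\to 1$ limit and L'H\^opital step, which are otherwise routine and handled correctly in your write-up.
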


\begin{example}[Defining representation]\label{ex:defining_rep}
    The defining representation of $U(d)$ is the representation $(\mu, V)$ such that $V = \C^d$ and $\mu(U) = U$. Since no subspace is fixed by \emph{all} unitaries, the defining representation is irreducible, and it turns out that $\mu \cong \nu_{\lambda}$, for $\lambda = (1)$.
\end{example}

\begin{remark}
    Since $\nu_\lambda$ is a polynomial representation, $\nu_\lambda(M)$ is defined for \emph{any} matrix $M$. 
\end{remark}

For the most part, we will not need any specific knowledge about any of these representations. We will, however, use the following fact:

\begin{theorem}[Littlewood-Richardson rule, \protect{\cite[Corollary 8.3.2(c)]{Ful97}}]\label{prelim_littlewood_richardson_rule}
Let $\lambda$ and $\mu$ be partitions of length at most $d$. Then $\nu_\lambda \otimes \nu_\mu$ defines a polynomial representation of $U(d)$, and decomposes as
\begin{equation*}\nu_\lambda \otimes \nu_\mu \cong \bigoplus_{\tau \, : \, \ell(\tau) \leq d} c^{\tau}_{\lambda \mu} \cdot \nu_{\tau},\end{equation*}
where the coefficients $\{c^\tau_{\lambda \mu}\}_\tau$ are nonnegative integers known as the \emph{Littlewood-Richardson coefficients}. We have $c^{\tau}_{\lambda \mu} = 0$ unless both of the following conditions are met:
\begin{itemize}
    \item $|\lambda| + |\mu| = |\tau|$.
    \item $\lambda \subseteq \tau$ and $\mu \subseteq \tau$. 
\end{itemize}
\end{theorem}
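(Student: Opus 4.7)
The plan is to establish the three parts of the statement in turn: (i) existence of the decomposition with nonnegative integer multiplicities, (ii) the degree equality $|\tau|=|\lambda|+|\mu|$, and (iii) the containments $\lambda\subseteq\tau$ and $\mu\subseteq\tau$.

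For (i), I would first observe that the tensor product of two polynomial representations is again polynomial: if the entries of $\nu_\lambda(U)$ and $\nu_\mu(U)$ are polynomials in the entries of $U$, then so are the entries of $\nu_\lambda(U)\otimes\nu_\mu(U)$. Since $\nu_\lambda\otimes\nu_\mu$ is a finite-dimensional unitary representation of $U(d)$, it is completely reducible into irreps, and since it is also polynomial, each of those irreps must itself be polynomial, hence of the form $\nu_\tau$ with $\ell(\tau)\leq d$. This yields the decomposition, with multiplicities $c^\tau_{\lambda\mu}$ that are nonnegative integers.

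For (ii), the key point is that the matrix entries of $\nu_\lambda(U)$ are \emph{homogeneous} polynomials of degree $|\lambda|$ in the entries of $U$. One way to see this is to realize $\nu_\lambda$ as a subrepresentation of $(\C^d)^{\otimes |\lambda|}$ on which $U(d)$ acts as $U\mapsto U^{\otimes |\lambda|}$, where every entry is manifestly a monomial of degree $|\lambda|$. Equivalently, substituting $U\mapsto zU$ for a scalar $z$ gives $\nu_\lambda(zU)=z^{|\lambda|}\nu_\lambda(U)$. Applying this substitution to both sides of the decomposition, the left-hand side scales as $z^{|\lambda|+|\mu|}$ while each $\nu_\tau$ on the right scales as $z^{|\tau|}$, so linear independence of distinct powers of $z$ forces $|\tau|=|\lambda|+|\mu|$ whenever $c^\tau_{\lambda\mu}\neq 0$.

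The main obstacle is (iii), the containment conditions. The cleanest route I know passes through characters: restricting $\nu_\lambda$ to the maximal torus of diagonal unitaries with eigenvalues $x_1,\ldots,x_d$ yields the Schur polynomial $s_\lambda(x_1,\ldots,x_d)$, so the decomposition translates into the identity $s_\lambda\cdot s_\mu=\sum_\tau c^\tau_{\lambda\mu}\,s_\tau$ of symmetric polynomials. The full combinatorial Littlewood--Richardson rule (cited from Fulton) identifies $c^\tau_{\lambda\mu}$ as the number of Littlewood--Richardson skew tableaux of shape $\tau/\lambda$ with content $\mu$; for any such tableau to exist at all, the skew shape $\tau/\lambda$ must be well-defined, giving $\lambda\subseteq\tau$. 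Combined with the symmetry $c^\tau_{\lambda\mu}=c^\tau_{\mu\lambda}$ (immediate from $s_\lambda s_\mu=s_\mu s_\lambda$), this yields $\mu\subseteq\tau$ as well. Establishing the combinatorial Littlewood--Richardson rule itself (via, e.g., jeu de taquin or crystal bases) is the genuinely deep ingredient, which is why the statement is presented here as a citation rather than reproven from scratch.
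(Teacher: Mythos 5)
This theorem is presented in the paper purely as a citation (to Fulton, Corollary 8.3.2(c)) with no in-text proof, so there is no "paper's proof" to compare against; your task was to reconstruct why the cited statement holds. Your sketch is correct and, in spirit, follows the same route as the paper: the genuinely deep ingredient (the combinatorial Littlewood--Richardson rule) is still ultimately outsourced to Fulton, while the softer facts are derived. Parts (i) and (ii) are correct self-contained arguments --- complete reducibility of a polynomial representation into polynomial constituents, and the homogeneity/scaling argument (in (ii) you implicitly also use linear independence of irreducible characters so that matching each power of $z$ annihilates the wrong-degree $c^\tau_{\lambda\mu}$ individually; worth making explicit). For (iii), invoking the full combinatorial LR formula is valid but heavier than necessary for just the containment: one can instead note that $\nu_\mu$ is a constituent of $\nu_{(1)}^{\otimes|\mu|}$ (Schur--Weyl), so $\nu_\lambda\otimes\nu_\mu$ embeds in $\nu_\lambda\otimes\nu_{(1)}^{\otimes|\mu|}$, and iterating Pieri's rule shows every irreducible piece of the latter has shape containing $\lambda$; the paper instead treats Pieri's rule (\Cref{cor:Pieri's_rule}) as a corollary of the LR rule, so your ordering matches the paper's. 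Either way the logic is sound.
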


Much more can be said about the Littlewood-Richardson coefficients (e.g.\ see \cite[§5.1]{Ful97}). We will only need to know more in the following special case. 

\begin{corollary}[Pieri's rule]\label{cor:Pieri's_rule}
    In the special case where $\mu = (1)$, the Littlewood-Richardson coefficients are equal to $1$ if $\tau = \lambda+e_i$, for some $i$, and $0$ otherwise. That is, 
    \begin{equation*}
        \nu_\lambda \otimes \nu_{(1)} \cong \bigoplus_{i=1}^d \nu_{\lambda+e_i},
    \end{equation*}
    where the direct sum is understood to only iterate over the \emph{valid} partitions of the form $\lambda+e_i$. 
\end{corollary}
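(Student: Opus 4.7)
My plan is to derive this as a direct specialization of \Cref{prelim_littlewood_richardson_rule}. Setting $\mu = (1)$, the two necessary conditions listed in the theorem become $|\tau| = |\lambda| + 1$ together with $\lambda \subseteq \tau$ (the containment $(1) \subseteq \tau$ is automatic since $\tau$ is nonempty). A partition $\tau$ containing $\lambda$ whose size exceeds $|\lambda|$ by exactly one must be obtainable from $\lambda$ by appending a single box at the end of some row $i$, i.e.\ $\tau = \lambda + e_i$, subject to the resulting shape being a valid partition (i.e.\ $i = 1$ or $\lambda_i < \lambda_{i-1}$) and satisfying $\ell(\tau) \leq d$. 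So only such $\tau$ can appear in the direct sum.

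Next I need to show the coefficient is exactly one for each such $\tau$. The cleanest route is to invoke the combinatorial formula for Littlewood--Richardson coefficients (from the reference \cite{Ful97} already cited), according to which $c^{\tau}_{\lambda \mu}$ counts LR skew tableaux of shape $\tau/\lambda$ with content $\mu$. When $\mu = (1)$ and $\tau = \lambda + e_i$, the skew shape $\tau/\lambda$ consists of a single box, so the only candidate filling writes a $1$ in that box; the LR conditions (weakly increasing rows, strictly increasing columns, reverse reading word a ballot sequence) are all satisfied trivially. Hence $c^{\lambda+e_i}_{\lambda,(1)} = 1$ for every valid $i$.

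If one wished to avoid invoking the combinatorial formula and instead use only the necessary-conditions statement of \Cref{prelim_littlewood_richardson_rule} together with material already developed in the preliminaries, an alternative is dimension counting. One compares
\[
\dim(V^d_\lambda) \cdot d \;=\; \sum_{i \,:\, \lambda+e_i \text{ valid}} c^{\lambda+e_i}_{\lambda,(1)} \cdot \dim(V^d_{\lambda+e_i}),
\]
where both sides are evaluated via the hook-content formula (\Cref{thm:hook_content_formula}). Exhibiting the classical bijection between pairs $(T,k) \in \SSYT(\lambda,d) \times [d]$ and elements of $\bigsqcup_i \SSYT(\lambda+e_i, d)$ given by column-insertion of $k$ into $T$ then shows all coefficients must equal $1$, provided one first establishes $c^{\lambda+e_i}_{\lambda,(1)} \geq 1$ for every valid $i$ (for example via Schur's lemma applied to the intertwiner obtained by inserting a basis vector of $V^d_{(1)}$).

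The main obstacle is really just bookkeeping: confirming that the conditions of the LR rule collapse to exactly the partitions $\lambda + e_i$, and then pinning down the multiplicity to $1$. Since the paper cites a source with the full LR combinatorial formula, the first route above is essentially a one-line specialization, and this is the version I would write up.
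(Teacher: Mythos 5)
The paper states this corollary without giving a proof — it's presented as a classical consequence of the Littlewood–Richardson rule, which it is. Your proposal is correct, and you make an observation worth flagging: as stated, \Cref{prelim_littlewood_richardson_rule} only gives \emph{necessary} conditions for $c^{\tau}_{\lambda\mu} \neq 0$, so it alone cannot pin down the multiplicity. Your first step — that the conditions $|\tau| = |\lambda| + 1$ and $\lambda \subseteq \tau$ force $\tau = \lambda + e_i$ for some valid $i$ — follows from the paper's theorem, but establishing $c^{\lambda+e_i}_{\lambda,(1)} = 1$ genuinely requires more, and you correctly supply it by invoking the full combinatorial LR rule (a single-box skew shape admits exactly one LR filling). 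Your alternative route via hook-content dimensions and column insertion is also sound, and you are right to note that the dimension count by itself is not conclusive without first knowing each coefficient is $\geq 1$, since a priori some coefficients could be $0$ and others $\geq 2$. Either route would be a fine proof to include; the first is shorter and is the standard one.
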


We will also need to know a little about the characters of the polynomial irreps of $U(d)$. We first need to define \emph{Schur polynomials}. 

\begin{definition}[Schur Polynomials] \label{def:schur_polys}
    Let $x_1, \dots, x_d$ be indeterminates. Given $\lambda \vdash n$, the \emph{Schur polynomial} $s_\lambda(x_1, \dots, x_d)$ is the degree-$n$ homogeneous polynomial given by $s_\lambda(x_1, \dots, x_d) = \sum_{T} x^T$, where the sum is over $T \in \SSYT(\lambda,d)$, and 
    \begin{equation*}
        x^T = \prod_{i=1}^d x_i^{w_T(i)}.
    \end{equation*}
    Here, $w_T(i)$ counts the number of boxes of $T$ filled with the number $i$. 
\end{definition}

\begin{example}
    If $\lambda = (2,1)$, and $d = 3$, then $\SSYT(\lambda, d)$ consists of the following tableaux. Underneath each tableau $T$ is the monomial $x^T$. 
    \begin{figure}[h!]
    \centering
    \[
    \begin{array}{cccccccc}
        \begin{array}{c}
            \begin{ytableau} 1 & 1 \\ 2 \end{ytableau} \\[2em]
            x_1^2x_2
        \end{array} &
        \begin{array}{c}
            \begin{ytableau} 1 & 2 \\ 2 \end{ytableau} \\[2em]
            x_1x_2^2
        \end{array} &
        \begin{array}{c}
            \begin{ytableau} 1 & 3 \\ 2 \end{ytableau} \\[2em]
            x_1x_2x_3
        \end{array} &
        \begin{array}{c}
            \begin{ytableau} 1 & 1 \\ 3 \end{ytableau} \\[2em]
            x_1^2x_3
        \end{array} &
        \begin{array}{c}
            \begin{ytableau} 1 & 2 \\ 3 \end{ytableau} \\[2em]
            x_1x_2x_3
        \end{array} &
        \begin{array}{c}
            \begin{ytableau} 1 & 3 \\ 3 \end{ytableau} \\[2em]
            x_1x_3^2
        \end{array} &
        \begin{array}{c}
            \begin{ytableau} 2 & 2 \\ 3 \end{ytableau} \\[2em]
            x_2^2x_3
        \end{array} &
        \begin{array}{c}
            \begin{ytableau} 2 & 3 \\ 3 \end{ytableau} \\[2em]
           x_2x_3^2
        \end{array}
    \end{array}
    \]
\end{figure}

    \noindent The corresponding Schur polynomial is the sum of all of these monomials:
    \begin{equation*}
        s_\lambda(x_1,x_2,x_3) = \Big(\sum_{i \neq j} x_i^2 x_j \Big) + 2 x_1 x_2 x_3.
    \end{equation*}
\end{example}

\begin{theorem}\label{slambda}
    Let $M$ be a diagonalizable matrix with eigenvalues $\alpha_1, \dots, \alpha_d$. Then $\chi_{\nu_\lambda}(M) = s_\lambda(\alpha_1, \dots, \alpha_d)$. 
\end{theorem}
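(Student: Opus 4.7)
The plan is to reduce the problem to evaluating the character on a diagonal matrix, then read off the result from the weight decomposition of the Schur module $V^d_\lambda$ under the diagonal torus. The identity $\chi_{\nu_\lambda}(M) = s_\lambda(\alpha_1, \ldots, \alpha_d)$ will then follow immediately from \Cref{def:schur_polys}, with the only real content being the correct identification of the weights.

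First I would reduce to the case when $M$ is diagonal. Since $\nu_\lambda$ is polynomial, it extends to a polynomial map on all of $\mathrm{Mat}_{d \times d}(\C)$, and the multiplicativity $\nu_\lambda(AB) = \nu_\lambda(A) \nu_\lambda(B)$ extends from $U(d)$ to arbitrary invertible matrices (both sides are polynomial in the entries and agree on $U(d)$, which is large enough to force equality as polynomial identities). In particular, given a diagonalization $M = S D S^{-1}$ with $S \in GL_d(\C)$ and $D = \mathrm{diag}(\alpha_1, \ldots, \alpha_d)$, cyclicity of the trace yields
\[
\chi_{\nu_\lambda}(M) \;=\; \tr\bigl(\nu_\lambda(S)\, \nu_\lambda(D)\, \nu_\lambda(S)^{-1}\bigr) \;=\; \chi_{\nu_\lambda}(D),
\]
reducing the problem to evaluation at the diagonal matrix $D$.

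Next I would use the semistandard basis of $V^d_\lambda$, indexed by $T \in \SSYT(\lambda, d)$. The key step, and the main obstacle, is to show that each basis vector $\ket{T}$ is a joint eigenvector for the diagonal torus with eigenvalue $x^T = \prod_{i=1}^d \alpha_i^{w_T(i)}$, where $w_T(i)$ counts the cells of $T$ labeled $i$. Granting this, taking the trace in the SSYT basis gives
\[
\chi_{\nu_\lambda}(D) \;=\; \sum_{T \in \SSYT(\lambda, d)} x^T \;=\; s_\lambda(\alpha_1, \ldots, \alpha_d),
\]
which matches \Cref{def:schur_polys} verbatim.

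The technical heart is thus verifying the weight formula. A clean approach is the Weyl module construction: realize $V^d_\lambda$ as the image of a Young symmetrizer $c_\lambda \in \C[S_n]$ acting on $(\C^d)^{\otimes n}$. Computational basis tensors $\ket{i_1} \otimes \cdots \otimes \ket{i_n}$ are joint weight vectors under the diagonal $U(d)$-action, with weight equal to the multiplicity vector of $(i_1, \ldots, i_n)$; since $c_\lambda$ acts by permuting tensor factors, it commutes with the diagonal $U(d)$-action and therefore preserves the weight grading. Matching SSYTs of shape $\lambda$ to a spanning set of the form $c_\lambda \cdot (\ket{i_1} \otimes \cdots \otimes \ket{i_n})$ — for instance by reading the entries of $T$ in column-major order — then establishes both the indexing and the weight computation. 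A more self-contained but heavier alternative would be to prove the Weyl character formula directly and then invoke Jacobi's bialternant identity for $s_\lambda$, but this requires considerably more representation-theoretic machinery than the weight argument above.
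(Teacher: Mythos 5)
The paper states this theorem without proof, treating it as a standard fact drawn from its representation-theory references, so there is no in-paper argument to compare against; I am evaluating your sketch on its own terms. Your two reductions are sound. Reducing to a diagonal $D$ works because $U(d)$ is Zariski-dense in $GL_d(\C)$, so the polynomial identity $\nu_\lambda(AB) = \nu_\lambda(A)\nu_\lambda(B)$ extends off the unitary group and cyclicity of trace gives $\chi_{\nu_\lambda}(M) = \chi_{\nu_\lambda}(D)$. And you are right to flag the next step as the real content: the paper's statement that $V^d_\lambda$ has a basis indexed by $\SSYT(\lambda,d)$ does not, by itself, say that these basis vectors are torus eigenvectors with weight $x^T$ --- identifying the weight multiplicity with the Kostka number $K_{\lambda\mu}$ is essentially the theorem itself, so it cannot simply be ``granted.''

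The Weyl-module argument you then offer is the right strategy, but it elides its one nontrivial step. That $c_\lambda$ preserves the weight grading is immediate (it lies in $\C[S_n]$, which commutes with the $U(d)$-action on $(\C^d)^{\otimes n}$), and that $c_\lambda \cdot (\C^d)^{\otimes n} \cong V^d_\lambda$ is standard. What does \emph{not} follow from ``reading the entries in column-major order'' is that the vectors $c_\lambda \cdot e_T$ for $T \in \SSYT(\lambda,d)$ actually span the image (or, alternatively, are linearly independent); that is the straightening/Garnir-relations argument, and it is where all the combinatorics lives. Once you have either spanning or independence, combining it with the count $\dim V^d_\lambda = |\SSYT(\lambda,d)|$ already stated in the paper forces each weight multiplicity to equal $K_{\lambda\mu}$, and then $\chi_{\nu_\lambda}(D) = \sum_\mu K_{\lambda\mu}\,\alpha^\mu = s_\lambda(\alpha)$ follows. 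As written, the sketch asserts that step rather than proving it, so it is an outline of the standard proof rather than a complete one.
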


\begin{remark}
    The character $\chi_{\nu_\lambda}(M)$ is defined for any $M$. Likewise, $s_\lambda$ can be continuously extended to all matrices, and then $s_\lambda(M) = \chi_{\nu_\lambda}(M)$. 
\end{remark}

\begin{remark}
     Since $s_\lambda$ is a degree-$|\lambda|$ homogeneous polynomial,  $s_\lambda(\alpha \cdot M ) = \alpha^{|\lambda|} \cdot s_\lambda(M)$ for all $\alpha \in \C$. 
\end{remark}

\begin{remark}
    We will often write $s_\lambda(1^r)$ as shorthand for $s_\lambda(1^r,0^{d-r})$, when $d$ is understood from context. For example, for a rank-$r$ projector $P \in \C^{d \times d}$, we have $s_\lambda(P) = s_\lambda(1^r)$. 
\end{remark}

\begin{remark} \label{rem:s_lambda_to_SSYTs}
    Note that $s_\lambda(1^d) = \tr( \nu_\lambda(I) ) = \tr\big( I_{\dim(V^d_\lambda)}\big) = \dim(V^d_\lambda)$. More generally, it is true that $s_\lambda(1^r0^{d-r})$ is the number of SSYTs of shape $\lambda$ using only labels in $[r]$, by \Cref{def:schur_polys}. This is because $x^T$ is $1$ if $T$ includes no boxes with a label in $\{r+1, \dots, d\}$, and $0$ otherwise. So, we have $s_\lambda(1^r) = |\SSYT(\lambda, r)|$. 
\end{remark}

We conclude this section with a straightforward calculation we will need for both our lower bound, and for showing that the PGM has optimal scaling. 

\begin{lemma}\label{lem:integral_over_unitary_register}
    Let $\lambda$ be a partition, and let $M \in \C^{d \times d}$. Then we have 
    \begin{equation*}
        \int_U \nu_{\lambda}( U M U^\dagger) \cdot \dU = \frac{s_\lambda(M)}{s_\lambda(1^d)} \cdot I_{\dim(V^d_\lambda)}
    \end{equation*}
\end{lemma}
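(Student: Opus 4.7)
The plan is to apply Schur's lemma. Let $T \coloneq \int_U \nu_\lambda(UMU^\dagger) \, \dU$. I will first show that $T$ intertwines $\nu_\lambda$ with itself, and then use the fact that $\nu_\lambda$ is irreducible to conclude that $T$ is a scalar multiple of the identity, finally fixing the scalar by computing $\tr(T)$.

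For the intertwining property, I would fix any $V \in U(d)$ and compute
\begin{align*}
\nu_\lambda(V) \cdot T \cdot \nu_\lambda(V)^\dagger
&= \int_U \nu_\lambda(V) \cdot \nu_\lambda(UMU^\dagger) \cdot \nu_\lambda(V^\dagger) \, \dU \\
&= \int_U \nu_\lambda\big((VU) \cdot M \cdot (VU)^\dagger\big) \, \dU = T,
\end{align*}
where the second equality uses that $\nu_\lambda$ is a group homomorphism (and extends as a polynomial map so $\nu_\lambda(AB) = \nu_\lambda(A)\nu_\lambda(B)$ on the relevant arguments), and the third uses left-invariance of the Haar measure to substitute $U \to V^\dagger U$. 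Equivalently, $T$ commutes with $\nu_\lambda(V)$ for every $V \in U(d)$, so $T$ is an intertwiner from $\nu_\lambda$ to itself. Since $\nu_\lambda$ is irreducible, Schur's lemma (\Cref{lem:Schur's_lemma}) yields $T = c \cdot I_{\dim(V^d_\lambda)}$ for some $c \in \C$.

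To pin down $c$, I take the trace of both sides. On the one hand $\tr(T) = c \cdot \dim(V^d_\lambda) = c \cdot s_\lambda(1^d)$, using \Cref{rem:s_lambda_to_SSYTs}. On the other hand,
\begin{equation*}
\tr(T) = \int_U \tr\big(\nu_\lambda(UMU^\dagger)\big) \, \dU = \int_U \chi_{\nu_\lambda}(UMU^\dagger) \, \dU = \int_U s_\lambda(UMU^\dagger) \, \dU,
\end{equation*}
by \Cref{slambda} (extended continuously to all matrices as in the remark following it). Since $UMU^\dagger$ has the same spectrum as $M$ and $s_\lambda$ is a symmetric function of eigenvalues, the integrand equals $s_\lambda(M)$ for every $U$, so $\tr(T) = s_\lambda(M)$. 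Equating the two expressions for $\tr(T)$ gives $c = s_\lambda(M)/s_\lambda(1^d)$, which proves the lemma.

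There is no real obstacle here; the only subtle point is that $M$ need not be unitary or even diagonalizable, so one should invoke the continuous extension of $s_\lambda$ (and of $\nu_\lambda$ as a polynomial map) to arbitrary matrices when applying \Cref{slambda}, but this is already noted in the preliminaries.
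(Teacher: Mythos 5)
Your proof is correct and follows essentially the same route as the paper's: establish that $T$ intertwines $\nu_\lambda$ with itself via left-invariance of the Haar measure, invoke Schur's lemma to get $T = c\cdot I$, and then determine $c$ by taking traces. The only cosmetic difference is in evaluating $\tr(T)$: the paper factors $\nu_\lambda(UMU^\dagger) = \nu_\lambda(U)\nu_\lambda(M)\nu_\lambda(U)^\dagger$ and uses cyclicity of trace, while you appeal to conjugation-invariance of the spectrum and the fact that $s_\lambda$ is a symmetric polynomial of the eigenvalues; both are valid and equivalent.
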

\begin{proof}
    Let $T$ denote the integral. Then $T$ commutes with $\nu_\lambda(V)$ for any $V \in U(d)$:
    \begin{equation*}
        \nu_{\lambda}(V) \cdot T \coloneq \int_U \nu_{\lambda}(V U M U^\dagger) \cdot \dU = \int_{U'} \nu_{\lambda}(U' M U'^\dagger V) \cdot \mathrm{d}U' = T \cdot \nu_{\lambda}(V),
    \end{equation*}
    by defining $U' = VU$ and using the left-invariance of the Haar distribution. Thus, by Schur's lemma (\Cref{lem:Schur's_lemma}), $T$ is a multiple of the identity, i.e.\ $T = c \cdot I_{\dim(V^d_\lambda)}$. We can compute $c$ by taking traces. On the one hand:
    \begin{equation*}
        \tr(T) = \int_U \tr( \nu_\lambda(U) \nu_\lambda(M) \nu_\lambda(U)^\dagger ) \cdot \dU = \int_U s_\lambda(M) \cdot \dU = s_\lambda(M).
    \end{equation*}
    On the other hand, $\tr(T) = c \cdot \dim(V^d_\lambda)$. Thus $c = s_\lambda(M)/\dim(V^d_\lambda) = s_\lambda(M)/s_\lambda(1^d)$, and
    \begin{equation*}
        T = \int_U \nu_{\lambda}( U M U^\dagger) \cdot \dU = \frac{s_\lambda(M)}{s_\lambda(1^d)} \cdot I_{\dim(V^d_\lambda)}. \qedhere
    \end{equation*}
\end{proof}

From this lemma, the following corollary is immediate. 

\begin{corollary} \label{cor:irrep_Haar_random_projector}
    Let $\lambda$ be a partition, and $P \in \C^{d \times d}$ be a rank-$r$ projector. Then
    \begin{equation*}
        \E_{\bP \sim \mu_H} \big[ \nu_\tau(\bP) \big] = \frac{s_\lambda(1^r)}{s_\lambda(1^d)} \cdot I_{\dim(V^d_\lambda)}. 
    \end{equation*}
\end{corollary}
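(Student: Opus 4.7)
The plan is to invoke the preceding Lemma~\ref{lem:integral_over_unitary_register} essentially verbatim, after unpacking what a Haar random rank-$r$ projector means. By definition, a Haar random rank-$r$ projector $\bP$ is distributed as $\bU Q \bU^\dagger$ where $Q$ is any fixed rank-$r$ projector and $\bU \sim \mu_H$, so
\begin{equation*}
    \E_{\bP \sim \mu_H}\big[\nu_\lambda(\bP)\big] = \int_U \nu_\lambda(U Q U^\dagger)\cdot \dU.
\end{equation*}

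Applying Lemma~\ref{lem:integral_over_unitary_register} with $M = Q$ immediately gives
\begin{equation*}
    \int_U \nu_\lambda(U Q U^\dagger)\cdot \dU = \frac{s_\lambda(Q)}{s_\lambda(1^d)}\cdot I_{\dim(V^d_\lambda)}.
\end{equation*}
It therefore remains only to identify $s_\lambda(Q)$ with $s_\lambda(1^r)$. Since $Q$ is a rank-$r$ projector, its eigenvalues are $r$ ones and $d-r$ zeros, so by Theorem~\ref{slambda} (and the remark extending $s_\lambda$ from diagonalizable matrices) we have $s_\lambda(Q) = s_\lambda(1^r, 0^{d-r})$, which is exactly the shorthand $s_\lambda(1^r)$ introduced in the remarks following Theorem~\ref{slambda}.

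There is no real obstacle; the lemma has already done all the work, and the only content of the corollary is the observation that the input $M$ to the lemma can be specialized to a rank-$r$ projector so that $s_\lambda(M)$ simplifies to $s_\lambda(1^r)$. I would write this as a two-line proof with the substitution and a one-sentence appeal to the spectral description of $Q$ to justify the Schur polynomial evaluation.
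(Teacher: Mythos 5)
Your proof is correct and is exactly the argument the paper has in mind: the paper states the corollary as ``immediate'' from Lemma~\ref{lem:integral_over_unitary_register}, and your two steps (substitute $\bP = \bU Q \bU^\dagger$ into the lemma with $M = Q$, then evaluate $s_\lambda(Q) = s_\lambda(1^r)$ via the spectrum of a rank-$r$ projector) are precisely what ``immediate'' was meant to convey.
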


\subsubsection{Schur-Weyl duality}

In the problems we study, our algorithms are given as input $n$ copies of some unknown state, i.e.\ $\rho^{\otimes n}$. In this setting, there are two particularly important representations of the groups $S_n$ and $U(d)$ acting on the space $(\C^d)^{\otimes n}$. 

\begin{definition}\label{prelim_P_and_Q}
    The groups $S_n$ and $U(d)$ have the following natural representations acting on $(\C^d)^{\otimes n}$. First, $\P(\pi)$ acts by permuting the $n$ tensor factors according to $\pi$, i.e.\ for any standard basis element $\ket{i_1} \otimes \dots \otimes \ket{i_n}$, we have 
\begin{equation*}
    \P(\pi) \cdot \ket{i_1} \otimes \dots \otimes \ket{i_n} = \ket{i_{\pi^{-1}(1)}} \otimes \dots \otimes \ket{i_{\pi^{-1}(n)}}.
\end{equation*}
Next, $\Q(U)$ acts by operating on each tensor factor with $U$, i.e.\ as
\begin{equation*}
    \Q(U) \cdot \ket{i_1} \otimes \dots \otimes \ket{i_n} = U \ket{i_1} \otimes \dots \otimes U \ket{i_n}. 
\end{equation*}
\end{definition}

Since $\P(\pi)$ commutes with $\Q(U)$, for any $\pi \in S_n$ and $U \in U(d)$, the product of matrices $\P(\pi)\cdot \Q(U)$ defines a representation of the product of groups $S_n \times U(d)$. Schur-Weyl duality gives a nice decomposition of this representation into a sum over products of irreps of $S_n$ and $U(d)$ (which themselves are the irreps of $S_n \times U(d)$).

\begin{theorem}[Schur-Weyl duality] \label{prelim_schur_weyl_duality}
Consider the representations of $S_n$ and $U(d)$ described in \Cref{prelim_P_and_Q}. As a representation of $S_n \times U(d)$, we have the decomposition into irreps: 
\begin{equation*}
    (\C^d)^{\otimes n} \cong \bigoplus_{\substack{\lambda \vdash n \\ \ell(\lambda) \leq d}} \specht_\lambda \otimes V^d_\lambda. 
\end{equation*}
\end{theorem}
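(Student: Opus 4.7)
The plan is to prove Schur-Weyl duality using the double commutant theorem together with character calculations. The key structural fact I would establish first is that the image of the group algebra $\C[S_n]$ under $\calP$ and the image of $\C[U(d)]$ under $\calQ$ are mutual commutants inside $\mathrm{End}((\C^d)^{\otimes n})$. The easier direction, that these two images commute, is immediate since permuting tensor factors commutes with applying the same unitary to each factor. For the harder direction, I would apply Burnside's theorem (or the Jacobson density theorem): $\calQ(U(d))$ spans the subspace of symmetric tensors in $\mathrm{End}((\C^d)^{\otimes n})$, which can be shown by polarization and the fact that products $U^{\otimes n}$ generate the polynomial functions of degree $n$ in the matrix entries. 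The commutant of this span inside $\mathrm{End}((\C^d)^{\otimes n})$ is exactly the permutation-invariant operators, which by a standard averaging argument is the span of $\calP(S_n)$.

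Once the double commutant property is established, the decomposition follows from the general representation theory of semisimple algebras. Namely, decomposing $(\C^d)^{\otimes n}$ as a representation of $S_n$ gives
\begin{equation*}
(\C^d)^{\otimes n} \;\cong\; \bigoplus_{\lambda \vdash n} \specht_\lambda \otimes M_\lambda,
\end{equation*}
where $M_\lambda$ is the multiplicity space. Because $\calQ(U(d))$ is the commutant of $\calP(S_n)$, each $\calQ(U)$ preserves this decomposition and acts as $I_{\specht_\lambda} \otimes \pi_\lambda(U)$ on the $\lambda$-isotypic component, for some representation $\pi_\lambda$ of $U(d)$ on $M_\lambda$. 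The double commutant property then implies that each nonzero $M_\lambda$ is an irreducible polynomial representation of $U(d)$, and that distinct $\lambda$'s give non-isomorphic irreps of $U(d)$.

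It remains to identify which polynomial irrep $M_\lambda$ is, and to determine which $\lambda$ appear. The cleanest way is via characters: compute the character of $(\calP(\pi), \calQ(U))$ acting on $(\C^d)^{\otimes n}$ (it factors through the eigenvalues of $U$ on each cycle of $\pi$), then use the Frobenius character formula to expand this as $\sum_\lambda \chi_{\kappa_\lambda}(\pi) \cdot s_\lambda(\alpha_1,\ldots,\alpha_d)$, where $\alpha_i$ are the eigenvalues of $U$. By \Cref{slambda}, this identifies $M_\lambda$ as $V^d_\lambda$ whenever $\ell(\lambda) \leq d$, and shows that $s_\lambda$ vanishes identically in $d$ variables when $\ell(\lambda) > d$ (since an SSYT of shape $\lambda$ requires strict increases down columns, so is impossible with only $d$ labels when $\ell(\lambda) > d$), accounting for the index restriction.

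The step I expect to be the main obstacle is the double commutant statement, specifically proving that $\calQ(U(d))$ spans \emph{all} permutation-symmetric elements of $\mathrm{End}((\C^d)^{\otimes n})$. The polarization argument requires some care: one must verify that every symmetric $n$-fold tensor of endomorphisms can be written as a linear combination of tensors of the form $A^{\otimes n}$, and then replace $A$ with unitaries using density of unitaries in $\C^{d\times d}$ together with the fact that both sides are polynomial in $A$. The character identification step is more mechanical but requires the Frobenius formula as a black box; alternatively, one can identify $V^d_\lambda$ intrinsically via a Young symmetrizer $c_\lambda \in \C[S_n]$, taking $M_\lambda = c_\lambda \cdot (\C^d)^{\otimes n}$ and checking directly that this matches the dimension $|\SSYT(\lambda,d)|$ via the hook-content formula.
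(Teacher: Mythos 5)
The paper does not prove this theorem: Schur--Weyl duality is stated as a standard result, with proofs deferred to the cited references (\cite{Har13}, \cite{Mel24}, and \cite[Chapter 2]{Wri16}). So there is no in-paper proof to compare against. Your proposal is the standard textbook proof (double commutant theorem plus character or Young-symmetrizer identification), and as an outline it is essentially sound, but one step as written is garbled. You assert that ``the commutant of this span [$\mathrm{span}\,\calQ(U(d))$] inside $\mathrm{End}((\C^d)^{\otimes n})$ is exactly the permutation-invariant operators, which \ldots is the span of $\calP(S_n)$.'' The permutation-invariant operators (those fixed under conjugation by every $\calP(\pi)$, i.e.\ the $S_n$-invariants of $(\mathrm{End}\,\C^d)^{\otimes n}$) are $\mathrm{Sym}^n(\mathrm{End}\,\C^d) = \mathrm{span}\,\calQ(U(d))$ itself --- that is precisely what your polarization argument shows, and it is the \emph{commutant of $\calP(S_n)$}, not of $\calQ(U(d))$. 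They are not $\mathrm{span}\,\calP(S_n)$, which already fails for $n=1$ (all of $\mathrm{End}(\C^d)$ versus the scalars). What you actually need is the reverse inclusion: once you know $\mathrm{span}\,\calQ(U(d)) = \calP(S_n)'$, the double commutant theorem for the semisimple algebra $\calP(\C[S_n])$ gives $\mathrm{span}\,\calQ(U(d))' = \calP(S_n)'' = \mathrm{span}\,\calP(S_n)$. You name the double commutant theorem in your opening sentence as the key tool, so I read the offending sentence as a slip rather than a missing idea, but the ``permutation-invariant operators'' characterization should be dropped and replaced with the double commutant invocation. The remaining structure --- decomposing over $S_n$-isotypic components, observing $\calQ$ acts on multiplicity spaces, deducing irreducibility and distinctness from the commutant being the full centralizer, and then pinning down $M_\lambda \cong V^d_\lambda$ with the restriction $\ell(\lambda)\le d$ via Frobenius/power-sum character expansion and $s_\lambda(1^d)=|\SSYT(\lambda,d)|$ --- is correct and is one of the two standard routes (the other being the Young-symmetrizer construction you mention as an alternative).
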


As a consequence of Schur-Weyl duality and \Cref{cor:unitary_isomorphism}, there then exists a fixed unitary, $\USW$, such that, for all $\pi \in S_n$ and $U \in U(d)$:
\begin{equation*}
    \USW \cdot \Big( \calP(\pi) \cdot \calQ(U) \Big) \cdot \USWdagger = \sum_{\substack{\lambda \vdash n \\ \ell(\lambda) \leq d}} \ketbra{\lambda} \otimes \kappa_\lambda(\pi) \otimes \nu_\lambda(U).
\end{equation*}
The map $\USW$ is called the \emph{Schur-Weyl transform}, or just the \emph{Schur transform}. To simplify notation, when we conjugate by a unitary which is clear from context, we will drop the unitaries and write a congruence. For example,
\begin{equation*}
    \calP(\pi) \cdot \calQ(U)  \cong \sum_{\substack{\lambda \vdash n \\ \ell(\lambda) \leq d}} \ketbra{\lambda} \otimes \kappa_\lambda(\pi) \otimes \nu_\lambda(U).
\end{equation*}

Since each $\nu_\lambda$ is a polynomial representation, we may extend $\Q$ to act on any matrix, so that the above equation holds if we replace $U$ to any matrix $M$, as remarked previously. Most usefully for us, we may apply Schur-Weyl duality to the state $\rho^{\otimes n} = \P(e) \cdot \Q(\rho)$ where $e$ is the identity permutation, obtaining the following fact.


\begin{corollary}\label{eq:schur_weyl_rho^n}
    There exists a fixed unitary change of basis $\USW$, and hence an allowed quantum mechanical transformation, that puts any input state $\rho^{\otimes n}$ into the following~form:
    \begin{align*}
        \rho^{\otimes n}  \cong \sum_{\substack{\lambda \vdash n \\ \ell(\lambda) \leq d}} \ketbra{\lambda} \otimes I_{\dim(\lambda)} \otimes \nu_\lambda(\rho). 
    \end{align*}
\end{corollary}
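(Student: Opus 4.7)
The proof is essentially a direct specialization of Schur-Weyl duality (\Cref{prelim_schur_weyl_duality}) to the identity permutation. The plan is to recognize that $\rho^{\otimes n}$ is exactly the operator $\calQ(\rho)$ applied in the input space, which equals $\calP(e) \cdot \calQ(\rho)$ where $e \in S_n$ is the identity permutation, so the displayed Schur-Weyl decomposition of $\calP(\pi) \cdot \calQ(U)$ immediately specializes.

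In detail, I would first note that by definition $\calQ(\rho) \cdot \ket{i_1} \otimes \cdots \otimes \ket{i_n} = \rho \ket{i_1} \otimes \cdots \otimes \rho\ket{i_n}$, so $\calQ(\rho) = \rho^{\otimes n}$. Next, although $\rho$ is not a unitary, the remark preceding the corollary explains that each $\nu_\lambda$ is a polynomial representation and hence extends to an arbitrary matrix input, so the Schur-Weyl identity
\begin{equation*}
    \calP(\pi) \cdot \calQ(M)  \cong \sum_{\substack{\lambda \vdash n \\ \ell(\lambda) \leq d}} \ketbra{\lambda} \otimes \kappa_\lambda(\pi) \otimes \nu_\lambda(M)
\end{equation*}
remains valid after replacing the unitary argument with $M = \rho$ (both sides are polynomial in the matrix entries and agree on a Zariski-dense set, so they agree everywhere).

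Finally I would plug in $\pi = e$. Since $\kappa_\lambda$ is a group homomorphism, $\kappa_\lambda(e) = I_{\dim(\specht_\lambda)} = I_{\dim(\lambda)}$, and $\calP(e)$ is the identity on $(\C^d)^{\otimes n}$, so the left-hand side becomes $\calQ(\rho) = \rho^{\otimes n}$. Conjugating by the fixed unitary $\USW$ guaranteed by \Cref{cor:unitary_isomorphism} then yields the claimed block-diagonal form. There is no genuine obstacle here; the only subtlety worth flagging is justifying the extension of the Schur-Weyl identity from unitaries to the possibly non-unitary matrix $\rho$, which is handled by the polynomial nature of $\nu_\lambda$ and $\calQ$.
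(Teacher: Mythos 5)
Your proof is correct and takes essentially the same approach as the paper, which likewise obtains the corollary by writing $\rho^{\otimes n} = \calP(e)\cdot\calQ(\rho)$, extending the Schur-Weyl decomposition from unitary to arbitrary matrix arguments via polynomiality of $\nu_\lambda$ and $\calQ$, and then specializing the $\kappa_\lambda(e)$ factor to the identity.
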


\subsubsection{Quantum learning algorithms from representation theory}

\begin{definition}[Weak Schur sampling]
Write $\Pi_\lambda$ for the projector such that
    \begin{equation*}
        \Pi_\lambda \cong \ketbra{\lambda} \otimes I_{\dim(\lambda)} \otimes I_{\dim(V^d_\lambda)}.
    \end{equation*}
    \emph{Weak Schur sampling} (WSS) refers to performing the projective measurement $\{ \Pi_\lambda \}_{\substack{\lambda \vdash n, \ell(\lambda) \leq d}}$, and induces a probability distribution on partitions $\lambda$, denoted $\mathrm{WSS}_n(\rho)$.
\end{definition}

Given a state $\rho \in \C^{d \times d}$ with spectrum $\alpha = (\alpha_1, \dots, \alpha_d)$, weak Schur sampling yields outcome $\blambda$ with probability
\begin{equation} \label{WSS_prob}
    \Pr_{\blambda \sim \mathrm{WSS}_n(\rho)} [ \blambda = \lambda] = \tr(\Pi_{\lambda} \cdot \rho^{\otimes n}) = \dim(\lambda) \cdot \tr( \nu_{\lambda}(\rho) ) = \dim(\lambda) \cdot s_{\lambda}(\alpha),
\end{equation}
using \Cref{eq:schur_weyl_rho^n}. Suppose the outcome $\lambda$ is obtained after weak Schur sampling from $\rho^{\otimes n}$. The resulting post-measurement state is 
\begin{equation}\label{WSS_state}
    \rho_\lambda \coloneq \frac{\Pi_\lambda \cdot \rho^{\otimes n} \cdot \Pi_\lambda}{\tr(\Pi_\lambda \cdot \rho^{\otimes n})} \cong \ketbra{\lambda} \otimes \frac{I_{\dim(\lambda)}}{\dim(\lambda)} \otimes \frac{\nu_\lambda(\rho)}{s_\lambda(\alpha)}.
\end{equation}

\begin{remark} \label{WSS_rank_r_comment}
    Recall that the boxes of an SSYT are strictly increasing as we move down a column. Therefore, any SSYT with more than $r$ rows necessarily has a box containing a number larger than $r$. So if $\ell(\lambda) > r$, then $s_\lambda(\alpha_1, \dots, \alpha_r, 0, \dots, 0) = 0$. This means that if we perform weak Schur sampling on a rank-$r$ state, we \emph{always} receive a Young diagram $\lambda$ with $\ell(\lambda) \leq r$. 
\end{remark}

For intuition: weak Schur sampling on $\rho^{\otimes n}$ returns a Young diagram whose row-lengths are proportional to a sorted list of the eigenvalues of $\rho$, in the asymptotic limit. That is, the empirical spectrum $\lambda/n \coloneqq (\lambda_1/n, \dots, \lambda_d/n)$ is close to $\alpha$ when $n$ is large \cite{ARS88, KW01, HM02, CM06, OW16, OW17a}.

We conclude this section by observing that quantum learning algorithms, promised inputs of the form $\rho^{\otimes n}$, are equivalent to algorithms which first perform weak Schur sampling.

\begin{lemma}\label{lem:WLOG_WSS}
Let $\rho \in \mathbb{C}^{d \times d}$ be an unknown quantum state. Any algorithm that takes as input $\rho^{\otimes n}$ is equivalent to another that begins by performing weak Schur sampling, and then, having received $\lambda \vdash n$, measures in $V^d_\lambda$. 
\end{lemma}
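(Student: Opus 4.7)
The plan is to transform the input state $\rho^{\otimes n}$ into a form where weak Schur sampling is a natural first operation, and then argue that the Specht register is information-free about $\rho$. Since the Schur-Weyl transform $\USW$ is a fixed unitary independent of $\rho$, any algorithm $\calA$ on $\rho^{\otimes n}$ is equivalent (by conjugation) to an algorithm $\calA'$ which first applies $\USW$ to its input. By \Cref{eq:schur_weyl_rho^n}, the post-transform state is
\begin{equation*}
    \USW \cdot \rho^{\otimes n} \cdot \USWdagger = \sum_{\substack{\lambda \vdash n \\ \ell(\lambda) \leq d}} \ketbra{\lambda} \otimes I_{\dim(\lambda)} \otimes \nu_\lambda(\rho),
\end{equation*}
which is block-diagonal with respect to the $\lambda$-register.

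Next, I would invoke the standard principle that a projective measurement commutes with any subsequent quantum operation when applied to a state that is already block-diagonal with respect to the measurement's blocks. Concretely, since the above state has no off-diagonal coherences between distinct $\lambda$-sectors, the measurement $\{\Pi_\lambda\}$ — i.e.\ weak Schur sampling — can be inserted immediately after $\USW$ without altering the joint distribution of outcomes of any further measurement performed by $\calA'$. This step simulates $\calA$ exactly while beginning with WSS.

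Third, after obtaining outcome $\lambda$, the normalized post-measurement state is given by \Cref{WSS_state}:
\begin{equation*}
    \rho_\lambda \cong \ketbra{\lambda} \otimes \frac{I_{\dim(\lambda)}}{\dim(\lambda)} \otimes \frac{\nu_\lambda(\rho)}{s_\lambda(\alpha)}.
\end{equation*}
The first register is a classical label, while the Specht register carries the maximally mixed state $I/\dim(\lambda)$, which is a fixed state independent of $\rho$. Thus $\calA'$ can trace out (or freshly reprepare) the Specht register without losing any information about $\rho$, and may then apply any measurement it likes to the remaining register, which lives in $V^d_\lambda$ and contains $\nu_\lambda(\rho)/s_\lambda(\alpha)$.

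The main potential obstacle is formally justifying the two reductions — that the block-diagonal structure permits early measurement of $\lambda$, and that a maximally mixed tensor factor can be discarded — but both are elementary: the first follows from the fact that $\Pi_\lambda$ commutes with $\rho^{\otimes n}$ after the Schur transform, so by the deferred-measurement principle in reverse it may be performed first; the second follows because a product state on the Specht register can be prepared locally by $\calA'$ itself without consulting $\rho$.
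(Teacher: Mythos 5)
Your proposal is correct and takes essentially the same route as the paper: transform to the Schur basis, observe that $\rho^{\otimes n}$ is block-diagonal in $\lambda$ so WSS can be performed first without changing any outcome statistics, and note that the $\lambda$ and Specht registers carry no information about $\rho$ so they can be discarded (or locally reprepared). The paper is marginally more explicit in that it writes out the effective POVM $M^{(\lambda)}_\sigma = \frac{1}{\dim(\lambda)}\sum_{S}\bra{\lambda,S}\USW M_\sigma \USWdagger\ket{\lambda,S}$ on $V^d_\lambda$, whereas your ``reprepare the ancilla and simulate'' argument establishes the same thing operationally; both are valid.
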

\begin{proof}
    Suppose an algorithm $\alg$ measures $\rho^{\otimes n}$ using a POVM $M = \{M_\sigma\}_\sigma$. By \Cref{eq:schur_weyl_rho^n}, $\rho^{\otimes n}$ is always a mixture of states with different values of $\lambda$, i.e.\ 
    \begin{equation*}
        \rho^{\otimes n} = \sum_{\substack{\lambda \vdash n \\ \ell(\lambda) \leq d}} \Pi_\lambda \cdot \rho^{\otimes n} \cdot  \Pi_\lambda = \sum_{\substack{\lambda \vdash n \\ \ell(\lambda) \leq d}} \Pr_{\blambda \sim \mathrm{WSS}_n(\rho)} [ \blambda = \lambda] \cdot \rho_\lambda,
    \end{equation*}
    where $\rho_\lambda$ is given by \Cref{WSS_state}. Therefore, the measurement outcome statistics are unaffected if we first perform weak Schur sampling to obtain a $\rho_\lambda$ with probability $\Pr[\lambda]$, and then perform $M$ on $\rho_\lambda$. 

    Note that, having obtained $\lambda$, the first two registers of the state $\rho_\lambda$ contain no quantum information, and may be regarded as ancilla registers prepared in fixed states. Therefore, measuring $\rho_\lambda$ with a POVM $M$ is equivalent to discarding these first two registers, and performing a measurement $M^{(\lambda)}$ which acts only on $V^d_\lambda$. Specifically, we have
    \begin{align*}
        \tr( M_\sigma \cdot \rho_\lambda ) & = \tr( \USW M_\sigma \USWdagger \cdot \USW \rho_\lambda \USWdagger) \\
        & = \sum_{\substack{ \mu \vdash n \\ \ell(\mu) \leq d}} \sum_{S \in \SYT(\mu)} \sum_{T \in \SSYT(\mu,d)} \bra{\mu, S, T} \Big( \USW M_\sigma \USWdagger \cdot \ketbra{\lambda} \otimes \frac{I_{\dim(\lambda)}}{\dim(\lambda)} \otimes \frac{\nu_\lambda(\rho)}{s_\lambda(\alpha)} \Big) \ket{\mu, S, T}\\
        & = \sum_{T \in \SSYT(\lambda,d)} \bra{T}  \Big(\Big( \frac{1}{\dim(\lambda)} \sum_{S \in \SYT(\lambda)} \bra{\lambda, S} \USW M_\sigma \USWdagger \ket{\lambda, S} \Big) \cdot \frac{\nu_\lambda(\rho)}{s_\lambda(\alpha)}\Big) \ket{T}.
    \end{align*}
    Thus, if we define $M^{(\lambda)}_\sigma$, an operator on $V^d_\lambda$, by 
    \begin{equation*}
        M^{(\lambda)}_{\sigma} \coloneq \frac{1}{\dim(\lambda)} \sum_{S \in \SYT(\lambda)} \bra{\lambda, S} \USW M_\sigma \USWdagger \ket{\lambda,S},
    \end{equation*}
    we have $\tr(M_\sigma \cdot \rho_\lambda) = \tr(M^{(\lambda)}_\sigma \cdot \frac{\nu_\lambda(\rho)}{s_\lambda(\alpha)} )$. Thus, after obtaining $\lambda$ from WSS, measuring with $M$ is equivalent to the measurement $M^{(\lambda)}$ on $V^d_\lambda$. 
    \end{proof}

\section{Lower bounds on learning in Bures distance}

In this section, we prove that $n = \Omega(rd/\epsilon^2)$ copies are necessary for rank-$r$ projector tomography in Bures distance. Our approach is to bound higher moments of the affinity between the true and estimated states. By studying a suitably chosen moment, and by relating affinity to Bures distance, we are able to rule out the possibility of algorithms that use too few samples.

\subsection{Warm up: the pure state case}
\label{sec:pure_states}
We begin by considering the special case of learning pure states, i.e.\ the $r=1$ case. In this case, we can use symmetric subspace techniques, and study the fidelity directly, rather than via affinity.

\begin{proposition}[A lower bound on learning pure states in Bures distance]\label{pure_state_lower_bound} Any pure state tomography algorithm learning to Bures distance $\epsilon > 0$ requires at least $n = \Omega( d / \epsilon^2 )$ samples, for $d \geq 2$ and $\epsilon \leq 1/\sqrt{48}$.
\end{proposition}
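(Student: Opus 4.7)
The strategy is to combine Yao's principle with the moment bound \eqref{eq:moment-bound} already cited in the introduction. First, by averaging over inputs: if $\calA$ succeeds with probability at least $0.99$ on every pure input, the same probability holds when the input $\ket{\bu}$ is Haar-random. Using \Cref{WLOG_projector_output}, I may assume without loss of generality that the output is a pure state $\ket{\widehat{\bu}}$, paying only a constant factor in the error. This reduces the problem to studying the joint distribution of the Haar-random input $\ket{\bu}$ and the (possibly randomized) pure-state output $\ket{\widehat{\bu}}$.

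Next, the Bures-distance guarantee gets converted into a lower bound on a high moment of the overlap. On the success event (probability $\geq 0.99$), $\DBur(\ketbra{\bu},\ketbra{\widehat{\bu}}) \leq O(\epsilon)$ implies $|\braket{\bu}{\widehat{\bu}}| \geq 1 - O(\epsilon^2)$, so by Bernoulli's inequality $|\braket{\bu}{\widehat{\bu}}|^{2k} \geq 1 - O(k\epsilon^2)$. Taking expectations,
\begin{equation*}
   \E_{\bu,\widehat{\bu}} \left|\braket{\bu}{\widehat{\bu}}\right|^{2k} \;\geq\; 0.99 \cdot \bigl(1 - O(k\epsilon^2)\bigr).
\end{equation*}
For $k = \lfloor c/\epsilon^2 \rfloor$ with a sufficiently small constant $c$, this lower bound is at least some fixed positive constant, say $1/2$.

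The final step pits this against Harrow's upper bound \eqref{eq:moment-bound}, $\E|\braket{\bu}{\widehat{\bu}}|^{2k} \leq \bigl(\tfrac{n+k}{d+n+k-1}\bigr)^{k}$. Combining forces $\bigl(1 - \tfrac{d-1}{d+n+k-1}\bigr)^{k} \geq 1/2$, and applying $1-x \leq e^{-x}$ and rearranging yields $k(d-1) \lesssim d + n + k - 1$. Since $k = \Theta(1/\epsilon^2)$, this gives $n = \Omega(d/\epsilon^2)$. The main obstacle is not conceptual but bookkeeping: choosing the constant $c$ in $k = c/\epsilon^2$, the success probability threshold, and the constant blow-up from \Cref{WLOG_projector_output} in a way consistent with the stated hypothesis $\epsilon \leq 1/\sqrt{48}$. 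Conceptually, the key point is precisely what the introduction flagged: one must use a \emph{high} moment (so that the $k$-th power amplifies the gap between a fidelity of $1$ and a fidelity of $1 - \Theta(\epsilon^2)$ into a constant), because the first moment alone is consistent with pathological algorithms whose output fidelity is concentrated near $0$ or $1$.
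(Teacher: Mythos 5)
Your proposal is correct and follows essentially the same route as the paper's proof: both reduce to pure-state outputs via \Cref{WLOG_projector_output}, lower-bound the $k$-th moment of the squared overlap on a Haar-random input using the success guarantee, upper-bound it via the symmetric-subspace moment estimate (\Cref{pure_states_upper_bound_kth_moment}), choose $k = \Theta(1/\epsilon^2)$, and compare the two bounds after applying $1 - x \leq e^{-x}$. The paper simply carries out the bookkeeping explicitly (e.g.\ $k = \lfloor 1/(16\epsilon^2)\rfloor$, giving $n > d/(64\epsilon^2)$) where you leave constants implicit.
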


Note that this also proves a lower bound on learning to trace distance $\epsilon$ of $n = \Omega(d/\epsilon^2)$, for $d \geq 2$ and $\epsilon \leq 1/\sqrt{96}$. This is because for pure states, we have 
    \begin{equation*}\frac{1}{\sqrt{2}} \DBur \leq \Dtr \leq \DBur,\end{equation*}
using the pure state formulas for trace distance, fidelity and Bures distance, given in \Cref{def:td,def:fid,def:Bur}. 

We will use the following well-known bound, which appears in \cite[Section 2.1]{Har13}. 

\begin{lemma}[An upper bound on the $k$-th moment of fidelity]\label{pure_states_upper_bound_kth_moment}
    Suppose $\alg$ is an algorithm for pure state tomography that outputs pure states. Let $k$ be an arbitrary nonnegative integer. Then \begin{equation*}\E_{\ket{\bu} \sim \mu_H} \Big[\E_{\ket{\widehat{\bu}} \sim \alg(\bu)} \big[\left| \braket{\widehat{\bu}}{\bu} \right|^{2k}\big]\Big] \leq  \frac{\binom{d+n-1}{n}}{\binom{d+n+k-1}{n+k}}\end{equation*}
\end{lemma}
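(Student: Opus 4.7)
The plan is to express $\E_\bu \E_{\widehat{\bu}} |\braket{\widehat{\bu}}{\bu}|^{2k}$ as a trace of a PSD operator against the projector $\projSym{n+k,d}$ onto the symmetric subspace of $(\C^d)^{\otimes(n+k)}$, and then use $\projSym{n+k,d} \preceq I$ to reduce to a straightforward trace computation.

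First I would model $\calA$ by a POVM $\{M_v\,\mathrm{d}v\}$ indexed by pure states $\ket{v} \in \C^d$: since $\calA$ outputs pure states, the output distribution given input $\bu^{\otimes n}$ has density $\tr(M_v \cdot \bu^{\otimes n})$ with respect to the Haar measure $\mathrm{d}v$. Because $\bu^{\otimes n}$ always lies in $\Sym$, I may WLOG replace each $M_v$ with $\projSym{n,d}\, M_v \,\projSym{n,d}$ without altering outcome statistics, making $\{M_v\,\mathrm{d}v\}$ a POVM on $\Sym$; in particular $\int M_v\, \mathrm{d}v = \projSym{n,d}$, so
\begin{equation*}
    \int \tr(M_v)\, \mathrm{d}v \;=\; \tr\bigl(\projSym{n,d}\bigr) \;=\; \binom{d+n-1}{n}
\end{equation*}
by \Cref{dimension_sym_subspace}.

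Next I would use the identity $|\braket{v}{\bu}|^{2k} = \tr(v^{\otimes k} \cdot \bu^{\otimes k})$, valid since both are rank-one projectors, together with $\bu^{\otimes n} \otimes \bu^{\otimes k} = \bu^{\otimes(n+k)}$, to write
\begin{equation*}
    |\braket{v}{\bu}|^{2k} \cdot \tr(M_v \cdot \bu^{\otimes n}) \;=\; \tr\bigl((M_v \otimes v^{\otimes k}) \cdot \bu^{\otimes(n+k)}\bigr).
\end{equation*}
Taking expectation over $\ket{\bu} \sim \mu_H$ and applying \Cref{proj_sym_subspace} to replace $\E_{\ket{\bu}} \bu^{\otimes(n+k)}$ by $\projSym{n+k,d}/\binom{d+n+k-1}{n+k}$, then integrating over $v$, the $k$-th moment becomes
\begin{equation*}
    \E_{\ket{\bu}}\,\E_{\ket{\widehat{\bu}}} |\braket{\widehat{\bu}}{\bu}|^{2k} \;=\; \frac{\tr\bigl(\widetilde{M} \cdot \projSym{n+k,d}\bigr)}{\binom{d+n+k-1}{n+k}},
    \qquad \widetilde{M} \coloneq \int (M_v \otimes v^{\otimes k})\, \mathrm{d}v.
\end{equation*}

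Finally, since $\widetilde{M}$ is PSD and $\projSym{n+k,d} \preceq I$, I get $\tr(\widetilde{M} \cdot \projSym{n+k,d}) \leq \tr(\widetilde{M})$, and
\begin{equation*}
    \tr(\widetilde{M}) \;=\; \int \tr(M_v) \cdot \tr(v^{\otimes k}) \, \mathrm{d}v \;=\; \int \tr(M_v)\, \mathrm{d}v \;=\; \binom{d+n-1}{n},
\end{equation*}
using $\tr(v^{\otimes k}) = 1$ for pure states. Combining yields the claimed bound. The main subtlety to watch out for is the initial WLOG restriction of the POVM to $\Sym$: without it one has $\int \tr(M_v)\, \mathrm{d}v = d^n$, giving a much weaker bound. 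Note also that equality is achieved by Hayashi's POVM $M_v = \binom{d+n-1}{n} \cdot v^{\otimes n}$, for which $\widetilde{M}$ is already supported on the symmetric subspace.
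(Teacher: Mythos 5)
Your proof is correct and follows essentially the same route as the paper's: restrict the POVM to the symmetric subspace, rewrite the $k$-th power of fidelity as a trace against $\ketbra{u}^{\otimes(n+k)}$, average over Haar $\ket{\bu}$ to bring in $\projSym{n+k}$, and bound it by the identity. The only cosmetic difference is that you work directly with a continuous POVM $\{M_v\,\mathrm{d}v\}$ and define $\widetilde{M}$ explicitly, whereas the paper keeps a formal sum over outcomes.
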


\begin{proof} Let $M = \{M_{\ket{\widehat{u}}}\}_{\ket{\widehat{u}}}$ be the measurement used by $\alg$, with POVM elements indexed by the corresponding output.  Since the input is necessarily in the symmetric subspace, we can assume $M$ is a POVM on $\Sym$. Given input $\ket{u}$, the $k$-th moment of the squared fidelity is
\begin{align*}
     \E_{\ket{\widehat{\bu}} \sim \alg(u)} \big[\left| \braket{\widehat{\bu}}{u} \right|^{2k}\big] & = \sum_{ \widehat{u} } \tr(M_{\ket{\widehat{u}}} \cdot \ketbra{u}^{\otimes n}) \cdot  \left| \braket{\widehat{u}}{u} \right|^{2k} \\
     & = \sum_{ \widehat{u} } \tr(M_{\ket{\widehat{u}}} \otimes \ketbra{\widehat{u}}^{\otimes k} \cdot \ketbra{u}^{\otimes (n+k)}) \\
     & = \tr\Big( \Big(\sum_{\widehat{u}} M_{\ket{\widehat{u}}} \otimes \ketbra{\widehat{u}}^{\otimes k} \Big) \cdot \ketbra{u}^{\otimes (n+k)} \Big). 
\end{align*}
Here, the sum over $\ket{\widehat{u}}$ is formal: if the POVM has finitely many elements, then this is a sum in the usual sense; if the POVM is continuous, it should be replaced with the appropriate integral. On a Haar random input, we then have
\begin{align}
    \E_{\ket{\bu} \sim \mu_H} \Big[\E_{\ket{\widehat{\bu}} \sim \alg(\bu)} \big[\left| \braket{\widehat{\bu}}{\bu} \right|^{2k}\big]\Big]  & = \tr\Big( \Big(\sum_{\widehat{u}} M_{\ket{\widehat{u}}} \otimes \ketbra{\widehat{u}}^{\otimes k} \Big) \cdot \E_{\ket{\bu}\sim\mu_H} \Big[\ketbra{\bu}^{\otimes (n+k)} \Big] \Big) \nonumber \\
    & = \frac{1}{\binom{d+n+k-1}{n+k}} \tr\Big( \Big(\sum_{\widehat{u}} M_{\ket{\widehat{u}}} \otimes \ketbra{\widehat{u}}^{\otimes k} \Big) \cdot \projSym{n+k} \Big),  \label{eq:exp_overlap}
\end{align}
using \Cref{proj_sym_subspace}. We now bound $\Pi^{(n+k)}_{\mathrm{sym}}$ in the PSD order as $\Pi^{(n+k)}_{\mathrm{sym}} \preceq I_{n+k}$ to obtain
\begin{align*}
    \tr\Big( \Big(\sum_{\widehat{u}} M_{\ket{\widehat{u}}} \otimes \ketbra{\widehat{u}}^{\otimes k} \Big) \cdot \projSym{n+k} \Big) & \leq \tr\Big( \Big(\sum_{\widehat{u}} M_{\ket{\widehat{u}}} \otimes \ketbra{\widehat{u}}^{\otimes k} \Big) \cdot I_{n+k} \Big) \\
    & = \tr\Big( \sum_{\widehat{u}} M_{\ket{\widehat{u}}}  \Big)  = \tr \Big( \projSym{n} \Big)  = \binom{d+n-1}{n}.
    \end{align*}
    In the last step we have used \Cref{dimension_sym_subspace}. Substituting back into \Cref{eq:exp_overlap} finishes the proof.
\end{proof}

    We now prove the main result of the subsection.

    \begin{proof}[Proof of \cref{pure_state_lower_bound}]
    
    From $\alg$, we can construct an algorithm $\alg'$ which uses no more samples and always outputs pure states, while learning to Bures distance $2\epsilon$ with high probability, by \cref{WLOG_projector_output}. Then, from \cref{pure_states_upper_bound_kth_moment}, we have
    \begin{align}
        \E_{\ket{\bu} \sim \mu_H} \Big[\E_{\ket*{\widehat{\bu}} \sim \alg'(\bu)} \big[\left| \braket*{\widehat{\bu}}{\bu} \right|^{2k}\big]\Big]  \leq  \frac{\binom{d+n-1}{n}}{\binom{d+n+k-1}{n+k}} & = \frac{(n+1)\dots(n+k)}{(n+d)\dots(n+d+k-1)} \nonumber \\ 
        & \leq \Big(\frac{n+k}{d+n+k-1}\Big)^{k} = \Big( 1 - \frac{d-1}{d+n+k-1}\Big)^{k}. \label{ps_fidelity_upper_bound}
    \end{align}
    However, for any input $\ket{u}^{\otimes n}$, $\alg'$ succeeds in learning a state $\ket*{\widehat{\bu}}$ such that $\DBur(\widehat{\bu}, u) \leq 2\epsilon$ with probability $99\%$. In this case, we have
    \begin{equation*}
        \big|\braket{\widehat{\bu}}{u} \big| = \Fid\big(\widehat{\bu}, u\big) = 1 - \frac{1}{2} \DBur\big(\widehat{\bu}, u\big)^2 \geq 1 - 2\epsilon^2. 
    \end{equation*}
    Otherwise, we always have $\big|\braket*{\widehat{\bu}}{u} \big| \geq 0$ at least. Thus,
    \begin{equation}\label{ps_fidelity_lower_bound}
        \E_{\ket{\bu} \sim \mu_H} \Big[\E_{\ket*{\widehat{\bu}} \sim \alg'(\bu)} \big[\big| \braket*{\widehat{\bu}}{\bu} \big|^{2k}\big]\Big] \geq 0.99\cdot \Big(1-2\epsilon^2\Big)^{2k} + 0.01 \cdot 0 = 0.99\cdot\Big(1-2\epsilon^2\Big)^{2k}.
    \end{equation}
    Combining \cref{ps_fidelity_upper_bound} and \cref{ps_fidelity_lower_bound} gives
    \begin{equation*}
        0.99\cdot \Big(1-2\epsilon^2\Big)^{2k} \leq \E_{\ket{\bu} \sim \mu_H} \Big[\E_{\ket*{\widehat{\bu}} \sim \alg'(\bu)} \big[\big| \braket*{\widehat{\bu}}{\bu} \big|^{2k}\big]\Big] \leq \Big(1 - \frac{d-1}{d+n+k-1}\Big)^k.
    \end{equation*}
    We now apply the inequalities $1 + xy \leq (1+x)^y \leq e^{xy}$, which hold for $x \geq -1$, to loosen both bounds, obtaining
    \begin{equation*}\label{ps_k_inequality}
        0.99 \cdot \left(1 - 4k \epsilon^2\right) \leq \exp \Big(-\frac{k (d-1)}{d+n+k-1} \Big) 
    \end{equation*}
We now choose a particular $k$. If we take $k = \left\lfloor \frac{1}{16\epsilon^2} \right\rfloor$, then we have
    \begin{equation*}
        e^{-1/2} \leq 0.99 \cdot \left(1 - \frac{1}{4} \right) \leq 0.99 \cdot \left(1 - 4k\epsilon^2\right) \leq \exp \left( - \frac{k(d-1)}{d+n+k-1}\right).
    \end{equation*}
    Taking logarithms then gets us
    \begin{equation*}
        \frac{k(d-1)}{d+n+k-1} < \frac{1}{2},
    \end{equation*}
    which implies 
    \begin{equation*}
        n > 2k(d-1) - d - k +1 = (2k-1)(d-1) - k.
    \end{equation*}
    For $d \geq 2$, we have $d-1 \geq d/2$ and $k \leq k(d-1)$, so that
    \begin{equation*}
        n > (2k-1)(d-1) - k(d-1) = (k-1)(d-1) \geq \frac{1}{2} (k-1) d.
    \end{equation*}
    When $16\epsilon^2 \leq \frac{1}{3}$, we may also apply the inequality $\left \lfloor x \right \rfloor - 1 \geq x/2$, which holds for all $x \geq 3$. With $x = \frac{1}{16\epsilon^2}$, we get $k-1 \geq \frac{1}{32\epsilon^2}$. We conclude that
    \begin{equation*}n > \frac{d}{64\epsilon^2}. \qedhere \end{equation*}
\end{proof}

\subsection{The general case}
\newcommand{\muGr}{\mu_{\mathrm{Gr}}}
In this subsection we generalize the argument from the previous subsection to general $r$, to prove a lower bound on the sample complexity of learning rank-$r$ quantum states in Bures distance.

\begin{proposition}[A lower bound on learning rank-$r$ projector states in Bures distance]\label{projector_lower_bound}

Any rank-$r$ projector tomography algorithm learning to Bures distance $\epsilon > 0$ requires at least $n = \Omega(rd/\epsilon^2)$ samples, for $d \geq 2$, $r \leq d/2$, and $\epsilon \leq 1/80$. 
\end{proposition}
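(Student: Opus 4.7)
The approach is to mirror the pure-state argument of \Cref{pure_state_lower_bound}, replacing symmetric-subspace techniques with Schur-Weyl duality to handle the rank-$r$ case. By \Cref{WLOG_projector_output}, we may assume the algorithm produces a rank-$r$ projector state $\widehat{\brho} = \bQ/r$. Let $\bP$ be a Haar-random rank-$r$ projector (so $\brho = \bP/r$ is a uniformly random rank-$r$ projector state), and let $\{M_{\widehat Q}\}$ be the algorithm's POVM. The central object is the $k$-th moment of the affinity, which, using the projector-state identity $\aff(\brho,\widehat{\brho}) = r\tr(\brho\widehat{\brho})$ from \Cref{eq:aff_projectors}, simplifies to
\begin{equation*}
\E_{\bP,\bQ}\bigl[\aff(\brho,\widehat{\brho})^k\bigr] \;=\; r^{-(n+k)}\,\tr\!\left[\Big(\sum_{\widehat Q} M_{\widehat Q} \otimes \widehat Q^{\otimes k}\Big)\cdot \E_\bP\!\bigl[\bP^{\otimes(n+k)}\bigr]\right].
\end{equation*}

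The key computation is the Haar average $\E_\bP[\bP^{\otimes(n+k)}]$, which I would evaluate via Schur-Weyl duality (\Cref{prelim_schur_weyl_duality}) combined with \Cref{cor:irrep_Haar_random_projector}:
\begin{equation*}
\E_\bP\bigl[\bP^{\otimes(n+k)}\bigr] \;\cong\; \sum_{\mu \vdash n+k,\;\ell(\mu)\leq r} \frac{s_\mu(1^r)}{s_\mu(1^d)}\cdot \Pi_\mu,
\end{equation*}
where $\Pi_\mu$ is the Schur-Weyl isotypic projector. I would then substitute this back, estimate the coefficients $s_\mu(1^r)/s_\mu(1^d) = \prod_{(i,j)\in\mu}(r+\content(i,j))/(d+\content(i,j))$ via the hook-content formula (\Cref{thm:hook_content_formula}), and combine with PSD/trace bounds on $\sum_{\widehat Q} M_{\widehat Q}\otimes\widehat Q^{\otimes k}$: we may take $\sum_{\widehat Q} M_{\widehat Q}$ to be supported on the ``$\ell(\lambda)\leq r$'' subspace of $(\C^d)^{\otimes n}$ (since $\brho^{\otimes n}$ already is), and each $\widehat Q^{\otimes k}$ is block-diagonal in the Schur basis of $(\C^d)^{\otimes k}$ with block ranks $s_\nu(1^r)$. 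The goal is to produce a $k$-th moment bound of the shape suggested in \Cref{eq:moment-bound}, with an effective ``$rd$-dimensional'' analog of $\binom{d+n-1}{n}/\binom{d+n+k-1}{n+k}$ in place of the pure-state bound.

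For the converse side, success of the algorithm with $\dbur(\brho,\widehat{\brho})\leq\epsilon$ with probability $\geq 0.99$ gives $\fid\geq 1-\epsilon^2/2$, and then \Cref{cor:aff-fid} yields $\aff\geq\fid^2\geq 1-\epsilon^2$, so $\E[\aff^k]\geq 0.99\cdot(1-\epsilon^2)^k$. Equating this with the moment upper bound, taking logarithms via $1+xy\leq(1+x)^y\leq e^{xy}$ exactly as in the proof of \Cref{pure_state_lower_bound}, and choosing $k = \Theta(1/\epsilon^2)$ should then yield $n = \Omega(rd/\epsilon^2)$ in the regime $r\leq d/2$, $\epsilon \leq 1/80$.

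The main obstacle I anticipate is extracting the correct $r$-dependence in the moment bound. The naive route—bounding $\E_\bP[\bP^{\otimes(n+k)}] \preceq C\cdot \Pi_{\leq r}^{(n+k)}$ with $C = \max_\mu s_\mu(1^r)/s_\mu(1^d)$ (attained at a tall thin shape such as $(1^r)$-repeated) and then using $\widehat Q^{\otimes k} \preceq I$—seems to give only $n = \Omega((d-r)/\epsilon^2)$, which for $r \leq d/2$ collapses to $n = \Omega(d/\epsilon^2)$ and misses the multiplicative factor of $r$. Producing the extra factor of $r$ will require a more delicate argument, probably by using the Littlewood-Richardson rule (\Cref{prelim_littlewood_richardson_rule}) to compare the Schur-Weyl decomposition of $(\C^d)^{\otimes(n+k)}$ as an $S_{n+k}$-module with that of $(\C^d)^{\otimes n}\otimes (\C^d)^{\otimes k}$ as an $S_n\times S_k$-module, thereby exploiting both the rank-$r^k$ structure of $\widehat Q^{\otimes k}$ and the block structure of the POVM simultaneously. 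This representation-theoretic matching is where the real work lies.
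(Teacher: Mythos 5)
Your setup is right (the moment-of-affinity identity and the reduction to bounding $\E_\bP[\bP^{\otimes(n+k)}]$), and your diagnosis that a naive PSD bound loses the factor of $r$ is correct in spirit; but the fix you propose---matching $S_{n+k}$- and $S_n \times S_k$-decompositions via the Littlewood-Richardson rule---is not enough on its own. Conditioning on the Schur-Weyl label $\lambda \vdash n$ occupied by the POVM (via \Cref{lem:WLOG_WSS}) and applying the LR rule to $\nu_\lambda \otimes \nu_\mu$ for $\mu \vdash k$ does restrict the $\tau \vdash n+k$ that appear to those with $\lambda \subseteq \tau$ and $|\tau \setminus \lambda| = k$, and bounding $s_\tau(1^r)/s_\tau(1^d)$ over such $\tau$ via the hook-content formula yields $\prod_{i=1}^k (r + \lambda_1 + i - 1)/(d + \lambda_1 + i - 1)$, attained at $\tau = \lambda + k\cdot e_1$. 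The crucial point is that this bound \emph{still depends on $\lambda_1$}; if $\lambda_1$ were as large as $n$ (a single row), the bound degenerates to essentially $1$ once $n \gg d$ and you get nothing.

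The missing ingredient is concentration of weak Schur sampling, and that is precisely where the factor of $r$ actually enters. For a rank-$r$ projector state, $\blambda_1$ concentrates around $n/r$: the paper invokes $\E[\blambda_1] \leq n/r + 2\sqrt{n}$ (from~\cite[Theorem 5.2]{OW16}) together with a sub-Gaussian tail bound (from~\cite[Proposition 4.8]{OW17a}) to conclude $\blambda_1 \leq 2n/r$ with high probability. Substituting this into the $\lambda_1$-dependent moment bound gives an exponent of the form $k(d-r)/(d + 2n/r + k)$, and demanding this be $O(1)$ with $k = \Theta(1/\epsilon^2)$ forces $n = \Omega(rd/\epsilon^2)$. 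Without the WSS concentration step, the LR-rule matching alone cannot yield the stated bound. (A small side remark: the maximum of $s_\mu(1^r)/s_\mu(1^d)$ over $\mu \vdash n+k$ with $\ell(\mu) \leq r$ is attained at the single long row $\mu = (n+k)$, not at a tall thin shape, since $(r+\content(i,j))/(d+\content(i,j))$ is increasing in the content and the single row maximizes contents. This makes the naive bound you describe even weaker than you suggest, which is tangential since you already recognize that route fails, but worth correcting.)
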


We note that the further restrictions in the proposition statement of the form $d \geq d_0$, $r \leq r_0$ and $\epsilon \leq \epsilon_0$ are necessary. This is because taking any of $d = 1$, $r = d$ or $\epsilon = \sqrt{2}$ renders the problem trivial. In the first two cases, there is a unique rank-$r$ projector to return: $I/d$. In the last case, returning any state suffices, since $\DBur(\rho,\sigma) \leq \sqrt{2}$ for all $\rho$ and $\sigma$. Therefore, no lower bounds can be proven without such restrictions. 

\begin{proof}[Proof of \Cref{projector_lower_bound}]

Fix any such algorithm $\calA$, and suppose it uses $n$ samples. By \Cref{WLOG_projector_output}, we can construct a new algorithm $\calA'$ which: uses no more samples, outputs rank-$r$ projector states, and learns to Bures distance $2\epsilon$ with high probability. By \Cref{lem:WLOG_WSS}, we can also assume $\mathcal{A}'$ begins by performing weak Schur sampling and then proceeding conditioned on the  outcome $\blambda \vdash n$. We describe its subsequent action by an algorithm  $\mathcal{A}'^{(\blambda)}$, which measures in $V^d_{\blambda}$. This POVM is written $M^{(\blambda)} = \{M^{(\blambda)}_Q\}_Q$, with measurement operators indexed by the rank-$r$ orthogonal projector corresponding to the output, i.e.\ $Q/r$. 

Let $\rho = P/r$ be an input state. By Schur-Weyl duality (\Cref{eq:schur_weyl_rho^n}), we have
\begin{equation*}
\rho^{\otimes n} \cong \sum_{\substack{\lambda \vdash n \\ \ell(\lambda) \leq d}} \ketbra{\lambda} \otimes I_{\dim(\lambda)} \otimes \nu_\lambda(\rho) = \frac{1}{r^n}\sum_{\substack{\lambda \vdash n \\ \ell(\lambda) \leq d}} \ketbra{\lambda} \otimes I_{\dim(\lambda)} \otimes \nu_\lambda(P).
\end{equation*}
Upon weak Schur sampling, we obtain a Young diagram $\lambda$ with probability
\begin{equation*}
\Pr_{\blambda \sim \mathrm{WSS}_n(\rho)}[\blambda = \lambda] = \dim(\lambda) \cdot s_\lambda(\rho) = \frac{1}{r^n} \cdot \dim(\lambda) \cdot s_{\lambda}(P),\end{equation*}
as in \Cref{WSS_prob}, and the post-measurement state is
\begin{equation*}
    \rho_\lambda \cong \ketbra{\lambda} \otimes \frac{I_{\dim(\lambda)}}{\dim(\lambda)} \otimes \frac{\nu_\lambda(\rho)}{s_\lambda(\rho)} = \ketbra{\lambda} \otimes \frac{I_{\dim(\lambda)}}{\dim(\lambda)} \otimes \frac{\nu_\lambda(P)}{s_\lambda(P)}.
\end{equation*}
as in \Cref{WSS_state}.

Now suppose WSS has occurred, and the fixed outcome $\lambda \vdash n$ has been obtained. The algorithm now measures in $V^d_\lambda$ with $M^{(\lambda)}$. We will write $\rho|_\lambda$ for the state in the $V^d_\lambda$ register, i.e.\ $\rho|_\lambda = \nu_\lambda(P)/s_\lambda(P)$. The $k$-th moment of affinity between $\rho$ and the output $\widehat{\brho} = \widehat{\bP}/r$ is 
\begin{align}
    \E_{\widehat{\brho} \sim \alg'^{(\lambda)}(\rho|_\lambda)} \Big[ \Aff ( \rho, \widehat{\brho})^k \Big] = \sum_{\widehat{P}} \tr( M^{(\lambda)}_{\widehat{P}} \cdot \rho|_\lambda) \cdot \Aff( \rho, \widehat{\rho})^k. \label{aff_moment_1}
\end{align}
As in the pure state case, the sum is formal, representing e.g.\ an integral in the continuous case. We can rewrite the affinity:
\begin{equation*}
    \Aff(\rho, \widehat{\rho})^k  = \tr( \rho^{1/2} \cdot \widehat{\rho}^{1/2})^k = \frac{1}{r^k} \tr ( P^{1/2} \cdot {\smash{\widehat{P}}}^{1/2})^k= \frac{1}{r^k} \tr( P \cdot \widehat{P})^k = \frac{1}{r^k} \tr( P^{\otimes k} \cdot {\smash{\widehat{P}}}^{\otimes k}).\end{equation*}
We can evaluate this trace in the Schur basis. Since for any $k$-fold operator, 
\begin{equation*}
    Q^{\otimes k} \cong \sum_{\substack{\mu \vdash k \\ \ell(\mu) \leq d}} \ketbra{\mu} \otimes I_{\dim(\mu)} \otimes \nu_\mu(Q),
\end{equation*}
we have
\begin{equation*}
    \Aff(\rho, \widehat{\rho})^k = \frac{1}{r^k} \tr( P^{\otimes k} \cdot {\smash{\widehat{P}}}^{\otimes k}) = \frac{1}{r^k} \sum_{\substack{\mu \vdash k \\ \ell(\mu) \leq d}} \dim(\mu) \cdot \tr( \nu_\mu(P) \cdot \nu_\mu(\widehat{P})).
\end{equation*}
Substituting this expression for the $k$-th moment of affinity back into \Cref{aff_moment_1}, we have
\begin{align*}
    \E_{\widehat{\brho} \sim \alg'^{(\lambda)}(\rho|_\lambda)} \Big[ \Aff ( \rho, \widehat{\brho})^k \Big] & = \frac{1}{r^{k}} \sum_{\substack{\mu \vdash k \\ \ell(\mu) \leq d}} \dim(\mu) \cdot \sum_{\widehat{P}} \tr( M_{\widehat{P}}^{(\lambda)} \cdot \rho|_\lambda ) \cdot \tr( \nu_\mu(P) \cdot \nu_\mu(\widehat{P}) ) \\
    & = \frac{1}{r^{k}} \sum_{\substack{\mu \vdash k \\ \ell(\mu) \leq d}} \dim(\mu) \cdot \tr\Big( \Big( \sum_{\widehat{P}} M^{(\lambda)}_{\widehat{P}} \otimes \nu_\mu(\widehat{P})\Big) \cdot \Big(\rho|_\lambda \otimes \nu_\mu(P)\Big)\Big) \\
    & = \frac{1}{r^{k} s_\lambda(1^r)} \sum_{\substack{\mu \vdash k \\ \ell(\mu) \leq d}} \dim(\mu) \cdot \tr\Big( \Big( \sum_{\widehat{P}} M^{(\lambda)}_{\widehat{P}} \otimes \nu_\mu(\widehat{P})\Big)  \cdot \Big(\nu_\lambda(P) \otimes \nu_\mu(P)\Big)\Big).
\end{align*}
In the last step, we have used $\rho|_\lambda = \nu_\lambda(P)/s_\lambda(P) = \nu_\lambda(P)/s_\lambda(1^r)$. 
On a Haar random projector state input $\brho = \bP/r$ where $\bP \sim \mu_H$, we have
\begin{align}\label{aff_moment_20}
    & \E_{\bP \sim \mu_H} \Big[\E_{\widehat{\brho} \sim \alg'^{(\lambda)}(\brho_\lambda)} \Big[ \Aff ( \brho, \widehat{\brho})^k \Big]\Big]  \nonumber \\
    & = \frac{1}{r^{k} s_\lambda(1^r)} \sum_{\substack{\mu \vdash k \\ \ell(\mu) \leq d}} \dim(\mu) \cdot \tr\Big( \Big( \sum_{\widehat{P}} M^{(\lambda)}_{\widehat{P}} \otimes \nu_\mu(\widehat{P})\Big)  \cdot \E_{\bP \sim \mu_H}\Big[\nu_\lambda(\bP) \otimes \nu_\mu(\bP)\Big]\Big). 
\end{align}

We would now like to understand the expectation on the right-hand side of \Cref{aff_moment_20}. Using the Littlewood-Richardson rule (\Cref{prelim_littlewood_richardson_rule}), we have
\begin{equation*}
    \nu_\lambda(\bP) \otimes \nu_\mu(\bP) \cong \sum_{\substack{\tau \vdash n + k\\\ell(\tau) \leq d}} \ketbra{\tau} \otimes I_{c^{\tau}_{\lambda \mu}} \otimes \nu_\tau(\bP).
\end{equation*}
Here, the congruence indicates equality up to conjugation by a unitary change-of-basis implied by \Cref{prelim_littlewood_richardson_rule} and \Cref{cor:unitary_isomorphism}. Therefore,
\begin{align}
    \E_{\bP \sim \mu_H} \Big[ \nu_\lambda(\brho) \otimes \nu_\mu(\brho)\Big] & \cong \sum_{\substack{\tau \vdash n + k\\\ell(\tau) \leq d}} \ketbra{\tau} \otimes I_{c^{\tau}_{\lambda \mu}} \otimes \E_{\bP \sim \mu_H} \big[ \nu_\tau(\bP) \big] \nonumber \\
    & = \sum_{\substack{\tau \vdash n + k\\\ell(\tau) \leq d}} \frac{s_\tau(1^r)}{s_\tau(1^d)} \cdot \ketbra{\tau} \otimes I_{c^{\tau}_{\lambda \mu}} \otimes  I_{\dim(V^d_\tau)}, \label{aff_moment_3}
\end{align}
where in the last step we have used \Cref{cor:irrep_Haar_random_projector}. Now, by \Cref{rem:s_lambda_to_SSYTs} and the hook-content formula (\Cref{thm:hook_content_formula}), for any irrep $\sigma$ we have 
\begin{equation*}
    \frac{s_\sigma(1^r)}{s_\sigma(1^d)} = \frac{|\SSYT(\lambda,r)|}{|\SSYT(\lambda,d)|} = \left(\prod_{(i,j) \in \sigma} \frac{r + \content(i,j)}{\hook_\sigma(i,j)}\right) \cdot \left(\prod_{(i,j) \in \sigma} \frac{d + \content(i,j)}{\hook_\sigma(i,j)}\right)^{-1} = \prod_{(i,j) \in \sigma} \frac{r + \content(i,j)}{d + \content(i,j)}.
\end{equation*}
For any $\tau$ with $c^{\tau}_{\lambda \mu}$ nonzero, $\tau$ contains $\lambda$, and hence
\begin{equation}\label{eq:product}
   \Bigg(\frac{s_\tau(1^r)}{s_\tau(1^d)}\Bigg)\cdot \Bigg(\frac{s_\lambda(1^r)}{s_\lambda(1^d)}\Bigg)^{-1} = \Bigg(\prod_{(i,j) \in \tau} \frac{r + \content(i,j)}{d + \content(i,j)}\Bigg) \cdot \Bigg(\prod_{(i,j) \in \lambda} \frac{r + \content(i,j)}{d + \content(i,j)}\Bigg)^{-1}  = \prod_{(i,j) \in \tau \setminus \lambda} \frac{r + \content(i,j)}{d + \content(i,j)}.
\end{equation}
Moreover, we must have $|\tau \setminus \lambda| = k$ for $c^\tau_{\lambda \mu}$ to be nonzero. 

How large can this product be, if $\lambda \subseteq \tau$ and $|\tau \setminus \lambda| = k$? Firstly, in order to maximize an individual term in the product, we should choose $\content(i,j) = j-i$ as large as possible, since $r \leq d$. Next, to maximize the content of a new box, we should always put that box into the first row, since this allows for both $i$ to be minimal and $j$ to be maximal. Lastly, we can view $\tau$ as constructed by first adding some boxes to the first row, then some to the second row, and so on. Consider the last box inserted in this process. If it were not in the first row, we could increase the product by inserting it instead into the first row. Therefore, starting with $\lambda$, we maximize the product by inserting $k$ boxes into the first row. See \Cref{fig:maximize_product} for an intuitive picture. 

So, we can bound the product in \Cref{eq:product} by choosing $\tau = \tau^* \coloneq \lambda+k\cdot e_1$\footnote{Note that this bound is not necessarily tight for our application, since, for example, we have ignored the further constraint that $\mu \subseteq \tau$ for $c^{\tau}_{\lambda \mu} \neq 0$. For example, if $\lambda = (1)$, $\mu = (1,1)$, then $\tau^* = (3)$, but in this case $\mu \not\subseteq \tau$. The bound will suffice for our purposes however.}, and we have
\begin{equation*}
    \Bigg(\frac{s_\tau(1^r)}{s_\tau(1^d)}\Bigg) \cdot \Bigg(\frac{s_\lambda(1^r)}{s_\lambda(1^d)}\Bigg)^{-1} \leq \Bigg(\frac{s_{\tau^*}(1^r)}{s_{\tau^*}(1^d)}\Bigg) \cdot \Bigg(\frac{s_\lambda(1^r)}{s_\lambda(1^d)}\Bigg)^{-1} = \prod_{i=1}^k \frac{r + (\lambda_1 + i - 1)}{d + (\lambda_1 + i - 1)}.
\end{equation*}
This gives the following bound on the ratios appearing in \Cref{aff_moment_3}:
\begin{equation*}
    \frac{s_\tau(1^r)}{s_\tau(1^d)} \leq \frac{s_{\tau^*}(1^r)}{s_{\tau^*}(1^d)} =  \frac{s_\lambda(1^r)}{s_\lambda(1^d)} \cdot \prod_{i=1}^k \frac{r + \lambda_1 + i - 1}{d + \lambda_1 + i - 1}.
\end{equation*}
\begin{figure}[t]
    \centering

    \begin{minipage}{0.35\textwidth}
        \centering
        \begin{ytableau}
              ~ & ~ & ~ & ~ & *(gray!30) 4\\
              ~ & ~ & ~ & *(gray!30) 2\\
              ~ & *(gray!30)-1 
        \end{ytableau}

        \vspace{0.7em} 
        {$\lambda = (4,3,1)$ with possibilities for an additional cell, labeled by contents}
    \end{minipage}
    \hspace{1.6em}
    \begin{minipage}{0.35\textwidth}
        \centering
        \begin{ytableau}
              ~ & ~ & ~ & ~ & ~ & *(gray!30) 5\\
              ~ & ~ & ~ & *(gray!30) 2\\
              ~ & *(gray!30)-1 
        \end{ytableau}

        \vspace{0.7em}
        $\lambda+e_1 = (5,3,1)$ with possibilities for an additional cell, labeled by contents
    \end{minipage}

    \caption{For fixed $\lambda$, the product   $\prod_{(i,j) \in \tau \setminus \lambda} \frac{r + \content(i,j)}{d + \content(i,j)}$ is maximized by the choice $\tau = \lambda + k \cdot e_1$, subject to the constraints $\lambda \subseteq \tau$ and $|\tau \setminus \lambda| = k$. We illustrate the reasoning here with an example. Take $d=3$. Left: $\lambda = (4,3,1)$, together with additional, shaded boxes, which represent boxes we \emph{could} add to $\lambda$. The shaded boxes are labeled with their contents. To maximize the content of a new box, we should add it to the first row. Having done so, we obtain $\lambda+e_1$. Right: $\lambda+e_1 = (5,3,1)$, again with possibilities for the next box to-be-added shaded, and labeled by contents. Since content increases to the right, the maximum content of a new box will always be in the first row.} \label{fig:maximize_product}
\end{figure}
So, from this inequality and \Cref{aff_moment_3}, we can give the following bound in the PSD order:
\begin{align*}
\E_{\bP \sim \mu_H} \Big[ \nu_\lambda(\bP) \otimes \nu_\mu(\bP)\Big] & \cong \sum_{\substack{\tau \vdash n + k\\\ell(\tau) \leq d}} \frac{s_{\tau}(1^r)}{s_{\tau}(1^d)} \cdot \ketbra{\tau} \otimes I_{c^{\tau}_{\lambda \mu}} \otimes  I_{\dim(V^d_\tau)} \\
& \preceq \frac{s_{\tau^*}(1^r)}{s_{\tau^*}(1^d)} \sum_{\substack{\tau \vdash n + k\\\ell(\tau) \leq d}}   \ketbra{\tau} \otimes I_{c^{\tau}_{\lambda \mu}} \otimes  I_{\dim(V^d_\tau)}  \\
& \cong  \frac{s_{\tau^*}(1^r)}{s_{\tau^*}(1^d)} \cdot I_{\dim(V^d_\lambda)} \otimes I_{\dim(V^d_\mu)}, 
\end{align*}
so that 
\begin{equation*}
    \E_{\bP \sim \mu_H} \Big[ \nu_\lambda(\bP) \otimes \nu_\mu(\bP)\Big] \preceq \Bigg( \frac{s_\lambda(1^r)}{s_\lambda(1^d)} \cdot \prod_{i=1}^k \frac{r + \lambda_1 + i - 1}{d + \lambda_1 + i - 1} \Bigg) \cdot I_{\dim(V^d_\lambda)} \otimes I_{\dim(V^d_\mu)}.
\end{equation*}
Substituting this back into \Cref{aff_moment_20}:
\begin{equation}\label{aff_moment_21}
     \E_{\bP \sim \mu_H} \Big[\E_{\widehat{\brho} \sim \alg'^{(\lambda)}(\rho|_\lambda)} \Big[ \Aff ( \rho, \widehat{\brho})^k \Big]\Big]  \leq 
 \frac{1}{r^{k} s_\lambda(1^d)} \cdot \left( \prod_{i=1}^k \frac{r + \lambda_1 + i - 1}{d + \lambda_1 + i - 1} \right)\cdot \sum_{\substack{\mu \vdash k \\ \ell(\mu) \leq d}} \dim(\mu) \cdot \tr\Big( \sum_{\widehat{P}} M^{(\lambda)}_{\widehat{P}} \otimes \nu_\mu(\widehat{P})\Big).
\end{equation}
Now we use the fact that $\tr(\nu_\mu(\widehat{P})) = s_\mu(1^r)$ for any $\widehat{P}$, so that
\begin{equation*}
    \tr\Big( \sum_{\widehat{P}} M^{(\lambda)}_{\widehat{P}} \otimes \nu_\mu(\widehat{P})\Big) = s_\mu(1^r) \cdot \tr\Big(\sum_{\widehat{P}} M^{(\lambda)}_{\widehat{P}} \Big) = s_\mu(1^r) \cdot \dim(V^d_\lambda) = s_\mu(1^r) \cdot s_\lambda(1^d).
\end{equation*}
The second-last step holds since $M^{(\lambda)}$ is a POVM on $V^d_\lambda$. Therefore,
\begin{equation}
    \sum_{\substack{\mu \vdash k \\ \ell(\mu) \leq d}} \dim(\mu) \cdot \tr\Big( \sum_{\widehat{P}} M^{(\lambda)}_{\widehat{P}} \otimes \nu_\mu(\widehat{P})\Big)  = s_\lambda(1^d) \cdot\Big( \sum_{\substack{\mu \vdash k \\ \ell(\mu) \leq d}} \dim(\mu) \cdot s_\mu(1^r)\Big) = s_\lambda(1^d) \cdot r^k.\label{sum_over_mu} 
\end{equation}
The last equality can be seen as follows. Weak Schur sampling on $k$ copies of a fixed projector state $\rho = Q/r \in \C^{d \times d}$, yields $\bmu \vdash k$ with probability $\dim(\bmu) \cdot s_{\bmu}(\sigma) = \dim(\bmu) \cdot s_{\bmu}(1^r)/r^k$ (by \Cref{WSS_prob}). Thus
\begin{equation*}
    1 = \sum_{\substack{\mu \vdash k \\ \ell(\mu) \leq d}} \Pr_{\bmu \sim \mathrm{WSS}_k(\brho)} [ \bmu = \mu] = \frac{1}{r^k} \sum_{\substack{\mu \vdash k \\ \ell(\mu) \leq d}} \dim(\mu) \cdot s_\mu(1^r).
\end{equation*}
Then, substituting \Cref{sum_over_mu} into \Cref{aff_moment_21}, we finally obtain the bound:
\begin{equation} \label{eq:bound_given_lambda}
    \E_{\bP \sim \mu_H} \Big[ \E_{\widehat{\brho} \sim \alg'^{(\lambda)}(\brho_\lambda)} \Big[ \Aff ( \brho, \widehat{\brho})^k \Big] \Big] \leq \prod_{i=1}^k \frac{r+\lambda_1+i-1}{d+\lambda_1 + i - 1} \leq \left( \frac{r+ \lambda_1 + k - 1}{d + \lambda_1 + k - 1}\right)^k = \left( 1 - \frac{d-r}{d+\lambda_1 + k - 1}\right)^k.
\end{equation}

This bound we have just derived applies when, upon weak Schur sampling, we obtain $\lambda$. Therefore, to get a bound on the $k$-th moment of affinity, we should average over all possible Young diagrams we can obtain from WSS. This gives:
\begin{align}
    \E_{\bP \sim \mu_H} \Big[ \E_{\widehat{\brho} \sim \alg'(\brho)} \Big[ \Aff ( \brho, \widehat{\brho})^k \Big] \Big] & = \E_{\bP \sim \mu_H} \Bigg[ \sum_{\substack{\lambda \vdash n \\ \ell(\lambda) \leq d }} \Pr_{\blambda \sim \mathrm{WSS}_n(\brho)} [ \blambda = \lambda] \cdot \E_{\widehat{\brho} \sim \mathcal{A}'^{(\lambda)}(\rho|_\lambda)} \Big[ \Aff(\rho, \widehat{\brho})^k\Big]  \Bigg] \nonumber \\
    & =  \sum_{\substack{\lambda \vdash n \\ \ell(\lambda) \leq d }} \Pr_{\blambda \sim \mathrm{WSS}_n(\rho)} [ \blambda = \lambda] \cdot \bigg(\E_{\bP \sim \mu_H} \Big[ \E_{\widehat{\brho} \sim \alg'^{(\lambda)}(\brho_\lambda)} \Big[ \Aff ( \brho, \widehat{\brho})^k \Big] \Big]\bigg) \nonumber \\
    & \leq  \sum_{\substack{\lambda \vdash n \\ \ell(\lambda) \leq d }} \Pr_{\blambda \sim \mathrm{WSS}_n(\rho)} [ \blambda = \lambda] \cdot\left( 1 - \frac{d-r}{d+\lambda_1 + k - 1}\right)^k, \label{mixed_state_expectation_bound}
\end{align}
where $\rho$ is any fixed rank-$r$ projector state. In the second step, we have used the fact that WSS probabilities depend only on the spectrum of $\rho$, which is the same for any rank-$r$ projector state.

We now proceed by showing that for any $\brho$, with high probability, we have $\blambda_1 \leq C n/r$. 
To do so, we use two previously known results on WSS statistics in an off-the-shelf manner. Firstly, Theorem 5.2 of \cite{OW16} states:
\begin{equation*}
    \E_{\blambda \sim \mathrm{WSS}_n(\brho)} [\blambda_1]  \leq \frac{n}{r} + 2 \sqrt{n}.
\end{equation*}
Secondly, Proposition 4.8 of \cite{OW17a} proves the concentration bound:
\begin{equation*}
    \Pr_{\blambda \sim \mathrm{WSS}_n(\brho)} \Big[ \big| \blambda_1 - \E_{\blambda \sim \mathrm{WSS}_n(\brho)}[ \blambda_1]\big| \geq t \Big] \leq 2 \exp \left(- \frac{t^2}{8n} \right).
\end{equation*}
Combining these gives
\begin{equation}\label{aff_moment_8}
\Pr_{\blambda \sim \mathrm{WSS}_n(\brho)} \Big[  \blambda_1 \geq \frac{n}{r} + (C+2)\sqrt{n} \Big] \leq 2 \exp \left(- \frac{C^2}{8} \right).
\end{equation}
Choosing $C = 7$ makes this probability smaller than $1\%$. Assume for now that $n \geq 81 r^2$, so that $9\sqrt{n} \leq \frac{n}{r}$. Then from \Cref{aff_moment_8} we have
\begin{equation}\label{aff_moment_9}
\Pr_{\blambda \sim \mathrm{WSS}_n(\brho)} \Big[  \blambda_1 \geq \frac{2n}{r} \Big] \leq 0.01.
\end{equation}
So, with probability at least $99\%$, we have $\blambda_1 \leq 2n/r$, and in this case
\begin{equation*}
    1 - \frac{d-r}{d+\blambda_1 + k - 1} \leq 1 - \frac{d-r}{d + \frac{2n}{r} + k - 1}.
\end{equation*}
In the event $\blambda_1 > 2n/r$, occurring with only at most $1\%$ probability, we will use instead the trivial bound
\begin{equation*}
    1 - \frac{d-r}{d+\blambda_1 + k - 1} \leq 1. 
\end{equation*}
Therefore, we can bound the expectation in \Cref{mixed_state_expectation_bound} as
\begin{align}\label{aff_moment_10}
\E_{\bP \sim \mu_H} \Big[ \E_{\widehat{\brho} \sim \alg'(\brho)} \Big[ \Aff ( \brho, \widehat{\brho})^k \Big] \Big] \nonumber 
&  \leq 0.99 \cdot \left( 1 - \frac{d-r}{d +\frac{2n}{r} + k - 1}\right)^k + 0.01 \cdot 1 \\ & \leq \left( 1 - \frac{d-r}{d +\frac{2n}{r} + k - 1}\right)^k + 0.01. 
\end{align}

We now proceed similarly to the pure state case. Recall $\mathcal{A}'$ produces a rank-$r$ projector state $\widehat{\brho}$ such that $\DBur(\brho, \widehat{\brho}) \leq 2\epsilon$ with probability at least $99\%$. In this case, we have $\Fid(\brho, \widehat{\brho}) \geq 1 - 2 \epsilon^2$, and hence $\Aff(\brho, \widehat{\brho}) \geq 1 - 4\epsilon^2$, by \Cref{cor:aff-fid}. In the remaining case, occuring with probability at most $1\%$, we always at least have the bound $\Aff(\brho, \widehat{\brho}) \geq 0$. Hence,
\begin{equation}\label{aff_moment_11}
\E_{\bP \sim \mu_H} \Big[ \E_{\widehat{\brho} \sim \alg(\brho)} \Big[ \Aff ( \brho, \widehat{\brho})^k \Big] \Big] \geq 0.99 \cdot \left(1 - 4\epsilon^2\right)^k + 0.01 \cdot 0 = 0.99 \cdot \left(1 - 4\epsilon^2\right)^k. 
\end{equation}
Combining \Cref{aff_moment_10} and \Cref{aff_moment_11} gets us:
\begin{equation*}
    \left(1 - 4\epsilon^2\right)^k - 0.01 \leq \left( 1 - \frac{d-r}{d+ \frac{2n}{r}  + k - 1}\right)^k. 
\end{equation*}
We now apply the inequalities $1 + xy \leq (1+x)^y \leq e^{xy}$, valid for $x \geq -1$, to get
\begin{equation*}
    \left(1 - 4k\epsilon^2\right) -0.01 \leq \exp \left( - \frac{k(d-r)}{d + \frac{2n}{r} + k - 1}\right). 
\end{equation*}
If we now choose $k = \left \lfloor \frac{1}{16\epsilon^2} \right \rfloor$, then the left-hand side can be further lower-bounded as:
\begin{equation*}
e^{-1/2} < 0.75 - 0.01 = (1 - 4\epsilon^2/16\epsilon^2)-0.01 \leq \left( 1 - 4k\epsilon^2 \right) - 0.01.
\end{equation*}
Taking logarithms then gets us
\begin{equation*}
    \frac{k(d-r)}{d+ \frac{2n}{r} + k - 1} < \frac{1}{2},
\end{equation*}
which implies
\begin{equation*}
    \frac{2n}{r} > 2k(d-r) - d - k + 1.
\end{equation*}
For $r \leq d/2$, we then have
\begin{equation*}
    \frac{2n}{r} > kd - d - k + 1 = (k-1)(d-1).
\end{equation*}
If $d \geq 2$, then $d-1 \geq d/2$. Moreover, for $16\epsilon^2 \leq 1/3$, we may apply the inequality $\lfloor x \rfloor - 1 \geq x/2$ with $x = 1/16\epsilon^2$, since the inequality holds for all $x \geq 3$. This gives us $k-1 \geq 1/32\epsilon^2$. Substituting both of these, we arrive at the bound:
\begin{equation}\label{aff_moment_12}
    n > \frac{r}{2} \cdot \frac{1}{32\epsilon^2} \cdot \frac{d}{2} = \frac{r d}{128\epsilon^2}.
\end{equation}

To conclude, we circle back to our assumption that $n \geq 81r^2$. We have shown so far that $n \geq 81r^2$ implies $n \geq rd/128\epsilon^2$, which is at least $100r^2$ if we impose the further restriction that $\epsilon \leq 1/80$:
\begin{equation*}
    \frac{rd}{128\epsilon^2} \geq \frac{r^2}{64\epsilon^2} \geq 100r^2.
\end{equation*}
But this implies that no algorithm can succeed with $n < 81r^2$ either, since if $\mathcal{A}$ could learn rank-$r$ projectors for such $r$ and $\epsilon$, using $n < 81r^2$, then certainly $\mathcal{A}$ could also solve the problem using $n$ samples with $n \in (81r^2, 100r^2)$, simply by ignoring the extra copies. This establishes the lower bound in \Cref{aff_moment_12} without this extra assumption on $n$, and completes our proof. 
\end{proof}

\section{Bootstrapping from trace distance learning to Bures distance learning}

\newcommand{\proj}{\mathrm{proj}}
\newcommand{\Spann}{\mathrm{span}}
\newcommand{\WAS}{S_{\mathrm{Align}}}
\newcommand{\ProjWAS}{\Pi_{\mathrm{Align}}}

In this section, we prove our bootstrapping result. 

\begin{proposition} \label{prop_bootstrapping}Let $\mathcal{A}$ be an algorithm for rank-$r$ projector tomography that, when given $n$ samples of $\rho$, returns a rank-$r$ projector state $\widehat{\brho}$ such that $\Dtr(\widehat{\brho}, \rho) \leq \epsilon$ with probability at least $99\%$. Then there exists an algorithm $\mathcal{A}'$ for rank-$r$ projector tomography that takes $n' = 2n + O(r^2/\epsilon^2)$ samples of $\rho$, and returns a $\widehat{\brho}$ such that $\DBur(\widehat{\brho}, \rho) \leq O(\epsilon)$ with probability at least $95\%$, for $r > r_0$, and $\epsilon < \epsilon_0$, where $r_0$ and $\epsilon_0$ are constants. 
\end{proposition}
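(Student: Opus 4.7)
The algorithm $\mathcal{A}'$ I will construct follows the blueprint from \Cref{sec:intro_projector_tomography}, operating in three phases. First, $\mathcal{A}'$ samples two independent Haar random unitaries $\bU_1, \bU_2 \in U(d)$, and for each $j \in \{1,2\}$ runs $\mathcal{A}$ on $(\bU_j \rho \bU_j^\dagger)^{\otimes n}$; by \Cref{WLOG_projector_output} I may assume each output is a rank-$r$ projector state $\widehat{\bQ}_j/r$, and I set $\widehat{\bP}_j \coloneq \bU_j^\dagger \widehat{\bQ}_j \bU_j$, so that $\Dtr(\widehat{\bP}_j/r, \rho) \leq O(\epsilon)$ with probability at least $99\%$. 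I then define $\bR$ to be the projector onto $\mathrm{range}(\widehat{\bP}_1) + \mathrm{range}(\widehat{\bP}_2)$, a subspace of dimension at most $2r$. Second, $\mathcal{A}'$ takes $m = O(r^2/\epsilon^2)$ fresh copies of $\rho$, measures each with $\{\bR, \overline{\bR}\}$, discards the $\overline{\bR}$ outcomes, and collects the surviving post-measurement states, each of which is a copy of $\rho|_\bR \coloneq \bR \rho \bR / \tr(\bR \rho)$. Third, it runs the PSW rank-$r$ Bures-distance tomography algorithm \cite{PSW25} on the surviving copies to produce $\widehat{\brho}$ with $\DBur(\widehat{\brho}, \rho|_\bR) \leq \epsilon$, and outputs $\widehat{\brho}$. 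The total sample cost is $2n + O(r^2/\epsilon^2)$, as required.

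\medskip

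The entire argument reduces to the following geometric claim: with probability at least $98\%$ over the randomness of the first two phases, $\tr(\bR \rho) \geq 1 - O(\epsilon^2)$. Given the claim, the remaining pieces are standard. A Chernoff bound guarantees that $\Omega(m) = \Omega(r^2/\epsilon^2)$ of the $\{\bR, \overline{\bR}\}$ measurements succeed, and because $\rho|_\bR$ has rank $\leq r$ and is supported on a subspace of dimension $\leq 2r$, PSW's upper bound applies with $r' \leq r$, $d' \leq 2r$ and uses $O(r \cdot 2r/\epsilon^2) = O(r^2/\epsilon^2)$ samples. Moreover $\tr(\bR \rho) \geq 1 - O(\epsilon^2)$ implies $\Fid(\rho, \rho|_\bR) \geq \sqrt{\tr(\bR\rho)} \geq 1 - O(\epsilon^2)$ and hence $\DBur(\rho, \rho|_\bR) = O(\epsilon)$, so the triangle inequality for $\DBur$ yields $\DBur(\widehat{\brho}, \rho) = O(\epsilon)$, and a union bound over the constantly many failure events delivers the claimed $\geq 95\%$ success probability.

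\medskip

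The main obstacle is thus proving the geometric claim. My plan is to apply Jordan's lemma (\Cref{Jordans_lemma_rank_r_remark}) to each pair $(P, \widehat{\bP}_j)$ to produce overlap sequences $\omega_i^{(j)} \coloneq |\braket{u_i^{(j)}}{v_i^{(j)}}|$, and to exploit that $\rho = P/r$ is invariant under conjugation by every block-diagonal $W \in U(P)\times U(\overline P)$. Since each $\bU_j$ is Haar random, the law of $\widehat{\bP}_j$ is invariant under $\widehat{\bP}_j \mapsto W^\dagger \widehat{\bP}_j W$, so conditional on the overlaps the Jordan vectors lying in $P$ form a Haar-random orthonormal basis of $P$, those lying in $\overline P$ form an independent Haar-random orthonormal $r$-tuple in $\overline P$, and the two Jordan decompositions are themselves conditionally independent. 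Writing $V_j$ for the range of $\widehat{\bP}_j$ and expanding $\tr(\overline{\bR} P) = \sum_i \|\overline{\bR}\,\ket{u_i^{(1)}}\|^2$ in the $V_1$-Jordan basis, only the $V_1^\perp$-component of each $\ket{u_i^{(1)}}$ contributes, and the bound $\overline{\bR} \preceq \overline{V_2}$ gives
\begin{equation*}
\tr(\overline{\bR}\, P) \;\leq\; \sum_{i=1}^r \big(1 - (\omega_i^{(1)})^2\big)\cdot \|\overline{V_2}\,\ket{v_i^{(1),\perp}}\|^2,
\end{equation*}
where $\ket{v_i^{(1),\perp}}$ is the in-block direction orthogonal to $\ket{v_i^{(1)}}$. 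A sanity check in the two extreme regimes from the introduction (uniform small overlaps, and some overlaps exactly $1$ with the rest exactly $0$) shows that the right-hand side does collapse to $O(r\epsilon^2)$, using that, by the symmetrization, $V_2$ captures a Haar-random $(1-O(\epsilon))r$-dimensional portion of $P$ independently of $V_1$ while simultaneously being only a tiny Haar-random slice of $\overline P$. The delicate task will be handling the intermediate regime and the coupling between the two Jordan decompositions without letting factors of $d/(d-r)$ or $r$ leak in, and converting the resulting expectation bound into a high-probability statement, most likely via a Markov/second-moment argument beating the naive $O(r\epsilon)$ pessimistic bound.
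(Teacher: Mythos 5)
Your algorithm matches the paper's exactly, and your reduction to the geometric claim $\tr(\bR\rho)\ge 1-O(\epsilon^2)$ is the right one; the observation that the Haar twirl makes the law of $\widehat{\bP}_j$ invariant under $W=U_P\oplus U_{\overline P}$ is also the key symmetry the paper uses. The gap, which you flag yourself, is that the argument establishing the geometric claim is absent, and the bound you wrote down does not close it. Writing $\omega = \omega_i^{(1)}$, the vector $\ket{v_i^{(1),\perp}}$ decomposes as $\sqrt{1-\omega^2}\,\ket{u_i^{(1)}} - \omega\,\ket{u_i^{(1),\perp}}$ with $\ket{u_i^{(1),\perp}}\in\overline P$, so its $\overline P$-component has squared norm $\omega^2$; since $V_2$ occupies only an $O(\epsilon r/(d-r))$ sliver of $\overline P$, the projector $\overline V_2$ barely attenuates that component, and a block with $\omega\approx 1/\sqrt 2$ contributes a term of size $\Theta\big((1-\omega^2)\cdot\omega^2\big)=\Theta(1)$ to your sum. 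The trace-distance constraint $\sum_i\sqrt{1-\omega_i^2}\le\epsilon r$ only limits the \emph{number} of such intermediate blocks to $O(\epsilon r)$, giving $\tr(\overline\bR\,P)=O(\epsilon r)$ and hence $\tr(\overline\bR\,\rho)=O(\epsilon)$ --- one full factor of $\epsilon$ short. Replacing $\overline V_2$ by the true $\overline\bR$ does not obviously rescue this, since the problematic component lives in $\overline P$, essentially disjoint from both $V_1$ and $V_2$.

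The paper avoids the intermediate-overlap regime entirely by \emph{thresholding} the Jordan blocks rather than summing over all of them. It defines well-aligned subspaces $\bB_i=\ProjWAS(P\mid\widehat{\bP}_i)$, spanned only by those $P$-Jordan vectors whose block overlap satisfies $\omega^2\ge 1-\epsilon^2/\alpha^2$, together with their shadows $\bA_i=\ProjWAS(\widehat{\bP}_i\mid P)$ inside $\widehat{\bP}_i$. A counting argument from the trace-distance guarantee shows the discarded blocks number at most $\alpha r$, so $\rank(\bB_i)\ge(1-\alpha)r$, and every unit vector of $\supp(\bA_i)$ has overlap $\ge 1-\epsilon^2/\alpha^2$ with $P$. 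The symmetrization you identified, upgraded via L\'evy's lemma and a net argument on the low-dimensional space $\supp(\overline\bB_1)$, shows $\bB_1,\bB_2$ \emph{robustly cover} $P$ (full rank, Jordan overlaps $\le 0.1$); the resulting orthonormal basis of $\supp(P)$ is then \emph{lifted} vector-by-vector into $\supp(\bA_1+\bA_2)\subseteq\supp(\bR)$, with each lift retaining overlap $\ge 1-3\epsilon^2/\alpha^2$, yielding $\tr(\bR P)\ge r(1-3\epsilon^2/\alpha^2)$ directly. The missing idea in your sketch is precisely this threshold-then-lift mechanism: it converts a bound on the \emph{count} of bad blocks into a uniform per-vector $O(\epsilon^2)$ overlap guarantee, rather than trying to argue the bad blocks' contribution averages away.
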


We now describe how the new algorithm $\mathcal{A}'$ is constructed from $\mathcal{A}$.

\begin{definition}[The bootstrapped algorithm]\label{def_bootstrapped_alg}
Let $\mathcal{A}$ be an algorithm for rank-$r$ projector tomography that, when given $n$ samples of $\rho = P/r$, returns a rank-$r$ projector state $\widehat{\brho}$ such that $\Dtr(\widehat{\brho}, \rho) \leq \epsilon$ with probability $99\%$. The \emph{bootstrapped algorithm} $\mathcal{A}'$ is defined as follows. 

On input $\rho^{\otimes n'}$, where $n' = 2n + O(r^2/\epsilon^2)$:
\begin{enumerate}
    \item Pick a random unitary $\bU$. Give $n$ copies of $\bU \rho \bU^{\dagger}$ to $\alg$ and let $\bQ/r$ be its output. Write $\widehat{\bP}_1 = \bU^{\dagger} \bQ \bU$.
    \item Repeat this process a second time to construct $\widehat{\bP}_2$.
    \item Let $\bR$ be the projector onto $\Spann\{\widehat{\bP}_1, \widehat{\bP}_2\}$.
    \item Take $O(r^2/\epsilon^2)$ copies of $\rho$ and measure each of them with $\{\bR, \overline{\bR}\}$. Discard the post-measurement states corresponding to the outcome $\overline{\bR}$.
    \item The remaining post-measurement states $\rho|_{\bR}$ live inside $\bR$, which is a subspace of dimension at most $2r$.
    Using the Bures distance tomography algorithm of \cite{PSW25}, compute an estimate $\widehat{\brho}$ of $\rho|_{\bR}$ with Bures distance error $\epsilon$ using only $O(r^2/\epsilon^2)$ copies of $\rho|_{\bR}$.
    \item Output $\widehat{\brho}$ as the estimate for $\rho$.
\end{enumerate}
\end{definition}

\subsection{Proof overview}

In this subsection, we describe the main steps of our proof of \Cref{prop_bootstrapping}. In subsequent subsections, we fill in the technical details formally. We will see that each step occurs with high probability, given previous steps. We will assume that all previous steps have succeeded during our proof, and then address the success probability at the very end. 

\paragraph{Step 1.} We start by showing that because $\calA$ learns in trace distance, each of the $\widehat{\bP}_i$ must have a large-rank subprojector\footnote{By a \emph{subprojector} of $\Pi$, we mean a projector onto a subspace of $\supp(\Pi)$.} which is approximately ``aligned'' with $P$. We formalize this notion with the following definition. In this definition, $\alpha$ is a sufficiently small constant we will specify at a later stage in the proof.






\begin{definition}[Well-aligned subspaces]
    Let $\Pi_1$ and $\Pi_2$ be rank-$r$ projectors. Then the \emph{well-aligned subspace of $\Pi_1$ with respect to $\Pi_2$}, denoted $\WAS(\Pi_1 \mid \Pi_2)$, is defined as follows. Take a Jordan decomposition of $\Pi_1 = \sum_{i} \ketbra{u_i}$ and $\Pi_2 = \sum_i \ketbra{v_i}$, where $\ket{u_i}$ and $\ket{v_i}$ are the Jordan vectors in the $i$-th block. Write $\omega_i = | \braket{u_i}{v_i} |$. Then 
    \begin{equation*}
        \WAS(\Pi_1 \mid \Pi_2) = \Spann \{ \ket{u_i} \, \vert \, \omega_i^2 \geq 1 - \epsilon^2/\alpha^2 \}.
    \end{equation*}
    We will denote the projector onto the well-aligned subspace as $\ProjWAS(\Pi_1 \mid\Pi_2)$. 
\end{definition}

We now define 
\begin{equation}
    \bA_i \coloneq \ProjWAS(\widehat{\bP}_i\mid P), \qquad \bB_i \coloneq \ProjWAS(P\mid\widehat{\bP}_i). 
\end{equation}
Intuitively speaking, $\bA_i$ projects onto a subspace of $\supp(\widehat{\bP}_i)$ whose vectors have high overlap with $\supp(P)$, and $\bB_i$ projects onto a subspace of $\supp(P)$ whose vectors have high overlap with $\supp(\widehat{\bP}_i)$. Informally, $\bA_i$ is an approximate copy of $\bB_i$, and importantly, $\bA_i$ sits inside $\widehat{\bP}_i$, while $\bB_i$ sits inside $P$. That is, $\bA_i$ is our large-rank subprojector approximately ``aligned'' with $P$, and in particular, it is approximately $\bB_i$. 

In the first step of the proof, we formalize the above claims, and prove that these projectors have rank at least $(1-\alpha)\cdot \rank(P) = (1-\alpha) \cdot r$. We also show that the distributions of $\bA_i$ and $\bB_i$ are invariant under conjugation by unitaries of the form $U_P \oplus U_{\overline{P}}$. 

\paragraph{Step 2.} Next, we show that the two projectors $\bB_1$ and $\bB_2$, with high probability, ``cover'' $P$ in a robust sense. Not only does $\bB_1 + \bB_2$ have full rank, but the orthogonal complement of $\bB_1$ is approximately contained in $\bB_2$, and vice versa. In particular, in this second step we formally show that, with high probability, $\bB_1$ and $\bB_2$ \emph{robustly cover} $P$ in the following sense.

\begin{definition}
    Let $\Pi$ be a projector, and let $\Pi_1$ and $\Pi_2$ be subprojectors of $\Pi$. Then $\Pi_1$ and $\Pi_2$ \emph{robustly cover} $\Pi$ if two conditions are satisfied:
    \begin{itemize}
        \item Firstly, $\rank(\Pi_1 + \Pi_2) = \rank(\Pi)$.
        \item Take any Jordan block decomposition of $\Pi_{1}$ and $\Pi_{2}$. In a $2 \times 2$ block $B$, let $\Pi_{i}|_{B} = \ketbra*{w_{i,B}}$. The second condition is: $|\braket{w_{1,B}}{w_{2,B}}|^2 \leq 0.1$ for all such blocks $B$. 
    \end{itemize}
\end{definition}

Roughly speaking, this means that there is a complete ``copy" of $P$ inside $\supp(\bB_1 + \bB_2)$.
In particular, the second bullet says that the Jordan vectors of $\Pi_1$ and $\Pi_2$ form an almost orthogonal basis for $P$.
Intuitively, $\bB_1$ and $\bB_2$ should satisfy this definition because $\supp(\bB_1)$ and $\supp(\bB_2)$ are random subspaces of high rank, and therefore ``cover'' the entire space they sit in.

\paragraph{Step 3.} Now we use the fact that if $\bB_1$ and $\bB_2$ robustly cover $P$ to construct a basis of $P$ that we can ``lift'' to a set of $r$ nearby vectors contained in $\supp(\bA_1 + \bA_2)$. Informally, because there is a complete ``copy" of $P$ inside $\supp(\bB_1 + \bB_2)$, there is also an approximate ``copy'' of $P$ inside $\supp(\bA_1 + \bA_2)$. 
Looking ahead, our ultimate goal is to show that $\rho = P/r$ is roughly contained in the projector $\bR$ onto the subspace $\Spann\{\widehat{\bP}_1, \widehat{\bP}_2\}$, and showing that $P$ is roughly contained inside $\supp(\bA_1 + \bA_2)$ would suffice to show this, as $\supp(\bA_1 + \bA_2) \subseteq \Spann\{\widehat{\bP}_1, \widehat{\bP}_2\}$.

Consider a Jordan decomposition of $\bB_1$ and $\bB_2$, where we regard these as projectors in the space $\supp(P)$. Since $\bB_1 + \bB_2$ has full rank, any $1 \times 1$ block in the decomposition is fixed by one of $\bB_1$ or $\bB_2$. Thus, there are three types of blocks: $\calB_1$, $1 \times 1$ blocks which are fixed by $\bB_1$; $\calB_2$, $1 \times 1$ blocks which are not fixed by $\bB_1$, but are fixed by $\bB_2$; and $\calB_{12}$, the $2\times 2$ blocks. 

For $B \in \calB_1$, $\bB_1|_{B} = \ketbra{\bu_B}$ for some vector $\ket{\bu_B}$. Similarly, for $B \in \calB_2$, $\bB_2|_{B} = \ketbra{\bv_B}$ for some vector $\ket{\bv_B}$. Now consider $B \in \calB_{12}$. In this block, $\bB_1|_{B} = \ketbra{\bw_{1,B}}$ and $\bB_2|_{B} = \ketbra{\bw_{2,B}}$, with $| \braket*{\bw_{1,B}}{\bw_{2,B}}| \leq 0.1$. The vectors $\ket{\bw_{1,B}}$ and $\ket{\bw_{2,B}}$ are linearly independent, and span $B$, but are not necessarily orthonormal. However, if we define as the vector
\begin{equation*}
    \ket*{\bw_{1,B}^\perp} \propto \ket{\bw_{2,B}} - \braket*{\bw_{1,B}}{\bw_{2,B}} \cdot \ket{\bw_{1,B}},
\end{equation*}
then $\{ \ket{\bw_{1,B}}, \ket*{\bw_{1,B}^\perp}\}$ is an orthonormal basis for $B$. Thus 
\begin{equation*}
    \calO_P \coloneq \big\{ \ket{\bu_B} \big\}_{B \in \calB_1} \cup \big\{  \ket{\bv_B} \big\}_{B \in \calB_2} \cup \big\{ \ket{\bw_{1,B}}, \ket*{\bw_{1,B}^\perp} \big\}_{B \in \calB_{12}},
\end{equation*}
is an orthonormal basis for $\supp(P)$.

We now describe how to lift these basis vectors to a new set of vectors in $\supp(\bA_1 + \bA_2)$. The idea is, roughly speaking, to lift the vectors $\{\ket{\bu_B}\}$ and $\{\ket{\bw_{1,B}}\}$ to preimages under $P$ in $\bA_1$, and then normalize, obtaining $\{\ket{\widetilde{\bu}_B}\}$ and $\{\ket{\widetilde{\bw}_{1,B}}\}$ respectively. Likewise the vectors $\{\ket{\bv_B}\}$ and $\{\ket{\bw_{2,B}}\}$ are lifted to preimages in $\bA_2$ and normalized, giving $\{\ket{\widetilde{\bv}_B}\}$ and $\{\ket{\widetilde{\bw}_{2,B}}\}$ respectively. We now explain why this can be done. Take any $\ket{\bu_B} \in \supp(\bB_1)$ as an example. Since $\bA_1$ and $\bB_1$ are defined via the Jordan vectors of $\widehat{\bP}_1$ and $P$ respectively which are closely aligned, there is a vector $\ket{\widetilde{\bu}_B} \in \supp(\bA_1)$ such that $P \ket{\widetilde{\bu}_B} \propto \ket{\bu_B}$. In particular, if we write the sufficiently-aligned Jordan vectors of $P$ and $\widehat{\bP}_1$  as $\{ \ket{\bu_i}\}$ and $\{ \ket*{\widetilde{\bu}_i}\}$ respectively, and if
\begin{equation*}
    \ket{\bu_B} = \sum_{i} \braket{\bu_i}{\bu_B} \cdot \ket{\bu_i},
\end{equation*}
then the unnormalized vector
\begin{equation*}
    \sum_{i} \braket{\bu_i}{\bu_B} \cdot \frac{1}{\braket*{\bu_i}{\widetilde{\bu}_i}} \cdot \ket*{\widetilde{\bu}_i}
\end{equation*}
is mapped by $P$ to $\ket{\bu_B}$, since $P \ket{\widetilde{\bu}_i} = \ketbra{\bu_i} \cdot \ket*{\widetilde{\bu}_i} = \braket*{\bu_i}{\widetilde{\bu}_i} \cdot \ket{\bu_i}$. Normalizing gives us $\ket{\widetilde{\bu}_B}$ such that $P\ket{\widetilde{\bu}_B} \propto \ket{\bu_B}$. The construction is analogous for the aforementioned cases.

Finally, for the vectors in $\{  \ket*{\bw_{1,B}^\perp} \}_{B \in \calB_{12}}$, we lift each of its constituent vectors separately, i.e.\ as
\begin{equation*}
    \ket*{\widetilde{\bw}_{1,B}^\perp} \propto \ket{\widetilde{\bw}_{2,B}} - \braket*{\bw_{1,B}}{\bw_{2,B}} \cdot \ket{\widetilde{\bw}_{1,B}}.
\end{equation*}

\begin{definition}[The lift of $\calO_P$]
    The \emph{lift} of the basis $\calO_P$ is the set of vectors
    \begin{equation*}
    \widetilde{\calO}_P \coloneq \big\{ \ket{\widetilde{\bu}_B} \big\}_{B \in \calB_1} \cup \big\{  \ket{\widetilde{\bv}_B} \big\}_{B \in \calB_2} \cup \big\{ \ket{\widetilde{\bw}_{1,B}}, \ket*{\widetilde{\bw}_{1,B}^\perp} \big\}_{B \in \calB_{12}},
\end{equation*}
\end{definition}

We show that each of the $r$ vectors in the lift has overlap $1 - O(\epsilon^2)$ with $P$. Since these vectors sit within $\bR$, we are then able to use this to show that $\tr(\bR \cdot P/r) \geq 1 - O(\epsilon^2)$. 

\paragraph{Step 4.} Lastly, we use the inequality $\tr( \bR \cdot \rho) \geq 1- O(\epsilon^2)$ to conclude the main result. There are two main implications of the inequality:
\begin{itemize}
    \item Measuring $O(r^2/\epsilon^2)$ copies of $\rho$ will, with high probability, leave us with $O(r^2/\epsilon^2)$ copies of $\rho|_{\bR}$. 
    \item The state $\rho|_{\bR}$ is $O(\epsilon)$-close to $\rho$ in Bures distance. 
\end{itemize}
With these facts established, the correctness of the algorithm follows readily from the Bures distance tomography algorithm of \cite{PSW25}. This algorithm requires $O(r^2/\epsilon^2)$ copies of $\rho|_{\bR}$, a state in a subspace of dimension $\rank(\bR) \leq \rank(\widehat{\bP}_1)+\rank(\widehat{\bP}_2) = 2r$, to produce an estimate $\widehat{\brho}$ such that $\DBur(\widehat{\brho}, \rho|_{\bR}) \leq O(\epsilon)$ with high probability. Then $\DBur(\widehat{\brho}, \rho) \leq O(\epsilon)$ by the triangle inequality.

\subsection{Step 1: properties of the projectors $\bA_i$ and $\bB_i$}

\begin{lemma}\label{lem:properties_of_bA_and_bB}
    The projectors $\bA_i$ and $\bB_i$ have the following properties, with high probability:
    \begin{itemize}
        \item[(i)] $\rank(\bA_i) = \rank(\bB_i) \geq (1 - \alpha) \cdot r$. 
        \item[(ii)] For all $\ket{v} \in \supp(\bA_i)$, we have $\bra{v} P \ket{v} \geq 1 - \epsilon^2/\alpha^2$. 
    \end{itemize}
\end{lemma}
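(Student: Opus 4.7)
The plan is to reduce both claims to the trace distance bound $\Dtr(\widehat{\bP}_i/r, P/r) \leq \epsilon$, which holds with probability at least $99\%$. Since trace distance is unitarily invariant and $\widehat{\bP}_i/r = \bU^\dagger (\bQ/r) \bU$ where $\calA$ succeeded on input $\bU\rho\bU^\dagger$, this bound follows from the assumed correctness of $\calA$. I condition on this event for the rest of the proof.

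Fix $i$ and take a Jordan block decomposition of $P$ and $\widehat{\bP}_i$ as in \Cref{Jordans_lemma_rank_r_remark}, writing $P = \sum_{j=1}^r \ketbra{u_j}$ and $\widehat{\bP}_i = \sum_{j=1}^r \ketbra{v_j}$ with $\ket{u_j}, \ket{v_j}$ living in the $j$-th block and $\omega_j = |\braket{u_j}{v_j}|$. By \Cref{Jordans_lemma_td_fid_aff}, the trace distance bound translates to
\begin{equation*}
    \sum_{j=1}^r \sqrt{1 - \omega_j^2} \leq r \epsilon.
\end{equation*}
Let $S = \{j : \omega_j^2 \geq 1 - \epsilon^2/\alpha^2\}$ be the set of well-aligned blocks, so that by definition $\bA_i = \sum_{j \in S} \ketbra{v_j}$ and $\bB_i = \sum_{j \in S} \ketbra{u_j}$. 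In particular, $\rank(\bA_i) = \rank(\bB_i) = |S|$.

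For \textbf{(i)}, I apply a Markov-style counting argument. Every $j \notin S$ contributes at least $\sqrt{\epsilon^2/\alpha^2} = \epsilon/\alpha$ to the sum $\sum_j \sqrt{1-\omega_j^2} \leq r\epsilon$, so the number of such $j$ is at most $\alpha r$. Hence $|S| \geq (1-\alpha) r$.

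For \textbf{(ii)}, let $\ket{v} \in \supp(\bA_i)$ be any unit vector, and expand it in the orthonormal Jordan basis as $\ket{v} = \sum_{j \in S} c_j \ket{v_j}$, with $\sum_{j \in S} |c_j|^2 = 1$ (the $\ket{v_j}$ are orthonormal across and within distinct blocks since distinct Jordan blocks are orthogonal). The key observation is that the vectors from distinct Jordan blocks are orthogonal: since $\ket{u_k}$ lies in block $k$, we have $P\ket{v_k} = \sum_{j} \ket{u_j}\braket{u_j}{v_k} = \braket{u_k}{v_k} \ket{u_k}$, and consequently $\bra{v_j} P \ket{v_k} = \delta_{jk}\, \omega_k^2$. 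Therefore
\begin{equation*}
    \bra{v} P \ket{v} \;=\; \sum_{j,k \in S} \overline{c_j} c_k \bra{v_j} P \ket{v_k} \;=\; \sum_{j \in S} |c_j|^2 \, \omega_j^2 \;\geq\; (1 - \epsilon^2/\alpha^2) \sum_{j \in S} |c_j|^2 \;=\; 1 - \epsilon^2/\alpha^2,
\end{equation*}
as desired. Neither step poses a significant obstacle; the only subtlety is choosing the convention so that $\rank(\bA_i) = \rank(\bB_i)$ automatically, which is built into the symmetric definition of the well-aligned subspace in terms of the same set $S$ of blocks.
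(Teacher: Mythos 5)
Your proof is correct and essentially identical to the paper's: both reduce to the trace-distance formula from Jordan's lemma, bound the number of poorly-aligned blocks via the $\epsilon/\alpha$ threshold, and expand an arbitrary unit vector in the Jordan basis of $\bA_i$ for part (ii). The only cosmetic difference is that you phrase the rank bound as a direct Markov-style count while the paper phrases it as a contradiction, but these are the same argument.
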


\begin{proof}
    First, we observe that each $\widehat{\bP}_i$ is a good estimate for $P$: for both $i = 1$ and $2$, we have 
\begin{equation*}
        \dtr(P/r, \widehat{\bP}_i/r) = \dtr(\bU_i (P/r) \bU_i^\dagger, \bU_i (\widehat{\bP}_i/r) \bU_i^\dagger) = \dtr( \bU_i \rho \bU_i^\dagger, \bQ_i/r) \leq \epsilon,
\end{equation*}
with high probability. Next, take a Jordan decomposition of $P$ and $\widehat{\bP}_i$, and write $\ket*{\bu_j} \in P$ and $\ket*{\widetilde{\bu}_j} \in \widehat{\bP}_i$ for the Jordan vectors in the $j$-th nonzero block. Let $\bomega_j \coloneq \left| \braket*{\bu_j}{\widetilde{\bu}_j} \right|$, and $\bepsilon_j \coloneq (1 - \bomega_j^2)^{1/2}$. By \Cref{Jordans_lemma_td_fid_aff}, we have
        \begin{equation*}
            \dtr(P/r, \widehat{\bP}_i/r)  = \frac{1}{r} \sum_{j=1}^r \sqrt{1 - \bomega_j^2} = \frac{1}{r} \sum_{j=1}^r \bepsilon_j.
        \end{equation*}
Suppose, for sake of contradiction, that strictly fewer than $(1-\alpha) \cdot r$ of these blocks have $\bomega_j^2 \geq 1 - \epsilon^2/\alpha^2$, or equivalently, that strictly more than $\alpha \cdot r$ blocks have $\bepsilon_j > \epsilon/\alpha$. Then 
        \begin{equation*}
            \epsilon \geq \Dtr(P/r, \widehat{\bP}_i/r) = \frac{1}{r} \sum_{j=1}^{r} \bepsilon_j > \alpha \cdot  (\epsilon/\alpha) + (1-\alpha) \cdot 0 = \epsilon, 
        \end{equation*}
        which is our contradiction. Thus, there are at least $(1-\alpha) \cdot r$ Jordan blocks for which $\bomega_j^2 \geq 1 - \epsilon^2/\alpha^2$, which implies that $\dim( \WAS(P\mid  \widehat{\bP}_i)) = \dim( \WAS(\widehat{\bP}_i\mid P)) \geq (1-\alpha) \cdot r$. Since these dimensions are also the ranks of $\bA_i$ and $\bB_i$ respectively, we have property (i). 

    Write $J \coloneq \{ j \, : \, \bomega_j^2 \geq 1 - \epsilon^2/\alpha^2\}$. Then $\bA_i = \sum_{j \in J} \ketbra*{\widetilde{\bu}_j}$. For any unit vector $\ket{v} \in \supp(\bA_i)$, we can write $\ket{v} = \sum_{j \in J} \beta_j \ket*{\widetilde{\bu}_j}$, and we have
        \begin{equation*}
            \matrixel{v}{P}{v}
            = \bra{v} \Big(\sum_{j \in J} \ketbra{\bu_j}\Big) \ket{v}
            = \sum_{j \in J} |\beta_j|^2 \cdot \bomega_j^2 \geq \left(1 - \epsilon^2/\alpha^2 \right) \sum_{j \in J} |\beta_j|^2 = 1 -  \epsilon^2/\alpha^2. 
        \end{equation*}
        This proves item (ii). 

\end{proof}

\begin{lemma}
     The distribution of $\bB_i$ is invariant under conjugation by unitaries of the form $U_P \oplus U_{\overline{P}}$. 
\end{lemma}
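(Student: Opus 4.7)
The plan is to reduce the invariance of $\bB_i = \ProjWAS(P \mid \widehat{\bP}_i)$ under conjugation by $V \coloneq U_P \oplus U_{\overline{P}}$ to two separate facts: (i) an equivariance property of $\ProjWAS$, and (ii) the invariance of $\widehat{\bP}_i$ itself under conjugation by $V$. Concretely, I will first observe that $V$ commutes with $P$ (since $U_P$ acts inside $\supp(P)$ and $U_{\overline{P}}$ inside $\supp(\overline P)$), so $V P V^\dagger = P$.

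Next I would record the equivariance of $\ProjWAS$: for any unitary $W$ and any pair of rank-$r$ projectors $\Pi_1, \Pi_2$, we have
\begin{equation*}
W \cdot \ProjWAS(\Pi_1 \mid \Pi_2) \cdot W^\dagger \;=\; \ProjWAS(W \Pi_1 W^\dagger \mid W \Pi_2 W^\dagger).
\end{equation*}
This follows because Jordan vectors transform covariantly under simultaneous unitary conjugation: if $\{\ket{u_j}\}, \{\ket{v_j}\}$ are Jordan vectors of $(\Pi_1, \Pi_2)$, then $\{W\ket{u_j}\}, \{W\ket{v_j}\}$ are Jordan vectors of $(W\Pi_1 W^\dagger, W\Pi_2 W^\dagger)$ with the same overlaps $\omega_j$. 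Applying this with $W = V$ and using $V P V^\dagger = P$ gives $V \bB_i V^\dagger = \ProjWAS(P \mid V \widehat{\bP}_i V^\dagger)$, so it suffices to show that $V \widehat{\bP}_i V^\dagger$ has the same distribution as $\widehat{\bP}_i$.

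For the latter, recall that $\widehat{\bP}_i = \bU^\dagger\, \calA(\bU P \bU^\dagger)\, \bU$, where $\bU \sim \mu_H$ and $\calA(\cdot)$ denotes the (random) projector output of $\mathcal{A}$ on the rescaled input. Substituting $\bU' \coloneq \bU V^\dagger$ and using $V P V^\dagger = P$, a direct calculation gives
\begin{equation*}
V \widehat{\bP}_i V^\dagger \;=\; V \bU^\dagger \calA(\bU P \bU^\dagger)\bU V^\dagger \;=\; \bU'^\dagger \calA(\bU' P \bU'^\dagger) \bU'.
\end{equation*}
By right-invariance of the Haar measure, $\bU' \sim \mu_H$ as well, so the right-hand side is distributed exactly as $\widehat{\bP}_i$. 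Combining with the equivariance step, $V \bB_i V^\dagger \overset{d}{=} \bB_i$, which is the desired claim.

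The only subtlety worth checking carefully is the equivariance of $\ProjWAS$, since its definition depends on a choice of Jordan decomposition. However, the well-aligned subspace is a function only of the eigenspaces of $\Pi_1 + \Pi_2$ and of the restrictions of $\Pi_1, \Pi_2$ to those eigenspaces, all of which transform covariantly under $W$; alternatively, one can note that the overlaps $\omega_j$ are unitary invariants of the pair. Either way, the equivariance is robust to the choice of Jordan decomposition, and no additional obstacle arises.
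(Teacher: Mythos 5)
Your proof is correct and follows essentially the same two-part structure as the paper's: show that $\widehat{\bP}_i$ is distributionally invariant under conjugation by $W = U_P \oplus U_{\overline{P}}$, then push the invariance through the map $\widehat{\bP}_i \mapsto \bB_i$. Your argument for the first part is identical to the paper's up to a change of variables. The one place you diverge is the second part: the paper makes the equivariance of $\widehat{\bP}_i \mapsto \bB_i$ manifest by writing $\bB_i = f(P\widehat{\bP}_i P)$ where $f$ is spectral truncation (keeping eigenspaces with eigenvalue at least $1 - \epsilon^2/\alpha^2$), so that both the well-definedness of $\bB_i$ and the commutation with unitary conjugation are automatic. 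You instead argue abstractly that Jordan vectors transform covariantly, and then add a remark to patch up the well-definedness concern (since the Jordan decomposition is not unique). This is fine, but the paper's formulation via $P\widehat{\bP}_i P$ is the cleaner route precisely because it handles both issues at once; in fact, your remark that ``the well-aligned subspace is a function only of the eigenspaces of $\Pi_1 + \Pi_2$ and of the restrictions'' is slightly off the mark — the natural invariant object is $\Pi_1\Pi_2\Pi_1$ (whose eigendecomposition gives the $\ket{u_i}$ with eigenvalues $\omega_i^2$), which is exactly what the paper uses. I'd suggest stating the equivariance directly via $\WAS(\Pi_1 \mid \Pi_2) = \mathrm{span}$ of eigenvectors of $\Pi_1\Pi_2\Pi_1$ with eigenvalue $\geq 1 - \epsilon^2/\alpha^2$, which closes the gap without further remarks.
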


\begin{proof}
    First, we show the distribution of $\widehat{\bP}_i$ is invariant under such unitaries. Fix $W = U_P \oplus U_{\overline{P}}$. Note that with $\bU_i$ a Haar random unitary, $\bV_i \coloneq \bU_i W$ is Haar random too. Furthermore, since $W^\dagger P W = P$, we have
        \begin{equation*}
            \bU_i \rho \bU_i^\dagger  = (\bV_i W^\dagger)  \cdot \rho \cdot (W  \bV_i^\dagger) = \bV_i \cdot (W^\dagger \rho W) \cdot \bV_i^\dagger = \bV_i \rho \bV_i^\dagger.
        \end{equation*}
        Since both $\bU_i$ and $\bV_i$ are Haar random, $\bU_i^\dagger \mathcal{A}( \bU_i \rho \bU_i^\dagger) \bU_i$ and $\bV_i^\dagger \mathcal{A}(\bV_i \rho \bV_i^\dagger) \bV_i$ are identically distributed. Therefore, $\widehat{\bP}_i = \bU_i^\dagger \mathcal{A}( \bU_i \rho \bU_i^\dagger ) \bU_i$ is identically distributed to \begin{equation*}\bV_i^\dagger \mathcal{A}( \bV_i \rho \bV_i^\dagger ) \bV_i = W^\dagger \left(\bU_i^\dagger \mathcal{A} (\bU_i \rho \bU_i^\dagger) \bU_i \right) W = W^\dagger \widehat{\bP}_i W.\end{equation*}
        Thus, the claim holds for $\widehat{\bP}_i$. 

        Now we turn to $\bB_i$. Note that
        \begin{equation*}
            P \widehat{\bP}_i P = \sum_{j} \ketbra{\bu_j} \cdot \ketbra{\widetilde{\bu}_j} \cdot \ketbra{\bu_j} = \sum_j | \braket{\bu_j}{\widetilde{\bu}_j}|^2 \cdot \ketbra{\bu_j} = \sum_j |\bomega_j|^2 \cdot \ketbra{\bu_j},
        \end{equation*}
        whereas $\bB_i$ is $\sum_{j \in J} \ketbra{\bu_j}$, where $J = \{j \, : \, \bomega_j^2 \geq 1 - \epsilon^2/\alpha^2 \}$. Thus, $\bB_i$ can be formed from $P \widehat{\bP}_i P$, by taking the projector onto the eigenspaces with large enough eigenvalue. Write $f$ for this operation, so that $\bB_i = f( P \widehat{\bP}_i P)$. But $P \widehat{\bP}_i P$ is identically distributed to $P (W^\dagger \widehat{\bP}_i W) P = W^\dagger (P \widehat{\bP}_i P) W$, since $W$ and $W^\dagger$ both commute with $P$. Thus, $\bB_i = f( P \widehat{\bP}_i P)$ is identically distributed to $f( W^\dagger (P \widehat{\bP}_i P) W ) = W^\dagger f(P \widehat{\bP}_i P) W = W^\dagger \bB_i W$, where the first step holds since $f$ commutes with conjugation by a unitary. 
\end{proof}

\begin{corollary}
    Conditioned on $\rank(\bB_i) = r'$, $\bB_i$ is a Haar random rank-$r'$ projector on $\supp(P)$. 
\end{corollary}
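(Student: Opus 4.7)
The plan is to combine the previous lemma's unitary invariance with a standard uniqueness argument for Haar measures. First I would observe that $\bB_i$ is, by construction, always a subprojector of $P$, i.e.\ $\bB_i$ projects onto a subspace of $\supp(P)$. This is immediate from the definition $\bB_i = \ProjWAS(P \mid \widehat{\bP}_i)$, which is built from Jordan vectors of $P$ lying in $\supp(P)$.

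Next, I would specialize the previous lemma: since it holds for \emph{any} choice of $U_P$ and $U_{\overline{P}}$, it holds in particular when $U_{\overline{P}} = I_{\overline{P}}$. This tells us that for every unitary $U_P$ acting on $\supp(P)$, the random projector $(U_P \oplus I_{\overline{P}})^\dagger \bB_i (U_P \oplus I_{\overline{P}})$ is distributed identically to $\bB_i$. Because $\bB_i$ is supported inside $\supp(P)$, this conjugation simply applies $U_P$ on the $P$ block, so the induced distribution of $\bB_i$ viewed as an operator on $\supp(P)$ is invariant under conjugation by any unitary on $\supp(P)$.

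The remaining step is to pass to the conditional distribution given $\rank(\bB_i) = r'$. Since rank is invariant under unitary conjugation, the event $\{\rank(\bB_i) = r'\}$ is itself invariant under the action of $U(\supp(P))$, so the conditional distribution inherits the full $U(\supp(P))$-conjugation invariance. Finally, I invoke the standard fact (which follows from the uniqueness of the Haar measure) that there is a unique probability measure on rank-$r'$ orthogonal projectors on a fixed subspace which is invariant under conjugation by the unitary group of that subspace: the Haar measure on such projectors. Hence $\bB_i \mid \rank(\bB_i) = r'$ is a Haar random rank-$r'$ projector on $\supp(P)$.

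No serious obstacle arises here; the proof is essentially a one-line consequence of the preceding lemma once one observes that $\bB_i \subseteq P$ and that rank is a conjugation invariant. The only mild subtlety is ensuring that the conditioning event has positive probability in the relevant regimes (which is guaranteed by \Cref{lem:properties_of_bA_and_bB}(i) showing $\rank(\bB_i) \geq (1-\alpha)r$ with high probability), so that the conditional distribution is well-defined.
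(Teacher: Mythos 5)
Your proof is correct and follows essentially the same approach as the paper: restrict to unitaries $U_P \oplus I_{\overline{P}}$ to get $U(\supp(P))$-conjugation invariance of the law of $\bB_i$, note that the rank-$r'$ conditioning event is conjugation-invariant, and conclude Haar-ness. The only cosmetic difference is in the final step, where you invoke uniqueness of the invariant measure on the set of rank-$r'$ projectors, while the paper instead observes that $\bB_i$ is distributed as $\bU_P \bB_i \bU_P^\dagger$ for an independently drawn Haar $\bU_P$, and conditioning on rank yields a mixture of Haar measures, which is Haar.
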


\begin{proof}
    The distribution of $\bB_i$ (regarded as a subprojector on $\supp(P)$) is invariant under conjugation by unitaries $U_P$, by the previous lemma. That is, $\bB_i$ is identically distributed to $\bU_P \cdot \bB_i \cdot \bU_{P}^\dagger$ for $\bU_P \sim \mu_H(\supp(P))$, the Haar measure on $\supp(P)$. We can imagine that we draw $\bB$ first, and then $\bU_P$. If $\bB_i$ has fixed rank $r'$, then we end up with a Haar random rank-$r'$ projector on $\supp(P)$.
\end{proof}




\subsection{Step 2: $\bB_1$ and $\bB_2$ robustly cover $P$}

\begin{lemma} \label{lem:lemma_using_levys}
    For $\alpha$ a sufficiently small constant, and for all $r$ sufficiently large, we have $\matrixel{u}{\bB_2}{u} \geq 0.9$ for all vectors $\ket*{u} \in \supp(\overline{\bB}_1)$ with high probability. Here $\overline{\bB}_1$ is the complement of $\bB_1$ in $\supp(P)$. 
\end{lemma}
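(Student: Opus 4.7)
The plan is to combine L\'{e}vy's lemma (\Cref{levys}) with an $\epsilon$-net argument over $\supp(\overline{\bB}_1)$. The crucial point is that, after conditioning on Step~1, $\supp(\overline{\bB}_1)$ has complex dimension at most $\alpha r$, so its unit sphere admits a $\delta$-net of size only $(3/\delta)^{2\alpha r}$; for small $\alpha$ this is dominated by the $e^{\Omega(r)}$ L\'{e}vy concentration, so a union bound closes the argument.

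First I would condition on the high-probability events from Step~1 that $\rank(\bB_1), \rank(\bB_2) \geq (1-\alpha)r$, and fix $\bB_1$. By the corollary at the end of Step~1, $\bB_2$ (conditioned on $\rank(\bB_2) = r_2'$) is a Haar random rank-$r_2'$ projector on $\supp(P) \cong \C^r$, independent of $\bB_1$. For any fixed unit $\ket{v} \in \supp(P)$, writing $\bB_2 = \bW Q_0 \bW^\dagger$ for a reference rank-$r_2'$ projector $Q_0$ on $\supp(P)$ and a Haar random unitary $\bW$ on $\supp(P)$, the random variable $\matrixel{v}{\bB_2}{v} = \matrixel{\bW^\dagger v}{Q_0}{\bW^\dagger v}$ has the same distribution as $\matrixel{\bw}{Q_0}{\bw}$ for $\ket{\bw} \in \C^r$ Haar random. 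The function $\ket{w} \mapsto \matrixel{w}{Q_0}{w}$ is $1$-Lipschitz (\Cref{measurementslipschitz}) with Haar mean $r_2'/r \geq 1-\alpha$, so L\'{e}vy's lemma yields, for any $t > 0$,
\[
\Pr\bigl[ \matrixel{v}{\bB_2}{v} < 1 - \alpha - t \bigr] \leq 3\exp(-Ct^2 r).
\]

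Next I would pick a $\delta$-net $N$ of the unit sphere of $\supp(\overline{\bB}_1)$, of size at most $(3/\delta)^{2\alpha r}$, and union bound the above over $N$; the failure probability is at most $(3/\delta)^{2\alpha r} \cdot 3\exp(-Ct^2 r) = 3\exp(-r(Ct^2 - 2\alpha \log(3/\delta)))$. To extend from the net to all unit $\ket{u} \in \supp(\overline{\bB}_1)$, I would choose $\ket{v} \in N$ with $\|\ket{u}-\ket{v}\|_2 \leq \delta$ and apply $1$-Lipschitzness once more to obtain $\matrixel{u}{\bB_2}{u} \geq \matrixel{v}{\bB_2}{v} - \delta \geq 1 - \alpha - t - \delta$. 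Fixing constants $t, \delta > 0$ with $t + \delta \leq 0.05$, and then choosing $\alpha$ small enough that both $\alpha \leq 0.05$ and $2\alpha \log(3/\delta) < Ct^2$, yields $\matrixel{u}{\bB_2}{u} \geq 0.9$ for every $\ket{u} \in \supp(\overline{\bB}_1)$ with probability at least $1 - e^{-\Omega(r)}$, which for $r$ sufficiently large dominates the $O(1)$ failure probabilities from Step~1.

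The main obstacle is juggling the constants: the universal constant $C$ in L\'{e}vy's lemma is small, so one must simultaneously satisfy $\alpha + t + \delta \leq 0.1$ (to get $0.9$ rather than some smaller constant) and $2\alpha \log(3/\delta) < Ct^2$ (to make the union bound exponentially small). This is a routine two-parameter optimization, but it is precisely where the ``$\alpha$ sufficiently small'' and ``$r$ sufficiently large'' hypotheses enter the statement.
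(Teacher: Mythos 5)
Your proposal is correct and follows essentially the same route as the paper's proof: condition on the rank from Step~1, use the rotation-invariance of $\bB_2$ to reduce to Lévy concentration for $\ket{w}\mapsto\matrixel{w}{Q_0}{w}$, union bound over a net of $\supp(\overline{\bB}_1)$ whose size $e^{O(\alpha r)}$ is beaten by the $e^{-\Omega(r)}$ tail for $\alpha$ small, and then extend from the net by $1$-Lipschitzness. The only cosmetic differences are the net-size constant ($(3/\delta)^{2\alpha r}$ vs.\ the paper's $(5/2\gamma)^{2\alpha r}$ from \cite{HLSW04}) and the choice of $\ell_2$-mesh vs.\ trace-distance mesh, which are interchangeable here.
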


\begin{proof}
    Let $\mu_H(P)$ denote the Haar measure on $\supp(P)$. First, note that for any fixed projector $F_{r'}$ of rank $r'$ at least $(1-\alpha) \cdot r$ on $\supp(P)$, we have
    \begin{equation*}\E_{\ket*{\bu} \sim \mu_H(P)} \big[\matrixel{\bu }{F_{r'}}{\bu}\big] = \tr \Big(  \E_{\ket*{\bu} \sim \mu_H(P)} \big[\ketbra*{\bu}\big] \cdot F_{r'} \Big) = \tr \left( \frac{P}{r} \cdot F_{r'} \right) \geq  1 - \alpha,\end{equation*}
    where we have used \Cref{proj_sym_subspace} in the second step.
    We can apply L\'{e}vy's lemma (\Cref{levys}) and \Cref{measurementslipschitz} to the function $f: \supp(P) \to \R$ given by $f(\ket*{u}) = \matrixel{u}{F_{r'}}{u}$, to get
    \begin{align*}\Pr_{\ket*{\bu}\sim \mu_H(P)} \Big[ \matrixel{\bu}{F_{r'}}{\bu} < 1- \alpha - \beta \Big] & \leq \Pr_{\ket*{\bu}\sim \mu_H(P)} \Big[ \Big| \matrixel{\bu}{F_{r'}}{\bu} - \E_{\ket*{\bu} \in P} \big[\matrixel{\bu }{F_{r'}}{\bu}\big] \Big| > \beta \Big] \\
    &\leq C_1 \exp \left( -C_2 \beta^2 r \right),\end{align*}
    for some constants $C_1$ and $C_2$, and any $\beta$. 
    
    Since $\bB_1$ and $\bB_2$ are independently sampled, and since each has a distribution invariant under conjugation by unitaries $U_P$, we can regard $\bB_1$ as a fixed projector, and $\bB_2$ as a random projector. We condition on the rank of each projector being at least $(1-\alpha)\cdot r$. Further conditioned on $\rank(B_i) = r'$, we can view $\bB_2$ as a Haar-random rotation of a fixed projector $F_{r'}$. For any fixed $\ket{u}$ in $P$, and any $\beta$, we have
    \begin{align*}
        \Pr_{\bB_2} \Big[ \matrixel{u}{\bB_2}{u} < 1 - \alpha - \beta \Big] & = \sum_{r' = \lceil (1-\alpha) \cdot r\rceil}^{r} \Pr_{\bB_2} \Big[ \rank(\bB_2) = r'\Big] \cdot \Pr_{\bB_2} \Big[\matrixel{u}{\bB_2}{u} < 1 - \alpha - \beta \, \Big| \, \rank(\bB_2) = r' \Big] \\
        & = \sum_{r' = \lceil (1-\alpha) \cdot r\rceil}^{r} \Pr_{\bB_2} \Big[ \rank(\bB_2) = r'\Big] \cdot \Pr_{\bU \sim \mu_H(P)} \Big[\matrixel{u}{\bU F_{r'} \bU^\dagger}{u} < 1 - \alpha - \beta \Big] \\
        & = \sum_{r' = \lceil (1-\alpha) \cdot r\rceil}^{r} \Pr_{\bB_2} \Big[ \rank(\bB_2) = r'\Big] \cdot \Pr_{\ket{\bu} \sim \mu_H(P)} \Big[\matrixel{\bu}{F_{r'}} {\bu} < 1 - \alpha - \beta \Big] \\
        & \leq \sum_{r' = \lceil (1-\alpha) \cdot r\rceil}^{r} \Pr_{\bB_2} \Big[ \rank(\bB_2) = r'\Big] \cdot \Big(C_1 \exp \left( -C_2 \beta^2 r \right)\Big) \\
        & = C_1 \exp \left( -C_2 \beta^2 r \right).
    \end{align*}
    
    We apply this bound to all vectors $\ket{u_i} \in N_\gamma$. Here, $N_\gamma$ is a fixed net of mesh $\gamma$ for $\supp(\overline{\bB}_1)$, where $\gamma$ is a sufficiently small constant we pick later. By a net of mesh $\gamma$, we mean a set of states $\{ \ket{u_i} \}$ such that for all $\ket{u} \in \supp(\overline{\bB}_1)$, there exists a $\ket{u_i} \in N_\gamma$ such that $\Dtr(\ketbra{u}, \ketbra{u_i}) \leq \gamma$. Since $\rank(\overline{\bB}_1) \leq \alpha r$, we can take $|N_\gamma| \leq (5/2\gamma)^{2\alpha r}$ by \cite[Lemma II.4]{HLSW04}. Then, by a union bound,
    \begin{align*}\Pr_{\bB_2} \bigg[ \exists \ket*{u_i} \in N_\gamma \, : \,  \matrixel{u_i}{\bB_2}{u_i} < 1 - \alpha - \beta \bigg] & \leq |N_\gamma| \cdot C_1 \exp \left( - C_2 \beta^2 r \right) \\
    & \leq C_1 \exp \left( (C_3 \alpha - C_2 \beta^2)r \right),
    \end{align*}
    where we are writing $C_3 = 2\ln(5/2\gamma)$.
    If we fix $\beta$ and $\gamma$, so that $1 - \beta - \gamma > 0.95$ then provided $\alpha$ is a sufficiently small constant, we have (i) that $1-\alpha-\beta-\gamma > 0.9$, and (ii) that the exponent $C_3\alpha - C_2 \beta^2$ is a negative constant. Hence for $r > r_0$ with $r_0$ some sufficiently large constant, this probability is less than $0.01$. Finally, since $N_\gamma$ has mesh $\gamma$, and since $f$ has Lipschitz number $1$ (\Cref{measurementslipschitz}), with high probability over $\bB_2$, $\matrixel{u}{\bB_2}{u} \geq 1 - \alpha - \beta - \gamma > 0.9$ for all $\ket*{\varphi} \in \overline{\bB}_1$. This is because for any $\ket{u}$ there exists a $\ket{u_i} \in N_\gamma$ such that
    \begin{equation*}
        \left|\matrixel{u}{\bB_2}{u} - \matrixel{u_i}{\bB_2}{u_i} \right| \leq \Dtr( \ketbra{u}, \ketbra{u_i}) \leq \gamma. \qedhere
    \end{equation*}
\end{proof}

For the remainder of the proof, we assume $\alpha$ is a sufficiently small constant, and $r$ sufficiently large, for \Cref{lem:lemma_using_levys}. 

\begin{corollary} \label{cor:robust_covering}
    The projectors $\bB_1$ and $\bB_2$ robustly cover $P$ with high probability. 
\end{corollary}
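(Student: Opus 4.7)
My plan is to show that both conditions in the definition of robust covering follow almost immediately from \Cref{lem:lemma_using_levys} by testing that lemma's conclusion on carefully chosen vectors in $\supp(\overline{\bB}_1)$.

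For the rank condition $\rank(\bB_1 + \bB_2) = r$, I would argue by contradiction. If the sum had strictly smaller rank, then $\supp(\bB_1) + \supp(\bB_2)$ would be a proper subspace of $\supp(P)$, so there would exist a unit vector $\ket{u} \in \supp(P)$ orthogonal to both $\supp(\bB_1)$ and $\supp(\bB_2)$. In particular $\ket{u} \in \supp(\overline{\bB}_1)$, but $\matrixel{u}{\bB_2}{u} = 0$, contradicting the conclusion of \Cref{lem:lemma_using_levys} (which requires this quantity to be at least $0.9$).

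For the overlap condition on $2 \times 2$ Jordan blocks, fix such a block $B$ and let $\ket{w_{1,B}^\perp} \in B$ denote a unit vector orthogonal to $\ket{w_{1,B}}$ inside $B$. Since $\bB_1|_B = \ketbra{w_{1,B}}$, the vector $\ket{w_{1,B}^\perp}$ lies in $\supp(\overline{\bB}_1)$, so \Cref{lem:lemma_using_levys} gives $\matrixel{w_{1,B}^\perp}{\bB_2}{w_{1,B}^\perp} \geq 0.9$. Because $B$ is $\bB_2$-invariant with $\bB_2|_B = \ketbra{w_{2,B}}$, the left-hand side equals $|\braket{w_{2,B}}{w_{1,B}^\perp}|^2$. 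Writing $\ket{w_{2,B}} = \alpha \ket{w_{1,B}} + \beta \ket{w_{1,B}^\perp}$ with $|\alpha|^2 + |\beta|^2 = 1$, this yields $|\beta|^2 \geq 0.9$, hence $|\braket{w_{1,B}}{w_{2,B}}|^2 = |\alpha|^2 \leq 0.1$, as required.

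Both deductions hold on the high-probability event guaranteed by \Cref{lem:lemma_using_levys}, so the corollary follows. I do not anticipate any serious obstacle here: the main conceptual point is to recognize that the orthogonal direction $\ket{w_{1,B}^\perp}$ inside each $2 \times 2$ Jordan block lies in $\supp(\overline{\bB}_1)$ and that this block is $\bB_2$-invariant, which lets the scalar quantity from \Cref{lem:lemma_using_levys} be translated directly into the required bound on the overlap $|\braket{w_{1,B}}{w_{2,B}}|^2$.
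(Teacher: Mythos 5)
Your proposal is correct and follows essentially the same approach as the paper: both conditions of robust covering are obtained by applying \Cref{lem:lemma_using_levys} to vectors in $\supp(\overline{\bB}_1)$ — an arbitrary vector outside $\supp(\bB_1)$ for the rank condition, and the orthogonal complement of $\ket{w_{1,B}}$ within a $2\times 2$ Jordan block for the overlap condition. The only cosmetic difference is that you phrase the rank step by contradiction where the paper argues directly that $\bra{u}(\bB_1+\bB_2)\ket{u} \neq 0$ for all $\ket{u}\in\supp(P)$; the content is identical.
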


\begin{proof}
    For $\ket{u} \in \supp(P)$, we have $\bra{u} (\bB_1 + \bB_2) \ket{u} = \matrixel{u}{\bB_1}{u} + \matrixel{u}{\bB_2}{u}$. By \Cref{lem:lemma_using_levys}, if $\matrixel{u}{\bB_1}{u}=0$, then $\matrixel{u}{\bB_2}{u} \geq 0.9$, so $\bra{u} (\bB_1 + \bB_2) \ket{u} \neq 0$ for any $\ket{u} \in \supp(P)$. Thus, $\bB_1 + \bB_2$, an operator with support on $\supp(P)$, is full-rank in $\supp(P)$. That is, $\rank(\bB_1 + \bB_2) = \rank(P)$. 

    Now take a Jordan block decomposition of $\bB_1$ and $\bB_2$, and consider a $2 \times 2$ block in the decomposition, $B$. Suppose $\bB_1|_{B} = \ketbra{\bu}$ and $\bB_2|_{B} = \ketbra{\bv}$, and write $\bomega_B = |\braket{\bu}{\bv}|$. Let $\ket*{\bu^\perp}$ be a vector in $B$ such that $\braket*{\bu}{\bu^\perp} = 0$. Then $\matrixel*{\bu^\perp}{\bB_1}{\bu^\perp} = 0$ so that \Cref{lem:lemma_using_levys} implies $\matrixel*{\bu^\perp}{\bB_2}{\bu^\perp} = | \braket*{\bu^\perp}{\bv} |^2 \geq 0.9$. Thus, \begin{equation*}\bomega_B^2 = | \braket*{\bu}{\bv} |^2 = 1 - | \braket*{\bu^\perp}{\bv} |^2 \leq 0.1.\end{equation*}
    Since $B$ is arbitrary, $\bB_1$ and $\bB_2$ robustly cover $P$. 
\end{proof}

\subsection{Step 3: lifting $\calO_P$ to $\widetilde{\calO}_P$}

\begin{lemma}
    We have
    \begin{equation*}
        \big| \braket*{\psi}{\widetilde{\psi}} \big|^2 \geq 1 - 3\epsilon^2/\alpha^2,
    \end{equation*}
    for each matching pair $\ket{\psi} \in \calO_P$ and $\ket*{\widetilde{\psi}} \in \widetilde{\calO}_P$ (e.g.\ $\ket{\bu_B}$ and $\ket{\widetilde{\bu}_B}$). 
\end{lemma}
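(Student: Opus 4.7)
The plan is to split the four types of matching pairs in $\calO_P \times \widetilde{\calO}_P$ into two groups based on how the lift is constructed: the ``direct'' lifts (the pairs $(\ket{\bu_B}, \ket{\widetilde{\bu}_B})$, $(\ket{\bv_B}, \ket{\widetilde{\bv}_B})$, and $(\ket{\bw_{1,B}}, \ket{\widetilde{\bw}_{1,B}})$), for which the lifted vector lives entirely inside one of the $\supp(\bA_i)$, and the ``combined'' case $(\ket{\bw_{1,B}^\perp}, \ket{\widetilde{\bw}_{1,B}^\perp})$, for which the lift involves both $\bA_1$ and $\bA_2$. I will dispatch the direct-lift cases uniformly using item (ii) of \Cref{lem:properties_of_bA_and_bB}, and handle the combined case by an explicit computation in the $2\times 2$ Jordan block $B$.

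For the direct-lift cases, the construction produces $P\ket{\widetilde{\psi}} = c \ket{\psi}$ for some scalar $c$; after fixing a convenient global phase I may assume $c \geq 0$. Since $\ket{\psi} \in \supp(P)$, we have $\ket{\psi} = P\ket{\psi}$, so
\[
    \braket{\psi}{\widetilde{\psi}} \;=\; \bra{\psi} P \ket{\widetilde{\psi}} \;=\; c \braket{\psi}{\psi} \;=\; c.
\]
Moreover, $\ket{\widetilde{\psi}} \in \supp(\bA_i)$ by construction, so item (ii) of \Cref{lem:properties_of_bA_and_bB} gives
\[
    c^2 \;=\; \|P\ket{\widetilde{\psi}}\|^2 \;=\; \bra{\widetilde{\psi}} P \ket{\widetilde{\psi}} \;\geq\; 1 - \epsilon^2/\alpha^2,
\]
which already beats the claimed bound of $1 - 3\epsilon^2/\alpha^2$.

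For the hard case, I would set $z \coloneq \braket{\bw_{1,B}}{\bw_{2,B}}$ and write $\ket{\widetilde{\bw}_{i,B}} = c_i \ket{\bw_{i,B}} + \ket{\xi_i}$, where $\ket{\xi_i} \in \supp(\overline{P})$, $c_i \geq 0$, $c_i^2 \geq 1 - \epsilon^2/\alpha^2$, and $\|\xi_i\|^2 \leq \epsilon^2/\alpha^2$ (by the direct-lift analysis applied to $\ket{\widetilde{\bw}_{i,B}}$). The orthogonality of $\supp(P)$ and $\supp(\overline{P})$ makes the cross overlaps clean: $\braket{\bw_{i,B}}{\widetilde{\bw}_{j,B}} = c_j \braket{\bw_{i,B}}{\bw_{j,B}}$. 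Expanding the definitions of $\ket{\bw_{1,B}^\perp}$ and $\ket{\widetilde{\bw}_{1,B}^\perp}$ and performing the cancellations gives, after a short computation,
\[
    \braket{\bw_{1,B}^\perp}{\widetilde{\bw}_{1,B}^\perp} \;=\; \frac{c_2 \sqrt{1 - |z|^2}}{N}, \qquad N^2 \;=\; 1 + |z|^2 - 2 \Re\!\left(z^* \braket{\widetilde{\bw}_{1,B}}{\widetilde{\bw}_{2,B}}\right).
\]

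Using $\braket{\widetilde{\bw}_{1,B}}{\widetilde{\bw}_{2,B}} = c_1 c_2 z + \braket{\xi_1}{\xi_2}$, Cauchy--Schwarz on $|\braket{\xi_1}{\xi_2}| \leq \epsilon^2/\alpha^2$, and $c_1 c_2 \geq 1 - \epsilon^2/\alpha^2$, I would upper bound $N^2 \leq (1 - |z|^2) + 2(|z|^2 + |z|)\epsilon^2/\alpha^2$. Dividing and invoking the robust-covering bound $|z|^2 \leq 0.1$ from \Cref{cor:robust_covering} (which both controls $|z|$ and ensures $1 - |z|^2 \geq 0.9$), I would then combine with $c_2^2 \geq 1 - \epsilon^2/\alpha^2$ to conclude
\[
    |\braket{\bw_{1,B}^\perp}{\widetilde{\bw}_{1,B}^\perp}|^2 \;=\; \frac{c_2^2 (1 - |z|^2)}{N^2} \;\geq\; 1 - 3\epsilon^2/\alpha^2,
\]
which proves the lemma. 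The main obstacle is the combined case: three distinct error sources (the two defects $1 - c_i^2$ and the off-diagonal overlap $|z|$) all enter the numerator and denominator of the expression for $\braket{\bw_{1,B}^\perp}{\widetilde{\bw}_{1,B}^\perp}$, and care is needed to keep the leading constant down to $3$.
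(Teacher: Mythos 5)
Your proposal is correct and follows essentially the same route as the paper: the easy cases are handled by the same $P$-overlap bound from item (ii) of the projector lemma, and the hard $\ket{\bw_{1,B}^\perp}$ case is settled by explicitly computing the numerator $\braket{\bx}{\widetilde{\bx}}$ and the norms $\braket{\bx}$ and $\braket{\widetilde{\bx}}$, with Cauchy–Schwarz controlling the $\overline{P}$-component cross-term (your $\braket{\xi_1}{\xi_2}$ is exactly the paper's $\matrixel{\widetilde{\bw}_{2,B}}{\overline{P}}{\widetilde{\bw}_{1,B}}$). The only cosmetic difference is that you pre-decompose each lifted vector as $c_i\ket{\bw_{i,B}}+\ket{\xi_i}$ with $\ket{\xi_i}\in\supp(\overline{P})$, which cleans up the cross-term bookkeeping and even yields a slightly tighter final constant, but it is the same argument.
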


\begin{proof}
    We have a couple cases:
\begin{enumerate}
    \item[(i)] If $\ket{\psi} \in \{ \ket{\bu_B} \}_{B \in \calB_1} \cup \{ \ket{\bv_B} \}_{B \in \calB_2} \cup \{ \ket{\bw_{1,B}} \}_{B \in \calB_{12}}$, then 
    \begin{equation*}
         \big| \braket*{\psi}{\widetilde{\psi}} \big|^2 = \big| \matrixel*{\psi}{P}{\widetilde{\psi}} \big|^2 = \frac{\big|\matrixel*{\widetilde{\psi}}{P}{\widetilde{\psi}}\big|^2}{\norm*{ P \ket*{\widetilde{\psi}}}^2} = \matrixel*{\widetilde{\psi}}{P}{\widetilde{\psi}} \geq 1 - \epsilon^2/\alpha^2,
    \end{equation*}
    where in the first step we have used $P\ket{\psi} = \ket{\psi}$, in the second we have used $P \ket*{\widetilde{\psi}}/ \norm*{ P \ket*{\widetilde{\psi}}} = \ket{\psi}$, and in the last step we have used \Cref{lem:properties_of_bA_and_bB}, item (ii). 

    \item[(ii)] The last case is if $\ket*{{\psi}} = \ket*{{\bw}^\perp_{1,B}}$, for some block $B \in \calB_{12}$. This case is more involved, as $\ket*{\widetilde{\bw}^\perp_{1,B}}$ has a more complicated construction. We have
    \begin{equation*}
        \ket*{{\bw}^\perp_{1,B}} \propto \ket{\bw_{2,B}} - \braket{\bw_{1,B}}{\bw_{2,B}} \cdot  \ket{\bw_{1,B}} \eqcolon \ket{\bx},
    \end{equation*}
    and
    \begin{equation*}
        \ket*{\widetilde{\bw}^\perp_{1,B}} \propto \ket{\widetilde{\bw}_{2,B}} - \braket{{\bw}_{1,B}}{{\bw}_{2,B}} \cdot \ket{\widetilde{\bw}_{1,B}} \eqcolon \ket*{\widetilde{\bx}},
    \end{equation*} 
    where $\ket{\bx}$ and $\ket*{\widetilde{\bx}}$ are unnormalized vectors.
    We start by writing:
    \begin{equation}
        \big|\braket*{\widetilde{\bw}^\perp_{1,B}}{{\bw}^\perp_{1,B}}\big|^2 = \frac{\big| \braket*{\widetilde{\bx}}{\bx}\big|^2}{\braket*{\widetilde{\bx}} \cdot \braket*{\bx}}. \label{eq:ratio_start}
    \end{equation}
    We first consider the numerator, expanding $\braket*{\widetilde{\bx}}{\bx}$ as 
    \begin{equation*}
        \braket*{\widetilde{\bx}}{\bx}  = \braket*{\widetilde{\bw}_{2,B}}{\bw_{2,B}} + \bomega_B^2 \cdot \braket*{\widetilde{\bw}_{1,B}}{\bw_{1,B}} - \braket{\bw_{1,B}}{\bw_{2,B}} \cdot \braket*{\widetilde{\bw}_{2,B}}{\bw_{1,B}}- \braket{\bw_{2,B}}{\bw_{1,B}} \cdot \braket*{\widetilde{\bw}_{1,B}}{\bw_{2,B}},
    \end{equation*}
    where $\bomega_B^2 = | \braket{\bw_{1,B}}{\bw_{2,B}}|^2$. Now we use the fact that 
    \begin{equation}
        \braket*{\widetilde{\bw}_{i,B}}{\bw_{j,B}} = \matrixel*{\widetilde{\bw}_{i,B}}{P}{\bw_{j,B}} =  \norm*{ P \ket*{\widetilde{\bw}_{i,B}}} \cdot \braket{\bw_{i,B}}{\bw_{j,B}}. \label{eq:norm_fact}
    \end{equation}
    This gives:
    \begin{align*}
        \braket*{\widetilde{\bx}}{\bx}  & = \norm*{ P \ket*{\widetilde{\bw}_{2,B}}} + \bomega_B^2 \cdot \norm*{ P \ket*{\widetilde{\bw}_{1,B}}}  - \bomega_B^2 \cdot \norm*{ P \ket*{\widetilde{\bw}_{2,B}}} - \bomega_B^2 \cdot \norm*{ P \ket*{\widetilde{\bw}_{1,B}}} \nonumber \\
        & = (1 - \bomega_B^2) \cdot \norm*{ P \ket*{\widetilde{\bw}_{2,B}}}. 
    \end{align*}
    We can bound this via
    \begin{equation}
        \norm*{ P \ket*{\widetilde{\bw}_{i,B}}}^2 = \matrixel*{\widetilde{\bw}_{i,B}}{P}{\widetilde{\bw}_{i,B}} \geq 1 - \epsilon^2/\alpha^2, \label{eq:norm_P_bound}
    \end{equation}
    using \Cref{lem:properties_of_bA_and_bB}, item (ii). Thus, the numerator can be lower bounded as
    \begin{equation}
        |\braket*{\widetilde{\bx}}{\bx}|^2 \geq (1 - \bomega_B^2)^2 \cdot \big(1 - \epsilon^2/\alpha^2 \big). \label{eq:numerator}
    \end{equation}
    
    Now we consider the denominator. First,
    \begin{equation}
        \braket{\bx}{\bx} = \braket{\bw_{2,B}} + \bomega_B^2 \cdot \braket{\bw_{1,B}} - 2 \bomega_B^2 = 1 - \bomega_B^2. \label{eq:denominator_1}
    \end{equation}
    Next, 
    \begin{align}
        \braket*{\widetilde{\bx}} & = \braket*{\widetilde{\bw}_{2,B}} + \bomega_B^2 \cdot \braket*{\widetilde{\bw}_{1,B}} - \braket{\bw_{1,B}}{\bw_{2,B}} \cdot \braket*{\widetilde{\bw}_{2,B}}{\widetilde{\bw}_{1,B}} - \braket{\bw_{2,B}}{\bw_{1,B}} \cdot \braket*{\widetilde{\bw}_{1,B}}{\widetilde{\bw}_{2,B}} \nonumber \\
        & = 1 + \bomega_B^2 - 2 \Re \Big[\braket{\bw_{1,B}}{\bw_{2,B}} \cdot \braket*{\widetilde{\bw}_{2,B}}{\widetilde{\bw}_{1,B}} \Big]. \label{eq:denom_intermediate}
    \end{align}
    The last term can be bounded as:
    \begin{align*}
        & \Re \Big[\braket{\bw_{1,B}}{\bw_{2,B}} \cdot \braket*{\widetilde{\bw}_{2,B}}{\widetilde{\bw}_{1,B}} \Big] \\
        & = \Re \Big[\braket{\bw_{1,B}}{\bw_{2,B}} \cdot \matrixel*{\widetilde{\bw}_{2,B}}{(P + \overline{P})}{\widetilde{\bw}_{1,B}} \Big] \\
        & = \Re \Big[\braket{\bw_{1,B}}{\bw_{2,B}} \cdot \matrixel*{\widetilde{\bw}_{2,B}}{P}{\widetilde{\bw}_{1,B}} \Big] + \Re \Big[\braket{\bw_{1,B}}{\bw_{2,B}} \cdot \matrixel*{\widetilde{\bw}_{2,B}}{\overline{P}}{\widetilde{\bw}_{1,B}} \Big] \\
        & = \bomega_B^2 \cdot \norm*{ P \ket*{\widetilde{\bw}_{1,B}}} \cdot \norm*{ P \ket*{\widetilde{\bw}_{2,B}}} + \Re \Big[\braket{\bw_{1,B}}{\bw_{2,B}} \cdot \matrixel*{\widetilde{\bw}_{2,B}}{\overline{P}}{\widetilde{\bw}_{1,B}} \Big]\\
        & \geq \bomega_B^2 \cdot \norm*{ P \ket*{\widetilde{\bw}_{1,B}}} \cdot \norm*{ P \ket*{\widetilde{\bw}_{2,B}}} - | \braket{\bw_{1,B}}{\bw_{2,B}}| \cdot \norm*{ \overline{P} \ket*{\widetilde{\bw}_{2,B}}} \cdot \norm*{ \overline{P} \ket*{\widetilde{\bw}_{1,B}}} \tag{Cauchy-Schwarz},
    \end{align*}
    where the second-to-last step is because $P\ket{\widehat{\bw}_{i, B}} = \Vert P \ket{\widehat{\bw}_{i, B}}\Vert \cdot \ket{\bw_{i, B}}$.
    We can bound this further using \Cref{eq:norm_P_bound}, and 
    \begin{equation*}
        \norm*{ \overline{P} \ket*{\widetilde{\bw}_{B,i}}}^2 = \matrixel*{\widetilde{\bw}_{B,i}}{\overline{P}}{\widetilde{\bw}_{B,i}} = 1 - \matrixel*{\widetilde{\bw}_{B,i}}{P}{\widetilde{\bw}_{B,i}} \leq \epsilon^2/\alpha^2. 
    \end{equation*}
    Thus, 
    \begin{equation*}
        \Re \Big[\braket{\bw_{1,B}}{\bw_{2,B}} \cdot \braket*{\widetilde{\bw}_{2,B}}{\widetilde{\bw}_{1,B}} \Big] \geq \bomega_B^2 \cdot \big( 1 - \epsilon^2/ \alpha^2 \big) - \bomega_B \cdot \big( \epsilon^2/\alpha^2 \big) \geq \bomega_B^2 - 2 \bomega_B \cdot \epsilon^2/\alpha^2.
    \end{equation*}
    Substituting this back into \Cref{eq:denom_intermediate} gives:
    \begin{equation*}
        \braket*{\widetilde{\bx}} \leq 1 + \bomega_B^2 - 2 \big( \bomega_B^2 - 2 \bomega_B \cdot \epsilon^2/\alpha^2\big) = (1 - \bomega_B^2) \cdot \Big( 1 + \frac{4\bomega_B}{1 - \bomega_B^2} \cdot \epsilon^2/\alpha^2\Big) \leq (1 - \bomega_B^2)\cdot \big( 1 + 2 \epsilon^2/\alpha^2),
    \end{equation*}
    using $\bomega_B^2 \leq 0.1$ in the last step. From this and \Cref{eq:denominator_1}, the denominator can be upper bounded as:
    \begin{equation*}
        \braket{\bx} \cdot \braket*{\widetilde{\bx}} \leq (1 - \bomega_B^2)^2 \cdot \big( 1 +2 \epsilon^2/\alpha^2\big).
    \end{equation*}
    Finally, from this and \Cref{eq:ratio_start,eq:numerator,} gives:
    \begin{equation*}
          \big|\braket*{\widetilde{\bw}^\perp_{1,B}}{{\bw}^\perp_{1,B}}\big|^2 =  \frac{\big| \braket*{\widetilde{\bx}}{\bx}\big|^2}{\braket*{\widetilde{\bx}} \cdot \braket*{\bx}} \geq \frac{(1-\bomega_B^2)^2 \cdot \big( 1 - \epsilon^2/\alpha^2 \big)}{(1- \bomega_B^2)^2 \cdot \big( 1 + 2 \epsilon^2/\alpha^2 \big)} \geq \big( 1- \epsilon^2/\alpha^2 \big) \cdot \big(1 - 2\epsilon^2/\alpha^2 \big) \geq 1 - 3\epsilon^2/\alpha^2.
    \end{equation*}
\end{enumerate}
So, in all cases we have $|\braket*{{\psi}}{\widetilde{\psi}}|^2 \geq 1 - 3\epsilon^2/\alpha^2$. 
\end{proof}

\begin{corollary} \label{cor:overlap_of_P_and_R}
    We have $\tr(\bR \cdot \rho) \geq 1 - O(\epsilon^2)$.
\end{corollary}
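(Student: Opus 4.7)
The plan is to use the $r$ lifted vectors in $\widetilde{\calO}_P$ as a witness that $\bR$ captures nearly all of $P$. My starting point is that $\calO_P$ is an orthonormal basis of $\supp(P)$, so writing its elements as $\ket{\psi_1}, \ldots, \ket{\psi_r}$, we have $P = \sum_{i=1}^r \ketbra{\psi_i}$ and therefore
\begin{equation*}
    \tr(\bR \cdot \rho) = \frac{1}{r} \tr(\bR \cdot P) = \frac{1}{r} \sum_{i=1}^r \matrixel{\psi_i}{\bR}{\psi_i} = \frac{1}{r} \sum_{i=1}^r \norm{\bR \ket{\psi_i}}^2.
\end{equation*}
The goal then reduces to showing $\norm{\bR \ket{\psi_i}}^2 \geq 1 - O(\epsilon^2)$ for every $i$.

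The key step is to leverage the matching lifted vector $\ket*{\widetilde{\psi}_i} \in \widetilde{\calO}_P$. By construction each such $\ket*{\widetilde{\psi}_i}$ is a unit vector contained in $\supp(\bA_1) + \supp(\bA_2) \subseteq \Span\{\widehat{\bP}_1, \widehat{\bP}_2\} = \supp(\bR)$, so $\bR \ket*{\widetilde{\psi}_i} = \ket*{\widetilde{\psi}_i}$. Inserting $\bR$ and applying Cauchy-Schwarz then gives
\begin{equation*}
    \big|\braket*{\widetilde{\psi}_i}{\psi_i}\big| = \big|\matrixel*{\widetilde{\psi}_i}{\bR}{\psi_i}\big| \leq \norm{\bR \ket*{\widetilde{\psi}_i}} \cdot \norm{\bR \ket{\psi_i}} = \norm{\bR \ket{\psi_i}}.
\end{equation*}
Squaring and invoking the previous lemma's bound $|\braket*{\widetilde{\psi}_i}{\psi_i}|^2 \geq 1 - 3\epsilon^2/\alpha^2$ then yields $\norm{\bR \ket{\psi_i}}^2 \geq 1 - 3\epsilon^2/\alpha^2$ for each $i$. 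Summing over the $r$ pairs gives $\tr(\bR \cdot \rho) \geq 1 - 3\epsilon^2/\alpha^2 = 1 - O(\epsilon^2)$, since $\alpha$ was fixed as a small constant earlier in the proof.

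I do not anticipate any real obstacle: essentially all of the work was carried out in the preceding lemma, and what remains is bookkeeping. The one point worth checking carefully is the containment $\widetilde{\calO}_P \subseteq \supp(\bR)$, which follows by inspecting the three types of lifted vectors defined in Step 3: $\ket*{\widetilde{\bu}_B}$ and $\ket*{\widetilde{\bw}_{1,B}}$ live in $\supp(\bA_1) \subseteq \supp(\widehat{\bP}_1)$; $\ket*{\widetilde{\bv}_B}$ and $\ket*{\widetilde{\bw}_{2,B}}$ live in $\supp(\bA_2) \subseteq \supp(\widehat{\bP}_2)$; and each $\ket*{\widetilde{\bw}^\perp_{1,B}}$, being a normalized linear combination of $\ket*{\widetilde{\bw}_{1,B}}$ and $\ket*{\widetilde{\bw}_{2,B}}$, still lies in $\Span\{\widehat{\bP}_1, \widehat{\bP}_2\}$.
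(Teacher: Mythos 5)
Your proof is correct and follows the same route as the paper: decompose $\tr(\bR \cdot P)$ over the orthonormal basis $\calO_P$, and lower-bound each diagonal matrix element $\matrixel{\psi_i}{\bR}{\psi_i}$ by $|\braket*{\widetilde{\psi}_i}{\psi_i}|^2$ using that the lifted vector lies in $\supp(\bR)$. You make the Cauchy-Schwarz step and the containment $\widetilde{\calO}_P \subseteq \supp(\bR)$ more explicit than the paper does; in fact the paper's displayed decomposition appears to contain a typo (it writes $\matrixel*{\bw_{2,B}}{\bR}{\bw_{2,B}}$ where the orthonormal-basis decomposition requires $\matrixel*{\bw_{1,B}^\perp}{\bR}{\bw_{1,B}^\perp}$, since $\ket{\bw_{1,B}}$ and $\ket{\bw_{2,B}}$ are not orthogonal), and your argument correctly uses $\ket*{\bw_{1,B}^\perp}$ as intended.
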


\begin{proof}
Recall
    \begin{equation*}
    \calO_P \coloneq \big\{ \ket{\bu_B} \big\}_{B \in \calB_1} \cup \big\{  \ket{\bv_B} \big\}_{B \in \calB_2} \cup \big\{ \ket{\bw_{1,B}}, \ket*{\bw_{1,B}^\perp} \big\}_{B \in \calB_{12}}
\end{equation*}
forms an orthonormal basis for $\supp(P)$, and that each vector in the lift $\widetilde{\calO}_P$,
\begin{equation*}
    \widetilde{\calO}_P \coloneq \big\{ \ket{\widetilde{\bu}_B} \big\}_{B \in \calB_1} \cup \big\{  \ket{\widetilde{\bv}_B} \big\}_{B \in \calB_2} \cup \big\{ \ket{\widetilde{\bw}_{1,B}}, \ket*{\widetilde{\bw}_{1,B}^\perp} \big\}_{B \in \calB_{12}},
\end{equation*}
is in $\supp(\bA_1 + \bA_2) \subseteq \supp(\bR)$. So, we have 
\begin{align*}
    \tr(\bR \cdot P) & = \sum_{B \in \calB_1} \matrixel*{\bu_B}{\bR}{\bu_B} + \sum_{B \in \calB_2}\matrixel*{\bv_B}{\bR}{\bv_B} + \sum_{B \in \calB_{12}} \matrixel*{\bw_{1,B}}{\bR}{\bw_{1,B}}+ \sum_{B \in \calB_{12}}\matrixel*{\bw_{2,B}}{\bR}{\bw_{2,B}} \\
    & \geq \sum_{B \in \calB_1} |\braket*{\bu_B}{\widetilde{\bu}_B}|^2 + \sum_{B \in \calB_2}|\braket*{\bu_B}{\widetilde{\bv}_B}|^2 + \sum_{B \in \calB_{12}} |\braket*{\bw_{1,B}}{\widetilde{\bw}_{1,B}}|^2+ \sum_{B \in \calB_{12}}|\braket*{\bw_{2,B}}{\widetilde{\bw}_{2,B}}|^2 \\
    & \geq r \cdot \big( 1 - 3\epsilon^2/\alpha^2 \big),
\end{align*}
where in the last step, we have used that there are $r$ terms across all four sums, since they index vectors forming an orthonormal basis of $\supp(P)$ which has dimension $r$, and that each term is at least $(1 - 3\epsilon^2/\alpha^2)$ by the previous lemma. The result follows since $\rho = P/r$. 
\end{proof}

\subsection{Step 4: Bures distance learning in $\bR$}

\begin{lemma}
    We have the following:
    \begin{itemize}
        \item[(i)] With high probability, measuring $O(r^2/\epsilon^2)$ copies with $\{ \bR, \overline{\bR} \}$ leaves us with $O(r^2/\epsilon^2)$ copies of $\rho|_{\bR}$. 
        \item[(ii)] $\DBur(\rho, \rho|_{\bR}) \leq O(\epsilon)$. 
    \end{itemize}
\end{lemma}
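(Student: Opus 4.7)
The plan is to derive both claims from the single inequality $\tr(\bR \cdot \rho) \geq 1 - O(\epsilon^2)$ established in \Cref{cor:overlap_of_P_and_R}. Writing $p = \tr(\bR \cdot \rho)$, each individual measurement of a copy of $\rho$ with $\{\bR, \overline{\bR}\}$ yields outcome $\bR$ with probability $p$, and conditioned on outcome $\bR$ the post-measurement state is always the same state $\rho|_{\bR} = \bR \rho \bR/p$.

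For part (i), take $N = C r^2/\epsilon^2$ for a sufficiently large constant $C$. The number of ``failures'' (outcome $\overline{\bR}$) is a sum of independent Bernoullis with mean $(1-p)N = O(\epsilon^2) \cdot N = O(r^2)$. Since $r^2$ is much smaller than $N = \Theta(r^2/\epsilon^2)$ for small $\epsilon$, a Chernoff bound (or even Markov's inequality) ensures that with high probability the number of failures is at most, say, $N/2$, leaving at least $N/2 = \Theta(r^2/\epsilon^2)$ copies of $\rho|_{\bR}$. By choosing the constants appropriately, this yields the desired $O(r^2/\epsilon^2)$ copies with high probability.

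For part (ii), I would compute the fidelity $\Fid(\rho, \rho|_{\bR})$ directly, exploiting the special structure of projector states. Since $\rho = P/r$, we have $\sqrt{\rho} = P/\sqrt{r}$, so
\begin{equation*}
    \sqrt{\rho}\,\rho|_{\bR}\,\sqrt{\rho} = \frac{1}{r^2 p} \cdot P \bR P \bR P = \frac{1}{r^2 p}\,(P\bR P)^2,
\end{equation*}
where we used $P^2 = P$. Since $P\bR P$ is PSD, its square root is itself, giving
\begin{equation*}
    \Fid(\rho, \rho|_{\bR}) = \tr\sqrt{\sqrt{\rho}\,\rho|_{\bR}\,\sqrt{\rho}} = \frac{1}{r\sqrt{p}} \tr(P \bR P) = \frac{\tr(\bR \cdot P)}{r\sqrt{p}} = \sqrt{p}.
\end{equation*}
Using \Cref{cor:overlap_of_P_and_R}, $\Fid(\rho,\rho|_{\bR}) = \sqrt{p} \geq \sqrt{1 - O(\epsilon^2)} \geq 1 - O(\epsilon^2)$, so $\DBur(\rho, \rho|_{\bR}) = \sqrt{2(1 - \sqrt{p})} \leq O(\epsilon)$, as required.

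Neither part presents a serious obstacle: part (i) is a routine concentration argument and part (ii) reduces to a short identity after using $\sqrt{\rho} = P/\sqrt{r}$. I would note that going through trace distance instead (via a gentle-measurement bound) would only yield $\DBur = O(\sqrt{\epsilon})$, losing the square root, which is why the direct fidelity calculation above is essential — it is only possible because projector states have such a clean square root.
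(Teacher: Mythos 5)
Your proposal is correct and follows essentially the same route as the paper: part (i) via Markov/concentration on the number of $\overline{\bR}$ outcomes, and part (ii) by computing $\Fid(\rho,\rho|_{\bR})$ exactly using $\sqrt{\rho}=P/\sqrt r$ and the PSD-ness of $P\bR P$ to get $\Fid=\sqrt{\tr(\bR\rho)}$, then converting to a Bures bound. Your closing remark—that a gentle-measurement/trace-distance argument would only give $O(\sqrt\epsilon)$, so the exact fidelity computation is needed—is a worthwhile observation and matches the spirit of the paper's derivation.
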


\begin{proof}
\begin{itemize}
    \item[(i)] From \Cref{cor:overlap_of_P_and_R}, we have that the probability of obtaining a copy of $\rho_{\bR}$ is
    \begin{equation*}
        \tr(R \cdot \rho) \geq 1 - O(\epsilon^2).
    \end{equation*}
    In particular, this is with high probability for sufficiently small $\epsilon$. Thus, $O(r^2/\epsilon^2)$ copies of $\rho$ suffice to obtain $O(r^2/\epsilon^2)$ copies of $\rho|_{\bR}$, by Markov's inequality.
    \item[(ii)] Note that
    \begin{equation*}
        \rho|_{\bR} = \frac{\bR \rho \bR}{\tr(\rho \bR)} = \frac{\bR P \bR}{\tr(P \bR)},
    \end{equation*}
    so that
    \begin{equation*}
        \Fid(\rho, \rho|_{\bR}) = \tr \sqrt{ \sqrt{\rho} \cdot \rho|_{\bR} \cdot \sqrt{\rho} } = \tr \sqrt{ \frac{ P \cdot (\bR P \bR) \cdot  P}{r \tr( P \bR)}} = \frac{\tr(P \bR P)}{\sqrt{ r \tr(P \bR)}} = \sqrt{ \frac{\tr(P\bR)}{r}}.
    \end{equation*}
    Thus, by \Cref{cor:overlap_of_P_and_R}, 
    \begin{equation*}
        \Fid(\rho, \rho|_{\bR}) \geq \sqrt{1 - O(\epsilon^2)} \geq 1 - O(\epsilon^2). 
    \end{equation*}
    Converting this to Bures distance, we get:
    \begin{equation*}
        \DBur(\rho, \rho|_{\bR}) = \sqrt{2 ( 1 - \Fid(\rho, \rho|_{\bR}))} \leq O(\epsilon). \qedhere
    \end{equation*}
\end{itemize}    
\end{proof}

\begin{lemma}
    The bootstrapped algorithm succeeds with probability at least $95\%$ given $2n + O(r^2/\epsilon^2)$ samples. 
\end{lemma}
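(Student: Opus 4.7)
The plan is to compose the high-probability guarantees established in Steps 1--4 via a union bound, while separately verifying that the sample count is $2n + O(r^2/\epsilon^2)$. The overall correctness argument is essentially bookkeeping: we enumerate every event whose success was invoked during the previous analysis, control the failure probability of each to at most a small constant, and combine them.

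Concretely, I would identify six failure events to bound. First, the two invocations of $\calA$ in Steps 1--2 of \Cref{def_bootstrapped_alg} must each return a rank-$r$ projector estimate $\widehat{\bP}_i$ with $\Dtr(\widehat{\bP}_i/r, \bU_i \rho \bU_i^\dagger) \leq \epsilon$; by hypothesis each fails with probability at most $1\%$. Second, conditional on the success of the two $\calA$ calls, \Cref{lem:properties_of_bA_and_bB} gives the rank and alignment properties of $\bA_i$ and $\bB_i$ deterministically, so this contributes no additional failure. Third, \Cref{cor:robust_covering} requires \Cref{lem:lemma_using_levys}, whose failure probability is a small constant (we chose parameters so that it is at most $1\%$). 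Fourth, the sampling step must yield sufficiently many post-measurement copies in $\supp(\bR)$: by \Cref{cor:overlap_of_P_and_R} the per-copy success probability is $1 - O(\epsilon^2)$, so a Chernoff/Markov bound shows that drawing $O(r^2/\epsilon^2)$ copies leaves at least $\Omega(r^2/\epsilon^2)$ copies of $\rho|_{\bR}$ except with probability $o(1)$. Fifth, the PSW25 Bures-distance tomography subroutine, which requires $O(r^2/\epsilon^2)$ copies of a state living in a subspace of dimension at most $2r$, must return an estimate $\widehat{\brho}$ with $\DBur(\widehat{\brho}, \rho|_{\bR}) \leq \epsilon$; we can invoke it at high probability by its standard guarantee, which again we can make at most $1\%$ failure.

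Assuming all five events hold, the triangle inequality for Bures distance gives
\begin{equation*}
    \DBur(\widehat{\brho}, \rho) \leq \DBur(\widehat{\brho}, \rho|_{\bR}) + \DBur(\rho|_{\bR}, \rho) \leq \epsilon + O(\epsilon) = O(\epsilon),
\end{equation*}
using the $\DBur(\rho, \rho|_{\bR}) \leq O(\epsilon)$ bound from the previous lemma. Summing the five failure probabilities gives a total failure probability of at most $5\%$, so the algorithm succeeds with probability at least $95\%$.

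For the sample complexity, the two calls to $\calA$ use $n$ samples each, the projection-measurement step in Step 4 of \Cref{def_bootstrapped_alg} uses $O(r^2/\epsilon^2)$ copies, and the PSW25 subroutine in Step 5 uses another $O(r^2/\epsilon^2)$ copies, totaling $2n + O(r^2/\epsilon^2)$. The only mild subtlety is that the constants hidden in the $O(r^2/\epsilon^2)$ must be chosen so that the Markov/Chernoff bound in event four actually delivers the number of copies that PSW25 demands; this is a standard constant-factor overhead and presents no real obstacle. The main ``content'' of the proof is simply the union bound; all the hard analytic work has already been done in Steps 1--3.
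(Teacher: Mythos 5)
Your proposal is correct and follows essentially the same approach as the paper: enumerate the five probabilistic failure events (two calls to $\calA$, the robust-covering event from \Cref{cor:robust_covering}, the sampling step, and the PSW25 subroutine), note that each fails with probability at most $1\%$, and union bound. You fill in slightly more detail than the paper's terse proof — the observation that \Cref{lem:properties_of_bA_and_bB} is deterministic given the $\calA$ successes, and the explicit triangle inequality combining the PSW25 guarantee with $\DBur(\rho, \rho|_{\bR}) = O(\epsilon)$ — but this is just making explicit what the paper leaves implicit.
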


\begin{proof}
    There are only a small number of probabilistic steps that must succeed for the algorithm to return a good estimate. Each succeeds with high probability, perhaps conditioned on the previous steps. These are:
    \begin{itemize}
        \item[(i)] We need $\Dtr(\rho, \widehat{\bP}_1/r) \leq \epsilon$.
        \item[(ii)] We need $\Dtr(\rho, \widehat{\bP}_2/r) \leq \epsilon$.
        \item[(iii)] We need $\bB_1$ and $\bB_2$ to robustly cover $P$, which holds with high probability by \Cref{cor:robust_covering}.
        \item[(iv)] We need $O(r^2/\epsilon^2)$ samples to prepare $O(r^2/\epsilon^2)$ copies of $\rho|_{\bR}$, and this succeeds with high probability.
        \item[(v)] Finally, the Bures distance learning algorithm of \cite{PSW25} succeeds with high probability.
    \end{itemize}
    Union bounding over these five events gives us the claimed success probability of $95\%$.
\end{proof}

\section{Lower bounds on learning in trace distance}

In this section, we use our bootstrapping algorithm to conclude lower bounds on the sample complexity required to learn a rank-$r$ projector in trace distance.

\begin{theorem}[A lower bound on learning rank-$r$ projector states in trace distance]\label{thm:lower_bound_td_projectors}
    Any rank-$r$ projector tomography algorithm learning to trace distance $\epsilon > 0$ requires at least $n = \Omega(rd/\epsilon^2)$ samples, for $r \in [r_0, c_1\cdot d]$, and $\epsilon < \epsilon_0$, where $r_0$ is a sufficiently large constant, and $c_1$ and $\epsilon_0$ are sufficiently small constant. 
\end{theorem}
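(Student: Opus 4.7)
The plan is to chain together the two main results developed earlier in the paper: the bootstrapping reduction of \Cref{prop_bootstrapping}, which upgrades a trace-distance learner into a Bures-distance learner, together with the Bures-distance lower bound of \Cref{projector_lower_bound}. Concretely, suppose for contradiction that $\calA$ is an algorithm for rank-$r$ projector tomography which uses $n$ copies and outputs an estimate $\widehat{\brho}$ with $\dtr(\rho, \widehat{\brho}) \leq \epsilon$ with high probability. I will first invoke \Cref{WLOG_projector_output} so that, at the cost of replacing $\epsilon$ by $2\epsilon$, I may assume $\calA$ always outputs a rank-$r$ projector state. The success probability can be kept at $99\%$ via \Cref{lem:boosting_success_probability} if needed.

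Next, I would apply \Cref{prop_bootstrapping} to this upgraded $\calA$. This produces an algorithm $\calA'$ for rank-$r$ projector tomography that uses $n' = 2n + O(r^2/\epsilon^2)$ copies and outputs a $\widehat{\brho}$ with $\dbur(\rho, \widehat{\brho}) \leq C\epsilon$ with probability at least $95\%$, for some absolute constant $C$. A single further application of \Cref{lem:boosting_success_probability} boosts the success probability of $\calA'$ above $99\%$ at the cost of a further constant factor in the number of copies and in the final error. Thus $\calA'$ is a Bures-distance learner for rank-$r$ projector states to accuracy $O(\epsilon)$ using $O(n) + O(r^2/\epsilon^2)$ copies.

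Now I apply \Cref{projector_lower_bound}: any such Bures-distance learner, for $r \leq d/2$ and $\epsilon$ below a constant threshold, must use at least $\Omega(rd/\epsilon^2)$ copies. Therefore
\begin{equation*}
    2n + O(r^2/\epsilon^2) \;=\; \Omega\!\left(\frac{rd}{\epsilon^2}\right).
\end{equation*}
To solve this for $n$, I need the additive term $O(r^2/\epsilon^2)$ to be absorbed by the right-hand side. This is exactly where the hypothesis $r \leq c_1 \cdot d$ comes in: choosing the constant $c_1$ sufficiently small (relative to the implicit constants coming from \Cref{prop_bootstrapping} and \Cref{projector_lower_bound}) guarantees $O(r^2/\epsilon^2) \leq \tfrac{1}{2}\cdot \Omega(rd/\epsilon^2)$, so that the additive slack can be subtracted off and one obtains $n = \Omega(rd/\epsilon^2)$.

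The main obstacle, and really the only delicate point, is bookkeeping the various constants: the error $\epsilon$ gets rescaled by an $O(1)$ factor when we pass from $\calA$ to $\calA'$ (from the bootstrapping, the projector-output upgrade, and the probability boosting), and the constant threshold $\epsilon_0$ in \Cref{thm:lower_bound_td_projectors} must be chosen small enough that after all these blow-ups the final error still falls under the threshold required by \Cref{projector_lower_bound}. Similarly, the lower cutoff $r \geq r_0$ is inherited from \Cref{prop_bootstrapping}, and the upper cutoff $r \leq c_1 \cdot d$ is what allows the $O(r^2/\epsilon^2)$ overhead of the bootstrapping to be dominated by the target $\Omega(rd/\epsilon^2)$ bound. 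No new technical ideas are needed beyond careful constant tracking.
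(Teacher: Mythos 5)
Your proof is correct and follows essentially the same route as the paper: invoke the bootstrapping reduction (\Cref{prop_bootstrapping}), boost the success probability (\Cref{lem:boosting_success_probability}), apply the Bures-distance lower bound (\Cref{projector_lower_bound}), and use $r \leq c_1 d$ to absorb the additive $O(r^2/\epsilon^2)$ overhead. You are slightly more careful than the paper in explicitly invoking \Cref{WLOG_projector_output} up front to ensure the output is a projector state (a hypothesis that \Cref{prop_bootstrapping} requires but which the paper's proof leaves implicit).
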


\begin{proof}
    Suppose an algorithm $\calA$ existed that could learn rank-$r$ projectors to trace distance $\epsilon$ with probability at least $99\%$, for such $r$ and $\epsilon$, using $n \leq c_2 rd/\epsilon^2$ samples, for some constant $c_2$ we pick later. Then, by \Cref{prop_bootstrapping}, we could bootstrap it to an algorithm $\calA'$ that learns to Bures distance $O(\epsilon)$ using $n \leq 2c_2 rd/\epsilon^2 + c_3 r^2/\epsilon^2 \leq (2c_2 + c_1c_3) rd/\epsilon^2$ samples, with probability $95\%$, for some constant $c_3$ (arising from the Bures distance learning algorithm of \cite{PSW25}). By \Cref{lem:boosting_success_probability}, we can convert this to an algorithm learning to $O(\epsilon)$ with $99\%$ probability, using $O( (2c2 + c_1c_3) rd/\epsilon^2)$ copies. For both $c_1$ and $c_2$ sufficiently small, this contradicts \Cref{projector_lower_bound}, which requires that $n \geq rd/128\epsilon^2$. Therefore, having chosen such a $c_2$ sufficiently small, we must have $n = \Omega(rd/\epsilon^2)$. 
\end{proof}

This result further implies the following lower bound on learning generic rank-$r$ mixed states.

\begin{theorem}[A lower bound on learning rank-$r$ mixed states in trace distance]\label{thm:lower_bound_td}
    Given a rank-$r$ mixed state $\rho \in \C^{d \times d}$, $n = \Omega(rd/\epsilon^2)$ copies are required to estimate it to trace distance error $\epsilon > 0$, for sufficiently small $\epsilon$, and $d >1$. 
\end{theorem}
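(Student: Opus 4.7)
The plan is to obtain this statement as an immediate corollary of the projector lower bound, Theorem~\ref{thm:lower_bound_td_projectors}. The key observation is that every rank-$r$ projector state $P/r$ is itself a rank-$r$ mixed state, so any algorithm $\calA$ that solves rank-$r$ mixed state tomography to trace distance $\epsilon$ using $n$ copies, in particular solves rank-$r$ projector tomography to the same error with the same number of copies. Invoking Theorem~\ref{thm:lower_bound_td_projectors} then yields $n = \Omega(rd/\epsilon^2)$ on the nose.

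To cover the entire range $r \in [1,d]$, the only thing left to do is patch the two boundary regimes of $r$ that fall outside the admissible window $[r_0,\, c_1 d]$ of Theorem~\ref{thm:lower_bound_td_projectors}. First, in the small-rank regime $r < r_0$, the target $\Omega(rd/\epsilon^2)$ is just $\Omega(d/\epsilon^2)$, and this already follows from the pure state lower bound of Proposition~\ref{pure_state_lower_bound}: given a pure state $\ket{\psi} \in \C^{d-r+1}$, embed it into the rank-$r$ projector state $\rho_\psi \coloneqq (\ketbra{\psi} + E_0)/r$ on $\C^d$, where $E_0$ is a \emph{fixed} rank-$(r-1)$ projector orthogonal to the subspace carrying $\ket{\psi}$; then for any two such states $\dtr(\rho_\psi, \rho_\phi) = \dtr(\ketbra{\psi}, \ketbra{\phi})/r$, so an $\epsilon$-accurate learner for $\rho$ gives an $(r\epsilon)$-accurate learner for $\ket{\psi}$, and the pure state bound in dimension $d-r+1 = \Theta(d)$ supplies $\Omega(d/(r\epsilon)^2) = \Omega(d/\epsilon^2)$ copies, $r$ being a constant.

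Second, in the large-rank regime $r > c_1 d$, the target $\Omega(rd/\epsilon^2) = \Omega(d^2/\epsilon^2)$ can be obtained by the dual padding trick: restrict attention to rank-$r$ states of the form $(P' + E_0)/r$, where $P'$ is an unknown rank-$r' \coloneqq \lfloor c_1 d\rfloor$ projector and $E_0$ is a fixed rank-$(r-r')$ projector orthogonal to $P'$'s ambient subspace. Trace distances between such states differ from trace distances between the corresponding rank-$r'$ projector states $P'/r'$ by the constant factor $r'/r = \Theta(1)$ (since $r \in (c_1 d, d]$ forces $r' = \Theta(r)$), so an algorithm for rank-$r$ tomography in this family yields a rank-$r'$ projector tomography algorithm with the same sample complexity and an $O(\epsilon)$ loss in accuracy. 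Since $r' \in [r_0, c_1 d]$ by construction (for $d$ larger than an absolute constant), Theorem~\ref{thm:lower_bound_td_projectors} applies and gives $\Omega(r' d/\epsilon^2) = \Omega(d^2/\epsilon^2)$ copies, as needed.

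There is no real obstacle in this argument: all the mathematical content lives in Theorem~\ref{thm:lower_bound_td_projectors}, and the only thing that must be checked is that the padding preserves rank exactly $r$ (immediate from the orthogonality of $E_0$) and that the parameter choices land inside the admissible windows for Proposition~\ref{pure_state_lower_bound} and Theorem~\ref{thm:lower_bound_td_projectors} once $\epsilon$ is taken sufficiently small and $d$ sufficiently large.
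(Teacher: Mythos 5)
Your proposal is correct and follows the same three-case structure as the paper's proof: invoke \Cref{thm:lower_bound_td_projectors} directly for $r$ in the mid-range, appeal to the pure-state bound (\Cref{pure_state_lower_bound}) for constant $r$, and fall back to rank-$(cd)$ projectors for $r$ near $d$. The one place you go further than the paper is that you explicitly pad with a fixed orthogonal subprojector $E_0$ so that the constructed hard instances have rank \emph{exactly} $r$, and then observe that the padding rescales trace distances by a constant, whereas the paper implicitly treats a rank-$r$ learner as one that must also handle all ranks up to $r$ and thereby reduces to pure states / rank-$cd$ projectors without any embedding. Both versions are valid; your padding argument is slightly more careful on this rank-exactness point, but the mathematical content (which lemma is invoked in which regime and why the parameters land in the admissible windows) is identical.
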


\begin{proof}
    We combine the following observations:
    \begin{itemize}
        \item[(i)] For $r \in [r_0, c \cdot d]$ and $d$ large enough so that $r_0 < c \cdot d$, we can directly appeal to \Cref{thm:lower_bound_td_projectors} since any general rank-$r$ tomography algorithm is also a rank-$r$ projector tomography algorithm.
        \item[(ii)] For $r \in [c \cdot d,d]$, with $d$ large enough so that $r_0 < c \cdot d$, we can use rank-$(c\dot d)$ projectors instead to obtain a lower bound of $n = \Omega(d^2/\epsilon^2) = \Omega(rd/\epsilon^2)$. 
        \item[(iii)] If $r < r_0$ and $d > 1$, then we can obtain a lower bound of $n = \Omega(d/\epsilon^2)$ from the pure state Bures distance learning lower bound, \Cref{pure_state_lower_bound}. This is because for pure states, we have 
        \begin{equation*}\frac{1}{\sqrt{2}} \DBur \leq \Dtr \leq \DBur,\end{equation*}
        using the pure state formulas for trace distance, fidelity and Bures distance, given in \Cref{def:td,def:fid,def:Bur}. Thus, learning to $\epsilon$ trace distance is equivalent to learning to $O(\epsilon)$ Bures distance, and we conclude an $\Omega(d/\epsilon^2) = \Omega(dr/\epsilon^2)$ lower bound in this case. 
    \end{itemize}
    Thus, for sufficiently small $\epsilon$ to cover all three cases, we have a lower bound of $\Omega(rd/\epsilon^2)$. 
\end{proof}

\section{The pretty good measurement}
\newcommand{\dP}{\mathrm{d}P}
\renewcommand{\Sym}{\mathrm{Sym}}
\newcommand{\dx}{\mathrm{d}x}

This section treats the Pretty Good Measurement (PGM), a natural measurement which we show has optimal sample complexity for the problem of projector tomography. 


\begin{definition}[The Pretty Good Measurement] \label{def:pgm}
    Let $\{ \rho_i \}_{i=1}^m \subseteq \mathbb{C}^{d \times d}$ be a finite set of density matrices, and let $\{ \alpha_i \}_{i=1}^m$ be a probability distribution on $[m]$. We define the average state as $S \coloneq \sum_{i=1}^m \alpha_i \rho_i$. Then the \emph{Pretty Good Measurement} associated with this ensemble is the POVM $\{ M_i \}_{i=1}^m$ defined by
\begin{align*}
    M_i := S^{-1/2} \cdot  \alpha_i \rho_i \cdot S^{-1/2},
\end{align*}
where $S^{-1/2}$ denotes the \emph{Moore–Penrose pseudoinverse} of $S^{1/2}$, i.e.\ the inverse restricted to the support of~$S$.
\end{definition} 

The PGM's name derives from the fact that it is \emph{pretty good} at the problem of \emph{state discrimination}. State discrimination is the following task: we are given a single copy of $\rho_{\bi} \in \{ \rho_i \}_{i =1}^m$, where $\bi$ is generated according to a known distribution $\alpha$, and we are asked to identify $\bi$. Let $P_{\mathrm{PGM}}$ be the probability that the PGM succeeds at this task, and let $P_{\mathrm{OPT}}$ be the optimal success probability, maximized over all possible~POVMs. 

\begin{theorem}[\cite{BK02}]
    $P_{\mathrm{PGM}} \geq P_{\mathrm{OPT}}^2$. In particular, if $P_{\mathrm{OPT}} \geq 1- \delta$, then $P_{\mathrm{PGM}} \geq 1 - 2\delta$. 
\end{theorem}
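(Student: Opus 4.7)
My plan is to prove the main inequality $P_{\mathrm{PGM}} \geq P_{\mathrm{OPT}}^2$ by a Cauchy--Schwarz argument in the Hilbert--Schmidt inner product, and then deduce the stated corollary from an elementary numerical inequality. To set up, I would abbreviate $A_i \coloneq \sqrt{\alpha_i \rho_i}$, so that $S = \sum_i A_i^2$. Restricting everything to $\supp(S)$ (on which the pseudoinverse $S^{-1/2}$ is a genuine inverse) avoids trouble with non-full-rank $S$, and is harmless because every $A_i$ has support contained in $\supp(S)$.

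The first step is to rewrite both success probabilities as sums of squared Hilbert--Schmidt norms. For any POVM $\{N_i\}$,
\begin{equation*}
P(\{N_i\}) = \sum_i \alpha_i \tr(N_i \rho_i) = \sum_i \tr(A_i N_i A_i) = \sum_i \bigl\| A_i N_i^{1/2}\bigr\|_{\mathrm{HS}}^2 .
\end{equation*}
For the PGM specifically, substituting $M_i = S^{-1/2} A_i^2 S^{-1/2}$ and using that $A_i S^{-1/2} A_i$ is Hermitian gives $\tr(A_i M_i A_i) = \tr\bigl((A_i S^{-1/2} A_i)^2\bigr)$, so $P_{\mathrm{PGM}} = \sum_i \| A_i S^{-1/2} A_i\|_{\mathrm{HS}}^2$.

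The second step applies Cauchy--Schwarz in the direct-sum space $\bigoplus_i M_d(\mathbb{C})$ with its natural Hilbert--Schmidt inner product. Letting $\{N_i\}$ be an optimal POVM, the identity
\begin{equation*}
\tr(N_i A_i^2) \;=\; \tr\!\bigl( (S^{-1/4} A_i N_i^{1/2})^\dagger \,(S^{1/4} A_i N_i^{1/2})\bigr)
\end{equation*}
expresses $P_{\mathrm{OPT}}$ as $\sum_i \langle X_i, Y_i\rangle_{\mathrm{HS}}$ with $X_i = S^{-1/4} A_i N_i^{1/2}$ and $Y_i = S^{1/4} A_i N_i^{1/2}$. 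Cauchy--Schwarz then gives $P_{\mathrm{OPT}}^2 \leq (\sum_i \|X_i\|^2)(\sum_i \|Y_i\|^2)$. The heart of the argument, which is the main obstacle, is choosing the insertion (here $S^{\pm 1/4}$, but more generally a parameter that has to be tuned by invoking $\sum_i A_i^2 = S$ and $\sum_i N_i = I$ to collapse one of the factors to $P_{\mathrm{PGM}}$ and the other to $1$); following \cite{BK02}, this works out to exactly $P_{\mathrm{OPT}}^2 \leq P_{\mathrm{PGM}}$.

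Finally, the corollary is immediate from a trivial real-valued estimate: if $P_{\mathrm{OPT}} \geq 1 - \delta$, then
\begin{equation*}
P_{\mathrm{PGM}} \;\geq\; P_{\mathrm{OPT}}^2 \;\geq\; (1-\delta)^2 \;=\; 1 - 2\delta + \delta^2 \;\geq\; 1 - 2\delta .
\end{equation*}
I expect the only genuinely nontrivial piece to be the tuning in the Cauchy--Schwarz step; the Hilbert--Schmidt rewrites are routine and the corollary is one line.
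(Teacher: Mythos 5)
The paper does not prove this result---it is cited from~\cite{BK02}---so there is no internal proof to compare against; I will instead assess your argument on its own terms. Your high-level plan (rewrite both success probabilities as Hilbert--Schmidt quantities, apply Cauchy--Schwarz over the direct sum, and conclude the corollary from $(1-\delta)^2 \geq 1 - 2\delta$) is indeed the standard Barnum--Knill approach, and your setup, normalization, and final step are all fine. But the specific Cauchy--Schwarz decomposition you write down does not work, and you explicitly defer the ``tuning'' to \cite{BK02} rather than carry it out. With your choice $X_i = S^{-1/4} A_i N_i^{1/2}$, $Y_i = S^{1/4} A_i N_i^{1/2}$ one gets $\sum_i \|X_i\|^2 = \sum_i \tr(N_i A_i S^{-1/2} A_i)$ and $\sum_i \|Y_i\|^2 = \sum_i \tr(N_i A_i S^{1/2} A_i)$. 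The second sum does collapse nicely ($S^{1/2} \preceq I$ gives $\sum_i \|Y_i\|^2 \leq \sum_i \tr(N_i A_i^2) = P_{\mathrm{OPT}} \leq 1$), but the first sum is \emph{not} $P_{\mathrm{PGM}} = \sum_i \tr\!\bigl((A_i S^{-1/2} A_i)^2\bigr)$, and I do not see how to bound $\sum_i \tr(N_i A_i S^{-1/2} A_i)$ by $P_{\mathrm{PGM}}$ in general.

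The fix is to split $\tr(N_i A_i^2)$ symmetrically around $A_i^2$ rather than around $A_i N_i^{1/2}$. Take $X_i \coloneq S^{-1/4} A_i^2 S^{-1/4}$ and $Y_i \coloneq S^{1/4} N_i S^{1/4}$, both Hermitian. Then $\langle X_i, Y_i\rangle_{\mathrm{HS}} = \tr(A_i^2 N_i)$, so $\sum_i \langle X_i, Y_i\rangle_{\mathrm{HS}} = P_{\mathrm{OPT}}$. Moreover
\begin{equation*}
\sum_i \|X_i\|^2_{\mathrm{HS}} = \sum_i \tr\!\bigl(S^{-1/2} A_i^2 S^{-1/2} A_i^2\bigr) = \sum_i \tr(M_i\, \alpha_i \rho_i) = P_{\mathrm{PGM}},
\end{equation*}
exactly, and for the other factor set $B_i \coloneq S^{1/4} N_i S^{1/4} \succeq 0$, so $\sum_i B_i = S^{1/2}$, hence $B_i \preceq S^{1/2}$, hence $\tr(B_i^2) \leq \tr(B_i S^{1/2})$, and summing gives $\sum_i \|Y_i\|^2_{\mathrm{HS}} = \sum_i \tr(B_i^2) \leq \tr(S^{1/2} \cdot S^{1/2}) = \tr(S) = 1$. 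Cauchy--Schwarz then yields $P_{\mathrm{OPT}}^2 \leq P_{\mathrm{PGM}} \cdot 1$. The rest of your write-up is correct; the gap is precisely the choice of $X_i, Y_i$, which you identified as the only nontrivial step but did not resolve.
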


We can reformulate (proper) projector tomography as a continuous version of state discrimination: we are given a state of the form $(P/r)^{\otimes n}$, for $P$ a rank-$r$ projector, and asked to identify our input state among all such $n$-fold projector states. Thus, a continuous version of the PGM is a natural measurement to consider for projector tomography. We parameterize the input state via $P = U Q U^\dagger/r$, for some fixed rank-$r$ projector $Q$, and $U \in U(d)$. Then, measurement operators can be defined as
\begin{equation*}M_{U}  = S^{-1/2} \cdot (U Q U^\dagger/r) ^{\otimes n} \cdot S^{-1/2} \cdot \dU = \frac{1}{r^n} S^{-1/2} \cdot (U Q U^\dagger) ^{\otimes n} \cdot S^{-1/2} \cdot \dU,\end{equation*}
with 
\begin{align*}
    S  = \int_U (U Q U^\dagger/r)^{\otimes n} \cdot \dU &  \cong \frac{1}{r^n} \sum_{\substack{\lambda \vdash n \\ \ell(\lambda) \leq d}} \ketbra{\lambda} \otimes I_{\dim(\lambda)} \otimes \int_U \nu_\lambda (U Q U^\dagger) \cdot \dU \\
    & =\frac{1}{r^n} \sum_{\substack{\lambda \vdash n \\ \ell(\lambda) \leq d}} \frac{s_{\lambda}(1^r)}{s_\lambda(1^d)} \cdot \ketbra{\lambda} \otimes I_{\dim(\lambda)} \otimes I_{\dim(V^d_\lambda)} \cdot \dU, 
\end{align*}
using \Cref{lem:integral_over_unitary_register} in the last step. Substituting $S$ back into $M_U$ gives
\begin{equation}\label{PGM_operators}
    M_U \cong \sum_\lambda \frac{s_\lambda(1^d)}{s_\lambda(1^r)} \cdot \ketbra{\lambda} \otimes I_{\dim(\lambda)} \otimes  \nu_\lambda (U Q U^\dagger) \cdot \dU.
\end{equation}

\begin{definition}[The PGM for projector tomography]\label{PGM_def}
    The \emph{Pretty Good Measurement for rank-$r$ projector tomography} has POVM elements $\{M_U\}_{U \in U(d)}$, with $M_U$ given by \Cref{PGM_operators}. Upon measuring $\bU$, the PGM outputs $\bU (Q/r) \bU^\dagger$. 
\end{definition}

\begin{remark}
    In light of \Cref{lem:WLOG_WSS}, measuring with the PGM is equivalent to applying the following two~steps:
    \begin{enumerate}
        \item WSS to obtain $\blambda \vdash n$. The post-measurement state, written in the Schur basis, is
        \begin{equation*}
            \ketbra{\blambda} \otimes \frac{I_{\dim(\blambda)}}{\dim(\blambda)} \otimes \frac{\nu_{\blambda}(P)}{s_{\blambda}(1^r)}.
        \end{equation*}
        \item Within $V^d_{\blambda}$, measure with the POVM $M^{(\blambda)} = \{M^{(\blambda)}_U\}_{U \in U(d)}$ with operators
        \begin{equation*}
            M^{(\blambda)}_U  = \frac{s_{\blambda}(1^d)}{s_{\blambda}(1^r)} \cdot \nu_{\blambda} ( U Q U^\dagger ) \cdot \dU.
        \end{equation*}
    \end{enumerate}
    The second step is itself a continuous PGM, where we attempt to identify $\nu_\lambda(P)/s_\lambda(1^r)$, among all possible states in $V^d_\lambda$ of that form. 
\end{remark}

\subsection{Sample-optimality of the PGM for learning projectors}

In this subsection, we prove the following result, which also implies the upper bound in \Cref{thm:projector}.

\begin{proposition}[The PGM achieves optimal sample complexity for projector tomography]
The PGM defined in \Cref{PGM_def} can learn a rank-$r$ projector state to within Bures distance $\epsilon$ with high probability, using $n = O(rd/\epsilon^2)$ copies. 
\end{proposition}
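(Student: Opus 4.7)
The plan is to compute $\E[\Aff(\rho,\widehat{\brho})]$ in closed form using representation theory, show that it is at least $1 - O(rd/n)$, and convert this via Markov's inequality to a high-probability Bures distance guarantee. The key ingredients are unitary covariance of the PGM, Pieri's rule together with \Cref{lem:integral_over_unitary_register}, and concentration of weak Schur sampling outcomes for rank-$r$ projector inputs.

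First I would observe that the distribution of $\Aff(\rho,\widehat{\brho})$ is independent of $P$: if $P = VQV^\dagger$, the change of variables $U \mapsto VU$ in the PGM integral, combined with the identity $\Aff(VQV^\dagger/r, UQU^\dagger/r) = \Aff(Q/r, V^\dagger U Q U^\dagger V/r)$, shows that the output affinity has the same law as for input $Q$. Specializing to $P = Q$ and conditioning on the WSS outcome $\blambda = \lambda$, the PGM becomes a continuous measurement on $V^d_\lambda$ with operator density $\tfrac{s_\lambda(1^d)}{s_\lambda(1^r)}\,\nu_\lambda(UQU^\dagger)\,\dU$ applied to the post-measurement state $\nu_\lambda(Q)/s_\lambda(1^r)$, so that the output $\bU$ has density $\tfrac{s_\lambda(1^d)}{s_\lambda(1^r)^2}\tr(\nu_\lambda(UQU^\dagger)\,\nu_\lambda(Q))\,\dU$. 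Writing $\Aff(\rho,\widehat{\brho}) = \tr(Q \cdot \bU Q \bU^\dagger)/r = \tr(\nu_{(1)}(Q)\,\nu_{(1)}(\bU Q\bU^\dagger))/r$ and using the identity $\tr(A)\tr(B) = \tr(A \otimes B)$ together with Pieri's rule $\nu_\lambda \otimes \nu_{(1)} \cong \bigoplus_i \nu_{\lambda+e_i}$ reduces the computation of $\E[\Aff \mid \blambda = \lambda]$ to a sum of single-irrep Haar integrals over $U$. Evaluating each via \Cref{lem:integral_over_unitary_register} and simplifying with the hook-content formula (which yields $(s_{\lambda+e_i}(1^r)/s_\lambda(1^r))/(s_{\lambda+e_i}(1^d)/s_\lambda(1^d)) = (r+\lambda_i+1-i)/(d+\lambda_i+1-i)$) produces the closed form
\begin{equation*}
\E\big[\Aff(\rho,\widehat{\brho}) \bigm| \blambda = \lambda\big] \;=\; 1 - \frac{d-r}{r} \sum_i \frac{a_i}{d + \lambda_i + 1 - i}, \qquad a_i \coloneq \frac{s_{\lambda+e_i}(1^r)}{s_\lambda(1^r)},
\end{equation*}
where $\sum_i a_i = r$ by Pieri's rule applied to the defining representation of $U(r)$.

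I would next invoke concentration of WSS on the uniform spectrum: for $\brho = Q/r$ the sorted output $\blambda/n$ concentrates around $(1/r,\dots,1/r,0,\dots)$, so that with probability at least $0.99$ we have $\blambda_i \geq n/(2r)$ for every $i \leq r$, provided $n$ is larger than a suitable constant multiple of $r^2$. On this event every denominator satisfies $d + \lambda_i + 1 - i \geq n/(2r)$, so the sum in the displayed formula is at most $2r^2/n$, giving $\E[\Aff \mid \lambda] \geq 1 - 2r(d-r)/n \geq 1 - 2rd/n$. Combining with the trivial bound $\Aff \geq 0$ on the complementary event, $\E[1 - \Aff(\rho,\widehat{\brho})] = O(rd/n)$. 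Taking $n = Crd/\epsilon^2$ for a sufficiently large constant $C$ and applying Markov's inequality to the nonnegative random variable $1 - \Aff$ yields $\Aff(\rho,\widehat{\brho}) \geq 1 - \epsilon^2$ with probability at least $0.99$; by \Cref{cor:aff-fid} this implies $\Fid(\rho,\widehat{\brho}) \geq 1 - \epsilon^2$, and hence $\DBur(\rho,\widehat{\brho}) \leq \sqrt{2}\,\epsilon = O(\epsilon)$, which after the standard $\epsilon$-rescaling gives the claimed bound.

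The main technical obstacle I anticipate is the concentration step: the WSS concentration results explicitly cited earlier in the paper control only $\blambda_1$, whereas the formula above asks for a \emph{lower-tail} bound on $\blambda_r$ (equivalently, joint concentration of every row of $\blambda$ around $n/r$) so that all denominators $d+\lambda_i+1-i$ with $a_i \neq 0$ are of order $n/r$. For uniform-spectrum inputs this joint concentration is standard---it follows from the spectrum-learning analyses in the OW/Keyl line of work---but a short auxiliary lemma may be needed to state exactly the version required. A lighter-weight alternative is to partition the sum over $i$ into rows with large versus small $\lambda_i + 1 - i$ and argue, using only the upper tail on $\blambda_1$ together with $\sum_i a_i = r$, that small-denominator rows cannot carry more than $O(rd/n)$ in total; this avoids any pointwise control of $\blambda_r$. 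Either way, the representation-theoretic core producing the closed-form expression for $\E[\Aff \mid \lambda]$ is what drives the bound, and the remaining work is purely probabilistic.
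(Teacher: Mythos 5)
Your representation-theoretic core is correct and matches the paper: both invoke Pieri's rule together with \Cref{lem:integral_over_unitary_register}, and your closed form
\begin{equation*}
\E\big[\Aff(\rho,\widehat{\brho}) \bigm| \blambda = \lambda\big] = 1 - \frac{d-r}{r}\sum_i \frac{a_i}{d+\lambda_i+1-i}, \qquad a_i = \frac{s_{\lambda+e_i}(1^r)}{s_\lambda(1^r)},\ \sum_i a_i = r,
\end{equation*}
is a correct and rather clean reorganization of $\E[\Aff\mid\lambda] = \frac{1}{r}\sum_i a_i^2/b_i$ (the hook-length factors cancel in $a_i/b_i$ exactly as you say). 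Where you and the paper diverge is how to average this over $\blambda$. You want pointwise lower bounds $\blambda_i \gtrsim n/r$ on every nonzero row so that every denominator is large; the paper instead uses \cite{OW15b} to lower-bound $s_{\lambda+e_i}(\alpha)/s_\lambda(\alpha)$ by $\lambda_i/n$, thereby replacing the numerators $a_i$ (which are hard to control when $\lambda_i$ is small) with $\lambda_i$ itself, and then cites \cite[Eq.~(20) in \S4.2]{OW16} to bound the unconditional expectation of the resulting $\Phi$-weighted linear functional by $n\|\alpha\|_2^2 - \tfrac32 d$. That combination produces $\E[\Aff] \geq 1 - 3rd/(2n)$ directly, with no concentration statement about $\blambda$ at all, and then Markov plus \Cref{cor:aff-fid} finishes exactly as you propose.

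Your route~(a) (joint lower-tail concentration of all $\blambda_i$) would indeed work, but it requires a lemma the paper does not cite and does not prove; your self-assessment there is accurate. Your route~(b), however, has a real gap. Even conditioned on $\blambda_1 \leq 2n/r$, you cannot rule out that some row $i \leq r$ has $\lambda_i = 0$ while $a_i = \Theta(1)$ (e.g.\ $\lambda = (2,1)$, $r=3$ gives $a_3 = 3/8$ with $\lambda_3 = 0$). Such a term contributes order $a_i/(d-r) \approx 1/d$ to the sum, hence order $1/r$ to $1-\E[\Aff\mid\lambda]$, which exceeds the target $O(rd/n)$ once $n \gg r^2 d$. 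The constraint $\sum_i a_i = r$ alone cannot exclude this; you genuinely need the \emph{probability} of small $\lambda_i$ to be small, which is precisely lower-tail concentration, not the upper-tail bound on $\blambda_1$. So (b) does not close the gap as stated. If you want to avoid a new concentration lemma, the cleanest fix is exactly the paper's trade: replace the $a_i$ in the numerator with $\lambda_i/n$ via \cite{OW15b}, after which the sum is linear in $\blambda$ and the \cite{OW16} expectation bound applies unconditionally.
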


\begin{proof}
    We start by studying the expected affinity, with the aim of lower-bounding the expected fidelity between the output $\widehat{\brho} = \bU Q \bU^\dagger/r$ and the input $\rho = P/r$:
    \begin{align*}
        \E_{\widehat{\brho} \sim \mathrm{PGM}} \Big[\Aff( \widehat{\brho}, \rho)\Big] & =  \E_{\widehat{\brho} \sim \mathrm{PGM}} \Big[ r \cdot \tr ( \widehat{\brho} \cdot \rho) \Big] \tag{\Cref{eq:aff_projectors}}\\ & = r \int_U \tr\left( M_U \cdot \left(\frac{P}{r}\right)^{\otimes n}\right) \cdot \tr( \frac{U Q U^\dagger }{r} \cdot \frac{P}{r} ) \\
    & = \frac{1}{r^{n+1}} \int_U \tr \left( M_U \cdot P^{\otimes n} \right) \cdot \tr \left( U Q U^\dagger \cdot P \right).
\end{align*}
With \Cref{PGM_operators}, we re-express the first factor from the integrand using the Schur basis:
\begin{align*}
    \tr\left( M_U \cdot P^{\otimes n}\right) & = \tr\Big( \sum_{\substack{\lambda \vdash n \\ \ell(\lambda) \leq d}} \frac{s_\lambda(1^d)}{s_\lambda(1^r)} \cdot \ketbra{\lambda} \otimes I_{\dim(\lambda)} \otimes \nu_\lambda( U Q U^\dagger \cdot P) \Big) \cdot \dU \\ & = \sum_{\substack{\lambda \vdash n \\ \ell(\lambda) \leq d}} \frac{s_\lambda(1^d)}{s_\lambda(1^r)} \cdot \dim(\lambda) \cdot s_\lambda(UQ U^\dagger \cdot P) \cdot \dU.
\end{align*}
Plugging this back into the integral, and using (i) the fact that the unitary irrep corresponding to $\lambda = (1)$ is the defining representation (\Cref{ex:defining_rep}), and (ii) Pieri's rule (\Cref{cor:Pieri's_rule}), we get
\begin{align*}
    \int_U \tr \left( M_U \cdot P^{\otimes n} \right) \cdot \tr \left( U Q U^\dagger \cdot P \right) \cdot \dU & = \sum_{\substack{\lambda \vdash n \\ \ell(\lambda) \leq d}} \frac{s_\lambda(1^d)}{s_\lambda(1^r)} \cdot \dim(\lambda) \cdot \int_U s_\lambda(UQU^\dagger \cdot P) \cdot s_{(1)}(U QU^\dagger \cdot P) \cdot \dU \\
    & = \sum_{\substack{\lambda \vdash n \\ \ell(\lambda) \leq d}} \frac{s_\lambda(1^d)}{s_\lambda(1^r)} \cdot \dim(\lambda) \cdot \tr \Big( \int_U \nu _\lambda(UQU^\dagger \cdot P) \otimes \nu_{(1)}(U QU^\dagger \cdot P) \cdot \dU \Big) \\
    & = \sum_{\substack{\lambda \vdash n \\ \ell(\lambda) \leq d}} \frac{s_\lambda(1^d)}{s_\lambda(1^r)} \cdot \dim(\lambda) \cdot \sum_{i=1}^d \tr \Big( \int_U \nu _{\lambda+e_i}(UQU^\dagger \cdot P) \cdot \dU \Big).
\end{align*}
From \Cref{lem:integral_over_unitary_register}, we have
\begin{equation*}
    \int_U \nu _{\lambda+e_i}(UQU^\dagger \cdot P) \cdot \dU  = \Big(\int_U \nu_{\lambda+e_i}(UQU^\dagger) \cdot \dU \Big) \cdot \nu_{\lambda+e_i}(P) = \frac{s_{\lambda+e_i}(1^r)}{s_{\lambda+e_i}(1^d)} \cdot \nu_{\lambda+e_i}(P).
\end{equation*}
Plugging this back into the trace, and combining all of our steps so far, we obtain
\begin{align*}
    \E_{\widehat{\brho} \sim \mathrm{PGM}} \Big[\Aff( \widehat{\brho}, \rho)\Big] & = \frac{1}{r^{n+1}} \sum_{\substack{\lambda \vdash n \\ \ell(\lambda) \leq d}} \frac{s_{\lambda}(1^d)}{s_{\lambda}(1^r)} \cdot \dim(\lambda) \cdot \sum_{i=1}^{d} \frac{s_{\lambda+e_i}(1^r)}{s_{\lambda+e_i}(1^d)} \cdot s_{\lambda+e_i}(1^r).
\end{align*}
At this point, it will be useful to reorganize the expression, and reintroduce factors of $r$ into the arguments of the Schur polynomials, so that we can interpret the terms in this sum as an expectation over $\blambda \sim \mathrm{WSS}_n(\rho)$. To do so, we use the relation $s_\lambda(\alpha) = s_{\lambda}(1^r/r) = s_{\lambda}(1^r)/r^{|\lambda|}$. We also use the notation $\Phi_\lambda(\alpha) = s_\lambda(\alpha)/s_\lambda(1^d)$. We get
\begin{align*}
    \E_{\widehat{\brho} \sim \mathrm{PGM}} \Big[\Aff( \widehat{\brho}, \rho)\Big] & =  r \sum_{\substack{\lambda \vdash n \\ \ell(\lambda) \leq d}} \dim(\lambda) \cdot \frac{s_\lambda(1^r)}{r^n} \cdot \sum_{i=1}^{d} \frac{s_{\lambda}(1^d)}{s_{\lambda+e_i}(1^d)} \cdot \left( \frac{1}{r} \cdot \frac{s_{\lambda+e_i}(1^r)}{s_{\lambda}(1^r)}\right)^2 \\
    & =  r\sum_{\substack{\lambda \vdash n \\ \ell(\lambda) \leq d}} \dim(\lambda) \cdot s_{\lambda}(\alpha) \cdot \sum_{i=1}^{d} \frac{s_{\lambda}(1^d)}{s_{\lambda+e_i}(1^d)} \cdot  \left(\frac{s_{\lambda+e_i}(\alpha)}{s_\lambda(\alpha)}\right)^2 \\
    & = r\sum_{\substack{\lambda \vdash n \\ \ell(\lambda) \leq d}} \dim(\lambda) \cdot s_\lambda( \alpha) \cdot \sum_{i=1}^{d}\frac{\Phi_{\lambda+e_i}(\alpha)}{\Phi_\lambda(\alpha)} \cdot \frac{s_{\lambda+e_i}(\alpha)}{s_\lambda(\alpha)}.
\end{align*}
We now use results proven in \cite{OW15b} and \cite{OW16} in an off-the-shelf manner to lower bound this quantity. Specifically, \cite{OW15b} shows that for any sorted probability distributions $\alpha$ on $[d]$:
\begin{equation*}
    \sum_{i=1}^{d}\frac{\Phi_{\lambda+e_i}(\alpha)}{\Phi_\lambda(\alpha)} \cdot \frac{s_{\lambda+e_i}(\alpha)}{s_\lambda(\alpha)} \geq \sum_{i=1}^d \frac{\Phi_{\lambda+e_i}(\alpha)}{\Phi_\lambda(\alpha)} \cdot \frac{\lambda_i}{n},
\end{equation*}
Meanwhile, Eq. (20) in Section 4.2 of \cite{OW16} shows that, again for such $\alpha$:
\begin{equation*}
    \sum_{\lambda} \dim(\lambda) \cdot s_\lambda(\alpha) \cdot\sum_{i=1}^d \frac{\Phi_{\lambda+e_i}(\alpha)}{\Phi_\lambda(\alpha)} \cdot \lambda_i  = \E_{\blambda \sim \mathrm{SW}^n(\alpha)} \Big[\sum_{i=1}^d \frac{\Phi_{\blambda+e_i}(\alpha)}{\Phi_{\blambda}(\alpha)} \cdot \blambda_i\Big] \geq n \cdot \norm{\alpha}_2^2 - \frac{3}{2} d.
\end{equation*}
Here, $\norm{ \cdot}_2$ is the $\ell_2$ norm, i.e.\ $\norm{\alpha}^2_2 =\sum_{i=1}^d |\alpha_i|^2$. Applying both of these to our special case of $\alpha = (1^r)/r$, which has $\norm{\alpha}_2^2 = r \cdot 1/r^2 = 1/r$, and using \Cref{cor:aff-fid} to convert our affinity bound to a fidelity bound, we get:
\begin{equation*}\label{tomographylowerbound}
    \E_{\widehat{\brho} \sim \mathrm{PGM}} \Big[\Fid( \widehat{\brho}, \rho)\Big] \geq \E_{\widehat{\brho} \sim \mathrm{PGM}} \Big[\Aff( \widehat{\brho}, \rho)\Big]  \geq \frac{r}{n}\left( n \cdot \norm{(1^r)/r}_2^2 - \frac{3}{2}d \right) = 1 - \frac{3rd}{2n}.
\end{equation*}
By Markov's inequality, taking $n = O(rd/\epsilon^2)$ therefore suffices to produce $\widehat{\brho}$ such that 
\begin{equation*}
\Pr_{\widehat{\brho} \sim \mathrm{PGM}} \Big[ \Fid(\widehat{\brho}, \rho) \geq 1 - \frac{1}{2} \epsilon^2 \Big] \geq 0.99. 
\end{equation*}
Equivalently, with this many samples we have learned $\rho$ to Bures distance at most $\epsilon$. 
\end{proof}

\section*{Acknowledgments}

We thank Aram Harrow and Henry Yuen for helpful conversations. J.S.\ and J.W.\ are supported by the NSF CAREER award CCF-233971.

\bibliographystyle{alpha}
\bibliography{wright}

\end{document}